\documentclass[1p,number,11pt]{elsarticle} 
\makeatletter
\renewcommand\subsection{\@startsection{subsection}{2}{\z@}%
  {-18\p@ \@plus -4\p@ \@minus -4\p@}%
  {8\p@ \@plus 4\p@ \@minus 4\p@}%
  {\normalfont\normalsize\bfseries}}
\makeatother

\usepackage{amsfonts}
\usepackage{amsmath}
\usepackage{amssymb}
\usepackage{amsthm}
\usepackage[english]{babel}
\usepackage{complexity}
\usepackage{enumerate}
\usepackage{mathrsfs}
\usepackage{graphics}
\usepackage[pagewise]{lineno}
\usepackage{mathtools}
\usepackage{microtype}
\usepackage[many]{tcolorbox}
\usepackage{xcolor}
\usepackage{xspace}
\usepackage[normalem]{ulem}
\usepackage{zref-clever}
\usepackage{slantsc} 
\zcsetup{cap}

\usepackage[store-sets-label,thmtools-compat]{keytheorems}
\usepackage{multicol}

\newlength{\remparindent}
\AtBeginDocument{\setlength{\remparindent}{\parindent}}
\newtcolorbox{remove?}{
  enhanced, breakable,
  colback=red!20,
  frame hidden, boxrule=0pt, sharp corners,
  left=0pt, right=0pt, top=4pt, bottom=4pt, boxsep=0pt,
  grow to left by=4pt, grow to right by=4pt,
  before skip=\parskip, after skip=\parskip,
  parbox=false,
  before upper={\setlength{\parindent}{\remparindent}}
}

\newcommand{\ccon}{\hbox{\sf color\raisebox{0.4pt}{\texttt{-}}conn}}
\newcommand{\con}{\hbox{\sf conn}}
\newclass{\FOcc}{FO\hspace{1pt}\textup{\texttt{+}}\hspace{1pt}\ccon\xspace}
\newclass{\FOconn}{FO\hspace{1pt}\textup{\texttt{+}}\hspace{1pt}\con\xspace}

\newcommand{\scon}{\hbox{\sf subconn}}
\newcommand{\smallsets}{\exists_{|\cdot|\leq b}X}
\newclass{\FOscon}{FO\hspace{1pt}\textup{\texttt{+}}\hspace{1pt}\scon\xspace}
\newclass{\FOsconsets}{FO\hspace{1pt}\textup{\texttt{+}}\hspace{1pt}\smallsets\hspace{1pt}\textup{\texttt{+}}\hspace{1pt}\scon\xspace}

\newcommand{\GFOscon}{\textup{Guarded}\raisebox{0.4pt}{\texttt{-}}\FOscon}

\newcommand{\placeholder}[1]{\textup{\textbf{$\boldsymbol{\langle}$#1$\boldsymbol{\rangle}$}}}

\usetikzlibrary{calc,arrows, automata, fit, shapes, arrows.meta, positioning}

\newkeytheoremstyle{tc-theorem}{
tcolorbox-no-titlebar={
colframe=blue, sharp corners=west, 
    colback=blue!60!cyan!20!white,
	rightrule=0pt,
	toprule=0pt,
	bottomrule=0pt,
	top=3pt,
	right=2pt,
	bottom=3pt},
}
\newkeytheoremstyle{tc-defi}{
tcolorbox-no-titlebar={
    colframe=yellow, sharp corners=west, 
    colback=yellow!60!yellow!20!white,
	rightrule=0pt,
	toprule=0pt,
	bottomrule=0pt,
	top=3pt,
	right=2pt,
	bottom=3pt,
 }
}
\newkeytheoremstyle{tc-prop}{
tcolorbox-no-titlebar={
    colframe=blue, sharp corners=west, 
    colback=blue!50!cyan!20!white,
	rightrule=0pt,
	toprule=0pt,
	bottomrule=0pt,
	top=3pt,
	right=2pt,
	bottom=3pt,
 }
}
\newkeytheoremstyle{tc-cor}{
tcolorbox-no-titlebar={
    colframe=cyan, sharp corners=west, 
    colback=blue!30!cyan!10!white,
	rightrule=0pt,
	toprule=0pt,
	bottomrule=0pt,
	top=3pt,
	right=2pt,
	bottom=3pt,
 }
 }
\newkeytheoremstyle{tc-obs}{
tcolorbox-no-titlebar={
    colframe=cyan, sharp corners=west, 
    colback=blue!30!cyan!10!white,
	rightrule=0pt,
	toprule=0pt,
	bottomrule=0pt,
	top=3pt,
	right=2pt,
	bottom=3pt,
 }
}
\newkeytheoremstyle{tc-lem}{
tcolorbox-no-titlebar={
    colframe=cyan, sharp corners=west,
    colback=blue!40!cyan!20!white,
	rightrule=0pt,
	toprule=0pt,
	bottomrule=0pt,
	top=3pt,
	right=2pt,
	bottom=3pt,
 }
}
\newkeytheoremstyle{tc-res}{
tcolorbox-no-titlebar={
    colframe=yellow, sharp corners=west, 
    colback=yellow!10,
	rightrule=0pt,
	toprule=0pt,
	bottomrule=0pt,
	top=8pt,
    left=20pt,
	right=20pt,
	bottom=10pt,
 }
}

\usepackage{hyperref}

\usepackage{enumitem}
\setlist[itemize]{topsep=4pt,itemsep=3pt,parsep=0pt} 
\setlist[enumerate]{topsep=4pt,itemsep=3pt,parsep=0pt} 

\usepackage{mathrsfs}

\usepackage{amsmath}
\usepackage{amssymb}
\usepackage{amsthm}

\usepackage{float}
\usepackage{wrapfig}
\usepackage[all]{xy}
\usepackage{tikz}
\usepackage{cancel}
\usepackage{mdframed}
\usepackage{xspace}
\usepackage[textsize=normalsize]{todonotes}

\usepackage{caption}
\usepackage{subcaption}

\newcommand{\Oof}{\mathcal{O}}
\newcommand{\N}{\mathbb{N}}
\renewcommand{\phi}{\varphi}
\newcommand{\Gaif}{\operatorname{Gaif}}

\theoremstyle{plain}
\newtheorem{claim}{Claim}
\newenvironment{claimproof}{ \trivlist
	\item[\hskip\labelsep
	\emph{Proof of the claim}.]\ignorespaces
}{\hfill$\vartriangleleft$\vspace{10pt}\par}

\usepackage[noabbrev,capitalise,nameinlink]{cleveref}
\hypersetup{
    colorlinks,
    linkcolor={blue!80!black},
    citecolor={blue!80!black},
    urlcolor={blue!80!black}
}

\zcRefTypeSetup{fact}{
Name-sg = Fact ,
name-sg = fact ,
Name-pl = Facts ,
name-pl = facts ,
}

\crefname{observation}{Observation}{Observation}
\crefalias{enumi}{observation}
\crefname{claim}{Claim}{Claim}

\newcommand{\Cc}{\mathscr{C}}

\newcommand{\Hh}{\mathscr{H}}
\newcommand{\Mm}{\mathscr{M}}
\newcommand{\Dd}{\mathscr{D}}

\newcommand{\conn}{\mathsf{conn}}

\newcommand{\bag}{\mathsf{bag}}

\newcommand{\comp}{\mathsf{comp}}
\newcommand{\mrg}{\mathsf{mrg}}
\newcommand{\adh}{\mathsf{adh}}

\newcommand{\torso}{\mathsf{torso}}
\newcommand{\parent}{\mathsf{parent}}

\newcommand{\MZ}{\mathbf M\!\upharpoonright_Z}

\renewcommand{\FO}{\ensuremath{\mathsf{FO}}}

\newcommand{\MSO}{\ensuremath{\mathsf{MSO}}}

\renewcommand{\FPT}{\ensuremath{\mathsf{FPT}}}

\renewcommand{\leq}{\leqslant}
\renewcommand{\geq}{\geqslant}
\renewcommand{\le}{\leqslant}

\newcommand{\pred}{\mathbin{\prec:}}
\newcommand{\is}{\mathrel{:=}}

\DeclareMathOperator{\dec}{{\rm Dec}}
\DeclareMathOperator{\rew}{{\sf Rewire}}

\newcommand{\forkindep}[1]{%
  \mathrel{
    \mathop{
      \vcenter{
        \hbox{\oalign{\noalign{\kern-.3ex}\hfil$\vert$\hfil\cr
              \noalign{\kern-.7ex}
              $\smile$\cr\noalign{\kern-.3ex}}}
      }
    }\displaylimits_{#1}
  }
}

\newkeytheorem{lemma}[style=tc-lem,parent=section]
\newkeytheorem{theorem}[style=tc-theorem,sibling=lemma]
\newkeytheorem{corollary}[style=tc-cor,sibling=lemma]
\newkeytheorem{fact}[sibling=lemma]
\newkeytheorem{proposition}[style=tc-prop,sibling=lemma]
\newkeytheorem{remark}[style=remark,sibling=lemma]

\newkeytheorem{definition}[sibling=lemma]
\newkeytheorem{importantdefi}[style=tc-defi,name=Definition,sibling=lemma]
\newkeytheorem{observation}[style=tc-obs,sibling=lemma]

\newkeytheorem{problem}[style=tc-res]
\newkeytheorem{conjecture}[style=tc-res,sibling=problem]
\usepackage{color}

\definecolor{MidnightBlack}{rgb}{0.1,0.1,.34}
\definecolor{MidnightBlue}{rgb}{0.1,0.1,0.44}
\definecolor{Black}{rgb}{0,0, 0}
\definecolor{Blue}{rgb}{0, 0 ,1}
\definecolor{Red}{rgb}{1, 0 ,0}
\definecolor{White}{rgb}{1, 1, 1}
\definecolor{Grey}{rgb}{.6, .6, .6}
\definecolor{Mygreen}{rgb}{.0, .5, .0}
\definecolor{Yellow}{rgb}{.55,.55,0}
\definecolor{Mustard}{rgb}{1.0, 0.86, 0.35}
\definecolor{applegreen}{rgb}{0.55, 0.71, 0.0}
\definecolor{darkturquoise}{rgb}{0.0, 0.81, 0.82}
\definecolor{celestialblue}{rgb}{0.29, 0.59, 0.82}
\definecolor{green_yellow}{rgb}{0.68, 1.0, 0.18}
\definecolor{crimsonglory}{rgb}{0.75, 0.0, 0.2}
\definecolor{darkmagenta}{rgb}{0.30, 0.0, 0.30}
\definecolor{internationalorange}{rgb}{1.0, 0.31, 0.0}
\definecolor{internationalorange}{rgb}{1.0, 165/256, 0.0}
\definecolor{darkorange}{rgb}{1.0, 0.55, 0.0}
\definecolor{ao}{rgb}{0.0, 0.5, 0.0}
\definecolor{awesome}{rgb}{1.0, 0.13, 0.32}

\newcommand{\Minor}{\mathrm{Minor}}

\journal{arXiv}

\begin{document}

\begin{frontmatter}

\newcommand{\funding}{
This research is supported by the
French-German Collaboration ANR/DFG Project UTMA (ANR-20-CE92-0027, DFG 446200270) {\em ``Unifying Theories for Multivariate Algorithms''}.
NM received funding from 
the ERC grant {\sc BUKA} (No.~101126229) and
the European Union through an ERA Fellowship with grant agreement No.\ 101334340 -- {\sc LoCoMoDe}.
POdM has received funding from the European Research Council (ERC) under the European Union's Horizon 2020 research and innovation program (grant agreement No 810115 -- {\sc Dynasnet}) for this project.
DT was also supported by the French National Research Agency (ANR) under project GODASse ANR-24-CE48-4377 and under the France 2030 grant reference number ANR-24-RRII-0002 operated by the Inria Quadrant Program.
}

\title{A Logic for Minor-Free Graph Classes:\\ Model Checking, Dependence, and Combinatorial Reconfiguration}
\tnotetext[t1]{\funding
}

\author[1]{Nikolas Mählmann} 
\ead{maehlmann@mimuw.edu.pl}
\author[2]{Patrice Ossona de Mendez} 
\ead{pom@ehess.fr}
\author[3]{Nicole Schirrmacher}
\ead{schirrmacher@uni-bremen.de}
\author[3]{Sebastian Siebertz}
\ead{siebertz@uni-bremen.de}
\author[4]{Giannos Stamoulis}
\ead{stamoulis@irif.fr}
\author[5]{Dimitrios M. Thilikos}
\ead{sedthilk@thilikos.info}
\author[6]{Alexandre Vigny}
\ead{alexandre.vigny@uca.fr}

\affiliation[1]{organization={University of Warsaw}, country=Poland}
\affiliation[2]{organization={CAMS, EHESS/CNRS UMR 8557}, country=France}
\affiliation[3]{organization={University of Bremen}, country=Germany}
\affiliation[4]{organization={Université Paris Cité, CNRS, IRIF, F-75013, Paris}, country=France}
\affiliation[5]{organization={LIRMM, Univ Montpellier, CNRS}, country=France}
\affiliation[6]{organization={LIMOS, Univ Clermont Auvergne, Clermont Auvergne INP}, country=France}

\begin{abstract}
We introduce \emph{sub-connectivity logic}, denoted by $\FOscon$, an extension of first-order logic for graphs with a specified set $Z$ of admissible edges. Its additional atom $\scon(x,y;\bar z)$ asserts that $x$ and $y$ are connected by a path using only edges of $Z$ and avoiding the vertices in $\bar z$.

Our main structural result states that, for every weakly sparse graph class $\Cc$, the class of all expansions of graphs in $\Cc$ by rooted spanning forest orders has bounded twin-width if and only if $\Cc$ excludes a fixed minor. When $\Cc$ excludes a fixed minor, we can also compute an additional linear order that preserves bounded twin-width in polynomial time.

We translate $\FOscon$ into first-order logic over an expansion by a depth-first spanning forest order of the admissible-edge graph. Combining this translation with our structural result, we obtain a model checking algorithm for $\FOscon$ with running time $f(|\phi|,h(G))\cdot |G|^c$, where $f$ is computable, $c$ is an absolute constant, and $h(G)$ is the Hadwiger number of $G$. The translation and model-checking algorithm extend to binary relational structures, with the Hadwiger number measured on the Gaifman graph.

For a graph class $\Cc$ let
$\Cc_Z:=\{(G,Z):G\in\Cc,\ Z\subseteq E(G)\}$. We show that for every weakly sparse graph class~$\Cc$, the class~$\Cc_Z$ is monadically dependent for $\FOscon$ if and only if $\Cc$ excludes a fixed minor. 
For monotone classes that admit efficient minor encodings, a corresponding hardness result makes this frontier computationally tight. 
We give applications to combinatorial reconfiguration and solution discovery problems, and study a guarded extension of $\FOscon$ motivated by database queries.
\end{abstract}
\end{frontmatter}



\newpage 
\setcounter{tocdepth}{2}
\tableofcontents
\thispagestyle{empty}
\newpage
\setcounter{page}{1}

\section{Introduction}
Bridging algorithmic graph theory and finite model theory, a central line of research seeks efficient graph algorithms on well-behaved classes, together with logical and structural characterizations of the tractable problems and the underlying classes. 
The resulting statements are commonly expressed as \emph{algorithmic meta-theorems} of the following form.
\begin{quote}
\emph{Every problem expressible in \placeholder{logic} can be solved in time \placeholder{runtime} on every structure satisfying \placeholder{tameness condition}.}
\end{quote}
Two prototypical examples of this pattern are the following.
\begin{itemize}
    \item Model checking monadic second-order logic (\MSO) can be done in time $f(|\phi|,t)\cdot |G|$ on graphs $G$ of tree-width at most~$t$.
    This is known as Courcelle's theorem.~\cite{arnborg1991,courcelle1990,courcelle1993}

    \item Model checking first-order logic (\FO) can be done in time 
    $f_\Cc(|\phi|,\varepsilon) \cdot |G|^{1+\varepsilon}$ on all graphs $G$ from every fixed nowhere dense graph class $\Cc$.~\cite{grohe2017}
\end{itemize}
In both of these examples, the involved structural tameness condition coincides with the notion of \emph{monadic dependence}, a key notion from model theory that we will define later. For every graph class~$\Cc$ that is monotone (i.e., closed under taking subgraphs), we have that $\Cc$ is monadically dependent~\ldots
\begin{itemize}
    \item \ldots for \MSO~if and only if $\Cc$ has bounded tree-width; \cite{grohe2004learnability}
    \item \ldots for \FO~if and only if $\Cc$ is nowhere dense.~\cite{ADLER2014322}
\end{itemize}
In this paper, we investigate the connection between efficient model checking and monadic dependence for minor-free graph classes. 
As minor-free graph classes are strictly sandwiched between bounded tree-width and nowhere dense classes, this calls for a logic that is less expressive than \MSO~and more expressive than \FO.

We introduce an extension of first-order logic, called \emph{sub-connectivity logic} and denoted by $\FOscon$. 
Formulas are evaluated on a graph $G$ with a specified set $Z\subseteq E(G)$ of \emph{admissible edges}. In addition to the usual first-order atoms, the logic has an atom $\scon(x,y;\bar z)$ asserting that $x$ and $y$ are not among the vertices in $\bar z$ and are connected by a path using only edges of $Z$ and avoiding those vertices. 

We translate $\FOscon$ on $(G,Z)$ into first-order logic on an expansion by a forest order induced by a rooted depth-first spanning forest of the graph $(V(G),Z)$. 
Combining this translation with our structural results, we show that there are a computable function $f$ and an absolute constant~$c$ such that an $\FOscon$-sentence $\phi$ can be model-checked on $(G,Z)$ in time
\[
 f(|\phi|,h(G))\cdot |G|^c,
\]
where $h(G)$ is the \emph{Hadwiger number} of $G$, that is, the largest order of a clique minor of~$G$. 
Thus, $\FOscon$ model checking is fixed-parameter tractable on every graph class excluding a fixed minor.
The algorithm extends to binary relational structures $\mathbf M$, with running time $f(|\phi|,h(\Gaif(\mathbf M)))\cdot\|\mathbf M\|^c$, where $\Gaif(\mathbf M)$ is the Gaifman graph of $\mathbf M$. 

For a graph class $\Cc$, let $\Cc_Z:=\{(G,Z):G\in\Cc,\ Z\subseteq E(G)\}$. We call $\Cc$ \emph{weakly sparse} if it excludes some complete bipartite graph as a subgraph. 
For every weakly sparse class $\Cc$, we prove that~$\Cc_Z$ is monadically dependent for $\FOscon$ if and only if $\Cc$ excludes a fixed minor.

\pagebreak
\subsection{Context}
\paragraph{Model checking}
 
For a logic~$\mathcal L$ and a class~$\Cc$ of finite structures, the associated computational problem in the broader program of algorithmic meta-theorems is the \emph{model checking} problem: given
$\mathbf M\in\Cc$ and a sentence $\varphi\in\mathcal L$, decide whether
$\mathbf M\models\varphi$.
Reformulated from above, algorithmic meta-theorems assert that model checking for a logic is tractable\footnote{When we speak of tractable model checking on a class $\Cc$, we mean fixed-parameter tractability with respect to the formula size, that is, a running time of the form $f(|\phi|)\cdot \|\mathbf M\|^c$ for some computable function $f$ and an absolute constant $c$, on every $\mathbf M\in \Cc$ and formula $\phi$.
For the general sparsity and stability results recalled below, we assume effective bounds on the corresponding class properties, as required for computable parameter dependence.}  on classes satisfying suitable structural tameness conditions, thereby linking logical definability, (graph) structure, and efficient computation in a uniform way.
Classical milestones of this program include tractable
model checking for $\MSO$ on bounded tree-width graphs~\cite{arnborg1991,courcelle1990,courcelle1993} and for $\FO$ on sparse graph classes such as bounded degree, excluded minor, bounded expansion, and nowhere dense classes~\cite{seese1996,frick2001,dvorak2013,grohe2017}.
More recently, \emph{twin-width} has led to model-checking algorithms that also apply to dense graphs~\cite{bonnet2022twwI}. In particular, and particularly important for this paper, $\FO$ model checking is fixed-parameter tractable on classes of linearly ordered binary structures of bounded twin-width~\cite{bonnet2024twin}.
Beyond unifying large families of tractable problems, such results repeatedly lead to new structural viewpoints and algorithmic techniques.
It has therefore become one of the central problems in algorithmic model theory to determine the exact tractability frontier of efficient model checking for various logics.

\paragraph{The limits of tractability}
The limits of tractability for model checking are closely related to \emph{dependence}, one of the main dividing lines in classical model theory. 
A formula $\varphi(\bar x;\bar y)$ is \emph{independent on a class $\Cc$} if, for every $n\in\mathbb N$, there are a structure $\mathbf M\in\Cc$, tuples $\bar a_i\in M^{|\bar x|}$ for $i\in[n]$, and tuples $\bar b_J\in M^{|\bar y|}$ for $J\subseteq[n]$ such that
\[
 \mathbf M\models\varphi(\bar a_i;\bar b_J)
 \quad\Longleftrightarrow\quad i\in J.
\]
Intuitively, this means there are arbitrarily large sets, from which $\phi$ can pick out all the subsets. 
A formula is \emph{dependent on $\Cc$} if it is not independent on $\Cc$, and a class is \emph{dependent} for a logic if every formula of that logic is dependent on it. 
A class is \emph{monadically dependent} if every expansion of the class by finitely many arbitrary unary predicates is dependent. 

For \emph{hereditary classes} (i.e., classes closed under taking induced substructures), the notion of monadic dependence for {\FO} is known to 
imply fixed-parameter tractable $\FO$ model checking for: 
\begin{multicols}{2}
\begin{itemize}
    \item weakly sparse classes 
of graphs~\cite{dvorak2013,grohe2017},

\item edge-stable\footnote{A graph class is \emph{edge-stable} if it excludes a half-graph as a semi-induced subgraph.} classes of graphs~\cite{dreier2023structurally,dreier2024stable},

\item classes of linearly ordered graphs~\cite{bonnet2024twin},

\item tournaments~\cite{geniet2023first}.
\end{itemize}
\end{multicols}
Moreover, model checking is known to be $\mathsf{AW}[\star]$-hard on every hereditary graph class that is not monadically dependent~\cite{dreier2024flip}.
Strikingly, in many cases, monadic dependence corresponds to a graph-theoretic notion: 
\begin{itemize}
    \item nowhere-denseness for weakly sparse classes~\cite{ADLER2014322},
    \item flip-flatness for edge-stable classes \cite{flip-flatness},
    \item bounded twin-width for linearly ordered graphs and tournaments \cite{bonnet2024twin,geniet2023first}.
\end{itemize}
 
The connection between monadic dependence of linearly ordered structures, twin-width, and model checking will be central to our results for graphs equipped with forest orders.
We refer to the recent survey~\cite{pilipczuk2025survey} for a broader overview.


\newcommand{\efficientEncodingFootnote}{
For logics more expressive than first-order logic, independence alone need not rule out fixed-parameter tractable model checking, even on hereditary graph classes. 
For example, sufficiently long subdivisions can make the encoded graph small relative to the input, leaving enough time to evaluate a sentence by brute force on the encoded instance. 
Such phenomena occur for monadic second-order logic~\cite{ganian2010there,kreutzer2010lower} and separator logic~\cite{pillipczuk2021connectivity}.

To obtain hardness, we therefore impose an efficiency condition on the encoding. 
An \emph{efficient encoding of graphs} consists of a polynomial-time construction that maps every finite graph to a monadic expansion of a structure in the class, from which a fixed simple interpretation in the logic recovers the input graph. 
The interpretation then gives an $\mathsf{AW}[\star]$-hardness reduction for model checking on these monadic expansions. 
Our hardness theorem uses a specific efficient minor-encoding hypothesis, stated below.
}

\paragraph{First-order logic with connectivity and the quest for a logic for classes with excluded minors}
A basic limitation of first-order logic is that it cannot express connectivity. Several natural extensions of first-order logic were introduced to address this gap.
A recent example is \emph{separator logic}, also denoted by $\FOconn$, which extends $\FO$ on graphs by allowing the use of the atomic predicate $\mathsf{conn}(x,y;\bar z)$, which evaluates to true if there is a path in the graph from $x$ to $y$ that avoids the vertices of~$\bar z$~\cite{bojanczyk2021separator,schirrmacher2023first}.
The following theorem summarizes the resulting frontier on monotone classes. 

\begin{theorem}[
  store=thm:separator-characterization,
  restate-keys={note=restated}
]
Let $\Cc$ be a monotone class of graphs. Then the following are equivalent:
\begin{enumerate}[label=(\arabic*),nosep]
  \item $\Cc$ excludes some graph as a topological minor;
  \item $\Cc$ is monadically dependent (for \FOconn);
  \item $\Cc$ is  dependent (for \FOconn).
\end{enumerate}
\medskip

Moreover, any of these conditions imply that  {\FOconn} model checking is \FPT\ on $\Cc$, while if $\Cc$ is independent and admits efficient encoding\footnote{\efficientEncodingFootnote} of graphs, then
model checking  on $\Cc$ is $\AW[*]$-hard.
\end{theorem}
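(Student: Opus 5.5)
The plan is to prove the cycle $(1)\Rightarrow(2)\Rightarrow(3)\Rightarrow(1)$, and then to handle the algorithmic statements separately. The implication $(2)\Rightarrow(3)$ is immediate: dependence is the special case of monadic dependence in which we expand by no unary predicates.

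For $(1)\Rightarrow(2)$ together with the \FPT\ claim, I would use the known structure theory for graphs excluding a topological minor. Such graphs admit tree decompositions of bounded adhesion in which every torso either has bounded degree apart from a bounded number of apex vertices, or excludes a fixed minor. Over these torsos, the atom $\con(x,y;\bar z)$ can be expressed in \FO\ after a bounded number of colourings. Separators of size at most $|\bar z|$ interact with the decomposition only through bounded adhesions, so connectivity avoiding $\bar z$ reduces to connectivity in a graph of bounded degree or bounded tree-width. Connectivity there is controlled by local types, which \FO\ can describe. This yields an \FO-interpretation, or transduction, of $(G,\con)$ in a coloured expansion of a graph from a nowhere dense class. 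Monadic dependence is preserved by transductions, and every nowhere dense class is monadically dependent for \FO~\cite{ADLER2014322}, so $\Cc$ is monadically dependent for \FOconn. For model checking, I would compute the decomposition and the colouring in \FPT\ time, rewrite the formula into \FO, and apply~\cite{grohe2017}. This is the pipeline of~\cite{bojanczyk2021separator,schirrmacher2023first}, which I would cite rather than redo.

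For $(3)\Rightarrow(1)$, suppose $\Cc$ is monotone and contains every graph as a topological minor. Then for every $n$, $\Cc$ contains a subdivision of $K_n$, and by monotonicity it also contains every subgraph of such a subdivision. I would fix a single \FOconn\ formula $\phi(x;y)$ that shatters arbitrarily large sets. Take branch vertices $a_1,\dots,a_n$. For each $J\subseteq[n]$, take a vertex $b_J$ and keep, as a subdivided edge, the path from $b_J$ to $a_i$ exactly when $i\in J$. To realise this inside a single subdivided clique, I would use a clique on $n+2^n$ branch vertices and delete the unwanted subdivided paths, which monotonicity permits. Now let $\phi(x;y)$ say that $x$ and $y$ are connected, and that removing $y$ disconnects $x$ from the other designated vertices. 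This is where the main obstacle lies, since without unary predicates we cannot name $a_i$ or $b_J$, and the paths between the $a_i$ create unwanted connectivity. To avoid this I would delete all $a_i$–$a_j$ paths and all $b$–$b$ paths, so that the remaining graph is a subdivision of a bipartite incidence graph. In that graph, $\phi(x;y)$ can be "$x$ and $y$ are joined by a path all of whose internal vertices have degree $2$". Degree-$2$ paths are not first-order definable, but they are expressible by connectivity with removed vertices, namely $\con(x,y;\bar z)$ for an appropriate choice of the neighbours of $y$. Getting this formula right, uniformly in the subdivision lengths, is the delicate step. A cleaner alternative is to subdivide everything equally and use the lengths to distinguish vertices, but a uniform formula independent of $n$ is the key requirement.

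Finally, for the hardness claim, an independent $\Cc$ with efficient encodings lets us interpret arbitrary graphs, with polynomial overhead, via the same shattering gadget. Composing this with $\AW[*]$-hardness of \FO\ model checking on all graphs gives the reduction~\cite{dreier2024flip}.
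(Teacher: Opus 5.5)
Your argument for $(1)\Rightarrow(2)$, and the model-checking claim built on it, has a genuine gap. \FOconn\ cannot be reduced to \FO\ over coloured expansions of a nowhere dense class, not even on the simplest topological-minor-free classes. Take the paths $P_n=v_1v_2\cdots v_n$, which exclude $K_3$ as a topological minor. For all $i,j\ge2$ we have $P_n\models\conn(v_1,v_j;v_i)$ if and only if $j<i$, so $\conn(r,y;x)$ has the order property. More generally, \FOconn\ defines the ancestor order of rooted trees (second part of \zcref{lem:con=FO}). Every transduction of a nowhere dense class is monadically stable, so no such interpretation exists.

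The same example refutes your intermediate step. On bounded-degree torsos, coloured \FO\ is Gaifman-local, and connectivity avoiding $\bar z$ is not determined by local types.

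The encoding that does work is Lemma~8.3 and Remark~7.1 of the full version of~\cite{pillipczuk2021connectivity}. It represents the decomposition as a rooted tree \emph{together with its ancestor order}, augmented by bounded-expansion auxiliary structures on sibling sets. On these structures, each fixed \FOconn\ formula becomes an \FO\ formula after adding finitely many colours. The augmented trees are unstable, so \cite{ADLER2014322} does not help. The actual content of $(1)\Rightarrow(2)$ is therefore a separate proof that such augmented trees are monadically dependent. The paper proves this in three moves:
\begin{enumerate}
\item It turns \FO\ on augmented trees into automata whose transitions are \FO\ formulas on the coloured auxiliary structures.
\item It reduces to formulas $\psi(x;\bar y)$ by Simon's one-variable criterion~\cite{simon2021note}.
\item It counts $\psi$-types over a parameter set $A$. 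After compressing the root-to-$A$ paths into $O(|A|)$ branch vertices and connecting paths, monadic dependence of the auxiliary structures bounds the patterns at branch vertices, giving $O(|A|^{d+1})$ types, as in~\cite{paszke2020vc}.
\end{enumerate}
For the \FPT\ statement, cite~\cite{pillipczuk2021connectivity} directly. Rewriting into \FO\ and calling~\cite{grohe2017} fails for the same stability reason.

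Your $(3)\Rightarrow(1)$ is the paper's subdivision argument, and it can be completed uniformly in the subdivision lengths. Let $\phi(x;y)$ say that $x,y$ are adjacent or
$\exists u\exists v\,\bigl(E(x,u)\wedge E(y,v)\wedge\conn(u,v;x,y)\wedge\forall w\,(\conn(u,w;x,y)\rightarrow\deg(w)=2)\bigr)$.
In words, some component of $G-\{x,y\}$ consists of degree-$2$ vertices and meets the neighbourhoods of both $x$ and $y$. The paper attaches three private leaves to every branch vertex to force degree at least $3$. In your incidence graph the vertices $b_J$ with $|J|\le2$ have small degree, but every $a_i$ has degree $2^{n-1}\ge3$, so the formula is still correct on all pairs $(a_i,b_J)$. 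You also do not need unary predicates to name $a_i$ or $b_J$: for plain dependence they are simply parameters.
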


The algorithmic statements of the theorem were proved in~\cite{pillipczuk2021connectivity}, and we give a brief sketch of the dependence characterizations in~\ref{app:separator-characterization}.
Further works on extensions of $\FO$ for topological minor-free classes include~\cite{golovach2022dpminor,SauST25para,schirrmacher2024topodp,bojanczyk2025lowrank,SauSSSTV26}

The next natural graph-theoretical dividing line corresponds to classes excluding a minor, hence the following question.

\begin{problem}
\label{pb:1}
\textsl{Is there a natural logic $\mathcal L$ such that a monotone or weakly sparse class of graphs is monadically dependent for $\mathcal L$ if and only if it excludes a
minor?}
\end{problem}

\paragraph{Understanding the structure of monadically dependent classes}
A complementary approach to monadically dependent classes studies \emph{tree-ordered weakly sparse} (TOWS) representations of such graph classes.
Many successful structural representations come with an underlying tree order. 
Exposing this order as a part of the structure makes global structure accessible to first-order logic.
This viewpoint is central in the study of classes of many tameness notions for graph classes~\cite{bonnet2024twin,bonnet2022twwI,lmcs_perm,covers,mergemodels,Colcombet2007,courcelle1992monadic,dreier2025merge,gajarsky2020first,ganian2019shrub,ganian2012trees},
and has very recently led to new characterizations of minor exclusion through tree-ordered weakly sparse graphs~\cite{buffiere2026characterizations}, which is a first step toward a solution of \zcref{pb:1} and the departure point of the present paper. Let us give the background.
\medskip

Let $G$ be a graph. A \emph{spanning forest} of $G$ is an acyclic subgraph $F\subseteq G$ with $V(F)=V(G)$. 
Following~\cite{gross2018graph}, we call $F$ \emph{full} if $F$ and $G$ induce the same partition of $V(G)$ into connected components. 
A \emph{rooted spanning forest} has one distinguished vertex, called a \emph{root}, in each component. 
It defines a forest order $\preceq_F$ on $V(G)$ by $x\preceq_F y$ if $x$ lies on the unique path from $y$ to the root of its component. 
For a graph class $\Cc$, let
\[
\begin{aligned}
\Cc^\Upsilon&:=\{(G,\preceq_F):G\in\Cc,\ F\text{ is a full rooted spanning forest of }G\},\\
\Cc^F&:=\{(G,\preceq_F):G\in\Cc,\ F\text{ is a rooted spanning forest of }G\}.
\end{aligned}
\]
These are the classes of graphs in $\Cc$ expanded by full and arbitrary rooted spanning forest orders, respectively.
Two results established a logical characterization of minor exclusion in this setting.

\begin{theorem}[note={\cite[Theorem~1.4]{buffiere2026characterizations}},store=thm:minor-tows]
Let $\Cc$ be a weakly sparse class of graphs. Then the following are equivalent:
\begin{enumerate}[label=(\arabic*),nosep]
  \item $\Cc$ excludes some graph as a minor;
  \item $\Cc^{\Upsilon}$ is monadically dependent (for \FO);
  \item $\Cc^F$ is monadically dependent (for \FO).
\end{enumerate}
\end{theorem}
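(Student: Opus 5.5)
Since \zcref{thm:minor-tows} is cited from~\cite{buffiere2026characterizations}, I sketch how one would prove the cycle (1)$\Rightarrow$(3)$\Rightarrow$(2)$\Rightarrow$(1). The step (3)$\Rightarrow$(2) is immediate: $\Cc^\Upsilon\subseteq\Cc^F$, and monadic dependence passes to subclasses.

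For (2)$\Rightarrow$(1) I argue by contraposition. Suppose $\Cc$ is weakly sparse but contains every graph as a minor. Since $\Cc$ is weakly sparse, it cannot contain arbitrarily large cliques as subgraphs. Excluding no minor, on the other hand, means that arbitrarily large cliques occur as minors, and hence as \emph{shallow} topological minors or as minors with large branch sets. If $\Cc$ is not nowhere dense, then $\Cc$ is already not monadically dependent for \FO\ by~\cite{ADLER2014322}, and therefore $\Cc^\Upsilon$ is not either. So the interesting case is $\Cc$ nowhere dense but with unbounded Hadwiger number, where the clique models have branch sets of unbounded depth. Here I choose the full spanning forest $F$ so that it contains, inside each branch set $B_i$ of a $K_n$-model, a spanning tree of $G[B_i]$. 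These trees are hung from a root $r_i$ and connected together through the remaining components of the forest.

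With unary predicates marking the roots $r_i$ and the ends of the chosen inter-branch edges, one then defines $u\in B_i$ by ``$r_i\preceq_F u$ and the path from $r_i$ to $u$ contains no marked separator''. One must make sure the branch-set trees are subtrees of $F$ in which $r_i$ is the top. Combining this with the edges of $G$ yields an \FO\ formula expressing adjacency between branch sets, i.e.\ it interprets $K_n$-minors, and from there an arbitrary bipartite graph after coloring. This transduces all graphs, so $\Cc^\Upsilon$ is monadically independent.

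The main technical point in this direction is that $\preceq_F$ only speaks about ancestry, so ``$u$ and $v$ lie in the same branch set'' must be expressed as ``same nearest marked ancestor''. This is first-order definable in a forest order: one takes the maximal marked element below both. Fullness is easy to keep, by extending the forest arbitrarily.

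For (1)$\Rightarrow$(3), which is the harder direction, my plan is to show that for $K_t$-minor-free $\Cc$ the class $\Cc^F$ has bounded twin-width once a suitable linear order is added. Monadic dependence of $\Cc^F$ then follows from~\cite{bonnet2024twin}, since it is preserved under reducts. The strategy is to contract each subtree of $F$ into a single vertex: the edges of $G$ between disjoint subtrees of $F$ form a minor of $G$. Using the degeneracy and small separators of $K_t$-minor-free graphs (bounded expansion, and in particular bounded edge density of minors), one builds a contraction sequence along a DFS-compatible linear extension of $\preceq_F$. One merges sibling subtrees bottom-up, and the red degree is bounded by the number of neighboring contracted parts, which stays $O(t\sqrt{\log t})$ by minor density. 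The obstacle I expect is handling the forest-order relation itself during contraction: merging two subtrees creates red edges for vertices on the ancestor path. To control this, I would contract only along chains of siblings, so that the ancestor relation remains homogeneous between parts.
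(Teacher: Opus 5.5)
Both nontrivial directions of your proposal have genuine gaps.

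\textbf{The direction (1)$\Rightarrow$(3).} Your contraction sketch does not work as stated.
\begin{itemize}
\item A union of sibling subtrees of $F$ is not connected in $F$, since their common parent is not in the part. So the quotient by your parts need not be a minor of $G$.
\item More seriously, even for genuine minors the Kostochka--Thomason bound controls average degree, not the number of parts adjacent to one fixed part. Red degree is a maximum-degree quantity, and $K_t$-minor-free graphs have vertices of unbounded degree (already a star). For instance, let a vertex be adjacent to all vertices of an $m\times m$ grid (a $K_6$-minor-free graph), and let these be its children in $F$. Merging the children one by one along a linear order creates a part with $\Omega(m)$ red neighbors, by the isoperimetric inequality of the grid.
\end{itemize}
Bounded twin-width of $\Cc^F$ for arbitrary spanning forests of unbounded degree is in fact the main structural theorem of this paper. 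Its proof needs the Robertson--Seymour decomposition, a reduction to subcubic hosts, the bounded-degree criterion \cite[Theorem~1.3]{buffiere2026characterizations}, and a long chain of transductions. None of this is needed for the present statement. The paper cites (1)$\Leftrightarrow$(2) and obtains (3) by the universal-vertex reduction:
\begin{itemize}
\item if $\Cc$ excludes $K_t$, then $\{G\oplus K_1:G\in\Cc\}$ excludes $K_{t+1}$, hence is weakly sparse;
\item each $(G,\preceq_F)$ is obtained from $(G\oplus K_1,\preceq_Y)$, with $Y$ a full rooted spanning tree, by deleting the definable root;
\item this interpretation preserves monadic dependence.
\end{itemize}

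\textbf{The direction (2)$\Rightarrow$(1).} Defining branch-set membership from $\preceq_F$ is fine. The gap is the step ``from there an arbitrary bipartite graph after coloring'', which is the entire difficulty and is not justified. Interpreting adjacency between the branch sets of a $K_n$-model only yields $K_n$. To switch pairs off, marking the endpoints of the chosen connecting edges does not suffice. A vertex can be an endpoint of many connecting edges, and $G$ may contain further edges between marked endpoints belonging to different pairs, which create false adjacencies.

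This is exactly where weak sparsity must be used. For the class of all cliques, $\Cc^{\Upsilon}$ consists of tree orders decorated with a complete graph and is monadically dependent, although no minor is excluded. Yet your nowhere-dense case never invokes $K_{s,s}$-freeness at this step. You would need, for instance, either of the following:
\begin{itemize}
\item a Ramsey argument on the indices of the chosen edges, showing that on a large subfamily the marked endpoints have no adjacencies other than the chosen ones (each homogeneous non-matching pattern forces a $K_{s,s}$);
\item the tree-ordered-minor argument of the cited paper.
\end{itemize}
Finally, a citation issue: for weakly sparse classes that are not monotone, the somewhere-dense case requires Dvo\v{r}\'ak's theorem on induced subdivisions, in addition to the result of Adler and Adler, which concerns monotone classes.
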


The equivalence of the first two conditions is \cite[Theorem~1.4]{buffiere2026characterizations}.
The equivalence of (2) and (3) is an easy observation, see \zcref{rem:forest-tree-reduction} in the preliminaries. 

\begin{theorem}[note={\cite[Theorem~2]{bonnet2022twin}},store=thm:minor-tww]
Let $\Cc$ be a monotone class of graphs. Then the following are equivalent:
\begin{enumerate}[label=(\arabic*),nosep]
  \item $\Cc$ excludes some graph as a minor;
  \item $\Cc$ has bounded \emph{spanning twin-width}, that is: there exists a constant $t$ such that every graph $G\in\Cc$ has a full rooted spanning forest $F$ such that ${\rm tww}(G,\preceq_F)\leq t$.
\end{enumerate}
\end{theorem}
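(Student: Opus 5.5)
The plan is to prove the two implications separately. The forward direction (1)$\Rightarrow$(2) is constructive. The converse goes through model-theoretic tameness together with monotonicity of $\Cc$.

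\textbf{(1)$\Rightarrow$(2).} Fix $G\in\Cc$ with $K_t\not\preceq_{\rm minor} G$. In every component, choose $F$ to be a \emph{depth-first} spanning tree, and let $<$ be a DFS preorder extending $\preceq_F$.
\begin{itemize}
\item In this order, every subtree of $F$ is an interval of $<$.
\item Every edge of $G$ joins a vertex to one of its $F$-ancestors, since DFS trees have no cross edges.
\end{itemize}
To bound ${\rm tww}(G,\preceq_F)$, I would use the grid-theorem characterisation of twin-width for ordered binary structures (the Marcus--Tardos-type argument of Bonnet et al.). It suffices to show that the adjacency matrices of $E(G)$ and of $\preceq_F$, written in the order $<$, admit no $k$-mixed minor, for some $k=k(t)$.

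Suppose such a mixed minor exists. Its row and column blocks are $<$-intervals. Each interval decomposes into few subtrees of $F$ plus a path segment, and this decomposition is connected once we use the ancestor structure. Any mixed (non-constant) cell certifies either an edge between the corresponding pieces or an ancestor relation between them. Ramsey-type cleaning along the diagonal would then select, from a large mixed minor, $t$ disjoint connected sets that are pairwise adjacent. This gives a $K_t$ minor, which is a contradiction.

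I expect this to be the main technical step. The difficulty is turning ``non-constant cells'' into genuinely disjoint connected branch sets. I would first try to exploit the fact that all edges are ancestor--descendant edges, so that a mixed cell between intervals $I<J$ forces an edge from $J$ to an ancestor that lies in $I$ or above $I$. A further Ramsey step would then make these ancestor paths disjoint.

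\textbf{(2)$\Rightarrow$(1).} Classes of ordered binary structures of bounded twin-width are monadically dependent for $\FO$, and the same holds for partially ordered ones after extending $\preceq_F$ to a compatible linear order. Suppose $\Cc$ excludes no minor. By monotonicity, $\Cc$ then contains, for every graph $H$, a graph $G_H$ that is a minor model of $H$. Here each branch set is a tree, and every edge of $H$ is realised by a single edge.

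The obstacle is that the forest $F$ witnessing small twin-width is chosen adversarially and need not respect the branch sets. To avoid this, I would apply monotonicity more aggressively and choose $H$ to be a subdivision of a large clique (or grid). I would then pass to a subgraph of $G_H$ in which the only cycles run through the model edges. This forces every full spanning forest to contain all branch-tree edges and all but few model edges. With unary predicates marking the branch sets, the forest order should then let first-order logic define adjacency of $H$ among branch roots: two roots are adjacent in $H$ iff there is an edge between their $\preceq_F$-subtrees of the appropriate colours. Interpreting arbitrary graphs in this way contradicts monadic dependence.

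If controlling $F$ this way fails, my fallback is to invoke \zcref{thm:minor-tows}. A uniform twin-width bound for one chosen full forest per graph already yields monadic dependence of the corresponding expansions. I would then replay its proof of (2)$\Rightarrow$(1), which only ever inspects a single forest per graph.
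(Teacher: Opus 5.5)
Both directions have genuine gaps.

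\textbf{(1)$\Rightarrow$(2).} The step you flag as the main difficulty is not just unproved: as formulated it is false, because nothing controls the sibling order in your DFS preorder.

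Take the path $a_1a_2\cdots a_nv$ together with vertices $c_1,\dots,c_n$, where $c_i$ is adjacent to $v$ and to $a_{\pi(i)}$. This is a subdivided fan, hence planar. The DFS from $a_1$ that first runs down the path produces the spanning tree consisting of this path with all $c_i$ as leaf children of $v$, and it may visit these leaves in any order. In the resulting preorder, the block of the adjacency matrix with rows $c_i$ and columns $a_j$ is the permutation matrix of an arbitrary permutation. Choosing a grid-like permutation gives a $k$-mixed minor for any prescribed $k$, although $G$ has no $K_5$ minor. So no Ramsey cleaning can turn mixed cells into a clique minor. Note also that a preorder interval can be a union of unboundedly many sibling subtrees, not ``few''.

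The missing ingredient is control at high-degree vertices, and this is exactly what the paper's route to \zcref{thm-main-structure} is built for. It lifts $(G,Y)$ to a subcubic tree-ordered graph of bounded Hadwiger number, using the Robertson--Seymour decomposition, the apex-minor-free reduction, and Milolidakis' construction. On that graph a DFS order is transducible from the tree order (\zcref{lem:bitransducible-tree-linear}) and \zcref{lem:effective-bounded-degree-covergraph-original} applies. The paper then transduces back, which even yields bounded twin-width for every forest, not just a chosen one.

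\textbf{(2)$\Rightarrow$(1).} Your claim that a minimal host $G_H$ of a subdivided clique ``forces every full spanning forest to contain all branch-tree edges'' is false. An edge inside the branch tree of a high-degree vertex of $H$ is typically not a bridge of $G_H$, since both sides stay linked through model edges, so some spanning tree omits it. The forest in (2) is given to you, not chosen by you. Hence the branch sets need not be $\preceq_F$-convex, and your formula does not define $H$.

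The fallback fails as well. \zcref{thm:minor-tows} is about all full forests, and its converse cannot run on an adversarial forest. Indeed, the one-forest version is false for hereditary weakly sparse classes (see the remark after \zcref{thm-main-monotone}), so monotonicity must genuinely enter.

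The paper's argument (\zcref{thm-main-monotone}) proceeds as follows:
\begin{enumerate}
\item Take connected cubic graphs $H_n$ of unbounded twin-width, so that the minimal host in $\Cc$ is a subdivision $S_n$ of $H_n$.
\item Since subdivision vertices have degree $2$, the given spanning tree omits at most one edge per subdivided path.
\item Contracting suitable cover edges, which is a fixed transduction, yields $(H_n,\preceq_{T_n})$ with $T_n$ a spanning tree of maximum degree at most $3$.
\item Monadic dependence is converted back into bounded twin-width by \zcref{thm:bounded-degree-covergraph}. Alternatively, transduce a DFS order and use \zcref{thm:characterization-tww}. Either way this contradicts the choice of $H_n$.
\end{enumerate}
The twin-width step is essential, since interpreting subcubic graphs alone contradicts nothing.

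Finally, your detour through a compatible linear order is unnecessary. Bounded twin-width already implies monadic dependence for arbitrary binary structures (\zcref{lem:tww-preserved}), whereas adding a linear order can destroy bounded twin-width.
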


\subsection{Results}

\paragraph{Structural and algorithmic results}
If $\Cc$ excludes a fixed minor, then~\zcref{thm:minor-tows} shows that $\Cc^F$ is monadically dependent.
We strengthen this conclusion to bounded twin-width.

\getkeytheorem{thm-main-structure}

This is significant as, for linearly ordered binary structures, bounded twin-width \textsl{exactly} captures monadic dependence and governs the tractability of first-order model checking.
While the above theorem strengthens the $(1) \Rightarrow (2)$ direction of \zcref{thm:minor-tww}, we also obtain a strengthening of the $(2) \Rightarrow (1)$ direction on monotone classes (\zcref{thm-main-monotone}).

The following is an algorithmic version of  \zcref{thm-main-structure}. The exclusion of a minor is restated using the notion of Hadwiger number $h(G)$ of a graph $G$.

\getkeytheorem{thm-main-algo}

This theorem extends to binary relational structures (\zcref{thm-main-algo-structures}). 
As $\FO$ model checking is known to be fixed-parameter tractable on bounded twin-width classes of linearly ordered binary structures \cite{bonnet2024twin},
it also implies the following model checking result.

\getkeytheorem{cor-model-checking}

In the above, for a structure $\mathbf M$ with universe~$M$, its
\emph{Gaifman graph} $\Gaif(\mathbf M)$ has vertex set~$M$, with
distinct vertices adjacent if they occur together in a tuple of
a relation of~$\mathbf M$.

At first glance, \zcref{cor-model-checking} may seem somewhat technical, since it concerns first-order logic over a more general signature obtained by enriching graphs with an additional structure, namely a forest order.
We next show that this is in fact a very natural viewpoint: first-order logic over this enriched signature admits an exact characterization in terms of a natural extension of first-order logic on graphs by suitable connectivity predicates.

\paragraph{Solving \zcref{pb:1}: A logic for classes that exclude a minor}
We introduce \emph{sub-connectivity logic}, denoted by $\FOscon$. 
We now directly work in the setting of binary structures. Let
$\sigma$ be a finite binary relational signature containing a distinguished
binary relation symbol~$Z$. Formulas of \emph{sub-connectivity logic},
denoted by $\FOscon[\sigma]$, are evaluated on $\sigma$-structures
$\mathbf M$ in which $Z(\mathbf M)$ is symmetric and irreflexive. We identify
the $Z$-reduct~$\MZ$ with the simple undirected graph
on~$M$ having edge relation~$Z(\mathbf M)$. In addition to the usual
first-order atoms over~$\sigma$, the logic has atoms
\[
   \scon(x,y;\bar z),
\]
where $\mathbf M\models\scon(x,y;\bar z)$ if and only if $x$ and $y$ lie
outside the set of vertices occurring in $\bar z$ and belong to the same
connected component of $\mathbf M\!\upharpoonright_Z-\bar z$.

Hence, when $\sigma=\{E,Z\}$ and $\mathbf M$ is a $\sigma$-structure whose
$E$-reduct is a graph and which satisfies $Z(\mathbf M)=E(\mathbf M)$, the atom $\scon$ is exactly the connectivity atom of separator logic~$\FOconn$.
However, allowing the interpretation of~$Z$
to specify a second independent edge layer makes it possible to reason about
connectivity in, for example, a spanning forest or the branch sets of a minor
model. This explains why the frontier moves from topological-minor exclusion
for $\FOconn$ to \emph{minor exclusion} for $\FOscon$.

Our first main result about $\FOscon$ is an exact characterization of its expressive power by relating $\FOscon$ to the spanning forest orders discussed above.

\getkeytheorem{thm:scon=FOFO}

Combining the forward translation with the structural theorem yields monadic
dependence whenever the Gaifman graphs exclude a fixed minor. In the
graph special case, the converse holds on every weakly sparse class, answering
\zcref{pb:1}.

\getkeytheorem{thm:dependence}

Combining the same translation with the algorithmic version of our structural
theorem gives the promised meta-theorem for arbitrary binary structures.

\getkeytheorem{thm:main-mc}

We also prove a hardness counterpart to~\zcref{thm:main-mc}. Let $\Cc$ be monotone and exclude no fixed minor. 
Assume that, given a graph $H$, one can compute in polynomial time a graph $F\in\Cc$ that contains $H$ as a minor and is minimal under taking subgraphs with this property. 
Then $\FOscon$ model checking on $\Cc_Z$ is $\mathsf{AW}[\star]$-hard, parameterized by the sentence size. 
Thus, under this encoding hypothesis and the assumption $\mathsf{FPT}\ne\mathsf{AW}[\star]$, minor exclusion is the tractability frontier in this monotone setting. 
See~\cite[Section~8.6]{Flum2006ParameterizedCT} for the complexity class.

\getkeytheorem{thm:main-hardness-new}

\medskip

\paragraph{An algorithmic application}

As an algorithmic application, we consider combinatorial reconfiguration and solution discovery for vertex-deletion problems whose feasibility conditions are expressible in separator logic~$\FOconn$.
The idea behind combinatorial reconfiguration is to transform one configuration of a system into another while keeping the system running throughout the transformation.
For vertex-subset problems on graphs, we can imagine a token placed on each vertex of a solution.
We change the configuration by moving one token at a time to an unoccupied vertex, either by sliding it along an edge or by allowing it to jump to any unoccupied vertex.
The requirement that the system keeps running means that the vertices occupied by tokens must form a feasible solution after every move.
The algorithmic question is whether two given feasible solutions can be transformed into one another by a sequence of such moves.
This viewpoint has been applied to a wide range of
graph problems. 
We refer for example to the surveys~\cite{journals/cosrev/BousquetMNS24,nishimura18reconfiguration,books/cu/p/Heuvel13}.

A particularly relevant family of examples arises from reconfiguring vertex sets that hit or separate prescribed structures. 
For instance, \textsc{Feedback Vertex Set Reconfiguration} asks to transform one deletion set into another while maintaining that the remaining graph is acyclic.
Parameterized algorithms and kernels for several such vertex-set reconfiguration problems were studied in~\cite{journals/algorithmica/MouawadN0SS17}. 
More recently, \textsc{Feedback Vertex Set Reconfiguration} has been studied in its own right on planar graph classes~\cite{journals/tcs/BousquetHKMS23}. 
\textsc{Vertex Separator Reconfiguration} under the token addition/removal, token jumping, and token sliding rules has been studied in~\cite{journals/rairo-ro/GomesNS25}, while~\cite{journals/jcss/GomesLMMOSZ24} considers the reconfiguration of minimum $s$-$t$ separators.

Since individual reconfiguration problems often require tailored arguments, there has also been substantial interest in algorithmic meta-theorems for reconfiguration. Mouawad, Nishimura, Raman, and Wrochna\ gave an $\MSO$-based meta-theorem parameterized by treewidth and the length of the reconfiguration sequence~\cite{conf/iwpec/MouawadNRW14}.
Gima, Ito, Kobayashi, and  Otachi\ obtained meta-theorems for reconfiguration under other structural parameters, such as neighborhood diversity and treedepth~\cite{conf/esa/GimaIKO22}. 
Further meta-algorithmic and structural results are known for sparse graph classes, for example for independent set and dominating set reconfiguration on sparse graphs~\cite{journals/jcss/LokshtanovMPRS18,journals/combinatorics/Siebertz18}.

Our logic provides a new meta-algorithmic viewpoint. We consider
\emph{modulator reconfiguration} problems in which the current token set
$S$ is a deletion set and feasibility is expressed by a logical condition
on the remainder $G-S$. By using $\FOconn$ as the specifying logic, this captures, among others, \textsc{Vertex Cover
Reconfiguration}, \textsc{Feedback Vertex Set Reconfiguration}, and
 \textsc{$s$-$t$-Cut Reconfiguration}.
Let us stress again that we use separator logic to specify the property of $G-S$, problem, while we use the methods for subconnectivity logic to resolve the algorithmic reconfiguration problem. 

\getkeytheorem{thm:modulatorreconfiguration}

The same method applies to \emph{solution discovery}, where the initial token set need not be feasible and feasibility is required only at the end of the sequence. We refer to~\zcref{sec:reconfiguration} for the precise definitions and the corresponding meta-theorem for \textsc{Modulator Discovery}.

\medskip 

\paragraph{Extending to regular path queries is not possible}

The logic {\FOscon} can be further extended by introducing a recursive scheme while keeping the fixed-parameter tractability property on every class of graphs excluding a minor; see \zcref{subsec-guarded-fragment}.

We investigate how far one can push this line of ideas.
In database theory and graph querying, \emph{connectivity with regular constraints on the witnessing path} is a central notion.
It appears in regular path queries, conjunctive regular path queries, and their extensions with regular relations on tuples of paths~\cite{cruz1987,mendelzon1995,barcelo2012,libkin2012,freydenschweik2013,wood2012,salvati2024}.
One might therefore hope to strengthen $\FOscon$ by requiring the connecting paths to satisfy simple regular constraints.
We show that already with two colors and very simple regular constraints,
the corresponding extension has, on monotone classes, bounded tree-width
as its exact dependence frontier. The hardness direction additionally
requires that the relevant grid-minor encodings, including their minor
models, can be constructed efficiently.

\section{Overview of the structural and algorithmic proofs}
\label{sec:outline_struc}

We now give an overview of the proof of our main structural theorems, \zcref{thm-main-structure,thm-main-algo}. 

\subsection{From minor exclusion to bounded twin-width}

Theorems \zcref{thm-main-structure,thm-main-algo} are derived from the following theorem, \zcref{thm-aux-structure}, which treats connected graphs equipped with rooted spanning trees.
The passage to arbitrary rooted spanning forests uses \zcref{rem:forest-tree-reduction}.

\getkeytheorem{thm-aux-structure}

The proof of this derivation as well as the proof of \zcref{thm-aux-structure} will be the subject of \zcref{sec:proof-main-thm}. We now give an outline of the proof of \zcref{thm-aux-structure}.

\medskip 

Our proof is organized around \emph{transductions}. Informally, a transduction
is a uniform logical transformation between classes of structures. It may
take a bounded number of copies of the input, add finitely many unary colors 
and then define the universe and relations of the output by formulas.
Transductions therefore allow us to encode the desired structures inside a
better-behaved auxiliary class.

Two preservation results are central. First, on classes of linearly ordered
binary structures, monadic dependence is equivalent to bounded twin-width.
Second, bounded twin-width is preserved under fixed transductions, with a
computable change in the bound. We therefore do not prove bounded twin-width
directly for every desired tree-ordered expansion. Instead, we construct an
auxiliary linearly ordered class that is monadically dependent, and hence has
bounded twin-width, and recover the desired structures from it by a fixed
transduction.

Fix $t\in\mathbb N$, and let $\mathscr H_t$ be the class of connected graphs with Hadwiger number at most~$t$.
For every graph $G\in \mathscr H_t$ and every rooted spanning tree $Y$ of~$G$,
we construct an auxiliary graph~$H$, together with a rooted spanning tree
$Y_H$ and a linear order~$<$, and we write the resulting tree- and linearly
ordered structure as $(H,\preceq_{Y_H},<_H)$.
The construction is designed so that the
class
\[
    \mathscr M_t = \{ (H,\preceq_{Y_H},<_H) \mid G\in \mathscr H_t,
    \text{$Y$ is a rooted spanning tree of $G$} \}
\]
is monadically dependent. At the same time, $(H,\preceq_{Y_H},<_H)$ is constructed so
that it contains enough information to recover, by one fixed transduction,
the original graph~$G$ together with the spanning-tree order defined by~$Y$
and a linear order~$<_G$. We denote this expansion by~$(G,\preceq_Y,<_G)$.

It follows qualitatively that $\mathscr M_t$ has bounded twin-width, because it
is a linearly ordered monadically dependent class of binary structures. Since
the class of the desired structures $(G,\preceq_Y,<_G)$ is obtained from $\mathscr M_t$ by
a fixed transduction, preservation of twin-width under transductions then
gives a qualitative bound for the desired class as well. To obtain the
computable bound required by the algorithmic theorem, we additionally use
the effective quantitative content of the tree-ordered weakly sparse
characterization: the auxiliary spanning trees have maximum degree~$3$, and
the graph reducts of all their tree-ordered minors exclude a biclique whose
order is computable from~$t$. This yields a computable twin-width bound for
$\mathscr M_t$. The constructive linearization and recovery transductions then
preserve computability of the final bound.

The proof of \zcref{thm-aux-structure} is accordingly split into two phases (see \zcref{fig:overview}). 
In Phase~I, starting from the pair~$(G,Y)$, we construct the auxiliary linearly ordered tree-ordered structure $(H,\preceq_{Y_H},<_H)$ and prove that the class $\Mm_t$ of obtained structures forms a monadically dependent class. 
In Phase~II, we show that no information needed for recovering the original graph and tree order has been lost: from $(H,\preceq_{Y_H},<_H)$ we transduce back a linearly ordered expansion $(G,\preceq_Y,<_G)$ of the original tree-ordered graph. 
The following paragraphs give a more detailed informal description of these two phases before the full formal construction is presented.

\begin{figure}[ht]
\scalebox{.96}{
\begin{xy}
\xymatrix{
&&\mathcal T=(T,\bag)\ar@{..>}[dr]^{2}\\
\text{Phase I}&(G,Y)\ar@{..>}[ur]^(.4){1}\ar@{..>}[r]^(.45){1}&(G_1,Y)\ar@{..>}[r]^(.45){2}&(G_2,Y_2)\ar@{..>}[r]^(.45){3}&(H,Y_H)\ar@{..>}[r]^{4}&(H,\preceq_{Y_H})\ar@{..>}@/^/[dd]_{5}\ar@/_/[dd]_{\mathsf T_1}\\
\ar@{--}[rrrr]&&&&&\\
\text{Phase II}&(G,\preceq_Y,<_G)&{\mathbf N}\ar[l]_{9}^{\mathsf T_5}&{\mathbf M}\ar[l]_{8}^{\mathsf T_4}&(G_2,\preceq_{Y_2},<_{G})\ar[l]_(.55){7}^(.55){\mathsf T_3}& (H,\preceq_{Y_H},<_H)\ar[l]_(.45){6}^(.45){\mathsf T_2}
}
\end{xy}
}
\caption{Outline of the proof of \zcref{thm-aux-structure}. Dotted arrows are polynomial-time computations and plain arrows are transductions. (Precisely, $G\xrightarrow[\mathsf T]{~1~}H$ means that at step 1 of the proof, we prove $H\in\mathsf T(G)$.)
The main steps of the proof are:
    {\bf 1.}~construction of a neat tree decomposition of~$G$ based on the structure theorem of Robertson and Seymour, as well as an augmentation of $G$ so that every adhesion is a clique;
    {\bf 2.}~construction of a subgraph whose nontrivial blocks have explicit
    no-apex almost-embedding witnesses;
    {\bf 3.}~effective blockwise translation into a subcubic graph of bounded
    Hadwiger number;
    {\bf 4.}~computation of the tree-ordered expansion;
    {\bf 5.}~computation of a linear order that can be obtained by transduction $\mathsf T_1$;
    {\bf 6.}~transduction of the linearly ordered tree-ordered expansion of $G_2$ by $\mathsf T_2$;
    {\bf 7.}~transduction of a master structure that permits the graph to be handled margin-wise by~$\mathsf T_3$;
    {\bf 8.}~recovery of the target tree order on the margins by transduction $\mathsf T_4$;
    {\bf 9.}~full recovery of the tree order by transduction $\mathsf T_5$.}
    \label{fig:overview}
\end{figure}

\begin{itemize}
    \item {\bf Phase I.} Given a connected graph $G\in \Hh_t$ with a rooted spanning tree $Y$, we construct in polynomial time a tree-ordered graph $(H,\preceq_{Y_H})$ with a linearly ordered expansion~$(H,\preceq_{Y_H},<_H)$ of~$(H,\preceq_{Y_H})$ such that the class
    \[\mathscr M_t = \{ (H,\preceq_{Y_H},<_H) \mid G\in \mathscr H_t,
    \text{$Y$ is a rooted spanning tree of $G$}\}\]
    is monadically dependent, and hence, as a linearly ordered class, has bounded twin-width.
    
    \item {\bf Phase II.} From $(H,\preceq_{Y_H},<_H)$ we transduce a linearly ordered
  expansion $(G,\preceq_Y,<_G)$ of the original tree-ordered graph. The linear order $<_G$ is
  inherited from the order $<_H$.
\end{itemize}

Consequently, the linear order $<_G$ is computable in polynomial time from $(G,Y)$, and the class of all resulting structures $(G,\preceq_Y,<_G)$ has bounded twin-width because it can be transduced from~$\mathscr M_t$.

\paragraph{Phase I}

Our starting point is \zcref{thm:minor-tows}, stating that a weakly sparse class $\Cc$ of graphs excludes some graph as a minor if and only if the associated class $\Cc^{F}$ of expansions by rooted spanning-forest orders is monadically dependent.

This observation gives monadic dependence, but it does not yet explain how to obtain bounded twin-width. 
The missing step is to pass from forest orders to linear orders. 
Here we use an idea inspired by a result of~\cite{buffiere2026characterizations}: if the underlying spanning tree has bounded degree, then a suitable linearization of this tree order can be obtained by a transduction. 
Of course, graphs in a minor-closed class need not have bounded degree and therefore their rooted spanning trees may also have unbounded degree. 
The main technical point of the proof is to overcome this obstacle. 
We prove the following result, which may be of independent interest.

\begin{theorem}\label{thm:minor-transduction}
There exist computable functions $\xi$ and $f$, a universal constant~$c$, a fixed transduction~$\mathsf L$, effectively computable families of transductions $(\mathsf T_t)_{t\in\mathbb N}$ and $(\mathsf R_t)_{t\in\mathbb N}$, and an algorithm with the following properties, for every $t\in\mathbb N$.
\begin{enumerate}[nosep]
    \item Given a graph~$G$ with $h(G)\leq t$, the algorithm computes in time $f(t)\cdot |G|^c$ a connected subcubic graph~$H$ and a rooted spanning tree~$Y_H$ of~$H$ such that $h(H)\leq\xi(t)$ and
    \[
        G\in\mathsf T_t(H,\preceq_{Y_H}).
    \]
    \item If an arbitrary rooted spanning forest~$F$ of~$G$ is also given, the algorithm computes, within the same time bound, such a pair $(H,Y_H)$ together with a DFS preorder $<_H$ of~$Y_H$ and a linear order $<_G$ on~$V(G)$ such that
    \[
        (H,\preceq_{Y_H},<_H)\in\mathsf L(H,\preceq_{Y_H})
        \qquad\text{and}\qquad
        (G,\preceq_F,{<_G})\in\mathsf R_t(H,\preceq_{Y_H},<_H).
    \]
    In particular, $(G,\preceq_F)$ can be transduced from $(H,\preceq_{Y_H})$ by some transduction effectively computable from~$t$.
\end{enumerate}
\end{theorem}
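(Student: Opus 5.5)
I will follow the pipeline of \zcref{fig:overview}, treating the theorem as the combination of a combinatorial construction (Phase~I, steps 1--4) and a sequence of transductions (Phase~II).

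\textbf{Step 1: decomposition.} Given $G$ with $h(G)\le t$, I compute in time $f(t)\cdot|G|^c$ a tree decomposition $\mathcal T=(T,\bag)$ from the Robertson--Seymour structure theorem. This uses an effective version, for instance Grohe--Kawarabayashi--Reed type algorithms. The decomposition has bounded adhesion, and each torso is an almost-embedding of bounded Euler genus with a bounded number of apices and vortices of bounded depth. I then augment $G$ to $G_1$ by turning every adhesion into a clique; this keeps the Hadwiger number bounded by a function of $t$. Next I delete apices and normalize vortices to reach a subgraph $G_2$ whose nontrivial blocks carry explicit no-apex almost-embedding witnesses. The apices and the deleted edges are remembered via a bounded number of unary colors.

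\textbf{Step 2: subcubic replacement.} Each vertex $v$ of high degree is replaced by a subcubic gadget, namely a path or binary tree of degree-$3$ vertices. I lay the gadgets out following the cyclic rotation of the embedding on surface pieces, and along vortex paths on vortices. Because the replacement follows the embedding, it creates no new crossings. Hence the genus is preserved and $h(H)\le\xi(t)$. Edges of $G$ become edges between gadgets, and $v$ is recovered as the contracted gadget, which is a connected set in the tree $Y_H$. To obtain $Y_H$, I take a spanning tree of $H$ that contains each gadget as a subtree, root it, and mark gadget-internal tree edges with a color. The graph $G$ is then transduced from $(H,\preceq_{Y_H})$ by the following operations:
\begin{itemize}
\item choose gadget roots as the $\preceq_{Y_H}$-minimal elements of a color class;
\item map each vertex to its gadget root, which is FO-definable via the tree order and the colors;
\item declare two roots adjacent when their gadgets contain $H$-adjacent vertices.
\end{itemize}
Edges lost in step 1 or through apices are restored through the bounded number of apices and bounded adhesions, using colors. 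Boundedness of all these parameters in $t$ gives the effective family $\mathsf T_t$.

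\textbf{Step 3: linear order and the given forest.} Since $Y_H$ is subcubic, a DFS preorder $<_H$ is FO-definable from $\preceq_{Y_H}$ plus two unary colors marking left and right children. Comparing two vertices reduces to their meet and the child through which each descends. This yields the fixed transduction $\mathsf L$. The order $<_G$ is the order induced on gadget roots. For an arbitrary forest $F$ of $G$, I build $H$ so that $F$ is encoded as well. Each $F$-edge $uv$ is realized by an $H$-edge between the gadgets of $u$ and $v$, and $Y_H$ is chosen to extend these realizations, so that $Y_H$ contracts onto $F$ plus the gadget trees. Then $\preceq_F$ is recovered as follows: $u\preceq_F v$ iff the root of the gadget of $u$ is $\preceq_{Y_H}$ the root of the gadget of $v$, restricted to $F$-components marked by colors. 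This gives $\mathsf R_t$.

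\textbf{Main obstacle.} The hard part will be steps 1--2. I must realize the subcubic replacement so that the Hadwiger number stays bounded while the spanning tree is simultaneously forced to respect the arbitrary input forest $F$. Vortices and adhesion cliques can blow up minors under naive degree reduction. My first attempt is to perform the replacement blockwise on the torsos, using the embedding rotation. I would then argue via genus, vortex depth, and adhesion size that any large clique minor in $H$ projects to one in a bounded-genus almost-embedded torso, which is impossible.
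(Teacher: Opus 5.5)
Your treatment of the arbitrary forest $F$ in Step~3 cannot work. You require every edge of $F$ to be realized by an $H$-edge between the gadgets of its endpoints, with $Y_H$ contracting onto $F$. Your gadgets also realize the edges of $G_2$, so $G_2\cup F$ becomes a minor of the subcubic graph $H$. But $F$ may consist mostly of exactly the edges you had to discard in Step~1 to reach an apex-minor-free graph.

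Concretely, let $G$ be an $n\times n$ grid plus a vertex $a$ adjacent to all grid vertices, so $h(G)\le 5$, and let $F$ be the star at $a$. Then the whole apex grid must be a minor of $H$. For a subcubic $H$ and large $n$ this contradicts $h(H)\le\xi(5)$. A spanning tree of the branch set of $a$ can be cut into $s$ disjoint subtrees, each attached to the branch sets of $\Omega(n^2/s)$ grid vertices. Routing pairwise disjoint grid paths between these attachment sets then yields a $K_s$ minor. This is the obstruction the paper records right after the statement (via Milolidakis): minor-preserving degree reduction to bounded Hadwiger number exists only for apex-minor-free graphs. So the ``main obstacle'' you name is an impossibility, not a difficulty that a finer genus, vortex, or gadget argument could overcome.

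The missing idea is that $Y_H$ must \emph{not} contract onto the input tree. The paper pushes only a decomposition-adapted pair $(G_2,Y_2)$ into $H$, built as follows:
\begin{itemize}
\item Start from a neat decomposition with clique adhesions.
\item Keep a single cross-margin edge $\alpha(x)\beta(x)$ per bag, so that all cross-margin edges are bridges and every block lies inside one margin.
\item Delete the apex edges inside margins and add back a minimal set of bridges to keep each margin connected. Every nontrivial block then lies in a no-apex piece $P_x$, so Bonnet--Hickingbotham and then Milolidakis apply.
\item Build $Y_2$ from the trace of $Y$ on the margins restricted to the surviving edges, completed by these bridges and the edges $\alpha(x)\beta(x)$.
\end{itemize}

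The original order $\preceq_Y$ is then recovered by transductions, and this is where the real work lies. The steps are:
\begin{itemize}
\item a master structure exposing the bags;
\item rewiring and re-rooting to obtain the local trace orders;
\item the lemma that every trace cover which is not an edge of $G_1$ passes through the adhesion;
\item a bounded indexing of each adhesion via a $2k$-coloring and the bounded-indegree orientation;
\item composition of order types of adhesion tuples along $T$, made first-order definable by Colcombet's theorem;
\item the separator lemma for incomparable placements.
\end{itemize}
Your proposal has no counterpart to this recovery. Even your part~1 claim that lost apex and cross-adhesion edges are ``restored using colors'' presupposes first-order access to individual bags and adhesion vertices from $(H,\preceq_{Y_H})$. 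Only this kind of structuring of $Y_H$ provides that access.
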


Thus, for every fixed~$t$, all graphs of Hadwiger number at most~$t$ are
interpretable by a transduction in tree-ordered subcubic graphs
whose Hadwiger number is bounded by a computable function of~$t$. In
particular, every class excluding a fixed minor is a transduction of a
tree-ordered subcubic class that again excludes a fixed minor.

Note that, by contrast, the analogous statement for the minor relation instead of transduction fails already for graphs of Hadwiger number at most~$5$: although every graph is a minor of a subcubic graph, there is no class~$\Dd$ of subcubic graphs that excludes a fixed minor and whose minor closure contains every $K_6$-minor-free graph; this follows e.g.\ from~\cite[Theorem~17]{MilolidakisThesis}.

For every fixed~$t$, let $\mathscr M_t$ be the class of all structures
$(H,\preceq_{Y_H},<_H)$ produced by the second part of
\zcref{thm:minor-transduction} from connected graphs~$G$ with $h(G)\leq t$
and rooted spanning trees~$Y$ of~$G$. The graph reducts of the structures in
$\mathscr M_t$ are subcubic and have Hadwiger number at most~$\xi(t)$, and
each of their tree orders is induced by a rooted spanning tree. Thus, the
corresponding tree-ordered class is a subclass of the class of full rooted
spanning-tree expansions of a minor-excluded class, and is
monadically dependent by~\zcref{thm:minor-tows}. Applying the fixed
linearization transduction~$\mathsf L$ shows that $\mathscr M_t$ is
monadically dependent. Since $\mathscr M_t$ consists of linearly ordered
binary structures, it has bounded twin-width.

\subsection{Sketch of the proof of \zcref{thm:minor-transduction}
}

Let us sketch how we prove \zcref{thm:minor-transduction}. 
We first give an informal introduction to tree decompositions and the structure theorem of Robertson and Seymour for graphs excluding a fixed graph as a minor, on which the proof of the theorem is based.

Very roughly, a tree decomposition represents a graph by arranging
overlapping pieces in the shape of a tree. These pieces are called
\emph{bags}. The bags cover the graph: every vertex occurs in a bag, and every
edge has both endpoints together in some bag. Moreover, the bags containing a
fixed vertex induce a connected subtree of the decomposition tree. Thus, a
vertex may occur in several bags, but those bags form one contiguous region.

If the decomposition tree is rooted, every non-root bag has a parent. For a
bag~$x$, its intersection with the parent bag is the \emph{adhesion} of~$x$.
The vertices in~$x$ but not in its adhesion form the \emph{margin} of~$x$.
Intuitively, the margin is the new part contributed by the bag, while the
adhesion is the part shared with its parent. These overlaps are separators
that control how the local pieces interact.

The Robertson-Seymour graph minor structure theorem states, informally, that
every graph excluding a fixed minor has a tree decomposition of bounded
adhesion whose pieces are almost embeddable in a surface of bounded Euler
genus. Up to deleting a bounded number of exceptional vertices, called
\emph{apices}, and handling a bounded number of controlled local
complications, called \emph{vortices}, each piece behaves like a graph drawn
on a surface.

We begin with a connected graph $G\in\mathscr H_t$. In Step~1, we use the
Robertson-Seymour structure theorem to compute a tree decomposition and
augment $G$ to a graph~$G_1$ in which every adhesion is a clique. We may
further arrange that the decomposition is \emph{neat}, meaning that it
satisfies additional structural properties specified later.

The margins of a neat tree decomposition partition the vertex set. For each
non-root node~$x$ of the decomposition tree, we choose a representative
$\beta(x)$ in the margin of~$x$ and a neighbor $\alpha(x)$ in the adhesion of
$x$ that belongs to the margin of the parent of~$x$; see
\zcref{fig:rob-sey}. A representative is also chosen for the root node, for
which no vertex $\alpha(x)$ is needed.

\begin{figure}[h!t]
\begin{center}
    \includegraphics[width=.85\textwidth]{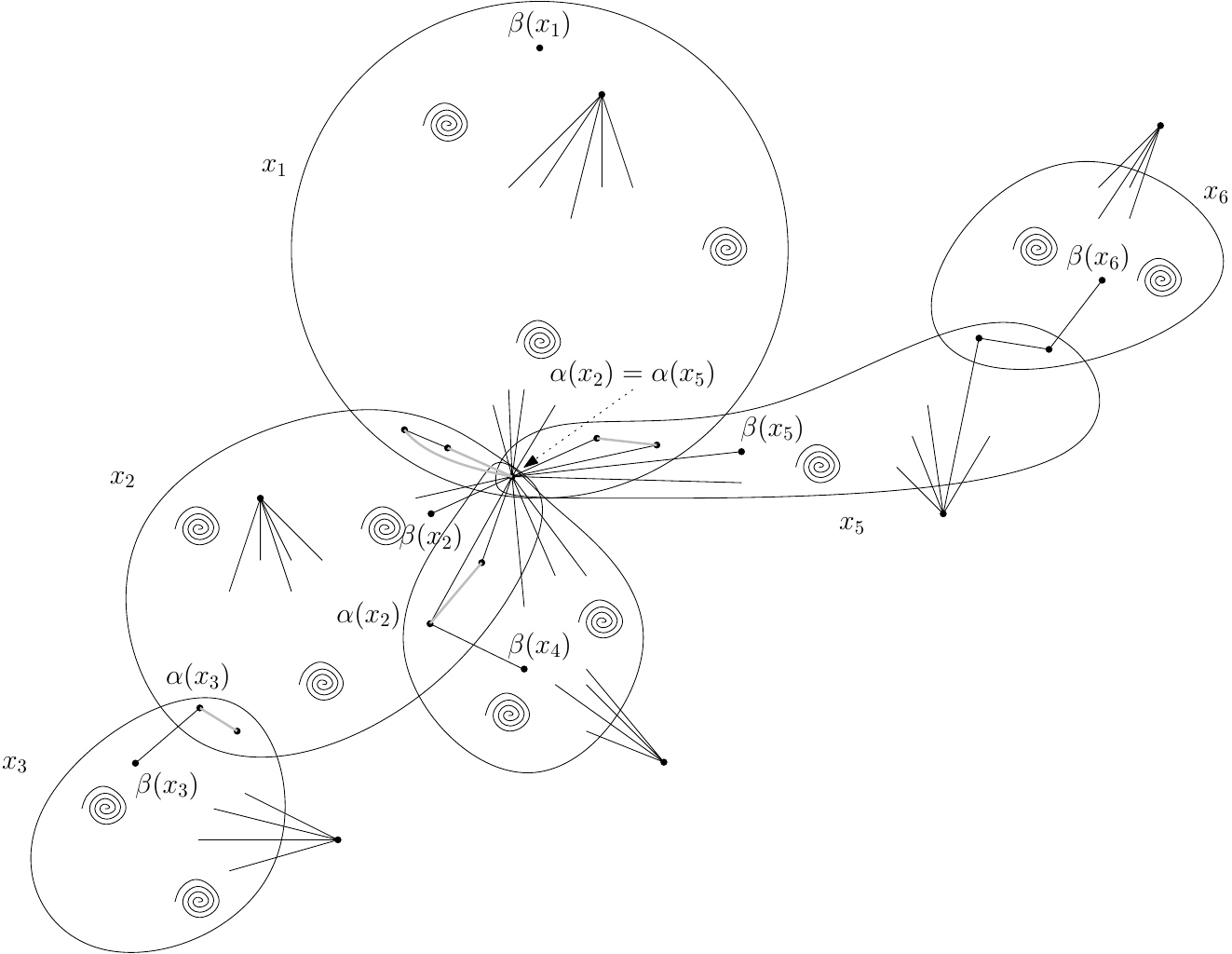}
\end{center}
    \caption{In the first step, we apply the decomposition theorem of Robertson and Seymour and process the decomposition so that it becomes neat. 
    For each node $x$ of the decomposition tree, we select a representative vertex~$\beta(x)$ from its margin $\mrg(x)$. 
    If $x$ is not the root, we also select a neighbor $\alpha(x)\in\adh(x)\cap\mrg(\parent(x))$ of~$\beta(x)$.}
    \label{fig:rob-sey}
\end{figure}

In Step~2, we delete all edges between distinct margins except the selected
edges $\alpha(x)\beta(x)$. We also delete the apex-incident edges and then add
back an inclusion-minimal set that preserves connectivity inside each margin.
After a corresponding local modification of the original spanning tree, we
obtain a spanning tree~$Y_2$ of the new graph~$G_2$; see
\zcref{fig:apex-minor-free}. We prove that $G_2$ excludes a fixed apex graph as
a minor. Recall that an apex graph is a graph that becomes planar after deleting one vertex.

\begin{figure}[h!t]
    {}\hfill\includegraphics[width=.52\textwidth]{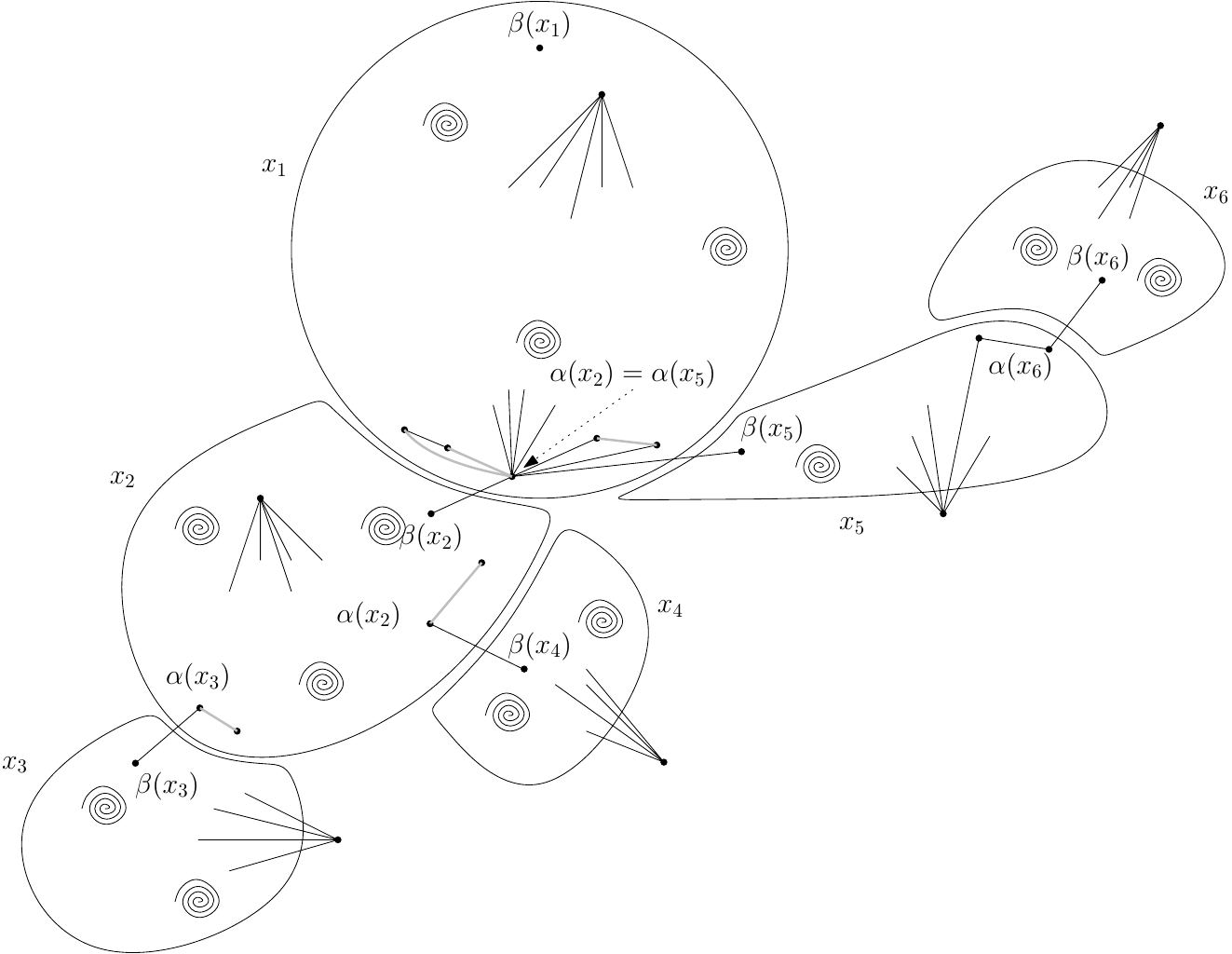}\hfill
    \includegraphics[width=.45\textwidth]{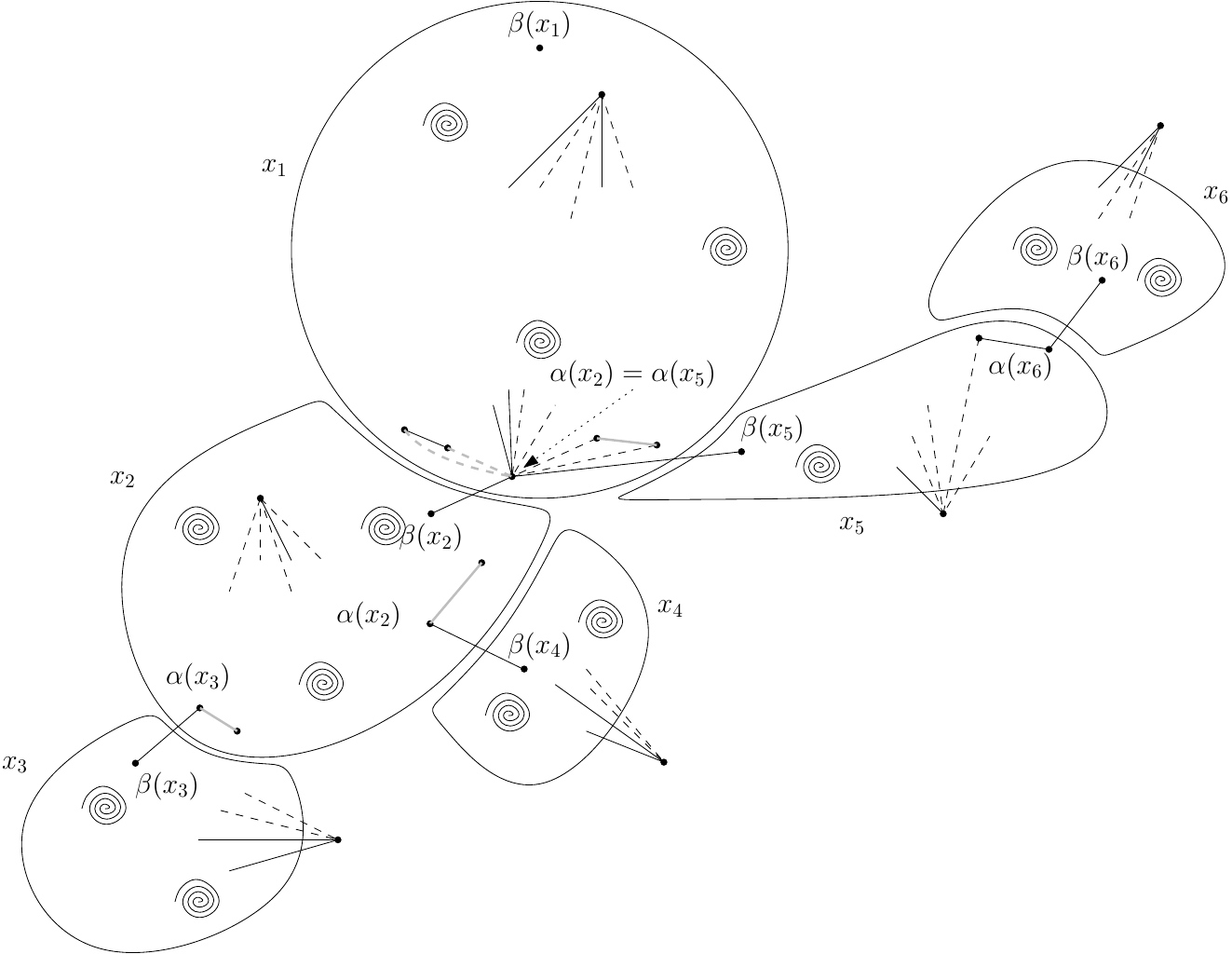}
    \hfill{}
\caption{Illustration of the second step. We delete all edges between margins, except for the single edge $\alpha(x)\beta(x)$ keeping the tree structure of the tree decomposition intact. We also keep only a minimal set of edges incident with apex vertices so as to preserve the connectivity of the margins and such that the resulting blocks fall into an apex-minor-free class.}
\label{fig:apex-minor-free}
\end{figure}

In Step~3, we apply the degree-reduction construction of~\cite{MilolidakisThesis} to~$G_2$, according to
which every graph in an apex-minor-free minor-closed class is a minor of
a subcubic graph in a fixed proper minor-closed superclass. Applied
to~$G_2$, the construction produces a connected subcubic graph
$H\in\mathscr H_{\xi_c(t)}$ such that~$G_2$ is a minor of~$H$, where
$\xi_c$ is computable, and transfers the spanning tree~$Y_2$ to a
spanning tree~$Y_H$ of~$H$. It runs in time
$a(t)\cdot |G_2|^{c_0}$ for a computable function~$a$ and an absolute
constant~$c_0$; see~\zcref{fig:tocubic}.

\begin{figure}[h!t]
\begin{center}
    {}\hfill\includegraphics[width=.5\textwidth]{dec3}\hfill
    \includegraphics[width=.5\textwidth]{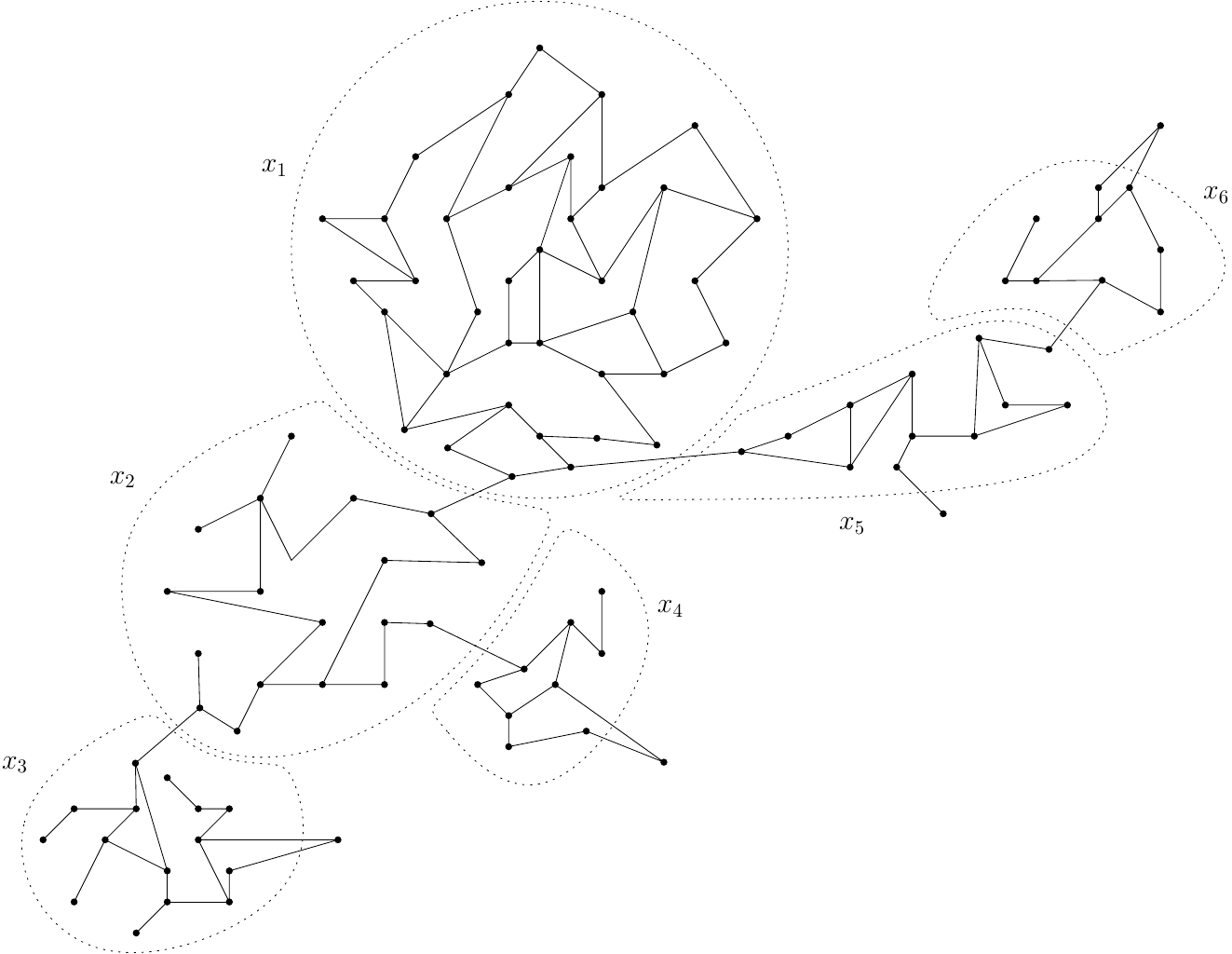}\hfill{}
\end{center}
\caption{Illustration of the third step. We move to a connected subcubic graph $H$ from a class excluding a minor.}
\label{fig:tocubic}
\end{figure}

In Step~4, we form the tree-ordered lift $(H,\preceq_{Y_H})$. The relevant class of
forest-ordered subcubic graphs is monadically dependent by
\cite{buffiere2026characterizations}. In Step~5, we compute a linear order $<_H$ of
$V(H)$ from a depth-first traversal of~$Y_H$, which is also obtainable by
a transduction of $(H,\preceq_{Y_H})$. Hence, the resulting class~$\mathscr M_t$ of
linearly ordered structures $(H,\preceq_{Y_H},<_H)$ is monadically dependent and therefore
has bounded twin-width.
\bigskip

\paragraph{Phase II}

Phase~II begins by transducing $(G_2,\preceq_{Y_2})$ from $(H,\preceq_{Y_H},<_H)$ and transforming
the computed linear order $<_{H}$ to a linear order $<_{G}$ of~$V(G_2)$ (Step~6). We then construct a master
structure~$\mathbf M$ that lets us work one margin at a time while retaining
access to the order of the decomposition tree (Step~7); see
\zcref{fig:masterM}.

\begin{figure}[h!t]
\begin{center}
    \includegraphics[width=.75\textwidth]{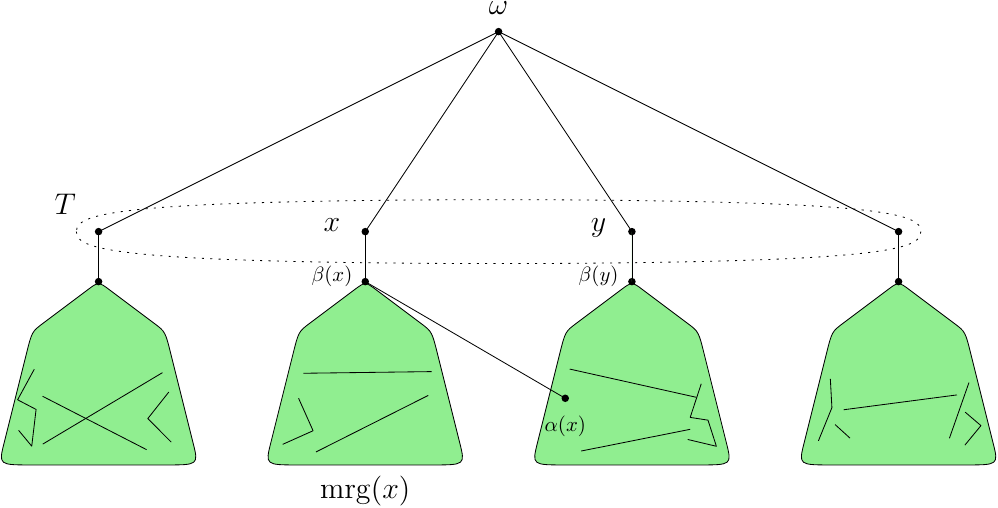}
\end{center}
\caption{
In $\mathbf M$, the children of the root $\omega$ are copies of the decomposition nodes, and the separate relation $\unlhd$ records their tree-decomposition order.
Under each such node~$x$, a subtree rooted at $\beta(x)$ contains the vertices of $\mrg(x)$.}
\label{fig:masterM}
\end{figure}

In Step~8, a transduction restores the edge relation of~$G_1$ inside each margin and the local trace order there; see~\zcref{fig:masterN}.
The remaining edges between margins and the remaining order comparisons are recovered in Step~9.

\begin{figure}[h!t]
\begin{center}
    \includegraphics[width=.75\textwidth]{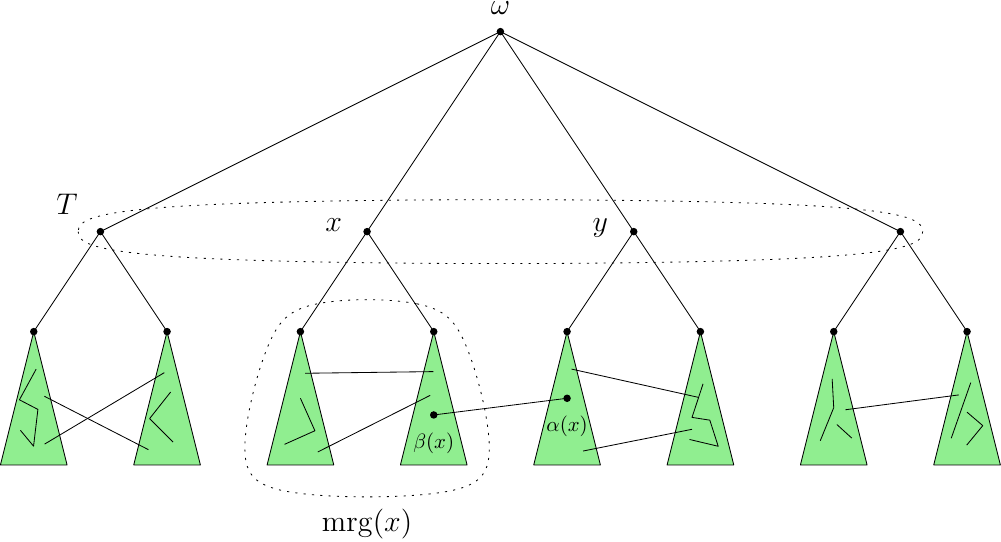}
\end{center}
\caption{In $\mathbf N$, each margin carries the restriction of
the edge relation of~$G_1$ and the local trace order of~$Y$.
The remaining order comparisons are recovered in Step~9.}
\label{fig:masterN}
\end{figure}

Finally, we use the tree version of Simon's factorization forest theorem and
the fact that simple $\MSO$ interpretations of colored rooted trees can be
simulated by first-order transductions of the associated tree orders
\cite{Colcombet2007}. This restores the complete tree order $\preceq_Y$ in
Step~9 and yields $(G,\preceq_Y,<_G)$ as a transduction of~$(H,\preceq_{Y_H},<_H)$. A related
margin-wise reconstruction was used in~\cite{schirrmacher2023first} for
separator logic model checking on classes excluding a topological minor.

\section{Preliminaries}

This section contains definitions and preliminary results that we use in our proofs.

\subsection{Graph theory}
Let $G$ be a graph.
A \emph{forest of $G$} is an acyclic subgraph $F\subseteq G$.
A forest is \emph{spanning} if $V(F)=V(G)$ and it is \emph{full} if the components of $F$ are equal to the components of~$G$ (in this paper, connected components are vertex sets). 
It is a \emph{tree} if it is connected. 
A \emph{rooted spanning forest} is a spanning forest with one distinguished vertex, called \emph{root}, in each connected component. 
If we want to make the roots explicit, we write~$(F,R)$, where $R$ denotes the set of roots and, if $F$ is a tree, we write~$(F,r)$, where $r$ is the unique root of $F$.

Let $G$ be a graph, and let $uv\in E(G)$ be an edge.
The \emph{contraction} of $uv$ is the operation that deletes the edge $uv$, identifies its endpoints $u$ and $v$ into a new vertex adjacent to every vertex that was adjacent to at least one of $u$ and $v$ and then removes any parallel edges.

We say that a graph $H$ is a \emph{minor} of $G$ and write
$H \leq_m G$,
if $H$ can be obtained from a subgraph of $G$ by a sequence of edge contractions.

Equivalently, $H \leq_m G$ if there exists a family $(B_v)_{v\in V(H)}$ of non-empty pairwise vertex-disjoint connected subsets of~$V(G)$, called \emph{branch sets}, such that for every edge $uv\in E(H)$, there is at least one edge of $G$ with one endpoint in $B_u$ and the other endpoint in $B_v$.
Such a family is called a \emph{minor model} of~$H$ in~$G$.
A minor model is \emph{surjective} if its branch sets partition~$V(G)$.
Equivalently, a surjective minor model is specified by a surjective map $f\colon V(G)\to V(H)$ such that each branch set $f^{-1}(v)$ induces a connected subgraph of~$G$ and, for every edge $uv\in E(H)$, some edge of~$G$ joins~$f^{-1}(u)$ and $f^{-1}(v)$.

The \emph{Hadwiger number} $h(G)$ of a graph $G$ is the maximum integer $t$ such that $K_t\leq_m G$.
Using the minor containment algorithm given in \cite{korhonen2024minor}, the Hadwiger number of a graph $G$ can be computed in time $f(h(G))\,|G|^{1+o(1)}$, where $f$ is a computable function.

\subsection{Order theory}

Let $X$ be a finite set.
A binary relation $\preceq$ on $X$ is a \emph{partial order} if it is reflexive
(for all $x\in X$, $x\preceq x$),
antisymmetric
(for all $x,y\in X$, if $x\preceq y$ and $y\preceq x$, then $x=y$),
and transitive
(for all $x,y,z\in X$, if $x\preceq y$ and $y\preceq z$, then $x\preceq z$).
In this case, the pair $(X,\preceq)$ is called a \emph{partially ordered set} (or \emph{poset}).
A \emph{cover} of a poset $(X,\preceq)$ is a pair $(u,v)$ of elements of $X$ such that $u\prec v$ and there exists no $z$ with $u\prec z\prec v$. We denote by $u\pred v$ the (definable) property that $(u,v)$ is a cover.
The \emph{cover graph} of a poset $(X,\preceq)$ is the oriented graph with vertex set $X$, whose arcs are the covers of~$(X,\preceq)$. The \emph{undirected cover graph} of $(X,\preceq)$ is the underlying undirected graph of the cover graph of~$(X,\preceq)$.
Given two elements $x,y\in X$ in a partial order, we write $x\parallel y$ if $x$ and $y$ are incomparable. 
For $B\subseteq X$, the restriction of~$\preceq$ to~$B$ is $\preceq\cap(B\times B)$. 

A partial order $\preceq$ on $X$ is a \emph{forest order} if, for every $y\in X$, the set
$
\{x\in X \mid x\preceq y\}$
is linearly ordered by $\preceq$.
The minimal elements of $\preceq$ are called the \emph{roots} of $\preceq$.
A forest order is a \emph{tree order} if it has a minimum (i.e., a unique root $r$).
Equivalently, a partial order $\preceq$ is a forest order (resp.\ a tree order) if its cover graph is a rooted forest (resp.\ a rooted tree) with all arcs oriented away from the roots.

As any upper-bounded set of elements in a forest order is a chain, we will allow
in formulas statements like 
$w=\max\{u\preceq v\colon P(u)\}$ and
$w=\max\{u\prec v\colon P(u)\}$ (where $P$ is a first-order definable property, possibly with parameters),  and these notations represent first-order definable statements. (We do not define $\max\{u\preceq v\colon P(u)\}$ directly, as 
$\{u\preceq v\colon P(u)\}$ might be empty.)

An \emph{ancestor} of an element $v$ in a forest order $F$ is an element $u$ with $u\preceq v$ and 
the \emph{parent} of $v$ is
\[
\pi_F(v)\is\begin{cases}
    v&\text{if $v$ is a root (i.e., a minimal element),}\\
    w&\text{if $w\pred v$.}
\end{cases}
\]

Note that we use the convention $\pi_F(v)=v$ at a root. 
This makes $\pi_F$ a total map on a forest-ordered set. For decomposition trees, we use the separate notation $\parent$ and the convention $\parent(r)=\bot$, with $\bag(\bot)=\emptyset$.

If $(X,\preceq)$ is a tree-ordered set, then the \emph{infimum} (or least common ancestor) of two elements of~$X$ always exists and is first-order definable. We denote it by $x\curlywedge y$.
For forest orders, we define the \emph{least common ancestor} relation by
\[
\mathsf{lca}(a;x,y)\is (a\preceq x)\wedge (a\preceq y) \wedge
\forall a'\bigl((a'\preceq x\wedge a'\preceq y)\rightarrow (a'\preceq a)\bigr).
\]
Using the notation introduced above, we can also define $\mathsf{lca}(a;x,y)$ as
\[
\mathsf{lca}(a;x,y)\is (a=\max\{v\preceq x\colon v\preceq y\}).
\]

Every rooted tree $(Y,r)$ naturally defines a tree order $\preceq_Y$ on $V(Y)$ by setting

\[
x\preceq_{Y} y
\quad\Longleftrightarrow\quad
\text{there is no path connecting $y$ and $r$ in $Y-x$.}
\]

Analogously, every rooted forest $(F,R)$ (where $R$ denotes the set of the roots) defines a forest order $\preceq_F$ on $V(F)$ by setting

\[
x\preceq_{F} y
\quad\Longleftrightarrow\quad
\begin{gathered}
\text{$x$ and $y$ belong to the same component of $F$, and, for the unique}\\
\text{root $r\in R$ of that component, there is no path from $y$ to $r$ in $F-x$.}
\end{gathered}
\]

Conversely, every finite forest order $\preceq$ is induced in this way by a rooted forest, which is the undirected cover graph of $\preceq$ rooted at minimal elements of $\preceq$.

A binary relation $<$ on $X$ is a \emph{(strict) linear order} if it is irreflexive
(for all $x\in X$, not~$x<x$),
transitive,
and total
(for all distinct $x,y\in X$, either $x<y$ or $y<x$).
In this case, the pair $(X,<)$ is called a \emph{linearly ordered set}.

\smallskip
Let $G$ be a graph.
We use the notation
\begin{itemize}
    \item $F,F_1,F_2,\dots$ for rooted spanning forests of $G$,
    \item $Y,Y_1,Y_2,\dots$ for rooted spanning trees of $G$ (which requires that $G$ is connected),
    \item $T$ for the decomposition tree of a tree decomposition of $G$ (defined below), and 
    \item $\preceq$ and $<$ for forest and linear orders on the vertex set of $G$ (possibly unrelated to the edges of the graph).
\end{itemize}
 
Let $\Cc$ be a class of graphs.
We define
\[
\Cc^\Upsilon
:=
\bigl\{
(G,\preceq_F)\;\big|\;
G\in\Cc,\ 
F \text{ is a full rooted spanning forest of } G
\bigr\},
\]
and 
\[
\Cc^{F}
:=
\bigl\{
(G,\preceq_F)\;\big|\;
G\in\Cc,\ 
F \text{ is a rooted spanning forest of } G
\bigr\}.
\]

\begin{remark}\label{rem:forest-tree-reduction}
We will use the following reductions between forests and trees, and between disconnected and connected structures.

Given a graph $G$ with a rooted spanning forest $F$, add a universal vertex $u$ and extend $F$ to a spanning tree $Y$ by joining $u$ to every root of $F$ and choosing $u$ as the new root.
For the resulting connected graph $\widehat G=G\oplus K_1$, we have
$(G,\preceq_F)=(\widehat G,\preceq_Y)[V(G)]$.
If $G$ excludes $K_t$ as a minor, then~$\widehat G$ excludes $K_{t+1}$ as a minor, since at most one branch set of a clique-minor model can contain~$u$.
Deleting $u$ is a first-order interpretation, which preserves monadic dependence, and taking an induced substructure does not increase twin-width.
The construction and the restriction take polynomial time.
Thus, for the results below whose bounds depend only on an excluded minor, it suffices to consider connected graphs equipped with rooted spanning trees.

For classes of tree-ordered graphs, the hereditary closure obtained by retaining the root and the unrestricted hereditary closure are dependent simultaneously, and monadically dependent simultaneously.
Indeed, every arbitrary induced substructure is obtained from a root-preserving induced substructure by possibly deleting its root, which is the definable least element.
Here we take induced substructures in the relational signature, the restricted order need not be induced by a spanning forest of the restricted graph.

Finally, a class closed under taking connected components is dependent if and only if its connected members are dependent, and the same holds for monadic dependence.
This follows from the first-order composition theorem due to Feferman and Vaught~\cite{feferman1959first} for disjoint unions and applies in particular to hereditary classes.
For relational structures, connectedness refers to the Gaifman graph of the full signature.
\end{remark}

\subsection{Basic finite model theory}
\subsubsection{Binary relational structures}

A \emph{binary relational signature} is a finite set of unary and binary relation symbols.  A \mbox{\emph{$\sigma$-structure}~$\mathbf M$} consists of a
finite set $M$, called its \emph{domain} or \emph{universe}, and an
interpretation \mbox{$R({\mathbf M})\subseteq M^r$} for every relation symbol
$R\in\sigma$ of arity $r\in\{1,2\}$.  Unary relations are also called
\emph{colors}. 
For every formula~$\varphi(\bar x)$, we write $\varphi(\mathbf M)
   :=\{\bar a\in M^{|\bar x|}:\mathbf M\models\varphi(\bar a)\}$.

The \emph{Gaifman graph} $\Gaif(\mathbf M)$ of a binary relational structure
$\mathbf M$ is the simple undirected graph with vertex set~$M$ in which two
distinct elements $u,v$ are adjacent if and only if there is a binary relation
symbol $R\in\sigma$ such that $(u,v)\in R({\mathbf M})$ or
$(v,u)\in R({\mathbf M})$.  

We regard a simple undirected graph $G$ as a structure with signature $\{E\}$,
universe $V(G)$, and with $E$ interpreted as the symmetric adjacency relation
of~$G$. 

Let $\mathbf M$ be a $\sigma$-structure and let $\tau\subseteq\sigma$.
The \emph{$\tau$-reduct} of $\mathbf M$, denoted by
$\mathbf M\!\upharpoonright_{\tau}$, is obtained by forgetting all relations
outside~$\tau$. If $\tau=\{R\}$, we abbreviate
$\mathbf M\!\upharpoonright_{\{R\}}$ by
$\mathbf M\!\upharpoonright_R$. Conversely, if
$\tau\subseteq\sigma$ and $\mathbf M$ is a $\tau$-structure, a
\emph{$\sigma$-expansion} of $\mathbf M$ is a $\sigma$-structure whose
$\tau$-reduct is~$\mathbf M$.  It is \emph{monadic} if all symbols in
$\sigma\setminus\tau$ are unary.  We write $\mathbf M_<$ for an expansion of
$\mathbf M$ by a linear order~$<$.  Finally, if $U\subseteq M$, then
$\mathbf M[U]$ denotes the substructure induced by~$U$.

Whenever a binary relational signature $\sigma$ contains a distinguished
binary relation symbol~$Z$, we consider only $\sigma$-structures $\mathbf M$
in which $Z(\mathbf M)$ is symmetric and irreflexive. Hence, the
$\{Z\}$-reduct~$\MZ$ is a simple undirected graph with vertex set~$M$ and edge relation~$Z(\mathbf M)$. Connectivity with
respect to~$Z$ always means connectivity in this graph.

\subsubsection{Logic and transductions}\label{subsec:logic-transductions}

Let $\sigma$ be a binary relational signature.
The set $\FO[\sigma]$ (or simply $\FO$ if $\sigma$ is clear from the context) of first-order formulas over~$\sigma$ is obtained from atomic formulas over~$\sigma$ by closing under the Boolean connectives and the quantifiers $\exists x$ and $\forall x$ ranging over elements.
Monadic second-order logic $\MSO$ extends first-order logic by quantifiers $\exists X$ and $\forall X$, ranging over sets of elements. 
The semantics are defined as usual and we refer to~\cite{libkin2004elements} for more background on logic. 

Let $\sigma$ and $\tau$ be finite binary relational signatures.
A \emph{simple interpretation} $\mathsf I$ of $\tau$-structures in
$\sigma$-structures is given by a $\sigma$-formula $\nu(x)$ and, for every
relation symbol $R\in\tau$ of arity $r\in\{1,2\}$, a $\sigma$-formula $\rho_R(x_1)$, or $\rho_R(x_1,x_2)$, respectively. For a $\sigma$-structure~$\mathbf M$ with
domain~$M$, the structure $\mathsf I(\mathbf M)$ has domain $\nu(\mathbf M)$
and interprets each such symbol~$R$ by
$ R(\mathsf I(\mathbf M))
 :=\rho_R(\mathbf M)\cap\nu(\mathbf M)^r$.
Unless another logic is specified, interpretations and transductions in this paper are first-order.

If $\Sigma$ is a finite set of unary relation symbols disjoint from~$\sigma$,
then a \emph{$\Sigma$-coloring} of a $\sigma$-structure~$\mathbf M$ is any
$(\sigma\cup\Sigma)$-expansion of~$\mathbf M$.  We write
$\Gamma_\Sigma(\mathbf M)$ for the set of all $\Sigma$-colorings
of~$\mathbf M$.

A \emph{copy operation} transforms a structure $\mathbf M$ with domain $M$
into a structure $\mathrm{Copy}_m(\mathbf M)$ with domain $M\times[m]$,
where $[m]=\{1,\dots,m\}$.  Each relation $R$ of arity~$r$ in the original
signature is interpreted component-wise, that is,
\[
R(\mathrm{Copy}_m(\mathbf M))
   :=
   \bigl\{((a_1,i),\dots,(a_r,i)) :
          (a_1,\dots,a_r)\in R(\mathbf M),\ i\in[m]\bigr\}.
\]
The copy operation also adds unary relation symbols $P_1,\ldots,P_m$ and a
binary relation symbol~$\simeq$, interpreted by
\[
 P_i(\mathrm{Copy}_m(\mathbf M)):=M\times\{i\}
 \quad\text{and}\quad
 (a,i)\simeq(b,j)\quad\Longleftrightarrow\quad a=b.
\]
Thus, the predicates $P_i$ distinguish the copies, while~$\simeq$ identifies
copies of the same original element.

A \emph{transduction} is a composition of copy operations, unary colorings, and simple interpretations. By the standard normalization~\cite{gajarsky2020first}, we may write it as $\mathsf T=(m,\Sigma,\mathsf I)$, where $m\geq1$, $\Sigma$ is a finite set of fresh unary predicates, and $\mathsf I$ is a simple interpretation. Thus,
\[
 \mathsf T(\mathbf M)=\{\mathsf I(\mathbf N):\mathbf N\in\Gamma_\Sigma(\mathrm{Copy}_m(\mathbf M))\}.
\]
We call this transduction \emph{non-copying} if $m=1$, and \emph{domain-preserving} if, in addition, its universe formula selects every input element.

We say that a structure $\mathbf N$ \emph{can be transduced from $\mathbf M$ by $\mathsf{T}$} if $\mathbf N\in\mathsf T(\mathbf M)$.
By extension, for a class $\Cc$ of structures, we define
\[
\mathsf T(\Cc)\is\bigcup_{\mathbf M\in\Cc}\mathsf T(\mathbf M).
\]
If $\Dd$ is a class of $\tau$-structures, we say that \emph{$\Dd$ can be transduced from $\Cc$} if there exists a transduction~$\mathsf T$ such that
$\Dd \subseteq \mathsf T(\Cc)$.
We remark that interpretations and transductions can be defined for all extensions of first-order logic that we consider in this paper. 

\medskip

We shall repeatedly carry auxiliary relations through a transduction.  The
following standard observation makes this precise.

\begin{fact}
    \label{fact:expand}
Let $\mathsf T$ be a non-copying, domain-preserving transduction from
$\sigma$-structures to $\tau$-structures, and let $\rho$ be a finite binary
relational signature disjoint from $\sigma\cup\tau$. Then $\mathsf T$
extends from $(\sigma\cup\rho)$-structures to
$(\tau\cup\rho)$-structures by carrying every relation in~$\rho$ over
unchanged.
\end{fact}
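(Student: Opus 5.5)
The plan is to build the extended transduction $\mathsf T'$ from the normal form $\mathsf T=(1,\Sigma,\mathsf I)$ by leaving everything unchanged except for adding one copying formula per symbol of $\rho$. Because $\mathsf T$ is non-copying ($m=1$), no copy operation is involved. Every output structure therefore has the form $\mathsf I(\mathbf N)$, where $\mathbf N$ ranges over $\Sigma$-colorings of the input $\mathbf M$ itself. Because $\mathsf T$ is domain-preserving, the universe formula $\nu(x)$ holds for every element of $\mathbf N$. Hence $\mathsf I(\mathbf N)$ has exactly the same domain as $\mathbf M$. Without this property, relations in $\rho$ could refer to elements missing from the output.

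First, rename if necessary so that $\Sigma$ is disjoint from $\rho$; this is always possible since fresh predicates can be chosen arbitrarily. Then let $\mathsf I'$ be the simple interpretation of $(\tau\cup\rho)$-structures in $(\sigma\cup\rho\cup\Sigma)$-structures defined as follows:
\begin{itemize}
\item its universe formula is the same $\nu$;
\item for $R\in\tau$ it uses the same $\rho_R$;
\item for each $S\in\rho$ of arity $r$ it uses the atomic formula $\rho'_S(x_1,\dots,x_r):=S(x_1,\dots,x_r)$.
\end{itemize}
Set $\mathsf T':=(1,\Sigma,\mathsf I')$. The formulas of $\mathsf I$ do not mention symbols of $\rho$, so on a $\Sigma$-coloring $\mathbf N'$ of a $(\sigma\cup\rho)$-structure $\mathbf M'$ they evaluate exactly as on the corresponding coloring of the $\sigma$-reduct.

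Second, verify the claimed property. For every $\Sigma$-coloring $\mathbf N'$ of $\mathbf M'$, the $\tau$-reduct of $\mathsf I'(\mathbf N')$ equals $\mathsf I(\mathbf N'\!\upharpoonright_{\sigma\cup\Sigma})$. Moreover, each $S\in\rho$ is interpreted as $S(\mathbf M')\cap\nu(\mathbf N')^r=S(\mathbf M')$, by domain preservation. Conversely, every output of $\mathsf T$ on $\mathbf M'\!\upharpoonright_\sigma$ arises from some coloring, and that coloring can be lifted to $\mathbf M'$. Consequently
\[
\mathsf T'(\mathbf M')=\{\mathbf K\cup (S(\mathbf M'))_{S\in\rho}:\mathbf K\in\mathsf T(\mathbf M'\!\upharpoonright_\sigma)\},
\]
which is exactly the statement that $\rho$ is carried over unchanged.

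There is no real obstacle here. The only point needing care is the role of the two hypotheses. If $\mathsf T$ copied the domain, it would be unclear which copy carries $S$. If it dropped elements, $S$ would have to be restricted rather than transferred unchanged. Both issues disappear under the non-copying and domain-preserving assumptions, together with the disjointness of $\rho$ from $\Sigma$.
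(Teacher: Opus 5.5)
Your proof is correct. The paper states this fact as a standard observation without proof, and your construction is exactly the routine argument it leaves implicit: keep $\nu$ and the formulas $\rho_R$, add the atomic formula $S(\bar x)$ for each $S\in\rho$ (after making $\Sigma$ disjoint from $\rho$), and use domain preservation to get $S(\mathbf M')\cap\nu(\mathbf N')^r=S(\mathbf M')$.

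One minor nitpick: with $m=1$, the normal form still formally applies $\mathrm{Copy}_1$, which adds $P_1$ and $\simeq$. Both are trivially definable, as true and as equality, so this changes nothing.
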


The next lemma says that, over a class excluding a fixed minor, all relational
information supported by the Gaifman graph can be encoded by a fixed
transduction. 

\begin{lemma}[special case of {\cite[Lemma~3.1]{lmcs_perm}}]
\label{lem:structure-from-gaif}
Let $\Cc$ be a class of graphs excluding a fixed minor and let $\sigma$ be a
finite binary relational signature.  There is a domain-preserving
transduction $\mathsf{Expand}_\sigma$ such that, for every $G\in\Cc$,
$\mathsf{Expand}_\sigma(G)$ contains every $\sigma$-structure $\mathbf M$
with domain~$V(G)$ and $\Gaif(\mathbf M)\subseteq G$.
\end{lemma}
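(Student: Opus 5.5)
The statement is Lemma~\ref{lem:structure-from-gaif}: over a class $\Cc$ excluding a fixed minor, every $\sigma$-structure $\mathbf M$ with domain $V(G)$ and $\Gaif(\mathbf M)\subseteq G$ lies in $\mathsf{Expand}_\sigma(G)$ for one fixed transduction. The plan is to encode every binary relation by orienting the edges of $G$ with bounded out-degree and guessing finitely many unary colors. A binary relation $R$ with $\Gaif\subseteq G$ is a set of ordered pairs $(u,v)$ that are either loops ($u=v$) or along edges of $G$. Loops are unary information, so they are handled directly by one color $L_R$ marking the elements $u$ with $(u,u)\in R(\mathbf M)$. Unary symbols of $\sigma$ are likewise guessed as colors.

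For the genuine edges, I use that graphs excluding a fixed minor have bounded degeneracy $d$: every subgraph has a vertex of degree at most $d$, with $d$ depending only on the excluded minor. Hence $G$ has an acyclic orientation with out-degree at most $d$. I fix such an orientation and, for each vertex $u$, a local numbering $1,\dots,d$ of its out-neighbours. The main obstacle is that this orientation and numbering must themselves be recovered by first-order formulas from colors alone.

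To recover them, I use a proper distance-$2$ coloring of the out-neighbourhood structure. Consider the graph $G^\ast$ on $V(G)$ in which two vertices are adjacent if they are adjacent in $G$ or are both out-neighbours of a common vertex. In this graph each vertex $v$ has at most $d$ out-neighbours. It also has at most $d\cdot d$ "siblings" via its own out-neighbours; more precisely, $v$ is an out-neighbour of some in-neighbours $w$, each of which has at most $d$ out-neighbours. A standard counting argument over orientations then shows that $G^\ast$ is $O(d^2)$-degenerate, so $G^\ast$ is properly colorable with $c=O(d^2)$ colors $C_1,\dots,C_c$.

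I guess these colors, together with a color $D$ marking the orientation. Concretely, for each color $i$, a unary predicate $O_i$ marks the vertices $u$ that have an out-neighbour of color $C_i$. Since the out-neighbours of $u$ receive distinct colors, the pair $(u,i)$ determines at most one out-neighbour. However, orientation must be read per edge, and unary marks on $u$ alone do not identify which $G$-neighbour of color $i$ is the out-neighbour. I resolve this because $G$-neighbours of $u$ are also pairwise distinct in color in $G^\ast$? Not automatically, so I would instead color the square-type graph $G^\ast{}'$ in which $u$'s out-neighbours and in-neighbours that are $G$-neighbours are separated. A simpler route is to adjoin to $G^\ast$ the edges $\{w,v\}$ for $w$ an in-neighbour of $u$ and $v$ an out-neighbour of $u$. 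This keeps bounded degeneracy, since all added edges are charged to pairs of arcs at a vertex of bounded out-degree on one side. The remaining difficulty is in-degree, which is unbounded. I would therefore take the safer route of citing bounded-out-degree coloring results: classes of bounded expansion admit bounded star chromatic number, so $G$ has an acyclic coloring with $c$ colors in which every two color classes induce a star forest.

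In a star forest the orientation "toward the star center" is first-order definable from the colors: the center of a star is the unique vertex of degree $>1$, with a marked choice for single edges. So each edge $uv$ is identified by the pair of colors $\{i,j\}$ and an orientation definable in the $(i,j)$-subgraph. For each $R\in\sigma$ and each ordered pair of colors $(i,j)$, I add unary predicates $A^{R}_{i,j}$ on leaves and $B^{R}_{i,j}$ on centers. The leaf $v$ of a star with center $u$ carries the bits telling whether $(u,v)\in R$ and whether $(v,u)\in R$; each leaf has a unique center in that bicolored star forest. The formula $\rho_R(x,y)$ then states that $x,y$ are adjacent in $G$, have colors $i,j$, that one of them is the center in the $\{i,j\}$-subgraph, and that the leaf carries the appropriate bit. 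Together with the loop colors this defines $\mathbf M$ exactly, with no copying and the trivial universe formula, which gives the domain-preserving transduction.
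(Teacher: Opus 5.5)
Your final argument is correct and is the standard star-colouring proof: minor-excluded classes have bounded expansion, hence bounded star chromatic number. A fixed domain-preserving, non-copying transduction can then guess a star colouring, a centre marker for the $K_2$-components of each bicoloured star forest, two bits per binary relation symbol and colour pair stored at the leaf endpoint of each edge, and colours for loops and unary symbols; this is the argument behind the lemma \cite[Lemma~3.1]{lmcs_perm} for classes of bounded star chromatic number, which the paper cites rather than reproves.

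You should delete the preliminary degeneracy detour. Its claim that $G^\ast$ is $O(d^2)$-degenerate by a counting argument is false for merely $d$-degenerate graphs: a $1$-subdivided $K_n$ with its subdivision vertices oriented outward yields $K_n$. The claim holds here only because of bounded expansion, which the star-colouring route already uses directly.
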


We will multiple times use the following special case of the above lemma. 

\begin{corollary}
Let $\Cc$ be a class of graphs excluding a fixed minor.  There are fixed
transductions $\mathsf{Monotone}$ and $\mathsf{Orient}$ such that, for every
$G\in\Cc$, the class $\mathsf{Monotone}(G)$ contains every subgraph
of~$G$, and $\mathsf{Orient}(G)$ contains every orientation of~$G$.
Consequently every orientation of every subgraph of $G$ belongs to
$(\mathsf{Orient}\circ\mathsf{Monotone})(G)$.
\end{corollary}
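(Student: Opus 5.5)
The corollary is a direct instantiation of \zcref{lem:structure-from-gaif}, applied to two suitable signatures, followed by composing the two resulting transductions.

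For $\mathsf{Monotone}$, take $\sigma=\{E\}$ with $E$ binary. Let $H\subseteq G$ be a subgraph with $V(H)\subseteq V(G)$. First view $H$ as a structure on the full domain $V(G)$: keep the edges of $H$ and add the vertices of $V(G)\setminus V(H)$ as isolated elements. Its Gaifman graph is contained in $G$, so by \zcref{lem:structure-from-gaif} this structure lies in $\mathsf{Expand}_{\{E\}}(G)$.

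To delete the vertices outside $V(H)$, compose with a transduction that does two things: it guesses one fresh unary color $U$ marking $V(H)$, and it restricts the domain to $U$ through the universe formula $\nu(x)=U(x)$. Since transductions are closed under composition (via the normalization recalled above), the result is a single fixed transduction $\mathsf{Monotone}$ with $H\in\mathsf{Monotone}(G)$. Alternatively, one can enlarge $\sigma$ by the unary symbol $U$ and apply the lemma once, since unary relations do not affect the Gaifman graph.

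For $\mathsf{Orient}$, take $\sigma=\{A\}$ with $A$ a binary arc relation. An orientation $\vec G$ of $G$ is a $\sigma$-structure on $V(G)$ whose Gaifman graph is exactly $G$. The lemma therefore gives $\vec G\in\mathsf{Expand}_{\{A\}}(G)$, and we set $\mathsf{Orient}:=\mathsf{Expand}_{\{A\}}$.

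For the consequence, let $H$ be a subgraph of $G$ and $\vec H$ an orientation of $H$. Then $H\in\mathsf{Monotone}(G)$, and we want $\vec H\in\mathsf{Orient}(H)$. The only point needing care is the domain of application of $\mathsf{Orient}$. The lemma yields a transduction that depends on the class, and $\mathsf{Orient}$ must work on $\mathsf{Monotone}(G)$, not only on $\Cc$. This is unproblematic because the class of all subgraphs of graphs in $\Cc$ still excludes the same fixed minor. So we apply \zcref{lem:structure-from-gaif} to this monotone closure and define $\mathsf{Orient}$ there; it then contains every orientation of every graph in that closure.

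Alternatively, a single application of the lemma to $G$ with signature $\{A,U\}$ produces every orientation of every subgraph directly, and the restriction to $U$ is then performed as before. Either route gives $\vec H\in(\mathsf{Orient}\circ\mathsf{Monotone})(G)$.
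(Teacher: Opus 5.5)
Your proposal is correct and matches the paper, which states this corollary as a direct special case of \zcref{lem:structure-from-gaif}: instantiate it with the edge signature and restrict the domain via a guessed unary predicate, and instantiate it with an arc signature. Your observation that $\mathsf{Orient}$ should be built for the subgraph closure of $\Cc$, which excludes the same minor, so that it applies to the outputs of $\mathsf{Monotone}$, is a precision the paper leaves implicit; however, your alternative single-application route yields a different transduction rather than literally $\mathsf{Orient}\circ\mathsf{Monotone}$, so only your first route proves the stated consequence as written.
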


\subsubsection{Dependence and monadic dependence}
Let $\mathcal L$ be a logic extending first-order logic.  An
$\mathcal L$-formula $\psi(\bar x;\bar y)$ is \emph{independent} in a class
$\Cc$ if, for every positive integer $n$, there are $\mathbf M\in\Cc$, tuples $\bar a_i\in M^{|\bar x|}$ for $i\in[n]$, and tuples $\bar b_J\in M^{|\bar y|}$ for $J\subseteq[n]$ such that
\[
\mathbf M\models \psi(\bar a_i,\bar b_J)\qquad\iff\qquad i\in J.
\]
The class $\Cc$ is \emph{$\mathcal L$-dependent} if no
$\mathcal L$-formula is independent in~$\Cc$.  It is \emph{monadically
$\mathcal L$-dependent} if the class of all expansions of members of~$\Cc$
by finitely many unary predicates is $\mathcal L$-dependent.  When the logic
is not displayed, dependence and monadic dependence refer to~$\FO$.

We mainly use the following characterization, which is specific here to
first-order logic and first-order transductions.
\begin{theorem}[\cite{baldwin1985second}]
A class $\Cc$ of structures is monadically dependent if and only if the class
of all finite graphs cannot be transduced from $\Cc$.
\end{theorem}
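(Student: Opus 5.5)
The plan is to prove the two directions separately. The direction from transducing all graphs to monadic independence is routine. The converse is the substantive content of Baldwin and Shelah's theorem.

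\emph{If all finite graphs can be transduced from $\Cc$, then $\Cc$ is not monadically dependent.}
\begin{enumerate}
\item On the class of all finite graphs the formula $E(x;y)$ is independent. For given $n$, take the bipartite graph with vertices $a_1,\dots,a_n$ on one side and $b_J$ for $J\subseteq[n]$ on the other, with $a_i b_J$ an edge iff $i\in J$.
\item Suppose all graphs lie in $\mathsf T(\Cc)$ with $\mathsf T=(m,\Sigma,\mathsf I)$. Pull $E$ back along $\mathsf I$ to a formula $\rho_E$ over $\mathrm{Copy}_m(\mathbf M)$ with colors from $\Sigma$.
\item Remove the copy operation. An element $(a,i)$ is represented by $a$ together with the fixed index $i$. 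Quantifiers over $M\times[m]$ become finite disjunctions over the indices. The predicates $P_i$ and $\simeq$ become syntactic bookkeeping, and the colors $\Sigma$ on $M\times[m]$ become $m\cdot|\Sigma|$ colors on $M$.
\item Since the witnesses $a_i,b_J$ lie in finitely many copies, pigeonhole lets us fix one copy index for the $a_i$ and one for the $b_J$ on a subfamily that is still of unbounded size.
\item The result is a single formula $\psi(x;y)$ over monadic expansions of members of $\Cc$ that is independent. Hence $\Cc$ is not monadically dependent.
\end{enumerate}

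\emph{If $\Cc$ is not monadically dependent, then all finite graphs can be transduced from $\Cc$.}
\begin{enumerate}
\item By assumption, some formula $\varphi(\bar x;\bar y)$ is independent over monadic expansions of $\Cc$.
\item Reduce to a formula $\psi(x;y)$ with \emph{single} variables, again over monadic expansions of $\Cc$. This is the core of Baldwin--Shelah, carried out as follows.
  \begin{enumerate}
  \item Use Shelah's lemma to make $\bar x$ a single variable.
  \item Using Ramsey arguments, extract indiscernible-like configurations of the witnesses $\bar b_J$, made uniform in $n$ by finite Ramsey.
  \item Use fresh unary predicates to mark coordinates of the witness tuples, so that the tuple $\bar y$ can be collapsed to a single element. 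This is where the monadic colors are essential.
  \end{enumerate}
\item With $\psi(x;y)$ in hand, every finite bipartite pattern is realized: for each $n$ there are $a_1,\dots,a_n$ and $b_J$ with $\psi(a_i,b_J)\iff i\in J$.
\item Given a graph $G$ on $[n]$, color $A=\{a_i\}$ and $B=\{b_{\{i,j\}}: ij\in E(G)\}$.
\item Define the transduction with universe formula $A(x)$ and edge formula
\[
E(x,y):=x\neq y\wedge \exists z\,\bigl(B(z)\wedge\psi(x,z)\wedge\psi(y,z)\bigr).
\]
If $a_i,a_j$ satisfy this, then $z=b_{\{k,l\}}$ with $\{i,j\}\subseteq\{k,l\}$. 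Since $i\neq j$, we get $\{i,j\}=\{k,l\}\in E(G)$.
\item Hence $G$ is recovered. Composing with the colors used in the reduction step gives a single fixed transduction producing all finite graphs.
\end{enumerate}

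The main obstacle is the reduction to single variables $\psi(x;y)$ in the second direction. If I am allowed to cite it, I would invoke the original argument of~\cite{baldwin1985second}, which is stated for theories. To obtain the finite-class version, pass to the theory of the class, which is complete after taking ultraproducts. Then check by compactness that independence of a formula on the class coincides with independence in some model of that theory. Finally, transfer the resulting single-variable formula back to finite members of $\Cc$ with uniformly bounded colorings.
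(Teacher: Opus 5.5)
Your first direction is correct up to one small repair, and so is your closing coding step: from a single-variable formula $\psi(x;y)$ that is independent on monadic expansions of $\Cc$, coloring $A=\{a_i\}$ and $B=\{b_{\{i,j\}}:ij\in E(G)\}$ yields every graph by one fixed transduction. The paper gives no argument and only cites \cite{baldwin1985second}, so what matters is your sketch of the hard direction. First the repair. In step~4 of the first direction, plain pigeonhole fixes the copy index of the $b_J$ only on a large \emph{family} of sets $J$, not on all subsets of a large ground set. You need Sauer--Shelah: a family of at least $2^N/m$ subsets of an $N$-set shatters a set of size $N/2-O(\sqrt{N\log m})$. Alternatively, record the copy index of $b_J$ in an extra parameter coordinate pointing to one of $m$ singleton-colored elements.

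The genuine gap is step~2. Its target is true (it follows from the theorem itself via your first direction), but substep~(c) does not reach it. Take $\bar b_J=(b^1_J,b^2_J)$ with the coordinates colored by $B_1,B_2$, and eliminate $y_2$ via $\exists y_2\,(B_2(y_2)\wedge\varphi(x;y_1,y_2))$. Then $y_2$ ranges over the second coordinates of \emph{all} witnesses, so the trace at $b^1_J$ becomes $\{i:\exists J'\ \varphi(a_i;b^1_J,b^2_{J'})\}$, which need not be $J$. Ramsey-type regularization of the witnesses does not control these cross terms. What is missing is a definable pairing $b^1_J\mapsto b^2_J$. Coloring coordinates does not create one; it has to be extracted from the relations of the structure. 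Doing this, in the form of a single-variable formula $\varphi(x,y,z)$ coding $A\times B$ into $C$, is exactly the Baldwin--Shelah coding lemma, so (b)--(c) must be replaced by that citation.

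Once you cite it, your last paragraph is where real care is needed. At the level of theories, the witness lives in a model of the ultraproduct theory. That model may be expanded by new unary predicates that are arbitrary subsets of an infinite structure, not ultraproducts of subsets of members of $\Cc$. The statement ``some unary predicates make $\psi$ shatter $n$ points'' is existential monadic second-order, and {\L}o\'{s}'s theorem does not transfer such statements to the factors. For example, an ultraproduct of finite linear orders has a nonempty subset without a maximum, but no factor does. To close the argument, you need the theory-level result in a form whose $n$-th instance is a first-order sentence of the expanded ultraproduct language. One such form is a coding configuration given by a formula of that language with parameters and no additional predicates. Then {\L}o\'{s} yields the size-$n$ configuration in some monadic expansion of a member of $\Cc$, the parameters can be marked by singleton colors, and your steps~3--6 finish the proof. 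Otherwise, cite the finite-class version directly, as the paper does.
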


We note the following corollary, which immediately follows from the definition of transductions.

\begin{corollary}
    \label{lem:dep-preserved}
    Let $\Cc$ be a monadically dependent class of structures and let $\mathsf T$ be a transduction. 
    Then  $\mathsf T(\Cc)$ is  monadically dependent, too. 
\end{corollary}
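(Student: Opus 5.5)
The proof is a direct unfolding of the definitions. Let $\Cc$ be monadically dependent and $\mathsf T=(m,\Sigma,\mathsf I)$. I will argue by contraposition. Suppose $\mathsf T(\Cc)$ is not monadically dependent. By the characterization of~\cite{baldwin1985second}, there is a transduction $\mathsf S$ such that the class of all finite graphs is contained in $\mathsf S(\mathsf T(\Cc))$.

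The only real step is to show that transductions are closed under composition: $\mathsf S\circ\mathsf T$ must again be a transduction $\mathsf U$ with $\mathsf U(\Cc)\supseteq \mathsf S(\mathsf T(\Cc))$. This holds because a transduction is, by definition, a composition of copy operations, unary colorings and simple interpretations. Hence the composition $\mathsf S\circ\mathsf T$ is again such a composition, and so it is a transduction. If one insists on the normal form $(m,\Sigma,\mathsf I)$, I will invoke the standard normalization of~\cite{gajarsky2020first} cited in the preliminaries. That normalization commutes colorings and copyings before interpretations; for example, a coloring after an interpretation is simulated by coloring the input, since every output element is an input element (or a copy of one). A routine formula substitution composes two simple interpretations.

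Given closure under composition, every finite graph lies in $\mathsf U(\Cc)$. By~\cite{baldwin1985second}, this contradicts the monadic dependence of $\Cc$.

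I expect no real obstacle. The only point needing care is the composition/normalization claim, and I take it as the standard fact already quoted in the preliminaries. For a proof avoiding~\cite{baldwin1985second}, I would instead pull back an independent formula on a monadic expansion of $\mathsf T(\Cc)$ through $\mathsf I$ to a formula on a monadic expansion of $\mathrm{Copy}_m(\Cc)$. I would then replace the copies by tuples of original elements tagged by copy index. This part is messier, since it turns tuples of elements into longer tuples, but it still preserves independence.
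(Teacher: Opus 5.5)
Your proof is correct and follows the same route as the paper. The paper obtains the corollary immediately from the Baldwin--Shelah characterization stated just before it, using that transductions are closed under composition. That closure holds by definition, since a transduction is defined as a composition of copy operations, colorings and simple interpretations.
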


For first-order logic, dependence and monadic dependence coincide on
hereditary classes of relational structures~\cite{braunfeld2022existential}.

\subsection{Monadic dependence of tree-ordered binary structures}
\subsubsection{Linearly ordered structures and twin-width}

We write $\operatorname{tww}(\mathbf M)$ for the twin-width of~$\mathbf M$. We do not use its definition directly, our arguments rely on the following characterization of ordered structures and its algorithmic consequence.

\begin{theorem}[store=thm:characterization-tww,note={\cite[Theorem 3]{bonnet2024twin}}]
    Let $\Cc$ be a class of finite linearly ordered binary structures over a finite signature. Then $\Cc$ has bounded twin-width if and only if $\Cc$ is monadically dependent.
\end{theorem}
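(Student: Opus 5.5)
This is the cited characterization of \cite[Theorem~3]{bonnet2024twin}, and I would not reprove it from scratch. I would derive it from the known ingredients, proving the two directions separately.

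\emph{Bounded twin-width implies monadic dependence.} This direction does not need the linear order. Suppose $\Cc$ has twin-width at most $t$. Bounded twin-width is preserved by first-order transductions, with a bound depending on $t$ and the transduction (from \cite{bonnet2022twwI}). The class of all finite graphs has unbounded twin-width, because twin-width-$t$ graphs on $n$ vertices number at most $2^{O_t(n\log n)}$, which is far fewer than all graphs. Hence the class of all graphs cannot be transduced from $\Cc$. By \cite{baldwin1985second}, $\Cc$ is therefore monadically dependent.

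\emph{Monadic dependence implies bounded twin-width.} I would argue by contraposition, and this is where the order is essential. Assume $\Cc$ has unbounded twin-width. Step one: apply the grid-minor theorem for ordered matrices. Write each structure as its adjacency matrices over the order $<$, one per binary relation. Unbounded twin-width forces, for every $k$, a $k$-mixed minor in one of these matrices, that is, a $k\times k$ division into intervals in which every cell is mixed. Step two, which is the main obstacle: refine mixed minors into one of a finite list of unavoidable Ramsey-type patterns, via a Marcus--Tardos style argument followed by repeated Ramsey extraction. The patterns are encodings of permutation matrices, of their ``staircase'' or half-graph versions, and of all their variants, all aligned with the interval structure that $<$ provides. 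Step three: for each canonical pattern, give an explicit first-order transduction with unary colors that marks the relevant rows and columns, uses $<$ to locate interval boundaries, and defines an arbitrary graph. This gives that the class of all graphs is transducible from $\Cc$. By \cite{baldwin1985second}, $\Cc$ is then monadically independent.

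I expect step two to be the hardest part, namely the combinatorial extraction of a uniform pattern from a large mixed minor over a bounded number of relations and both orientations. I would try to handle it by iterating the pattern-extraction lemma of \cite{bonnet2024twin} relation by relation, shrinking $k$ at each stage.
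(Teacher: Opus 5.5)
Your first direction is correct. The paper does not reprove this statement; it imports it from \cite{bonnet2024twin}. Your sketch of the converse therefore has to stand on its own, and it has a genuine gap between Steps~1 and~2. On structures with a \emph{fixed} linear order, large mixed minors do not imply monadic independence, so no Marcus--Tardos or Ramsey refinement of them can produce encodings of all permutations.

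Concretely, define $E$ on $[n]$ by $ij\in E$ iff $i+j$ is odd, and take the natural order $<$. The adjacency matrix in this order is the checkerboard $\bigl((i+j)\bmod 2\bigr)_{i,j}$. Every contiguous $2\times 2$ zone of it is mixed, and every zone containing a mixed zone is mixed. Hence any division into intervals of length at least~$2$ is a $\lfloor n/2\rfloor$-mixed minor.

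Nevertheless, $E(x,y)\leftrightarrow\neg\bigl(P(x)\leftrightarrow P(y)\bigr)$ interprets $([n],<,E)$ in the colored linear order $([n],<,P)$, where $P$ is the set of odd numbers. So this class is monadically dependent. Its twin-width is even at most~$1$: alternately absorb the next odd vertex into the odd prefix and the next even vertex into the even prefix. If your Step~2 worked, Step~3 would transduce all graphs from this class, which is impossible. The mixed-minor theorem of \cite{bonnet2022twwI} only gives ``mixed-free in the given order implies bounded twin-width''. Its converse holds for a well-chosen order, not for a prescribed one.

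The missing ingredient is the high-rank-division (grid rank) theorem of \cite{bonnet2024twin}. It says that unbounded twin-width of ordered matrices yields, for every $k$, a $k$-division in which every cell has large rank, i.e.\ many distinct rows and many distinct columns. Equivalently, bounded grid rank forces bounded twin-width. This is the central new ingredient of that paper and does not follow from the mixed-minor bound. The finitely many universal permutation encodings are extracted from such high-rank divisions, not from mixed minors.

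The rank formulation also repairs your passage from the combined matrix to a single relation, which fails for mixedness. Take $R_1(i,j)\leftrightarrow U(i)$ and $R_2(i,j)\leftrightarrow W(j)$. A zone can then be horizontal for $R_1$ and vertical for $R_2$, hence mixed for the pair but mixed for neither relation. By contrast, a cell with $r$ distinct rows over $d$ relations has at least $r^{1/d}$ distinct rows in one of them. Since rank only grows when passing to superzones, a grid Ramsey argument then makes that relation uniform across a large coarsened division.
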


\begin{theorem}[
store=thm:mc-tww,
note={\cite[Theorem~2]{bonnet2024twin} and
\cite[Theorem~1]{bonnet2022twwI}}]
There exist a computable function~$f$ and an absolute constant~$c$ such that,
given a finite linearly ordered binary structure $\mathbf M$ and an
$\FO$-sentence $\phi$, one can decide whether
$\mathbf M\models\phi$ in time
\[
 f\bigl(|\phi|,{\rm tww}(\mathbf M)\bigr)\cdot\|\mathbf M\|^c.
\]
\end{theorem}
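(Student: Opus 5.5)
The statement combines two ingredients, and I would prove it as the composition of an approximation algorithm for the twin-width of ordered structures (the part from \cite{bonnet2024twin}) with the model-checking algorithm that takes a contraction sequence as input (\cite{bonnet2022twwI}). Let $t={\rm tww}(\mathbf M)$. The first step is to compute, in time $g(t)\cdot\|\mathbf M\|^{c}$, a $d$-contraction sequence of $\mathbf M$ with $d\le g'(t)$ for computable $g'$. Since $t$ is not given, I would run this for $t=0,1,2,\dots$ until it succeeds, which costs only a computable factor. The second step runs the model-checking algorithm on that sequence in time $f(|\phi|,d)\cdot\|\mathbf M\|$.

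For the first step I would use the linear order $<$ of $\mathbf M$. Encode each binary relation, including $<$, as an $n\times n$ matrix with rows and columns indexed in $<$-order. Then maintain a partition of $V(\mathbf M)$ into $<$-intervals, starting from singletons. Repeatedly try to merge two consecutive intervals, provided the resulting partition stays \emph{$k$-mixed-free}: no $k\times k$ division of the matrices, along interval boundaries, in which every cell is mixed (neither constant row-wise nor column-wise). I would take $k$ to be a computable function of $t$.

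Progress comes from Marcus--Tardos. As long as the partition has many parts and is $k$-mixed-free, only linearly many pairs of intervals are mixed. By averaging, some consecutive pair can be merged while keeping the number of mixed neighbours bounded, so the red degree stays bounded by a function of $k$. If no merge is possible, one extracts a large mixed minor. The key structural lemma is that in an \emph{ordered} binary structure a large mixed minor forces large twin-width. This holds because the order $<$ lets a single structure of small twin-width be handled through the grid-rank/mixed-minor correspondence. So if $t$ was guessed correctly, the greedy procedure cannot get stuck. Each merge test is polynomial, which gives the absolute constant $c$.

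For the second step, given a $d$-contraction sequence $\mathcal P_n,\dots,\mathcal P_1$, I would run a dynamic program from the finest to the coarsest partition. For each part, and each small connected set of parts in the red graph (which has degree at most $d$, so such sets of size bounded in $|\phi|$ are few), it stores the set of rank-$q$ \emph{partial types} realizable by tuples distributed over those parts. Here $q$ depends on $|\phi|$. Gaifman-style locality in the red graph plus Ehrenfeucht--Fra\"iss\'e composition shows two facts. First, the types after a contraction are determined by the stored types of the merged parts and their bounded red neighbourhoods. Second, black (homogeneous) pairs contribute no information beyond their colour. At the final single part, the stored type decides $\phi$. The table sizes depend only on $|\phi|$ and $d$, and there are $n-1$ contraction steps, giving the $f(|\phi|,d)\cdot\|\mathbf M\|$ bound.

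I expect the main obstacle to be the approximation step, specifically the lemma that for ordered structures a large mixed minor certifies large twin-width. Unordered graphs lack this equivalence, and this is exactly where the linear order is used. I would first try to derive it via the Ramsey-type extraction of one of a few canonical unavoidable ordered matrices, each of which visibly has large twin-width.
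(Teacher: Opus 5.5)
Your overall architecture matches what the paper relies on. The paper does not reprove the statement: it combines FPT approximation of a contraction sequence for ordered structures \cite[Theorem~2]{bonnet2024twin} with model checking along a given sequence \cite[Theorem~1]{bonnet2022twwI}. Your second step is a reasonable sketch of the latter.

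The gap is in your first step. The lemma you call key, that a large mixed minor of the matrices in the given order forces large twin-width of an ordered binary structure, is false. Let $V=[n]$ carry its natural order, and let $E$ join distinct vertices of equal parity.
\begin{itemize}
\item Insert the vertices one at a time, in the order $1,2,\dots,n$, into two growing parts $P_{\mathrm{odd}}$ and $P_{\mathrm{even}}$. Each part is homogeneous, for both $E$ and $<$, towards every vertex not yet inserted. The only red edge is the one between the two parts, so the twin-width is at most $1$.
\item Yet every block $\{2a-1,2a\}\times\{2b-1,2b\}$ with $a\neq b$ of the adjacency matrix is a $2\times 2$ identity matrix. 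Taking row pairs from the first half and column pairs from the second half gives a mixed minor of order about $n/4$.
\item No procedure that maintains partitions into $<$-intervals can succeed here. An interval part with two vertices contains both parities, so it is inhomogeneous towards every other vertex. The first such part therefore already has red degree $n-2$.
\end{itemize}
So your greedy algorithm is stuck from the start (the singleton partition already violates the invariant) on a structure of twin-width $1$. The search over $t$ only succeeds once $k$ is linear in $n$, which destroys the running-time bound.

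The linear order does help, but through rank rather than mixedness. The parameter used in \cite{bonnet2024twin} is grid rank: one asks for a $k$-division all of whose cells have large rank, meaning many distinct rows. In the example, any division with at least two intervals on each side has a cell whose row and column intervals are disjoint. Such a cell has at most two distinct rows, so the grid rank is bounded, consistent with twin-width $1$.

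The Ramsey-type extraction of canonical patterns you propose is the right kind of tool for showing that large grid rank forces large twin-width. It has to start from high-rank cells, though: from mixed cells alone it can return exactly the pattern above. What your plan does not supply is the converse conversion, from bounded grid rank to an actual bounded-width contraction sequence, together with its algorithmic form. By the example, the parts of that sequence in general cannot be intervals. This is what the cited result of \cite{bonnet2024twin} provides.

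Finally, to make $f$ independent of the signature and $c$ absolute, first pass to the reduct consisting of the symbols of $\phi$ and the order, as the paper remarks.
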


For this statement, one first takes the reduct to the relation symbols
occurring in~$\phi$ together with the distinguished linear order.  Thus, the
dependence on the finite signature is absorbed by~$|\phi|$, while the
polynomial exponent remains absolute.

Having bounded twin-width is preserved by transduction, as stated in the next lemma.

\begin{lemma}[\cite{bonnet2022twwI}]
    \label{lem:tww-preserved}
    Let $\mathsf T$ be a fixed transduction. 
    There is a computable function $g$ such that the following holds. 
    If $\Cc$ is a class of binary structures with twin-width bounded by $d$, then $\mathsf T(\Cc)$ has twin-width bounded by $g(d)$. 
    In particular, every class of bounded twin-width is monadically dependent. 
\end{lemma}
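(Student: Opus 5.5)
The statement to prove is \zcref{lem:tww-preserved}: twin-width is preserved under a fixed transduction, with a computable bound. I would factor $\mathsf T=(m,\Sigma,\mathsf I)$ into its three elementary operations and bound the twin-width blow-up of each one computably. Composing the three bounds then gives $g$.

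\textbf{Unary colorings.} These are the easy case. A contraction sequence of $\mathbf M$ with red degree at most $d$ is reused verbatim. Adding $k=|\Sigma|$ colors can only split a part into at most $2^k$ homogeneity types with respect to the new predicates. The standard way to handle this is to first contract each color class separately, i.e.\ refine the partition sequence by the color. This gives a bound of order $2^k(d+1)$.

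\textbf{Copy operations.} For $\mathrm{Copy}_m(\mathbf M)$, I run the sequence of $\mathbf M$ simultaneously on all $m$ copies, keeping the copies of each part separate. The relation $\simeq$ only joins a part to its own copies and is homogeneous between the corresponding parts. Red edges between copies of two parts $P\times\{i\}$ and $Q\times\{j\}$ can arise only when $i=j$ and $PQ$ is red. The red degree therefore stays at most $md+m$. At the end we contract the $m$ singletons freely, which only needs red degree $m$.

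\textbf{Simple interpretations.} This is the main obstacle, since a first-order formula $\rho_R(x,y)$ can create dense relations that are not homogeneous between parts. I would rely on the matrix characterization from \cite{bonnet2022twwI}: bounded twin-width is equivalent to the existence of an ordering whose adjacency matrices have no large grid minor (more precisely, bounded mixed-minor freeness relative to an order). I would then show that definable relations inherit this property, using the Feferman--Vaught style composition along a contraction sequence.

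Concretely, I would take a $d$-sequence of $\mathbf M$ and prove by induction on quantifier rank $q$ the following: for every part $P$ of a partition in the sequence, the $q$-type of an element relative to far-away parts is determined by a bounded amount of information local to its red component. This yields a contraction sequence in which parts are refined by $q$-types, with red degree bounded by a function of $d$, $q$ and the signature. This local-type argument is the delicate step. I would follow \cite{bonnet2022twwI} essentially as a black box, since the bounds there are explicit and hence computable.

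The domain restriction by $\nu$ is just taking an induced substructure, which does not increase twin-width.

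\textbf{Monadic dependence.} The final claim follows from \zcref{lem:dep-preserved}-type reasoning. If a class of bounded twin-width transduced all graphs, then by the above all graphs would have bounded twin-width. This is false, since there are graphs of arbitrarily large twin-width (e.g.\ by counting). Hence, by \cite{baldwin1985second}, the class is monadically dependent.
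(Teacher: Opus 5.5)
Your proposal is correct and follows essentially the paper's route. The paper gives no independent argument: it invokes \cite[Theorem~8.1]{bonnet2022twwI} and remarks that its constructive proof yields a computable~$g$, which is what you do for the interpretation step. Your derivation of monadic dependence, via \cite{baldwin1985second} and the existence of graphs of arbitrarily large twin-width, is also the intended one.

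One small slip is in your copy step. Between $P\times\{i\}$ and $P\times\{j\}$ with $|P|\geq 2$, the relation $\simeq$ is a perfect matching, hence not homogeneous, so these pairs are red. This adds only $O(m)$ to the red degree, including during the intermediate single merges, so your conclusion is unaffected.
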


The constructive version with the computable bound $g(d)$ follows from the constructive proof of \cite[Theorem~8.1]{bonnet2022twwI}. 

\subsubsection{Forest-ordered graphs excluding a minor}

The connection between forest-ordered classes with excluded minors and monadic dependence was made recently in~\cite{buffiere2026characterizations}.
A graph class $\Cc$ is \emph{weakly sparse} if there is a $t\in\mathbb{N}$ such that no graph in~$\Cc$ contains the complete bipartite graph $K_{t,t}$ as a subgraph.

\getkeytheorem{thm:minor-tows}

\pagebreak
We next consider minors of tree-ordered graphs.
Let~$(G,\preceq)$ be a tree-ordered graph (note that this tree order need not be related to the edges of~$G$).
An \emph{elementary $\prec$-contraction} identifies a non-root vertex $v$ with the unique element $u$ such that~$u \pred v$.
In the graph relation, the new vertex is adjacent to every vertex formerly adjacent to $u$ or~$v$ (loops are discarded and parallel edges are identified).
The order is the induced sub-order (i.e.\ the sub-order of $(V(G),\preceq)$ induced by $V(G)\setminus\{v\}$).
A \emph{$\prec$-contraction} is a possibly empty sequence of elementary $\prec$-contractions.
A \emph{tree-ordered minor} of $(G,\preceq)$ is any tree-ordered graph obtained by:
\begin{enumerate}[label=(\roman*),nosep]
  \item taking an induced tree-ordered subgraph of $(G,\preceq)$ that contains the root $r$;
  \item applying a $\prec$-contraction; and
  \item deleting an arbitrary set of edges from the graph relation, while retaining the resulting order.
\end{enumerate}
We write $\Minor_{\prec}(G,\preceq)$ for the set of all tree-ordered minors of $(G,\preceq)$, and for a class $\Cc$, we set $\Minor_{\prec}(\Cc) := \bigcup_{(G,\preceq)\in\Cc}\Minor_{\prec}(G,\preceq)$.
The following was shown in~\cite{buffiere2026characterizations}:

\begin{theorem}[{\cite[Theorem~1.3]{buffiere2026characterizations}}]\label{thm:bounded-degree-covergraph}
Let $D$ be an integer and let $\Cc$ be a class of tree-ordered weakly sparse graphs
whose $\prec$-reduct cover graphs have maximum degree at most $D$.
Then the following are equivalent:
\begin{enumerate}[label=(\arabic*),nosep]
  \item $\Cc$ is monadically dependent;
  \item $\Cc$ has bounded twin-width;
  \item the $E$-reduct of $\Minor_{\prec}(\Cc)$ is weakly sparse.
\end{enumerate}
\end{theorem}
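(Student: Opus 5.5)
The plan is to route everything through a canonical linear order. Because the cover graphs have degree at most $D$, a DFS preorder of the tree order is first-order definable after colouring, and the three conditions can then be compared on linearly ordered structures, where \zcref{thm:characterization-tww} applies.

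\emph{Step 1 (linearization).} For $(G,\preceq)\in\Cc$, colour every non-root vertex $v$ with one of $D$ colours $c_1,\dots,c_D$ so that siblings (vertices with the same parent $\pi(v)$) get distinct colours. This is possible because the degree bound on the cover graph bounds the number of children. Define $x<y$ by the following first-order condition: either $x\prec y$, or $x\parallel y$ and, for the children $x',y'$ of $x\curlywedge y$ lying below $x$ and $y$ respectively, the colour index of $x'$ is smaller than that of $y'$. This is a DFS preorder, so there is a fixed non-copying transduction $\mathsf L$ with $(G,\preceq,<)\in\mathsf L(G,\preceq)$.

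\emph{Step 2 ((1)$\Leftrightarrow$(2)).} If $\Cc$ is monadically dependent, so is $\mathsf L(\Cc)$ by \zcref{lem:dep-preserved}. Since $\mathsf L(\Cc)$ consists of linearly ordered binary structures, it has bounded twin-width by \zcref{thm:characterization-tww}. Forgetting $<$ is an interpretation, and twin-width does not increase under reducts, so $\Cc$ has bounded twin-width. The converse direction (2)$\Rightarrow$(1) is \zcref{lem:tww-preserved}.

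\emph{Step 3 ((3)$\Rightarrow$(2)).} Here I would use the grid characterization of twin-width for ordered structures: unbounded twin-width of $\mathsf L(\Cc)$ yields, for every $k$, a $k$-mixed minor (Marcus--Tardos style) in the adjacency matrix ordered by $<$. Weak sparsity of $G$ should let one upgrade mixed cells to cells that simply contain edges. Each row and column block is an interval of the DFS preorder. Such an interval is a union of a bounded-shape family of subtree pieces, and the degree bound keeps this family manageable; subsample blocks so that each contains a connected piece with a top element. Contract each such piece onto its top by $\prec$-contractions, after first passing to an induced sub-order containing the root. The result is a tree-ordered minor whose $E$-reduct contains $K_{t,t}$ with $t\to\infty$, contradicting (3).

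\emph{Step 4 ((1)$\Rightarrow$(3)).} Suppose some tree-ordered minor has $K_{n,n}$ in its $E$-reduct. Colour the retained vertices and the tops of the branch sets. The branch set of $w$ is then definable as the set of retained vertices whose nearest coloured-top ancestor is the top of $w$, so adjacency between branch sets is first-order definable. This interprets graphs containing large bicliques with bounded-size witnessing structure. Using weak sparsity of the original graphs and a Ramsey argument on the witnessing edges, one can then select subpatterns that encode arbitrary bipartite graphs, via the choice of which witness edges' endpoints are coloured. This would transduce all graphs and contradict (1) by \cite{baldwin1985second}.

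I expect the main obstacle to be Step 3, specifically converting DFS intervals into connected subtrees that are contractible in the $\prec$ sense while preserving the grid pattern. Step 4's Ramsey selection is the secondary difficulty; it may be cleaner to derive it from Step 3 by contraposition, with bounded twin-width as the intermediate.
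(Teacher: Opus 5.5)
Your Steps~1 and~2 are sound. Step~1 is the DFS linearization the paper itself uses, and Step~2 needs nothing beyond it. Step~4 is also workable along the lines you indicate. The minor operation without edge deletion is a transduction: color the retained set and the tops, and send each retained vertex to the branch set of its nearest top ancestor. A canonical Ramsey argument on the $n^2$ witness edges, using $K_{t,t}$-freeness of the original graphs, then yields an induced, matching-like witness family through which any bipartite graph can be encoded by colors.

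The genuine gap is Step~3, i.e.\ $(3)\Rightarrow(2)$, which is the substance of the theorem, and your sketch of it rests on a false claim. A DFS interval is not a union of boundedly many subtree pieces, even for $D=3$. Take a caterpillar with spine $s_1\prec\cdots\prec s_n$ and a leaf $\ell_i$ below each $s_i$, traversed spine-first. The preorder is $s_1,\ldots,s_n,\ell_n,\ldots,\ell_1$, and the interval from $\ell_n$ to $\ell_1$ consists of $n$ pairwise incomparable singletons hanging from vertices outside it. More importantly, a grid cell only guarantees some edge. So for each selected row block you need a single $\prec$-contractible branch set containing the endpoints of its witness edges to \emph{all} selected column blocks (and symmetrically), with all these branch sets pairwise disjoint. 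In the order restricted to the retained set, a branch set is a convex subtree. It must therefore contain a common ancestor of these endpoints and every retained vertex on the chains down to them. Those chains leave the block and can run through witness endpoints or connecting chains of other blocks, and nothing in your proposal resolves these collisions.

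That this is where the real work lies, and where the degree bound must be used, is shown by a star whose leaves carry cubic graphs of unbounded twin-width. This does not contradict the theorem, since the degree is unbounded there. Every leaf order is a DFS preorder, and the ordered adjacency matrices contain arbitrarily large grids. Yet every tree-ordered minor merely merges one set of leaves into the root, so (3) holds while (2) fails. Hence $D$ has to enter the extraction of disjoint contractible branch sets, not only the definability of $<$. The only use of $D$ you propose at that point is the false bounded-pieces claim. The paper indicates that this direction in \cite{buffiere2026characterizations} uses, besides the ordered-grid bound, chain--antichain bounds and a coloring bound for a conflict graph. That is exactly the kind of selection argument Step~3 is missing.

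A smaller issue: the mixed minor supplied by unbounded twin-width lives in the matrix of the whole structure $(E,\preceq)$ with respect to $<$. A cell can be mixed only through $\preceq$ and contain no edge, and weak sparsity does not help with this. One fix is that the ancestor relation is $3$-mixed-free in a DFS preorder, so only $O(t)$ cells of a $t$-division are affected and can be avoided.
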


The following is the constructive part of the  implication
$(3)\Rightarrow(2)$ in the proof of
\zcref{thm:bounded-degree-covergraph}.
In that proof, 
the ordered-grid bound, the finite chain-antichain bounds, and the coloring
bound for the conflict graph used there are effective. Tracing these bounds
yields a computable function of~$D$ and~$r$ only.

\begin{lemma}\label{lem:effective-bounded-degree-covergraph-original}
There exists a computable function $\beta$ such that the following holds.
Let $(G,\preceq)$ be a tree-ordered graph whose cover graph has maximum
degree at most~$D$.  If the $E$-reduct of every tree-ordered minor of
$(G,\preceq)$ excludes $K_{r,r}$ as a subgraph, then
\[
   \operatorname{tww}(G,\preceq)\leq\beta(D,r).
\]
\end{lemma}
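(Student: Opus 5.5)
The plan is to follow the proof of $(3)\Rightarrow(2)$ in \zcref{thm:bounded-degree-covergraph} and check that every bound it uses is effective.

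\textbf{Step 1: pass to a linearly ordered structure.} Fix a DFS preorder $<$ of the cover graph of $\preceq$, with children visited in an arbitrary but fixed order. Since $(G,\preceq)$ is a reduct of $(G,\preceq,<)$, it suffices to bound $\operatorname{tww}(G,\preceq,<)$. For linearly ordered binary structures, the grid theorem of Bonnet et al.\ bounds the twin-width by an explicit computable function of the largest $k$ such that some adjacency matrix has a $k$-mixed minor. Here the matrices are those of $E$ and of $\preceq$, with rows and columns ordered by $<$, and the bound comes from the Marcus--Tardos-type argument. The $\preceq$-matrix has no large mixed minor, because in a DFS preorder every principal subtree is an interval. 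So the task reduces to the following claim: if the $E$-matrix has a $k$-mixed minor with $k\ge k_0(D,r)$ for a computable $k_0$, then some tree-ordered minor of $(G,\preceq)$ contains $K_{r,r}$.

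\textbf{Step 2: from a mixed minor to a biclique.} Take a $k$-division into row intervals $I_1<\dots<I_k$ and column intervals $J_1<\dots<J_k$ in which every cell is mixed. In a DFS preorder, an interval $I$ is the union of a path segment of ancestors and of full subtrees hanging off it. Let $a(I)$ denote the lca of $I$.

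First I would apply finite Ramsey and chain--antichain bounds (Dilworth or Erdős--Szekeres, with explicit numbers) to pass to $m=m(D,r)$ row intervals and $m$ column intervals. After this, the lcas of the row intervals form a chain or an antichain in $\preceq$, and the same holds for the column intervals. The bounded degree $D$ controls branching: in the chain case, along a long chain only $D$ children are available at each vertex. Hence we can pass to intervals that lie in pairwise disjoint subtrees once a bounded initial segment is removed.

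Next, in each chosen interval take a connected piece: the $\preceq$-closure of the interval under lcas inside its own subtree. This piece is contractible by elementary $\prec$-contractions into one vertex while keeping the root. The mixed cells give an $E$-edge between every row piece and every column piece.

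The pieces may overlap, for instance through shared ancestor paths. To handle this, I would build the conflict graph on the pieces and bound its degree (or its degeneracy) by a function of $D$. A proper colouring of this conflict graph then leaves $r$ row pieces and $r$ column pieces that are pairwise disjoint. Contracting each of them, taking an induced substructure that contains the root and the paths to it, and deleting the remaining edges yields a tree-ordered minor whose $E$-reduct contains $K_{r,r}$.

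\textbf{Step 3: effectivity.} The mixed-minor-to-twin-width function, the Ramsey and chain--antichain numbers, and the number of colours all depend only on $D$ and $r$, and each is computable. Composing them gives $\beta(D,r)$.

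The main obstacle is Step 2. An interval of the DFS order may hang off an arbitrarily long ancestor path, so its pieces are not automatically disjoint from the pieces of other intervals. Obtaining disjoint connected pieces is exactly where both the degree bound and the lca-based Ramsey cleanup are needed. What I would try first is to replace each interval by the subtree rooted at its top-level piece minus the shared ancestors, and to bound conflicts via the $D$-ary branching.
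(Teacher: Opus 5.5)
Your plan is the paper's proof. There the lemma is obtained simply by observing that the ordered-grid bound, the finite chain--antichain bounds, and the colouring bound for the conflict graph in the proof of $(3)\Rightarrow(2)$ of \zcref{thm:bounded-degree-covergraph} are all computable, which is exactly the tracing you do in Step~3. If you flesh out the details, note two points: in Step~2 the conflict graph need not have bounded degree, but if you take as pieces the convex hulls of the intervals (each interval together with the ancestor path from its first vertex up to its lca), these are subtrees of the cover tree with every vertex in $O(D)$ of them, so the conflict graph is chordal with bounded clique number and hence boundedly colourable; and in Step~1 the reduction to the $E$-matrix should go through the rich-division form of the ordered grid theorem, since separate mixed-freeness of several relation matrices does not in general give mixed-freeness of their overlay.
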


\subsection{Tree decompositions}
\label{sec:tree-dec}
\subsubsection{Definitions and terminology}
A \emph{rooted tree decomposition} of a graph $G$ is a pair $\mathcal T=(T,\bag)$,
where $T$ is a rooted tree and
$\bag\colon V(T)\to 2^{V(G)}$
assigns to each node $x\in V(T)$ a set $\bag(x)\subseteq V(G)$, called the \emph{bag} of $x$, such that:
\begin{itemize}
    \item
    $
    V(G)=\bigcup_{x\in V(T)} \bag(x)$;
    
    \item for every edge $uv\in E(G)$, there exists $x\in V(T)$ with $\{u,v\}\subseteq \bag(x)$; and 
    \item for every vertex $v\in V(G)$, the set    
    $\{x\in V(T)\mid v\in \bag(x)\}$
    induces a non-empty connected subtree of $T$.
\end{itemize}

Let $\mathcal T=(T,\bag)$ be a rooted tree decomposition of a graph $G$.
Let $r$ be the root of $T$.
We denote by $\parent(\cdot)$ the parent function induced on $T$ and write $\parent(r)=\bot$ and set $\bag(\bot)=\emptyset$.

\pagebreak

For every node $x\in V(T)$, we define
\begin{itemize}
    \item its \emph{adhesion} $\adh(x):=\bag(x)\cap \bag(\parent(x))$, 
    \item its \emph{margin} $\mrg(x):=\bag(x)\setminus \adh(x)$, 
    \item its \emph{component} $\comp(x)=\bigcup_{y\succeq_T x}\mrg(y)$, and
    \item its \emph{torso} $\torso(x)$ as the graph on the vertex set $\bag(x)$ obtained from the induced subgraph $G[\bag(x)]$ by adding, for every neighbor $y$ of $x$ in $T$, all edges between vertices of
$\bag(x)\cap \bag(y)$. 
In other words, for every neighbor $y$ of $x$ in the decomposition tree, the adhesion $\bag(x)\cap \bag(y)$
is turned into a clique.
\end{itemize}

The \emph{adhesion} of $\mathcal T$ is the maximum size of an adhesion. 

\begin{definition}
\label{def:neat}
    A rooted tree decomposition $\mathcal T=(T,\bag)$ of a graph $G$ is \emph{neat} if
\begin{enumerate}[nosep]
    \item the margin $\mrg(x)$ of every node $x$ is non-empty,
    \item the margin $\mrg(x)$ of every node $x$ induces a connected subgraph of $G$,
    \item every vertex of the adhesion $\adh(x)$ of every node $x$ has a neighbor in $\mrg(x)$, and
    \item for every non-root node $x$, the adhesion of $x$ intersects the margin of its parent (i.e., $\adh(x)\cap\mrg(\parent(x))\neq\emptyset$).
\end{enumerate}    
\end{definition}

The following fact is standard and based on axioms of tree decompositions and the assumption that margins in neat tree decompositions are non-empty. 

\begin{fact}
\label{lem:margins-partition}
Let $\mathcal T=(T,\bag)$ be a neat tree decomposition of a graph $G$.
Then, for all nodes $x,y\in V(T)$,
we have 
$\mrg(x)=\mrg(y)$ if and only if 
$x=y$, and
 the family $
\{\mrg(x)\}_{x\in V(T)}$
is a partition of $V(G)$.
\end{fact}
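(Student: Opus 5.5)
The statement is \zcref{lem:margins-partition}: margins of a neat tree decomposition are pairwise distinct and partition $V(G)$. The plan is to use the connectivity axiom of tree decompositions to give each vertex a unique \emph{highest} node, and to show that margins are exactly the fibres of this ``highest node'' map.

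First I would show that every vertex lies in at least one margin. Fix $v\in V(G)$. By the third axiom, $T_v:=\{x\in V(T)\mid v\in\bag(x)\}$ induces a non-empty connected subtree of the rooted tree $T$. A connected subtree of a rooted tree has a unique $\preceq_T$-minimal node $x_v$, namely the node of $T_v$ closest to the root. I claim $v\in\mrg(x_v)$. If $x_v$ is the root, then $\adh(x_v)=\bag(x_v)\cap\bag(\bot)=\emptyset$, so $v\in\mrg(x_v)$. Otherwise $\parent(x_v)\notin T_v$ by minimality, so $v\notin\adh(x_v)$, and again $v\in\mrg(x_v)$.

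Next I would show that $v$ lies in no other margin. Suppose $v\in\mrg(y)$ with $y\neq x_v$. Then $y\in T_v$, and since $x_v$ is the top node of $T_v$, $y$ is a proper descendant of $x_v$. The path from $y$ to $x_v$ in $T$ passes through $\parent(y)$. This path lies inside $T_v$, because the subtree $T_v$ is connected and contains both endpoints, and the path between them in a tree is unique. Hence $v\in\bag(\parent(y))$, so $v\in\adh(y)$, contradicting $v\in\mrg(y)$. Therefore each vertex belongs to exactly one margin.

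Finally, neatness condition~1 says every margin is non-empty, so the margins are non-empty, pairwise disjoint, and cover $V(G)$; that is, they form a partition. For the first assertion, if $x\neq y$ then $\mrg(x)\cap\mrg(y)=\emptyset$ by disjointness. Since both sets are non-empty, this forces $\mrg(x)\neq\mrg(y)$. The converse direction is trivial.

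I do not expect a serious obstacle. The only point that needs care is the path argument in the uniqueness step, which requires the unique-path property of trees together with connectivity of $T_v$. Non-emptiness of margins is needed only for injectivity of $x\mapsto\mrg(x)$: without it, several nodes could share the empty margin.
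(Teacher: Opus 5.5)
Your proof is correct and follows the paper's intended argument. The paper states this fact as standard without proof. Later, in Step~9, it uses exactly your identification of the placement node $b(v)$ as the minimum of the connected subtree $\{x\in V(T): v\in\bag(x)\}$, with $v\in\mrg(x)$ if and only if $x=b(v)$. As you note, non-emptiness of margins (neatness condition~1) is the only neatness property needed, and only for injectivity.
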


Let $\mathcal T=(T,\bag)$ be a neat rooted tree decomposition  of a graph $G$.
For every non-root node $x\in V(T)$, we will later carefully choose a vertex
\[
\alpha(x)\in \adh(x)\cap \mrg(\parent(x)).
\]

Since every vertex of $\adh(x)$ has a neighbor in $\mrg(x)$, we can and will further fix a vertex
\[
\beta(x)\in \mrg(x)
\]
adjacent to $\alpha(x)$, which we call the \emph{vertex representing the margin of $x$}. 
For the root $r$, we fix an arbitrary vertex $\beta(r)\in \mrg(r)$. 
\medskip

It follows from \zcref{lem:margins-partition} that the map $\beta\colon V(T)\to V(G)$ is injective and that for every $v\in V(G)$, there exists a unique node $b(v)\in V(T)$ such that
\[
v\in \mrg(b(v)).
\]
The map $b:V(G)\rightarrow V(T)$ is called the \emph{placement map} of $\mathcal T$.

\subsubsection{Tree decompositions of graphs excluding a fixed minor}

We now state the algorithmic form of the Robertson--Seymour graph minor structure theorem while avoiding the technical details as much as possible. 

We use the usual bounded-width notion of an \emph{almost-embedding}; see
\cite[Section~5]{DujmovicMorinWood}.
For $k\in\mathbb N$, a surface~$\Sigma$, a graph~$G$, and a set
$A\subseteq V(G)$ with $|A|\leq k$, we say that $G$ is
\emph{$k$-almost embeddable in~$\Sigma$ with apex set~$A$} if $G-A$
consists of a graph embedded in~$\Sigma$ together with at most~$k$
attached pieces, called \emph{vortices}, each of width at most~$k$.
We leave the standard details to the cited definition.

We call an almost-embedding \emph{explicit} if its standard finite witness, including the associated decompositions where required, is supplied. 
We call an almost-embedding \emph{no-apex} if its apex set is empty. 
Finally, a graph is  \emph{$k$-almost embeddable} if it is $k$-almost embeddable in some surface of Euler genus at most~$k$.

\begin{theorem}[Robertson--Seymour; Grohe--Kawarabayashi--Reed]
\label{thm:rs-structure}
There is a computable function~$f$ and an algorithm with the following property. 
Given a graph~$J$ and a connected graph~$G$, the algorithm either outputs a model of $J$ as a minor of~$G$, or computes a supergraph $G'\supseteq G$ on the same vertex set as~$G$ and a tree decomposition $\mathcal T=(T,\bag)$ of~$G'$ of adhesion at most $k:=f(|J|)$ such that:
\begin{enumerate}[nosep]
    \item for every edge $xy\in E(T)$, the adhesion
    $\bag(x)\cap\bag(y)$ induces a clique in~$G'$, in particular, for every node $x\in V(T)$, the torso of~$x$ is exactly $G'[\bag(x)]$;
    \item for every node $x\in V(T)$, the algorithm computes a set $A_x\subseteq\bag(x)$ with $|A_x|\leq k$ and a graph~$P_x$ containing $G'[\bag(x)]-A_x$, together with an explicit no-apex $k$-almost-embedding of~$P_x$ in a surface of Euler genus at most~$k$.
\end{enumerate}
Moreover, the decomposition can be required to be neat. The running time is $f(|J|)\cdot |G|^2$.
\end{theorem}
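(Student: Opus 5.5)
The plan is to take the algorithmic graph minor structure theorem of Grohe, Kawarabayashi and Reed as a black box and post-process its output into the required form. On input $(J,G)$ that algorithm runs in time $f(|J|)\cdot|G|^2$. It either returns a model of $J$ as a minor of $G$, or returns a rooted tree decomposition $(T,\bag)$ of $G$ of adhesion at most $k$. In the second case, for every node $x$ it also returns an apex set $A_x$ with $|A_x|\le k$ and an explicit $k$-almost-embedding of $\torso(x)-A_x$ in a surface of Euler genus at most $k$.

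\textbf{Clique adhesions.} Let $G'$ be the union of all torsos, that is, $G$ together with a clique on every adhesion $\bag(x)\cap\bag(y)$ for $xy\in E(T)$. Then:
\begin{itemize}
\item $V(G')=V(G)$ and $(T,\bag)$ is still a tree decomposition of $G'$, since every added edge lies inside a bag;
\item every adhesion induces a clique in $G'$;
\item every torso of the decomposition of $G'$ equals $G'[\bag(x)]$, because no edge of $G'$ between two vertices of $\bag(x)$ can appear that is not already in the torso of $x$.
\end{itemize}
Setting $P_x:=\torso(x)-A_x$, or the supergraph carried by the explicit witness, gives item~2 directly. The apex vertices are removed from $P_x$ as in the cited statement, so the embedding witness is no-apex.

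\textbf{Neatness.} I would normalize the decomposition top-down using three operations.
\begin{itemize}
\item \emph{Restriction.} For a node $x$ and a connected component $C$ of $G'[\comp(x)]$, replace the subtree at $x$ by a copy in which every bag $\bag(y)$ becomes $\bag(y)\cap(C\cup N_{G'}(C))$. This is again a tree decomposition of the corresponding part of $G'$. Its adhesions are subsets of old adhesions, hence still cliques of size at most $k$. Each new bag induces a subgraph of an old torso. An induced subgraph of a $k$-almost embeddable graph stays $k$-almost embeddable: restrict the embedding, the vortex paths or decompositions, and the apex set, and the explicit witness restricts along. Taking $C$ to be a component makes $\mrg(x)$, once nonempty, connected. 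Putting $N(C)$ into the adhesion guarantees property~3, since every adhesion vertex has a neighbour in $C$.
\item \emph{Contraction.} Merge every node whose margin is empty into its parent, which gives property~1.
\item \emph{Reattachment.} If $\adh(x)$ misses $\mrg(\parent(x))$, reattach $x$ as a child of the deepest ancestor $y$ whose margin meets $\adh(x)$. By the connected-subtree axiom, $\adh(x)\subseteq\bag(y)$, so the adhesion, and hence the clique property, is unchanged. This gives property~4.
\end{itemize}

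\textbf{Main obstacle.} I expect the hard part to be making these operations interact correctly and efficiently.
\begin{itemize}
\item Restriction can create new empty margins. Contraction must not break connectivity of margins or the property that adhesions are cliques.
\item I would process nodes in BFS order from the root, at each node applying restriction, then contraction, then reattachment. I would argue that properties already established for ancestors are never destroyed, because later operations only shrink bags below the current node or move subtrees upward within the bag-containment structure.
\item I would argue termination via the potential $\sum_x|\bag(x)|$ together with the number of nodes.
\item With bounded adhesion, each step is linear per node. After the usual reduction of the decomposition to $O(|G|)$ nodes, the total cost stays within $f(|J|)\cdot|G|^2$.
\end{itemize}
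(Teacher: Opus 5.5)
Your overall route matches the paper's. You use Grohe--Kawarabayashi--Reed as a black box and take $G'$ to be the union of the torsos; your check that the torsos of $G'$ are exactly the graphs $G'[\bag(x)]$ is correct. You then regularize, and every new bag inherits the witness of an old bag containing it. The paper simply cites the standard regularization with clique adhesions (Boja\'nczyk--Pilipczuk, Lemma~2.8). You carry the regularization out yourself, and the verification has a hole at its central step. Restricting to a component $C$ of $G'[\comp(x)]$ only gives component-level facts: $\comp(x)$ is connected, and every vertex of $\adh(x)=N(C)$ has a neighbour in $\comp(x)$. Neatness items~2 and~3 concern $\mrg(x)$, and neither follows from ``taking $C$ to be a component'' alone. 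Two examples show this.
\begin{itemize}
\item Take the path $u$--$w$--$v$ with root bag $\{u,v\}$ and child bag $\{u,v,w\}$. The root's component is connected, but its margin $\{u,v\}$ is not.
\item Take $\adh(x)=\{a\}$, $\mrg(x)=\{m\}$, a child bag $\{a,m,w\}$, and edges $aw,wm$. Then $a$ has a neighbour in $\comp(x)=\{m,w\}$ but none in $\mrg(x)$.
\end{itemize}

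The missing ingredient is the clique-adhesion shortcut, which your argument never invokes.
\begin{itemize}
\item \emph{Item~2.} Take a $u$--$v$ path in $G'[\comp(x)]$ with $u,v\in\mrg(x)$. Each maximal detour outside $\mrg(x)$ stays inside $\comp(c)$ for a single child $c$. The two vertices of $\mrg(x)$ flanking the detour both lie in $\adh(c)$, which is a clique in $G'$, so the detour can be replaced by one edge.
\item \emph{Item~3.} If $a\in\adh(x)$ is adjacent to $w\in\comp(c)$, then $a\in\adh(c)$. Moreover, $\adh(c)$ meets $\mrg(x)$: otherwise $\adh(c)\subseteq\adh(x)$ would cut the nonempty set $\comp(c)$ off from $\mrg(x)\neq\emptyset$ inside the connected graph $G'[\comp(x)]$. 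The clique on $\adh(c)$ then gives $a$ a neighbour in $\mrg(x)$.
\end{itemize}
The same separation argument shows that item~4 holds automatically once components are connected and margins are nonempty. So your reattachment step is never triggered, and the interaction of operations is not where the difficulty lies.

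Two smaller remarks. First, you do not need to restrict almost-embeddings, which takes some care for vortex decompositions under vertex deletion. The statement only asks for a $P_x$ \emph{containing} $G'[\bag(x)]-A_x$, and every final bag is a subset of an original one, so the original witness serves unchanged; this is exactly how the paper handles it. Second, $\sum_x|\bag(x)|$ is not a valid potential, because restriction duplicates subtrees and adhesion vertices. A single top-down pass needs no potential at all.
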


This follows from the structure theorem
\cite[Theorem~3.2]{GroheKawarabayashiReed2013} and its algorithmic version
\cite[Theorem~6.1 and Section~7]{GroheKawarabayashiReed2013}.
The algorithm also supplies explicit almost-embeddings of the graphs~$P_x$; see~\cite[Proviso~3.1]{GroheKawarabayashiReed2013}.

For nonplanar excluded graphs, the underlying existence result goes back
to Robertson and Seymour~\cite{robertson2003graph}.
The additional neatness condition follows from the standard regularization with clique adhesions; see \cite[Lemma~2.8]{bojanczyk2016definability}. Each new bag is contained in an old one, so we restrict the corresponding apex set and retain the old no-apex almost-embedding as the containing graph~$P_x$.

Let $k\in\mathbb N$. 
An almost-embedding with apex set~$A$ is \emph{strong} if every neighbor outside~$A$ of an apex vertex belongs to one
of the attached pieces of the almost-embedding. Equivalently, apex vertices are adjacent only to other apex vertices and to vertices belonging to the
attached pieces. A graph is \emph{strongly $k$-almost embeddable} in a surface~$\Sigma$ if it admits a strong $k$-almost-embedding in~$\Sigma$. It is \emph{strongly $k$-almost embeddable} if this holds for some surface of Euler genus at most~$k$.

Let $G_1$ and $G_2$ be graphs and let $K_1\subseteq G_1$ and $K_2\subseteq G_2$ be cliques of the same size.
A \emph{clique-sum} of $G_1$ and $G_2$ is any graph obtained by identifying pairs of vertices in $K_1$ and $K_2$ and optionally deleting some edges of the identified clique. 
A particular case is when $K_1$ and $K_2$ are single vertices~$v_1$ and $v_2$ of $G_1$ and $G_2$, in which case the clique-sum consists of gluing $G_1$ and $G_2$ by identifying $v_1$ and $v_2$. 
Such a clique-sum is called a \emph{$1$-clique-sum}.

More generally, a graph is obtained by \emph{clique-sums} from a family of graphs if it can be constructed by repeatedly applying the clique-sum operation.

A graph $G$ is an \emph{apex graph} if there exists a vertex $v\in V(G)$ such that $G-v$ is planar.
The following theorem of Dujmović, Morin, and Wood characterizes proper minor-closed classes of graphs excluding an apex graph.

\begin{theorem}[Dujmović-Morin-Wood \cite{DujmovicMorinWood}]
\label{thm:dmw}
Let $\Cc$ be a proper minor-closed class of graphs.
Then the following are equivalent:
\begin{enumerate}[nosep]
    \item $\Cc$ excludes an apex graph as a minor;
    \item there exists $k\in \mathbb N$ such that every graph in $\Cc$ can be constructed by clique-sums of strongly $k$-almost embeddable graphs;
\end{enumerate}
\end{theorem}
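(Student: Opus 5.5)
Since \zcref{thm:dmw} is cited from~\cite{DujmovicMorinWood}, here is how I would re-derive it; I prove the two implications separately.

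\emph{Direction (1)$\Rightarrow$(2).} Suppose $\Cc$ excludes an apex graph $X$. Then $\Cc$ also excludes $K_t$ for $t=|X|$, since $X\leq_m K_{|X|}$. Apply \zcref{thm:rs-structure} with $J=X$. Every graph of $\Cc$ is then a subgraph of a clique-sum of torsos $G'[\bag(x)]$. Each torso is, up to an apex set $A_x$ with $|A_x|\leq k$, a $k$-almost embeddable graph $P_x$.

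The task is to make each torso \emph{strong}, that is, to make every apex attach only to vortices, while keeping the bounds uniform. I would argue by contradiction, using that the torso is a minor of the graph, since the adhesions are cliques that are realised through the components hanging below them.

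\begin{itemize}
\item Suppose an apex $a\in A_x$ has many neighbours in the surface part of $P_x$ that lie far apart in the embedding, and outside the vortices.
\item Assuming the embedding has large face-width or representativity (a standard normalisation of the structure theorem), these neighbours can be routed to a large grid minor.
\item Placing $a$ on top of this grid gives every apex graph of bounded size as a minor, contradicting the exclusion of $X$.
\end{itemize}

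Hence the neighbours of $a$ in the surface part lie in a bounded number of bounded-radius discs. I would then do surgery on each such disc: either cut it out and attach it as a new vortex of bounded width, or split it off by a bounded-size clique-sum. This increases $k$ only by a computable amount. The resulting almost-embedding is strong, which gives (2).

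\emph{Direction (2)$\Rightarrow$(1).} Take $X$ to be the apex graph obtained from an $n\times n$ grid by adding a vertex $u$ adjacent to all grid vertices, with $n\gg k$. The argument has three steps.

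\begin{enumerate}[label=(\roman*),nosep]
\item $X$ has a large well-linked set. Since all adhesions of the clique-sum have size at most $k+1$, a minor model of $X$ can be pushed, up to a bounded loss, into a single summand. Replacing clique-sum separators by cliques shows that $X'$ (a slightly smaller grid plus apex) is a minor of one strongly $k$-almost embeddable graph $G_0$ with apex set $A$.
\item Delete the at most $k$ branch sets meeting $A$. This costs only a bounded part of the grid if $u$'s branch set avoids $A$. If it does not avoid $A$, note that strongness means $A$ reaches the rest of $G_0$ only through the vortices, so $u$'s branch set effectively touches the grid only inside the vortices.
\item Either way, we obtain a vertex adjacent to $\Omega(n^2)$ grid branch sets through a graph embedded in a surface of Euler genus at most $k$ plus at most $k$ vortices of width at most $k$. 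Each vortex meets the grid model only along a bounded-width path decomposition around one face. Euler's formula bounds the genus of a grid-plus-apex by a contradiction once $n$ is large: a bounded number of faces and vortices cannot be adjacent to a linear fraction of a large grid while the grid stays embedded.
\end{enumerate}

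\emph{Expected obstacle.} I expect the main difficulty to be step (iii) of the second direction: making rigorous that bounded-width vortices cannot simulate the universal apex. I would try first to show that a vortex of width $k$ attached to a face $f$ can be replaced, for the purpose of minor containment, by $O(k)$ extra vertices placed in $f$ together with bounded genus. Once this is done, a genus-and-apex count finishes the argument.

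In the first direction, the delicate point is the uniform surgery turning apex attachments into vortices. There I would rely on the grid-minor and representativity arguments already standard in Robertson--Seymour theory.
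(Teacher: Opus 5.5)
The paper does not prove this theorem. It imports it from \cite{DujmovicMorinWood}, where (1)$\Rightarrow$(2) is the strengthened structure theorem of Dvo\v{r}\'ak and Thomas \cite{dvorak2014list} and (2)$\Rightarrow$(1) goes through layered treewidth. Measured against that, both halves of your sketch have a genuine gap.

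In (1)$\Rightarrow$(2), you derive a dichotomy from the exclusion of~$X$: either $a$ together with the surface part of $P_x$ contains~$X$, or the surface neighbours of~$a$ lie in boundedly many bounded-radius discs. This dichotomy is false.
\begin{itemize}
\item In the graph metric it fails when all neighbours of $a$ lie on one long facial walk. Then $a$ can be drawn inside that face, so there is no obstruction at all.
\item In the radial (vertex--face) metric it still fails, because large representativity says nothing about how the neighbours of~$a$ are linked to the wall of the tangle. Suppose the surface part contains a chain of large planar grids (``beads'') hanging off the wall at a cut vertex~$w$, and $a$ is adjacent to the centre of each bead. Then $P_x+a$ is a $2$-clique-sum along $\{w,a\}$ of graphs of Euler genus at most~$2$. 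Hence it excludes a fixed grid-plus-dominating-vertex graph as a minor, yet the neighbours of~$a$ are unboundedly many and pairwise radially far apart.
\end{itemize}
Handling such configurations requires re-decomposing relative to the tangle and splitting these parts off. That is exactly what the Dvo\v{r}\'ak--Thomas theorem provides; it is not a routine normalisation of the Robertson--Seymour theorem. Your surgery step is also unsupported. Turning a disc into a vortex needs a bounded-width path decomposition aligned with the cyclic order of its boundary. A disc with many boundary vertices cannot be split off by a clique-sum of bounded order.

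In (2)$\Rightarrow$(1), the reduction to one summand in step~(i) is standard. Step~(ii), however, misreads strongness. Strongness restricts only the edges at apex vertices, whereas $B_u$ may contain many non-apex vertices and then meets grid branch sets through ordinary surface edges, not only inside vortices. More seriously, step~(iii), which you rightly flag as the crux, is left open, and the lemma you propose for it cannot hold:
\begin{itemize}
\item Take the width-$4$ vortex made of two paths $p_1\cdots p_m$ and $q_1\cdots q_m$, with $p_j$ and $q_j$ adjacent to the $j$th boundary vertex. It contracts to two vertices complete to the face boundary.
\item If the surface part is a wheel whose rim bounds the vortex face, the graph therefore contains $K_{3,m}$, which has Euler genus $\Omega(m)$.
\item Hence no gadget of $O(k)$ vertices placed in the face, at bounded genus cost, preserves minors.
\end{itemize}
The cited proof never analyses minor models, because it uses the hypothesis that $\Cc$ is minor-closed, which your argument never uses. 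It then suffices to show that the $n\times n$ grid plus a dominating vertex is not itself in~$\Cc$. The argument has three ingredients:
\begin{itemize}
\item Strongly $k$-almost embeddable graphs have layered treewidth bounded in~$k$. One layers by distance from the vortex faces; since apices see only vortex vertices and each other, all of them fit into the first layers. This is exactly where strongness is used.
\item Clique-sums keep layered treewidth bounded.
\item A graph of radius~$1$ and layered treewidth~$\ell$ lies in three consecutive layers, so its treewidth is less than~$3\ell$. This apex graph has treewidth at least~$n$, which gives the contradiction.
\end{itemize}
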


We use the following no-apex consequence of this characterization; see~\cite[Theorem~2.2]{BonnetHickingbotham2025}.

\begin{corollary}[\cite[Theorem~2.2]{BonnetHickingbotham2025}]
\label{cor:bloc}
Let $\Cc$ be a class of graphs and let $k\in\mathbb N$.
If every maximal $2$-connected subgraph with at least three vertices of every
graph in~$\Cc$ is a subgraph of a no-apex $k$-almost embeddable graph, then
$\Cc$ excludes an apex graph as a minor.
\end{corollary}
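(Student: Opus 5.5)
The plan is to reduce to a single block and then use the $(2)\Rightarrow(1)$ direction of \zcref{thm:dmw}. The reduction rests on a standard observation: if $H$ is $2$-connected and $H\leq_m G$, then $H$ is already a minor of a single block of~$G$. To see this, fix a minor model $(B_v)_{v\in V(H)}$ of $H$ in~$G$. Contract each branch set and use that $H$ has no cut vertex. Then all edges of~$G$ realising edges of~$H$, together with the connecting paths inside the branch sets, can be routed within one $2$-connected piece of~$G$. If they could not, some cut vertex of~$G$ would give a cut vertex, or a separation of order at most one, in~$H$. Restricting every branch set to that block keeps it connected, since a path between two vertices of a block can be rerouted inside the block. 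This yields a minor model of $H$ in that block.

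Next I would choose the excluded apex graph to be $2$-connected, in fact $3$-connected: take $A_n$, the $n\times n$ grid together with one vertex adjacent to all grid vertices. Every apex graph is a minor of some $A_n$, so if the class excludes one apex graph it excludes some $A_n$, and conversely excluding $A_n$ suffices. Suppose $A_n\leq_m G$ for some $G\in\Cc$. By the observation, $A_n\leq_m B$ for a block $B$ of~$G$. Such a block has at least three vertices, because $A_n$ does. By hypothesis $B\subseteq P$ for a no-apex $k$-almost embeddable graph~$P$, so $A_n\leq_m P$.

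It therefore remains to show that, for $n$ large in terms of~$k$, the graph $A_n$ is not a minor of any no-apex $k$-almost embeddable graph. A no-apex $k$-almost embeddable graph is in particular strongly $k$-almost embeddable, since with empty apex set the strongness condition is vacuous. So $P$ lies in the class $\mathcal S_k$ of graphs obtained by clique-sums of strongly $k$-almost embeddable graphs. I would invoke \zcref{thm:dmw} in the form established by Dujmović, Morin and Wood for the direction $(2)\Rightarrow(1)$. Their argument shows that the minor closure of $\mathcal S_k$ is a proper minor-closed class excluding some apex graph, and hence some $A_{n(k)}$. Taking $n=n(k)$ contradicts $A_n\leq_m P$, so $\Cc$ excludes $A_{n(k)}$ as a minor.

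The main obstacle is this last step. \zcref{thm:dmw} is stated for a proper minor-closed class, whereas $\mathcal S_k$ itself need not be minor-closed. So I must either check that the DMW proof of $(2)\Rightarrow(1)$ bounds apex minors of every member of~$\mathcal S_k$ directly, or give a self-contained argument. For the latter, I would first try bounded local tree-width. No-apex almost-embeddable graphs have tree-width bounded in terms of radius, which is a Grohe-type bound. A model of the diameter-$2$ graph $A_n$ can be shortened: each branch set is restricted to a tree of bounded depth around a representative, centred at the apex's branch set. Done carefully, this places a large-grid minor inside a bounded-radius ball, contradicting bounded local tree-width. Getting this shortening right is the delicate point, so as a fallback I would simply cite the corresponding statement of~\cite[Theorem~2.2]{BonnetHickingbotham2025}.
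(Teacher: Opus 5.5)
\paragraph{Where the proposal stands.} Your reduction to a single block, the choice of the $2$-connected apex graph $A_n$, and the remark that a no-apex almost-embedding is strong are all correct. Everything then rests on one claim: for some $n=n(k)$, the graph $A_n$ is not a minor of any no-apex $k$-almost embeddable graph. You do not prove this, so there is a genuine gap, and neither of your routes closes it.

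\paragraph{Your first route (the Dujmovi\'c--Morin--Wood theorem).} As I understand their argument, the direction $(2)\Rightarrow(1)$ goes through bounded layered (hence local) tree-width of clique-sums of strongly $k$-almost embeddable graphs. Only at the end does it use that the class is minor-closed: a minor-closed class excluding no apex graph would contain every $A_n$, which has diameter~$2$ and tree-width~$n$. Bounded layered tree-width is not inherited by minors; for instance, $1$-planar graphs have bounded layered tree-width but contain every graph as a minor. So this yields nothing for the members of $\mathcal S_k$ or for its minor closure. To apply the theorem you would first need a closure lemma: every minor of a no-apex $k$-almost embeddable graph is a subgraph of a clique-sum of strongly $k'$-almost embeddable graphs, with $k'$ depending only on~$k$. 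This is the missing idea, and it is not routine. Contracting an edge of the embedded part that joins two distant boundary vertices of one vortex pinches the vortex disc. Iterating such contractions does not obviously keep the number and width of the vortices bounded.

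\paragraph{Your second route (local tree-width and shortening).} This fails for the same underlying reason. Subcubic graphs have bounded local tree-width, since a ball of radius~$r$ has fewer than $3\cdot 2^r$ vertices. Yet every $A_n$ is a minor of some subcubic graph, and any model of $A_n$ uses more than $n^2$ vertices, so it never fits inside a ball of bounded radius. Hence no shortening argument can follow from bounded local tree-width alone. It would have to exploit the surface and vortex structure directly, and that is precisely the content you leave open.

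\paragraph{Your fallback.} Citing \cite[Theorem~2.2]{BonnetHickingbotham2025} is in fact all the paper does here. It gives no argument and presents the corollary as a consequence of the Dujmovi\'c--Morin--Wood characterization via that reference. Relative to the paper, your proposal adds a correct reduction to a single block, but no proof of the key exclusion step.
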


The following result for classes excluding an apex graph was proved by
Milolidakis~\cite{MilolidakisThesis}. Its polynomial-time construction
also uses the strengthened structure theorem of Dvo\v{r}\'ak and
Thomas~\cite{dvorak2014list}.

\begin{theorem}[{\cite[Theorem 18]{MilolidakisThesis}}
    and \cite{dvorak2014list}]
\label{thm:milolidakis}
Let $\Cc$ be a class of graphs excluding an apex graph as a minor.
Then there exists a class $\Cc'$ of subcubic graphs excluding an apex
graph as a minor such that every graph $G\in\Cc$ is a minor of some
graph $H\in\Cc'$.

Furthermore, given $G\in\Cc$, such a graph $H\in\Cc'$ and a corresponding surjective minor model can be computed in polynomial time\footnote{The polynomial-time claim
uses the algorithmic structure theorem in~\cite{dvorak2014list}.
For the uniform form used below, suppose that $G$ is connected and,
for each block with at least three vertices, an explicit no-apex
$k$-almost-embedding of a containing graph is supplied.
Then $H$ can be chosen connected with $h(H)\leq q(k)$, and $H$ and
the minor-model map can be computed in time $p(k)N^{c_0}$,
where $p,q$ are computable functions, $c_0$ is an absolute constant,
and $N$ is the total encoding size of $G$ and the supplied witnesses.}.
\end{theorem}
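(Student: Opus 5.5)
The plan is to follow the structure-theoretic route behind \cite[Theorem~18]{MilolidakisThesis}. Start from the decomposition of \zcref{thm:dmw}, build a subcubic graph piece by piece through a local vertex-splitting operation that respects the almost-embedding, and then apply \zcref{thm:dmw} in the direction (2)$\Rightarrow$(1) to conclude that the new class excludes an apex graph.

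\textbf{Step 1: decompose and work at the level of torsos.} Let $\Cc$ exclude an apex graph and fix $k$ as in \zcref{thm:dmw}, so that every $G\in\Cc$ is a clique-sum of strongly $k$-almost embeddable graphs. For computation, use the algorithmic structure theorem of \cite{dvorak2014list}. It returns, in polynomial time, a tree decomposition of adhesion at most~$k$ together with explicit strong almost-embeddings of the torsos. From this point on we work with torsos rather than with $G$ itself.

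\textbf{Step 2: split every vertex inside a single torso.} Inside one strongly $k$-almost embeddable torso, replace each vertex $v$ by a ``vertex gadget'', namely a path or cycle on $\deg(v)$ vertices, each gadget vertex carrying exactly one edge formerly incident to~$v$. The choice of gadget depends on where $v$ sits.
\begin{itemize}
\item For a vertex embedded in the surface and not lying in a vortex, order the gadget along the rotation system of~$v$. The result stays embedded in the same surface, and contracting each gadget recovers the torso.
\item For vertices in a vortex of width at most~$k$, order each gadget along the vortex's path decomposition, so that the new vortex has width bounded by a function of~$k$.
\item Apex vertices are adjacent only to apices and to vortex vertices, by strongness. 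Split each apex into a path running along the vortex decompositions, and absorb it into the vortices. This raises the vortex width by at most~$k$ and leaves no apex at all.
\end{itemize}
Each gadget vertex now has degree at most~$3$: two gadget edges plus one original edge.

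\textbf{Step 3: reassemble along the clique-sums.} A vertex $v$ of an adhesion occurs in several torsos. Its gadgets in the different torsos are joined by single edges following the decomposition tree, so that $v$'s branch set in the final graph~$H$ is a tree, which is connected. Edges of $G$ are then routed inside the torso that contains them. Adhesion-clique edges of the torsos do not need to be realized at all, because only $G$ has to be a minor of~$H$.

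It remains to see that $H$ excludes a fixed apex graph. The plan is to exhibit $H$ as built by clique-sums of bounded order from pieces that are minors of strongly $f(k)$-almost embeddable graphs. To do this, for each adhesion attach its at most $k$ boundary gadget vertices to a vortex, or to a common face, in the piece; this is the point where the bounded adhesion is used. Then \zcref{thm:dmw} gives an excluded apex graph depending only on~$k$. Alternatively, one can check via \zcref{cor:bloc} that every nontrivial block of~$H$ lies inside a no-apex almost-embeddable graph.

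\textbf{Main obstacle.} I expect the hardest part to be Step~3. The difficulty is ensuring that the inter-torso connections and the adhesion sets remain compatible with the almost-embeddings: a clique-sum over up to $k$ surface vertices generally does not lie on a single face. The first thing I would try is to reroute every adhesion through a newly added vortex of width at most~$k$ in each piece. This costs only a computable increase in the number and width of vortices. All of these constructions are explicit and local, so $H$ and the surjective minor model (gadgets mapped to their original vertices) can be computed in polynomial time from the decomposition.
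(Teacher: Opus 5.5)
\paragraph{Context.} The paper does not prove this statement. It imports it from \cite[Theorem~18]{MilolidakisThesis}, with \cite{dvorak2014list} for the algorithm, and in Step~3 it uses only the blockwise form stated in the footnote. That form is much easier than what you attempt, because there are no apices and the pieces meet only at cut vertices. One splits each nontrivial block inside its no-apex witness, which is exactly your first two bullets of Step~2. One then joins the copies of every cut vertex by bridges along the block-cut tree and concludes with \zcref{cor:bloc}. This is why Step~2 of the paper first reduces to a graph whose inter-margin edges are bridges and whose blocks carry no-apex witnesses. For the general statement, however, your proposal has genuine gaps, and they lie where you suspected.

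\paragraph{Gaps in the general argument.} First, your gluing realizes an adhesion of size $s$ by $s$ disjoint connecting edges. For $s\geq 2$, the blocks of $H$ therefore contain arbitrarily many pieces, and you give no reason why such blocks should lie in no-apex almost-embeddable graphs with bounded parameters. The alternative route via \zcref{cor:bloc} is unsupported.

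Second, to use \zcref{thm:dmw} you must exhibit $H$ as a clique-sum of (strongly) almost-embeddable pieces. Concretely, every piece together with a clique on the ports of each of its adhesions must still be almost embeddable. Your repair, one new vortex per adhesion, does not keep the parameters bounded: a node of the decomposition can have unboundedly many children, hence unboundedly many adhesions. What is missing is a localization property of the adhesions relative to the almost-embeddings, strong enough that all ports of a piece can be absorbed into its existing, boundedly many vortices and faces. Strongness alone only tells you that an adhesion containing an apex consists of apices and vortex vertices.

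Third, even with such pieces, \zcref{thm:dmw} is an equivalence for proper minor-closed classes, and your family of graphs $H$ is not minor-closed. Clique-sums of strongly almost-embeddable graphs are also not evidently closed under minors: contracting an apex into a society vertex destroys strongness. You would need a separate, minor-robust argument. For instance, you could show two facts.
\begin{itemize}
\item If a graph with a tree decomposition of adhesion at most $k$ has a minor consisting of a large grid plus a universal vertex, then a single torso has such a minor, still large in terms of $k$. One obtains it by restricting the model to the torso towards which the grid's tangle points.
\item A single strongly almost-embeddable piece excludes such a minor.
\end{itemize}

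Finally, there is a smaller but real issue in Step~2. An apex may have neighbours in several vortices and among other apices. The parts of its gadget lying in different vortices must then be joined by edges that are neither embedded nor inside a single vortex. This needs, for example, a bounded number of extra handles attached through new society vertices, or some other device. As written, the claim that no apex is left and the width increases by at most~$k$ is not justified.
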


\section{Rearranging tree orders}
\label{sec:technical}

We present two useful general transductions allowing to rearrange tree orders.

\subsection{Rewiring a tree order}

We introduce a small transduction that will be used several times below to modify a tree order locally.  
The intuitive operation is the following. 
We mark some vertices as roots of subtrees that should no longer stay attached to their old parents.  
In each relevant part of the tree, we also mark one attachment vertex.  
The operation then cuts every marked root from its old parent and attaches it as a new child of the corresponding attachment vertex.

Let $\mathbf X=(X,\preceq)$ be a tree-ordered set.  
We use the unary marks $S$, $A$, $R_{\rm old}$ with the following intended meaning.
The vertices marked by $S$ are called \emph{stop vertices}.  
They determine the parts of the tree in which rewiring is performed. 
Formally, we require that $S$ is an antichain. 
For $s\in S$, we write $U_s := \{x\in X : s\preceq x\}$.
Note that the sets $U_s$ are pairwise-disjoint.  
We call them the \emph{universes} of the rewiring.

The vertices marked by $A$ are called \emph{attachment vertices}. 
We require that every universe $U_s$ contains exactly one vertex marked by $A$. 
We denote this vertex by $a_s$.

The vertices marked by $R_{\rm old}$ are the old roots of the pieces that
will be moved. We require that every $r\in R_{\rm old}$ belongs to a unique
universe $U_s$ and that $\neg(r\preceq a_s)$. In particular, $r\neq a_s$,
the vertex~$r$ is not a root of the original tree, and $a_s$ does not belong
to the piece rooted at~$r$. Thus, deleting the cover edge from the parent of
$r$ to~$r$ and adding the edge $a_sr$ is well defined and creates no cycle.

The rewired tree is obtained from the cover tree of $\preceq$ as follows. 
For each $r\in R_{\rm old}\cap U_s$, delete the edge between $r$ and its old parent and add the edge between $a_s$ and $r$. 
All other cover edges are kept.  
Note that the vertices marked by $R_{\rm old}$ need not form an antichain.  
If two marked vertices lie on the same root-to-leaf path, then both are cut from their old parents and both are attached separately below the same attachment vertex.
Thus, the lower marked vertex is no longer considered part of the piece moved with the higher marked vertex.

A tree-ordered set obtained in this way from $\mathbf X$ is called a \emph{rewiring} of $\mathbf X$.

\begin{figure}[ht]
    \centering
       \begin{tikzpicture}
    \tikzset{node/.style={draw, circle, fill = black, minimum size = 3pt, inner sep=0pt, line width=1pt}}
    \tikzset{A/.style={draw=black, circle, fill = white, minimum size = 4pt,inner sep=0pt,line width=0.8pt}}
    \tikzset{S/.style={draw=black, circle, fill = white!50!blue, minimum size = 4pt,inner sep=0pt,line width=0.8pt}}
    \tikzset{R/.style={draw,line width=0.8pt,fill = yellow!50!red!50!white, minimum size = 4pt, inner sep=0pt}}

    \begin{scope}
        \node[node] (r) at (0,0) {};
        \node[label={left:$s_1$},S] (s1) at ($(r)+(220:1)$) {};
        \node[R] (r2) at ($(s1)+(-60:1)$) {};
        \node[node] (r2sibling) at ($(s1)+(240:1)$) {};
        \node[label={left:$a_1$},A] (a1) at ($(r2sibling)+(240:1)$) {};
        \node[R] (r1) at ($(r2sibling)+(-60:1)$) {};
        
        \node[node] (s1sibling) at ($(r)+(-40:1)$) {};
        \node[label={right:$s_2$},S] (s2) at ($(s1sibling)+(-60:1)$) {};
        \node[label={left:$a_2$},A] (a2) at ($(s2)+(240:1)$) {};
        \node[R] (r3) at ($(s2)+(-60:1)$) {};
        \node[R] (r4) at ($(r3)+(-60:1)$) {};
        \node[node] (r4sibling) at ($(r3)+(240:1)$) {};

        \draw (r) -- (s1) (r)-- (s1sibling);
        \draw (s1) -- (r2) (s1) -- (r2sibling);
        \draw (r2sibling) -- (a1) (r2sibling) -- (r1);
        \draw (s1sibling) -- (s2);
        \draw (s2) -- (a2) (s2) -- (r3);
        \draw (r3) -- (r4) (r3) -- (r4sibling);

        \draw[fill=gray!50!white] (a1) -- ($(a1)+(260:0.6)$) -- ($(a1)+(-80:0.6)$) -- (a1);
        \draw[fill=orange!50!white] (r1) -- ($(r1)+(260:0.6)$) -- ($(r1)+(-80:0.6)$) -- (r1);
        \draw[fill=cyan!50!white] (r2) -- ($(r2)+(260:0.6)$) -- ($(r2)+(-80:0.6)$) -- (r2);
        \draw[fill=gray!50!white] (a2) -- ($(a2)+(260:0.6)$) -- ($(a2)+(-80:0.6)$) -- (a2);
        \draw[fill=gray!50!white] (s1sibling) -- ($(s1sibling)+(260:0.6)$) -- ($(s1sibling)+(-80:0.6)$) -- (s1sibling);
        \draw[fill=yellow!50!white] (r4) -- ($(r4)+(260:0.6)$) -- ($(r4)+(-80:0.6)$) -- (r4);
        \draw[fill=green!50!white] (r4sibling) -- ($(r4sibling)+(260:0.6)$) -- ($(r4sibling)+(-80:0.6)$) -- (r4sibling);
    \end{scope}

    \begin{scope}[xshift=6cm]
        \node[node] (r) at (0,0) {};
        \node[label={left:$s_1$},S] (s1) at ($(r)+(220:1)$) {};
        \node[node] (r2sibling) at ($(s1)+(240:1)$) {};
        \node[label={left:$a_1$},A] (a1) at ($(r2sibling)+(240:1)$) {};
        \node[R] (r1) at ($(a1)+(-60:0.5)$) {};
        \node[R] (r2) at ($(r1)+(0.5,0)$) {};

        \node[node] (s1sibling) at ($(r)+(-40:1)$) {};
        \node[label={right:$s_2$},S] (s2) at ($(s1sibling)+(-60:1)$) {};
        \node[label={left:$a_2$},A] (a2) at ($(s2)+(240:1)$) {};
        \node[R] (r3) at ($(a2)+(-45:0.6)$) {};
        \node[R] (r4) at ($(r3)+(0:0.57)$) {};
        \node[node] (r4sibling) at ($(r3)+(270:0.57)$) {};

        \draw (r) -- (s1) (r)-- (s1sibling);
        \draw (a1) -- (r2) (s1) -- (r2sibling);
        \draw (r2sibling) -- (a1) (a1) -- (r1);
        \draw (s1sibling) -- (s2);
        \draw (s2) -- (a2) (a2) -- (r3);
        \draw (a2) -- (r4) (r3) -- (r4sibling);

        \draw[fill=gray!50!white] (a1) -- ($(a1)+(260:0.6)$) -- ($(a1)+(-80:0.6)$) -- (a1);
        \draw[fill=orange!50!white] (r1) -- ($(r1)+(260:0.6)$) -- ($(r1)+(-80:0.6)$) -- (r1);
        \draw[fill=cyan!50!white] (r2) -- ($(r2)+(260:0.6)$) -- ($(r2)+(-80:0.6)$) -- (r2);
        \draw[fill=gray!50!white] (a2) -- ($(a2)+(260:0.6)$) -- ($(a2)+(-80:0.6)$) -- (a2);
        \draw[fill=gray!50!white] (s1sibling) -- ($(s1sibling)+(260:0.6)$) -- ($(s1sibling)+(-80:0.6)$) -- (s1sibling);
        \draw[fill=yellow!50!white] (r4) -- ($(r4)+(260:0.6)$) -- ($(r4)+(-80:0.6)$) -- (r4);
        \draw[fill=green!50!white] (r4sibling) -- ($(r4sibling)+(260:0.6)$) -- ($(r4sibling)+(-80:0.6)$) -- (r4sibling);
    \end{scope}

    \draw[draw=gray!50!white,line width=1pt,-{Stealth[round]}] (1,0.8) [bend left=15] to (5,0.8);
   \end{tikzpicture}
    \caption{Rewiring some subtrees. The vertices $s_1,s_2$ are the stop vertices and the vertices $a_1,a_2$ are the corresponding attachment vertices. The vertices $R_{\rm old}$ are the four square vertices.}
    \label{fig:rewiring}
\end{figure}

\begin{lemma}\label{lem:rewire}
    There is a transduction $\rew$ such that 
    for every tree-ordered set $\mathbf X$, $\rew(\mathbf X)$ contains all the rewirings of~$\mathbf X$.
\end{lemma}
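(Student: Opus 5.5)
The transduction will be non-copying. It guesses the three marks $S$, $A$, $R_{\rm old}$ as colors and then defines the new tree order $\preceq'$ on the same universe by a first-order formula over $(X,\preceq,S,A,R_{\rm old})$. The guiding idea is to describe ancestry in the rewired tree through ``lifted paths''.

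Fix $y$ and walk from $y$ toward the root in the rewired tree. The walk follows old cover edges until it reaches a vertex $r\in R_{\rm old}$. There it jumps to $a_s$, where $s$ is the unique stop vertex with $s\preceq r$. Since $S$ is an antichain, $s=\max\{u\preceq r\colon S(u)\}$ is definable, and $a_s$ is definable as the unique $A$-vertex $\succeq s$. The walk then continues from $a_s$.

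Hence the rewired ancestors of $y$ are obtained by alternating two kinds of moves: old-order climbs, which stop at the first $R_{\rm old}$-vertex met, and jumps to $a_s$. A general chain of jumps has unbounded length, so a naive first-order formula cannot follow it. I expect the main work to be showing that within one universe $U_s$ at most one jump is needed, or at least a bounded number.

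Take $y\in U_s$. Let $r_1=\max\{u\preceq y\colon R_{\rm old}(u)\}$, the first marked vertex on the old path upward from $y$, if it exists. After the jump to $a_s$, the old path from $a_s$ upward contains no $R_{\rm old}$-vertex $r$ with $s\preceq r\preceq a_s$, because the requirement $\neg(r\preceq a_s)$ forbids this. So from $a_s$ the walk climbs in the old order to $s$ and then leaves $U_s$. Above $s$ the old tree is unchanged: a marked vertex strictly above $s$ would lie outside every universe, which the conditions exclude. The only caveat is the case where $s$ itself is marked, which I handle separately. In that case $s\preceq s$ is not excluded, so the jump from $s$ goes to $a_s\succeq s$, and that would create a cycle. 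So $s\in R_{\rm old}$ is impossible, since $\neg(s\preceq a_s)$ fails.

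Therefore at most one jump occurs per universe. A vertex $y$ also lies in at most one universe, and vertices outside all universes keep their old ancestors. This yields the following definition of $x\preceq' y$:
\begin{itemize}
\item if $y$ lies in no universe or has no marked ancestor inside its universe, then $x\preceq' y$ iff $x\preceq y$;
\item otherwise, with $r_1$ as above, $x\preceq' y$ iff either $r_1\preceq x\preceq y$, or $x\preceq a_s$.
\end{itemize}
In the second clause, the condition $r_1\preceq x\preceq y$ describes the climb from $y$ up to $r_1$. Any marked vertices strictly below $r_1$ on that path are impossible by the maximality of $r_1$. The condition $x\preceq a_s$ describes the ancestors reached after the jump.

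Both clauses are first-order, using the $\max$ notation, which the paper already declares definable. It remains to check that this formula coincides with the order induced by the rewired cover tree. The check is by induction along the root-to-$y$ path, using the observation above that no marked vertex lies on the old path from $a_s$ to $s$.
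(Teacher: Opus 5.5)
Your proposal is correct and follows essentially the paper's route. You guess $S$, $A$, $R_{\rm old}$ and define the new order by a first-order formula. Your nearest marked ancestor $r_1$ of $y$ plays the role of the paper's pieces $X_u$, the vertices moved with $u$, and your single clause ``$r_1\preceq x\preceq y$ or $x\preceq a_s$'' is a compact form of the paper's five-case definition. Your checks that no marked vertex lies on the old path above $a_s$ and that $s\notin R_{\rm old}$ make explicit facts the paper uses implicitly.
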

\begin{proof}
The transduction guesses the unary predicates $S,A,R_{\rm old}$. 
We first note some easily checked
definability results:
\begin{itemize}
    \item for an element~$u$ belonging to some universe, the unique stop
    vertex~$s$ such that $u\in U_s$ is definable;
    \item for every stop element $s$, the unique attachment vertex $a\in U_s$ is definable;
    \item for every element $u$ marked $R_{\rm old}$, the following set is definable: the set $X_u$ of all vertices $v$ with $u\preceq v$ such that no vertex $u'\neq u$ marked $R_{\rm old}$ satisfies $u\prec u'\preceq v$. These are the vertices below $u$ that shall be moved with $u$.
\end{itemize}
\medskip
The formula $\rho_\preceq(x,y)$ of the interpretation is now defined depending on several definable cases:
\begin{itemize}
    \item If $x$ and $y$ are not in the same universe, then $\rho_\preceq(x,y)\is (x\preceq y)$.
    \item Otherwise, let $s$ be the stop vertex with $x,y\in U_s$ and let $a$ be the only attachment vertex in~$U_s$. 
    \begin{itemize}
    \item If $x$ and $y$ are in the same set $X_u$, then $\rho_\preceq(x,y)\is (x\preceq y)$.
    \item If $x$ and $y$ are in different sets ($x \in X_u$ and $y \in X_v$), then $\rho_\preceq(x,y)\is \bot$.
    \item If $x$ is in a set $X_u$ and $y$ is not in any set $X_v$, then $\rho_\preceq(x,y)\is \bot$.
    \item If $x$ is not in any set $X_u$ and $y$ is in a set $X_v$, then $\rho_\preceq(x,y)\is (x  \preceq a)$.
    \item If neither $x$ nor $y$ are in any set $X_u$, then $\rho_\preceq(x,y)\is (x\preceq y)$.\qedhere
    \end{itemize}
\end{itemize}
\end{proof}

\subsection{Re-rooting a tree order}

We next define a transduction that allows us to re-root selected subtrees of a
tree order.

The intuitive operation is the following. 
We mark an antichain of vertices by a unary predicate~$R$.  
These vertices are the old roots of the subtrees that we want to modify.  
For every such old root $r$, we also mark exactly one descendant $n$ of $r$ by a unary predicate $N$. 
The subtree rooted at $r$ is then re-rooted at $n$: the path from $r$ to $n$ is reversed, all branches hanging from this path stay attached to the same vertex as before, and the whole subtree stays attached to the rest of the tree at the same place; see \zcref{fig:reroot}. 
The corresponding operation on (associated) tree orders is called a \emph{re-rooting}.

\begin{figure}[ht]
    \centering
    \includegraphics[width=.75\columnwidth]{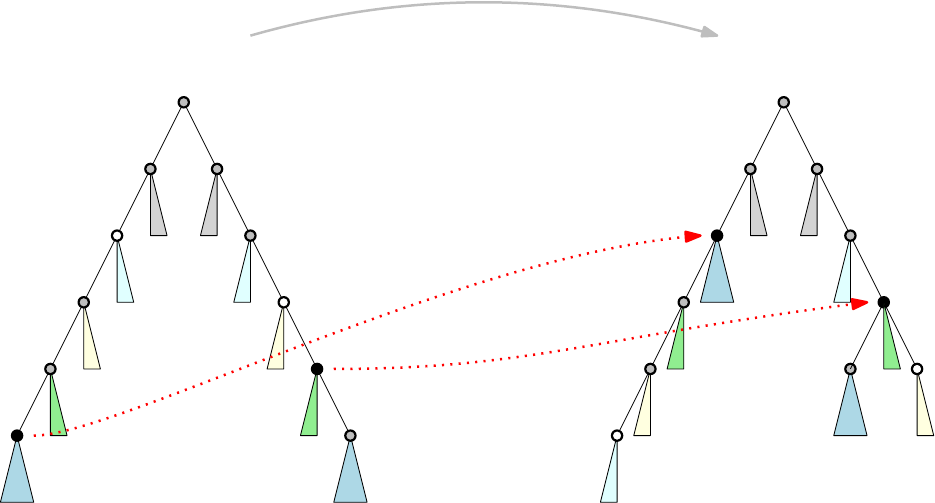}
    \caption{Re-rooting some subtrees. Old roots marked with $R$ are white, new roots marked with $N$ are black.}
    \label{fig:reroot}
\end{figure}

\begin{lemma}\label{lem:reroot-arbitrary}
There is a transduction $\mathsf T$ such that every re-rooting of a tree-ordered set $\mathbf X$ belongs to $\mathsf T(\mathbf X)$.
\end{lemma}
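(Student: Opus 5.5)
The transduction guesses the two unary predicates $R$ (an antichain of old roots) and $N$ (exactly one marked descendant $n_r$ in each subtree $U_r:=\{x : r\preceq x\}$, $r\in R$). It then defines the new order $\preceq'$ by a case distinction, exactly as in the proof of \zcref{lem:rewire}. The following objects are first-order definable from $\preceq$, $R$ and $N$, since forest-order ancestor sets are chains:
\begin{itemize}
\item for $x$ with some $r\in R$, $r\preceq x$, the unique such $r$ (unique because $R$ is an antichain);
\item the corresponding $n:=n_r$;
\item the \emph{path} $P_r:=\{p : r\preceq p\preceq n_r\}$;
\item for every $x\in U_r$, its \emph{projection} $\pi(x):=\max\{p\preceq x : p\preceq n_r\}$, which equals $x\curlywedge n_r$ and is the vertex of $P_r$ at which the branch containing $x$ hangs.
\end{itemize}

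The new order is determined as follows. In the re-rooted subtree, the path $P_r$ is reversed, so for $p,q\in P_r$ we set $p\preceq' q$ iff $q\preceq p$. Every vertex $x\in U_r\setminus P_r$ lies in a branch hanging at $\pi(x)$. Its new ancestors are its old ancestors inside that branch, $\{y : \pi(x)\prec y\preceq x\}$, together with all new ancestors of $\pi(x)$, namely $\{p\in P_r : \pi(x)\preceq p\}$. The vertices outside $\bigcup_r U_r$ are unchanged. Moreover, the whole re-rooted subtree stays attached at the old parent of $r$, so every old strict ancestor $z$ of $r$ with $z\notin U_r$ remains below every vertex of $U_r$.

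This gives the defining formula for $\rho_{\preceq}(x,y)$:
\begin{itemize}
\item If neither $x$ nor $y$ lies in any $U_r$, or they lie in different $U_r$'s, or $y\notin\bigcup U_r$, use $x\preceq y$. Old relations between distinct subtrees, and from a subtree to outside, are impossible anyway.
\item If $x\notin\bigcup U_r$ and $y\in U_r$, use $x\prec r$.
\item If $x,y\in U_r$, use
\[
\pi(y)\preceq\pi(x)\wedge\bigl(x=\pi(x)\vee(\pi(x)=\pi(y)\wedge x\preceq y)\bigr).
\]
\end{itemize}
Each case is a first-order condition, so the interpretation is a non-copying, domain-preserving transduction.

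It then remains to check that the relation so defined is exactly the tree order of the re-rooted cover tree. Its cover edges should be the reversed path edges, the unchanged branch edges (the cover from $\pi(x)$ to the first vertex of each branch is preserved), and the edge from the old parent of $r$ to $n_r$. The main obstacle I expect is the third case, the bookkeeping inside $U_r$. I would verify it by showing that the set of $\preceq'$-predecessors of each $x\in U_r$ is exactly the vertex set of the path from $x$ to the global root in the modified tree: first climb within the branch to $\pi(x)$, then go along $P_r$ from $\pi(x)$ to $n_r$, then pass to the ancestors of $r$. Once this is established, reflexivity, antisymmetry, transitivity and the chain condition all follow immediately.
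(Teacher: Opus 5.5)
Your proposal is correct and takes essentially the same route as the paper. Both guess $R$ and $N$, characterize the new root-to-vertex paths (unchanged outside the marked subtrees, and inside $U_r$ obtained by reversing the $r$--$n_r$ segment at the projection $x\curlywedge n_r$), and read off a first-order definition of the new order. Your explicit case split is equivalent to the paper's more compact formula, which keeps $x\preceq y$ unless $x$ lies on the $r_y$--$n_y$ path, in which case it uses $y\curlywedge n_y\preceq x$.
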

\begin{proof}
The transduction guesses the unary predicates $R$ and $N$.   
We then define, by a first-order interpretation, the new tree order.

Let $Y$ be the original rooted tree and let $Y^{\rm new}$ be the new one.
For a vertex $v$, denote by $P_v$ (resp.\ $P_v^{\rm new}$) the path from the root 
to $v$ in $Y$ (resp. in $Y^{\rm new}$).

Then it is easily checked that 
\begin{itemize}
    \item if $P_v$ does not contain a vertex marked $R$, then $P_v^{\rm new}=P_v$;
    \item if $P_v$ contains a (unique) vertex $r_v$ marked $R$ and if $n_v$ is the (unique) descendant of $r_v$ marked~$N$, then (see \zcref{fig:P_rerooted})
    \[
    V(P_v^{\rm new})=V(P_v)\mathbin{\Delta}\{z\colon (r_v\preceq z\preceq n_v)\wedge(z\neq (v\curlywedge n_v))\}.
    \]
\end{itemize}
\begin{figure}[h!t]
    \centering
    \includegraphics[width=0.5\linewidth]{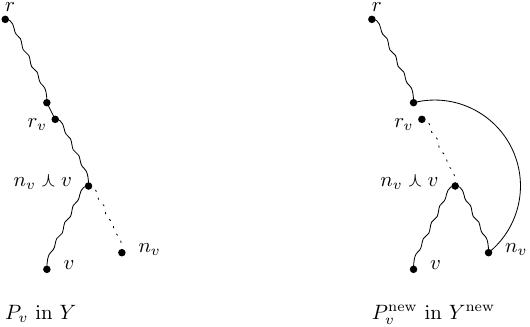}
    \caption{How the path from the root to $v$ changes when re-rooting the tree order (case where $P_v$ contains a vertex marked $R$).}
    \label{fig:P_rerooted}
\end{figure}

\pagebreak
We first define the auxiliary formula

\[\zeta(x,y,z,z') := 
    R(z)\wedge N(z')\wedge (z\preceq x\preceq z')\wedge (z\preceq y),\] which expresses that $x$ and $y$ are in the same subtree with root $z$ marked $R$ and $x$ is on the path linking the root to the vertex $z'$ marked $N$ in this subtree. (Hence, $z=r_y$ and $z'=n_y$.)

It follows from the above case analysis that if there exist no~$z,z'$ with
$\zeta(x,y,z,z')$, then $x\preceq y$ holds in the re-rooted tree order if
and only if it holds in the original one. If there exist~$z,z'$ with
$\zeta(x,y,z,z')$, then necessarily $z=r_y$ and $z'=n_y$, and
$x\preceq y$ holds in the re-rooted tree order if and only if
$y\curlywedge n_y\preceq x$ holds in the original tree order. Consequently, the new tree order is
defined by
\[
\rho_\preceq(x,y)\is
\bigl(\neg\exists z,z'\,\zeta(x,y,z,z')\wedge x\preceq y\bigr)
\vee
\bigl(\exists z,z'\,
   (\zeta(x,y,z,z')\wedge(y\curlywedge z'\preceq x))\bigr).
\]

It is easily checked that the transduction
$\mathsf T := (\{R,N\},\mathsf I)$
 meets the requirements of the lemma.
\end{proof}

\section{Proof of the main structure theorem}\label{sec:proof-main-thm}

This section is devoted to the proof of the structure theorems, which we restate here. 
\begin{theorem}[store=thm-main-structure,restate-keys={note=see \zcpageref[nocap]{thm-main-structure}}]
Let $\Cc$ be a weakly sparse class of graphs. Then the following are equivalent:
\begin{enumerate}[nosep]
    \item $\Cc$ excludes some graph as a minor;
    \item $\Cc^\Upsilon$ has bounded twin-width;
    \item $\Cc^F$ has bounded twin-width.
\end{enumerate}
\medskip 

Moreover, if these conditions hold, an expansion of $\Cc^F$ by a linear order with bounded twin-width can be computed in polynomial time. 
\end{theorem}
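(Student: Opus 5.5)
We prove (1)$\Rightarrow$(3)$\Rightarrow$(2)$\Rightarrow$(1) and treat the algorithmic addendum together with (1)$\Rightarrow$(3).

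\emph{(1)$\Rightarrow$(3) and the algorithmic part.} We reduce to connected graphs with rooted spanning trees. By \zcref{rem:forest-tree-reduction}, if $\Cc$ excludes $K_t$, every $(G,\preceq_F)$ with $G\in\Cc$ is an induced substructure of $(\widehat G,\preceq_Y)$, where $\widehat G=G\oplus K_1$ excludes $K_{t+1}$ and $Y$ is a rooted spanning tree of $\widehat G$. We then apply the second part of \zcref{thm:minor-transduction} to $\widehat G$ with the forest $Y$. This yields in polynomial time a subcubic connected graph $H$ with $h(H)\leq\xi(t+1)$, a rooted spanning tree $Y_H$, a DFS order $<_H$, and a linear order $<_{\widehat G}$ such that $(\widehat G,\preceq_Y,<_{\widehat G})\in\mathsf R_{t+1}(H,\preceq_{Y_H},<_H)$.

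Let $\Mm$ be the class of all such $(H,\preceq_{Y_H})$. Its graph reducts exclude $K_{\xi(t+1)+1}$ as a minor, and the orders come from full spanning trees. Hence $\Mm$ is contained in $\Dd^\Upsilon$ for a minor-excluded class $\Dd$, so $\Mm$ is monadically dependent by \zcref{thm:minor-tows}. By \zcref{lem:dep-preserved}, $\mathsf L(\Mm)$ is also monadically dependent. It contains all $(H,\preceq_{Y_H},<_H)$, which are linearly ordered, so by \zcref{thm:characterization-tww} this class has bounded twin-width.

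For a computable bound we use the effective route instead. The cover graphs have degree at most $3$, and every tree-ordered minor of $(H,\preceq_{Y_H})$ has an $E$-reduct that is a minor of $H$. That reduct therefore excludes $K_{r,r}$ for $r=\xi(t+1)+1$. \zcref{lem:effective-bounded-degree-covergraph-original} then bounds ${\rm tww}(H,\preceq_{Y_H})$, and \zcref{lem:tww-preserved} applied to $\mathsf L$ and then to $\mathsf R_{t+1}$ bounds ${\rm tww}(\widehat G,\preceq_Y,<_{\widehat G})$.

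Deleting the universal vertex is a restriction to an induced substructure, which does not increase twin-width. It follows that $(G,\preceq_F,<)$ has bounded twin-width, where $<$ is the restriction of $<_{\widehat G}$. In particular $\Cc^F$ has bounded twin-width, and the linear order $<$ is computed in polynomial time. The main work lies entirely in \zcref{thm:minor-transduction}, which we take as given here. The only real check is that the tree-ordered minor condition holds, and it does because contracting along the order yields graph minors.

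\emph{(3)$\Rightarrow$(2)} is immediate since $\Cc^\Upsilon\subseteq\Cc^F$.

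\emph{(2)$\Rightarrow$(1).} Bounded twin-width implies monadic dependence by \zcref{lem:tww-preserved}. So $\Cc^\Upsilon$ is monadically dependent, and since $\Cc$ is weakly sparse, \zcref{thm:minor-tows} gives that $\Cc$ excludes a minor. No step here should be an obstacle beyond correctly citing the earlier results.
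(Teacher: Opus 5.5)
Your proof is correct and follows essentially the paper's argument: the forest-to-tree reduction of \zcref{rem:forest-tree-reduction}, the subcubic encoding $(H,\preceq_{Y_H},<_H)$ whose twin-width is bounded via \zcref{lem:effective-bounded-degree-covergraph-original} and then pushed through $\mathsf L$ and $\mathsf R_t$ by \zcref{lem:tww-preserved}, restriction to $V(G)$, and \zcref{thm:minor-tows} for (2)$\Rightarrow$(1). The only difference is packaging: you cite \zcref{thm:minor-transduction}, which the paper obtains as a byproduct of the same nine-step proof of \zcref{thm-aux-structure}, instead of citing \zcref{thm-aux-structure} itself. Your detour through $\widehat G$ is harmless but unnecessary, since part~2 of \zcref{thm:minor-transduction} already accepts arbitrary rooted spanning forests.
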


\begin{theorem}[store=thm-main-algo,restate-keys={note=see \zcpageref[nocap]{thm-main-algo}}]
There exist computable functions~$\xi,f$, a constant~$c$ and an algorithm that, for a given graph $G$ and a rooted spanning forest $F$ of $G$, computes in time $f(h(G))\cdot|G|^c$ a linear order $<$ on $G$, such that $(G,\preceq_F,<)$  has twin-width  at most $\xi(h(G))$.
\end{theorem}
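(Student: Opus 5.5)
The plan is to derive \zcref{thm-main-algo} from \zcref{thm:minor-transduction} together with the twin-width characterization of linearly ordered classes. First, by \zcref{rem:forest-tree-reduction}, I reduce to the case where $G$ is connected and $F=Y$ is a rooted spanning tree. To do this I add a universal vertex $u$ joined to all roots of $F$ and rooted at $u$, which raises the Hadwiger number by at most one. I then restrict back to $V(G)$ at the end: the induced substructure on $V(G)$ does not increase twin-width, and the restricted linear order is computed in the same time.

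Fix $t=h(G)$, computed in time $f(t)|G|^{1+o(1)}$ via \cite{korhonen2024minor}. I apply part~2 of \zcref{thm:minor-transduction} to obtain, in time $f(t)|G|^c$, a connected subcubic graph $H$ with $h(H)\le\xi(t)$, a rooted spanning tree $Y_H$, a DFS preorder $<_H$, and a linear order $<_G$. These satisfy $(H,\preceq_{Y_H},<_H)\in\mathsf L(H,\preceq_{Y_H})$ and $(G,\preceq_Y,<_G)\in\mathsf R_t(H,\preceq_{Y_H},<_H)$. The output order is $<:=<_G$.

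To bound twin-width quantitatively, I first treat $(H,\preceq_{Y_H})$. Its cover graph is $Y_H\subseteq H$, so it has maximum degree at most~$3$. Every tree-ordered minor of $(H,\preceq_{Y_H})$ arises by contracting cover edges, which are edges of $H$, after taking a root-containing induced subgraph. Because every non-root vertex's cover edge to its parent stays inside the retained vertex set after contractions, each elementary $\prec$-contraction is a genuine graph minor operation. I expect this to be the delicate point: in an induced subgraph the cover parent may have been deleted. I would handle it by showing that contraction sets are connected in $Y_H$ modulo deleted vertices, or more simply that the $E$-reduct is a minor of $H$, since the branch sets are $\preceq$-intervals, which are connected in $Y_H$ when the root is kept.

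Granting this, the $E$-reduct of every tree-ordered minor is a minor of $H$. It therefore has Hadwiger number at most $\xi(t)$ and excludes $K_{r,r}$ for $r=\xi(t)+1$, since $K_{r,r}$ contains $K_{r}$ as a minor. \zcref{lem:effective-bounded-degree-covergraph-original} then gives $\operatorname{tww}(H,\preceq_{Y_H})\le\beta(3,\xi(t)+1)$.

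Finally, I apply \zcref{lem:tww-preserved} twice with computable bounds. The first application uses the fixed transduction $\mathsf L$, which bounds the twin-width of $(H,\preceq_{Y_H},<_H)$. The second uses $\mathsf R_t$, which is effectively computable from $t$, so the function $g$ provided by the lemma is computable uniformly in $t$. Together these bound $\operatorname{tww}(G,\preceq_Y,<_G)$ by a computable function of $t$. Composing with the forest-to-tree reduction yields the claimed $\xi(h(G))$ and running time. Besides the tree-ordered-minor check above, the other point needing care is the uniform computability of $g$ across the family $(\mathsf R_t)_t$; I would address it by noting that the constructive proof of \cite[Theorem~8.1]{bonnet2022twwI} gives a bound computable from the transduction's description and $d$.
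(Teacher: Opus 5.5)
Your proposal is correct and follows the paper's route. The paper likewise reduces to a connected graph with a spanning tree, runs the same construction (packaged there as \zcref{thm-aux-structure}, with \zcref{thm:minor-transduction} as a byproduct), bounds $\operatorname{tww}(H,\preceq_{Y_H})\le\beta(3,\xi(t)+1)$ by \zcref{lem:effective-bounded-degree-covergraph-original}, and pushes this bound through $\mathsf L$ and the effectively computable family $(\mathsf R_t)_t$ using computable transduction-preservation bounds.

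The point you flag is genuine. Your first justification, that each elementary $\prec$-contraction is along an edge of $H$, fails once vertices have been deleted. Your interval idea does settle it: take as branch sets the $\preceq_{Y_H}$-convex hulls of the contraction classes. These hulls add back the deleted vertices lying between retained ones, are connected in $Y_H$, and are pairwise disjoint; the paper states this only as an unproved ``observation''. Also, part~2 of \zcref{thm:minor-transduction} already covers forests, so your preliminary reduction is redundant and at worst costs $t+1$ in place of $t$.
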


By \zcref{rem:forest-tree-reduction}, it suffices to establish the following statement for connected graphs and spanning trees.

\begin{theorem}[store=thm-aux-structure,restate-keys={note=see \zcpageref[nocap]{thm-aux-structure}}]
Let $\sigma=\{E,\preceq,<\}$ be the signature obtained by expanding the signature $\{E\}$ of graphs with  two binary relations $\preceq$ and $<$.
There exist computable functions~$\xi,f$, a constant~$c$, and an algorithm that, for a given connected graph $G$ and a rooted spanning tree $Y$ of $G$, computes in time $f(h(G))\cdot|G|^c$ a $\sigma$-structure~$\mathbf G$ such that 
\begin{itemize}[nosep]
    \item the $\{E\}$-reduct of $\mathbf G$ is $G$,
    \item the $\preceq$-reduct of $\mathbf G$ is the tree order $\preceq_Y$,
    \item the $<$-reduct of $\mathbf G$ is a linear order,
    \item the twin-width of $\mathbf G$ is at most $\xi(h(G))$.
\end{itemize}
\end{theorem}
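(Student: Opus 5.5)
The plan is to derive \zcref{thm-aux-structure} from \zcref{thm:minor-transduction}, taken as given, in three steps: bound twin-width on an auxiliary class, transfer the bound through a fixed transduction, then assemble the algorithm.

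\emph{Step 1: an auxiliary class of bounded twin-width.} Set $t:=h(G)$, which can be computed in time $f(t)|G|^{1+o(1)}$ by \cite{korhonen2024minor}. Apply part~2 of \zcref{thm:minor-transduction} to $G$ with $F:=Y$. This produces, in time $f(t)|G|^c$:
\begin{itemize}
\item a connected subcubic graph $H$ with $h(H)\le\xi(t)$;
\item a rooted spanning tree $Y_H$ of $H$ and a DFS preorder $<_H$ of $Y_H$;
\item a linear order $<_G$ on $V(G)$ such that $(G,\preceq_Y,<_G)\in\mathsf R_t(H,\preceq_{Y_H},<_H)$.
\end{itemize}
Output $\mathbf G:=(G,\preceq_Y,<_G)$. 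Its $\{E\}$-reduct is $G$, its $\preceq$-reduct is $\preceq_Y$, and its $<$-reduct is a linear order. So only the twin-width bound remains.

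First bound $\operatorname{tww}(H,\preceq_{Y_H})$. The cover graph of $\preceq_{Y_H}$ is $Y_H\subseteq H$, which has maximum degree at most~$3$. Take any tree-ordered minor of $(H,\preceq_{Y_H})$. Each elementary $\prec$-contraction merges a vertex with its parent in $Y_H$, and $Y_H$ is a subgraph of $H$. So every such contraction is an ordinary edge contraction of $H$, and taking induced subgraphs and deleting edges are also minor operations. Hence the $E$-reduct of every tree-ordered minor is a minor of $H$, and therefore excludes $K_{\xi(t)+1}$.

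It then also excludes $K_{r,r}$ as a subgraph, with $r:=\xi(t)+1$, because $K_{r,r}$ contains $K_{r}$ as a minor: contract a perfect matching of $K_{r,r}$ minus one edge, or simply take $r\ge 2\xi(t)$ to be safe. By \zcref{lem:effective-bounded-degree-covergraph-original}, $\operatorname{tww}(H,\preceq_{Y_H})\le\beta(3,r)$.

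\emph{Step 2: transfer through fixed transductions.} By \zcref{thm:minor-transduction}, $(H,\preceq_{Y_H},<_H)\in\mathsf L(H,\preceq_{Y_H})$ for the fixed transduction $\mathsf L$. \zcref{lem:tww-preserved} then gives $\operatorname{tww}(H,\preceq_{Y_H},<_H)\le g_{\mathsf L}(\beta(3,r))$. Next, $\mathsf R_t$ is effectively computable from $t$, so the computable function of \zcref{lem:tww-preserved} for $\mathsf R_t$ is computable uniformly in $t$. Applying it yields $\operatorname{tww}(\mathbf G)\le g_{\mathsf R_t}(g_{\mathsf L}(\beta(3,r)))=:\xi'(t)$, which is computable.

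\emph{Step 3: the main obstacle.} Steps~1 and~2 are short, so the real work is \zcref{thm:minor-transduction} itself, specifically its second part.
\begin{itemize}
\item The subcubic degree-reduction step combines \zcref{thm:rs-structure}, the edge pruning between margins with \zcref{cor:bloc}, and \zcref{thm:milolidakis}.
\item The Phase~II recovery of $\preceq_Y$ margin by margin must be done by one transduction that depends only on $t$, using \zcref{lem:rewire}, \zcref{lem:reroot-arbitrary} and the Colcombet/Simon factorization argument.
\end{itemize}
I would handle the recovery by first reconstructing $(G_2,\preceq_{Y_2})$, then restoring the intra-margin edges and the trace of $Y$ on each margin. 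Finally, the inter-margin structure of $Y$ would be reassembled as an MSO-definable rewiring over the decomposition tree, simulated in FO over tree orders.
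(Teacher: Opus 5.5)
Your Steps~1 and~2 are sound and coincide with the paper's own wrapper. That wrapper is the corollary bounding $\operatorname{tww}(\mathscr M_t)$ via \zcref{lem:effective-bounded-degree-covergraph-original}, \zcref{lem:bitransducible-tree-linear} and \zcref{lem:tww-preserved}, followed by one more application of \zcref{lem:tww-preserved} to the recovery transduction. One justification there is off. A tree-ordered minor first passes to an induced suborder, whose covers need not be edges of~$H$, so the $\prec$-contractions are not edge contractions of~$H$. The conclusion survives if you put the deleted vertices on the $Y_H$-path of each contracted cover into the branch set of its class. Open paths of covers with distinct lower endpoints are disjoint, so this gives a minor model in~$H$.

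The genuine gap is that \zcref{thm:minor-transduction} is not available as a black box. The paper states it in the overview without proof and obtains it at the end of Section~\ref{sec:proof-main-thm} as a by-product of Steps~1--9, that is, of the proof of \zcref{thm-aux-structure} itself. Its second part, $(G,\preceq_Y,<_G)\in\mathsf R_t(H,\preceq_{Y_H},<_H)$ for a transduction $\mathsf R_t$ depending only on~$t$, is essentially the entire content of the statement. So your derivation is circular relative to the paper, and everything rests on your Step~3, which only names ingredients.

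Measured against what Phase~II requires, that sketch misses the decisive ideas and partly misdescribes them.

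First, you cannot restore the trace of $Y$ on a margin. A cover of $\preceq_Y$ restricted to $\mrg(x)$ may leave the margin and return. Step~8 therefore only recovers the local trace $Y_x^{\mathrm{loc}}=Y_x\cap G_1[\mrg(x)]$: it cuts the connecting edges $F_x$ and $F_{Y,x}$ with \zcref{lem:rewire}, re-roots with \zcref{lem:reroot-arbitrary}, and restores apex-incident trace edges in at most $2k$ further rewirings. The missing comparisons pass through adhesion vertices, by \zcref{lem:local-trace} and~\eqref{eq:local-trace}, and are supplied only in Step~9.

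Second, to restore the deleted cross-margin edges of $G_1$, and to use adhesions in formulas at all, one fixed formula must name each vertex of $\adh(x)$ individually from the bag element $\zeta_B(x)$. This is \zcref{lem:adhesion}, and it relies on three things:
\begin{itemize}
\item the bounded-indegree acyclic orientation of Step~1;
\item the choice of $\alpha(y)$ as the sink of the transitive tournament on $\adh(y)$, so that $\alpha(y)$ is an in-neighbor of $\beta(y)$ and every other vertex of $\adh(y)$ is one of its at most $d$ in-neighbors;
\item a $2k$-coloring of the connected components of the sets of nodes whose adhesion contains a fixed vertex.
\end{itemize}

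Third, the order between margins is not obtained by a rewiring. The paper propagates finite order types $\theta(x,y)$ of the indexed adhesion tuples along root paths of $T$; the composition claim is an adhesion-separator argument. It makes these types first-order definable by Colcombet's theorem. It then defines $\preceq_Y$ in the stages $R_1,R_2,R_3$, the last one using \zcref{lem:path-common-placement} for incomparable placement nodes.

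Finally, $G$ is recovered from the supergraph $G_1$ through \zcref{lem:structure-from-gaif}, which needs the bound $h(G_1)\le q_1(h(G))$ from Step~1. Without these steps there is no fixed $\mathsf R_t$, and hence no twin-width bound for $(G,\preceq_Y,<_G)$.
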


\begin{proof}[Proof that {\zcref[cap]{thm-aux-structure}} implies
{\zcref[cap]{thm-main-structure}} and {\zcref[cap]{thm-main-algo}}]
\label{pf:th1}

Let $\xi_0,f_0$, and $c_0$ be as provided by \zcref{thm-aux-structure}, and let $G$ be a graph equipped with a rooted spanning forest~$F$.
Apply \zcref{rem:forest-tree-reduction} to obtain a connected graph $\widehat G=G\oplus K_1$ with a rooted spanning tree~$Y$.
Thus, $h(\widehat G)\leq h(G)+1$, $|\widehat G|=|G|+1$, and restricting $\preceq_Y$ to $V(G)$ gives~$\preceq_F$.
Apply \zcref{thm-aux-structure} to~$(\widehat G,Y)$ and restrict the resulting linear order to~$V(G)$.
Since twin-width does not increase when passing to an induced substructure, the resulting structure $(G,\preceq_F,<)$ has twin-width at most $\xi_0(h(\widehat G))$.
After increasing~$f_0$ and $c_0$ if necessary to include the construction and restriction steps, define
\[
 \xi(k)=\max_{0\leq i\leq k+1}\xi_0(i),
 \qquad
 f(k)=2^{c_0}\max_{0\leq i\leq k+1}f_0(i).
\]
Then $\operatorname{tww}(G,\preceq_F,<)\leq\xi(h(G))$.
Moreover, $(|G|+1)^{c_0}\leq2^{c_0}|G|^{c_0}$, so the running time is at most $f(h(G))\cdot|G|^{c_0}$.
This proves \zcref{thm-main-algo}.

We now prove \zcref{thm-main-structure}.
The implication from (2) to (1) follows from
\zcref{thm:minor-tows}.
For the converse, suppose that $\Cc$ excludes a graph~$H$ as a minor,
and put $t=|V(H)|$.
Then $h(G)<t$ for every $G\in\Cc$. 
By \zcref{thm-main-algo}, for every $G\in\Cc$ and every rooted spanning
forest~$F$ of~$G$, there is a linear order~$<$ such that
$\operatorname{tww}(G,\preceq_F,<)\leq \xi(t-1)$.
Finally, twin-width does not increase when relations are removed.
Hence, $\operatorname{tww}(G,\preceq_F)
 \leq \operatorname{tww}(G,\preceq_F,<)
 \leq \xi(t-1)$.
Thus, $\Cc^F$ has bounded twin-width, proving (1) implies (3).
The implication from (3) to (2) follows from $\Cc^{\Upsilon}\subseteq\Cc^F$.
\end{proof}

\zcref{thm-main-algo} extends to binary structures.

\begin{theorem}[store=thm-main-algo-structures]
Fix a finite relational signature~$\sigma$ of arity at most two.
There exist computable functions~$f_\sigma,\xi_\sigma$, an absolute
constant~$c$, and an algorithm that, given a $\sigma$-structure
$\mathbf M$ and a rooted spanning forest~$F$ of
$G:=\Gaif(\mathbf M)$, computes in time
$f_\sigma(h(G))\cdot\|\mathbf M\|^c$ a linear order~$<$ such that
$(\mathbf M,\preceq_F,<)$ has twin-width at most
$\xi_\sigma(h(G))$.
\end{theorem}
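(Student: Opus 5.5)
The plan is to reduce to \zcref{thm-main-algo} applied to the Gaifman graph, and then recover the relations of $\mathbf M$ by a fixed transduction whose twin-width blow-up depends only on $\sigma$ and $h(G)$.

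Given $\mathbf M$ and a rooted spanning forest $F$ of $G=\Gaif(\mathbf M)$, first compute $G$ in time linear in $\|\mathbf M\|$. Apply the algorithm of \zcref{thm-main-algo} to $(G,F)$. This yields, in time $f(h(G))\cdot|G|^c$, a linear order $<$ with $\operatorname{tww}(G,\preceq_F,<)\leq\xi(h(G))$. We keep this order unchanged and must show that $(\mathbf M,\preceq_F,<)$ also has twin-width bounded by a computable function of $h(G)$ and $\sigma$.

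For the bound, set $t:=h(G)$ and let $\Cc_t$ be the class of graphs with Hadwiger number at most $t$, which excludes $K_{t+1}$ as a minor. Apply \zcref{lem:structure-from-gaif} to $\Cc_t$ and $\sigma$ to obtain a domain-preserving transduction $\mathsf{Expand}_\sigma$ with $\mathbf M\in\mathsf{Expand}_\sigma(G)$, since $\Gaif(\mathbf M)\subseteq G$. The transduction is domain-preserving, and we take it non-copying, as in the construction in \cite{lmcs_perm}, which uses colourings and bounded-degeneracy orientations of $G$. Then \zcref{fact:expand} lets us carry $\preceq_F$ and $<$ through unchanged, so $(\mathbf M,\preceq_F,<)\in\mathsf{Expand}_\sigma(G,\preceq_F,<)$. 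By \zcref{lem:tww-preserved} there is a computable $g_{t,\sigma}$ with $\operatorname{tww}(\mathbf M,\preceq_F,<)\leq g_{t,\sigma}(\xi(t))=:\xi_\sigma(t)$. If the transduction does copy, we instead interpret the copies back onto one copy. Either way, preservation of $\preceq_F$ and $<$ can be argued directly, because the universe is unchanged and the order formulas are the identity.

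The main obstacle is \emph{effectiveness}. \zcref{lem:tww-preserved} gives a computable bound for a fixed transduction, but here the transduction depends on $t$. We therefore need the map $t\mapsto\mathsf{Expand}_\sigma$ to be computable. This holds because the encoding in \cite{lmcs_perm} uses a number of colours bounded by a computable function of the degeneracy (or, more precisely, of the star chromatic number or expansion), and for $K_{t+1}$-minor-free graphs these quantities are computably bounded, e.g.\ degeneracy $O(t\sqrt{\log t})$. We then combine this with the constructive bound $g(d)$ from \cite[Theorem~8.1]{bonnet2022twwI}, which is uniform in the transduction's description.

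Running time is $f(t)|G|^c+O(\|\mathbf M\|)\leq f_\sigma(t)\|\mathbf M\|^c$, since $|G|\leq\|\mathbf M\|$. For $\sigma$ fixed, the dependence on $\sigma$ is absorbed into $f_\sigma$ and $\xi_\sigma$.
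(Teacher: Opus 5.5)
Your proposal is correct and is essentially the paper's proof: apply \zcref{thm-main-algo} to $\Gaif(\mathbf M)$ and $F$, recover $\mathbf M$ from $G$ by the expansion transduction of \zcref{lem:structure-from-gaif} while carrying $\preceq_F$ and $<$ along via \zcref{fact:expand}, and conclude with the effective form of \zcref{lem:tww-preserved}, uniformly in $t=h(G)$. Two minor points. First, the number of colours in that encoding is controlled by the star chromatic number (bounded expansion), not by degeneracy alone; you already note this parenthetically, and it is computably bounded for $K_{t+1}$-minor-free graphs since the class is minor-closed. Second, under the paper's definitions a domain-preserving transduction is already non-copying, so your fallback for the copying case is not needed.
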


\begin{proof}
Apply~\zcref{thm-main-algo} to $G:=\Gaif(\mathbf M)$ and~$F$.
We recover all relations of~$\mathbf M$ from a suitable unary expansion of~$G$ by first-order formulas,
while retaining both orders unchanged. This is the expansion
construction of~\zcref{lem:structure-from-gaif}, together
with~\zcref{fact:expand}. In its effective form, the number of
unary predicates and the defining formulas depend computably
only on $h(G)$ and~$\sigma$, and the required predicates can
be computed within the stated running time. Preservation of
bounded twin-width under first-order transductions then gives
the required twin-width bound.
\end{proof}

As $\FO$ model checking is known to be fixed-parameter tractable on monadically dependent classes of linearly ordered binary structures (\zcref{thm:mc-tww}), we obtain the following corollary.

\begin{corollary}[store=cor-model-checking,restate-keys={note=see \zcpageref[nocap]{cor-model-checking}}]
Fix a finite relational signature~$\sigma$ of arity at most two,
and let $\preceq$ be a fresh binary relation symbol.
There are a computable function~$f$ and an absolute
constant~$c$ such that the following holds.
Given a finite $\sigma$-structure~$\mathbf M$, a rooted spanning
forest~$F$ of $\Gaif(\mathbf M)$, and an $\FO$-sentence~$\varphi$
over $\sigma\cup\{\preceq\}$, one can decide whether
\[
   (\mathbf M,\preceq_F)\models\varphi
\]
in time
\[
   f\bigl(|\varphi|,h(\Gaif(\mathbf M))\bigr)
   \cdot\|\mathbf M\|^c.
\]
\end{corollary}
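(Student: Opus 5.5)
The plan is to combine the algorithmic version of the structure theorem for binary structures, \zcref{thm-main-algo-structures}, with the model-checking algorithm for linearly ordered binary structures of bounded twin-width, \zcref{thm:mc-tww}. Given $\mathbf M$, $F$ and $\varphi$, we first discard all relation symbols of $\sigma$ that do not occur in $\varphi$. This does not change whether $(\mathbf M,\preceq_F)\models\varphi$. It also can only shrink the Gaifman graph, but we keep $F$ and the Gaifman graph of the original structure, so it is cleaner to keep $\sigma$ fixed. Since $\sigma$ is fixed in the statement, the function $f_\sigma$ is simply absorbed into $f$.

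The first step is to compute $h:=h(\Gaif(\mathbf M))$, or a sufficient upper bound on it. We can use the algorithm of \cite{korhonen2024minor}, which runs in time $g(h)\cdot|\mathbf M|^{1+o(1)}$. Alternatively, we avoid computing $h$ at all, because the procedure of \zcref{thm-main-algo-structures} only needs to run within a time bound depending on $h$. We then apply \zcref{thm-main-algo-structures} to $\mathbf M$ and $F$. In time $f_\sigma(h)\cdot\|\mathbf M\|^{c}$ this yields a linear order $<$ with $\operatorname{tww}(\mathbf M,\preceq_F,<)\le\xi_\sigma(h)$. Writing down $\preceq_F$ explicitly as a relation takes only polynomial time, since it is the ancestor relation of $F$ and has at most $|M|^2$ pairs. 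Hence $\|(\mathbf M,\preceq_F,<)\|$ is polynomial in $\|\mathbf M\|$.

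The second step is to run the algorithm of \zcref{thm:mc-tww} on the linearly ordered binary structure $(\mathbf M,\preceq_F,<)$ with the sentence $\varphi$. The sentence $\varphi$ does not mention $<$, and the truth of $\varphi$ is unaffected by the extra relation. The running time is $f'(|\varphi|,\xi_\sigma(h))\cdot\|(\mathbf M,\preceq_F,<)\|^{c'}$, which is at most $f''(|\varphi|,h)\cdot\|\mathbf M\|^{2c'}$ with $f''$ computable, because $\xi_\sigma$ is computable and monotonicity can be enforced by taking maxima as in the proof that \zcref{thm-aux-structure} implies \zcref{thm-main-algo}. Summing the two phases gives the claimed bound with an absolute exponent.

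The main subtle point is that \zcref{thm:mc-tww} is stated with the twin-width of the input as a parameter. Some formulations of that algorithm require a contraction sequence as part of the input, whereas we only have an upper bound on the twin-width. I would invoke the form in which, for linearly ordered structures, an approximate contraction sequence is computed from the order in FPT time \cite{bonnet2024twin}. This is exactly why the linear order $<$ is needed: it is what makes this step possible, rather than knowing only that $(\mathbf M,\preceq_F)$ has bounded twin-width. With that form of the theorem, no further obstacle remains.
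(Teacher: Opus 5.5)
Your proposal is correct and follows the paper's route. You obtain the linear order $<$ from \zcref{thm-main-algo-structures} and then run the ordered-structure model-checking algorithm of \zcref{thm:mc-tww} on $(\mathbf M,\preceq_F,<)$, and your handling of the polynomial size of the expanded structure and of the monotonization of the parameter function is exactly what is needed to get an absolute exponent and a computable $f$.
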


As outlined in \zcref{sec:outline_struc}, the proof 
of \zcref{thm-aux-structure}
consists of $9$ steps divided into $2$ phases.
For convenience, we reproduce the overview of the proof of \zcref{thm-aux-structure} and the short description of the two phases given in \zcref{sec:outline_struc}.

\begin{figure}[ht]
\[
\scalebox{.96}{
\begin{xy}
\xymatrix{
&&\mathcal T=(T,\bag)\ar@{..>}[dr]^{2}\\
\text{Phase I}&(G,Y)\ar@{..>}[ur]^(.4){1}\ar@{..>}[r]^(.45){1}&(G_1,Y)\ar@{..>}[r]^(.45){2}&(G_2,Y_2)\ar@{..>}[r]^(.45){3}&(H,Y_H)\ar@{..>}[r]^{4}&(H,\preceq_{Y_H})\ar@{..>}@/^/[dd]_{5}\ar@/_/[dd]_{\mathsf T_1}\\
\ar@{--}[rrrr]&&&&&\\
\text{Phase II}&(G,\preceq_Y,<_G)&{\mathbf N}\ar[l]_{9}^{\mathsf T_5}&{\mathbf M}\ar[l]_{8}^{\mathsf T_4}&(G_2,\preceq_{Y_2},<_{G})\ar[l]_(.55){7}^(.55){\mathsf T_3}& (H,\preceq_{Y_H},<_H)\ar[l]_(.45){6}^(.45){\mathsf T_2}
}
\end{xy}
}
\]
\caption{Outline of the proof of \zcref{thm-aux-structure} (copy of \zcref{fig:overview}). Dotted arrows are polynomial-time computations and plain arrows are transductions. (Precisely, $G\xrightarrow[\mathsf T]{~1~}H$ means that at step 1 of the proof, we prove $H\in\mathsf T(G)$.)
The main steps of the proof are:
    {\bf 1.}~construction of a neat tree decomposition of~$G$ based on the structure theorem of Robertson and Seymour, as well as an augmentation of $G$ so that every adhesion is a clique;
    {\bf 2.}~construction of a subgraph whose nontrivial blocks have explicit
    no-apex almost-embedding witnesses;
    {\bf 3.}~effective blockwise translation into a subcubic graph of bounded
    Hadwiger number;
    {\bf 4.}~computation of the tree-ordered expansion;
    {\bf 5.}~computation of a linear order that can be obtained by transduction $\mathsf T_1$;
    {\bf 6.}~transduction of the linearly ordered tree-ordered expansion of $G_2$ by $\mathsf T_2$;
    {\bf 7.}~transduction of a master structure that permits the graph to be handled margin-wise by~$\mathsf T_3$;
    {\bf 8.}~recovery of the target tree order on the margins by transduction $\mathsf T_4$;
    {\bf 9.}~full recovery of the tree order by transduction $\mathsf T_5$.}
    \label{fig:overview_rest}
\end{figure}

\begin{itemize}
    \item {\bf Phase I.} Given a connected graph $G\in \Hh_t$ with a rooted spanning tree $Y$, we construct in polynomial time a tree-ordered graph $(H,\preceq_{Y_H})$ with a linearly ordered expansion~$(H,\preceq_{Y_H},<_H)$ of~$(H,\preceq_{Y_H})$ such that the class
    \[\mathscr M_t = \{ (H,\preceq_{Y_H},<_H) \mid G\in \mathscr H_t,
    \text{$Y$ is a rooted spanning tree of $G$}\}\]
    is monadically dependent, and hence, as a linearly ordered class, has bounded twin-width.
    
    \item {\bf Phase II.} From $(H,\preceq_{Y_H},<_H)$ we transduce a linearly ordered
  expansion $(G,\preceq_Y,<_G)$ of the original tree-ordered graph. The order $<_G$ is
  inherited from the order $<_H$.
\end{itemize}

\subsection{Phase I}
\paragraph{Step 1}
We first compute the Hadwiger number $h(G)$ of $G$ in time $f(h(G))\cdot|G|^{1+o(1)}$ using the minor containment testing algorithm given in \cite{korhonen2024minor}, where $f$ is some computable function.

We now set $J=K_{h(G)+1}$ and call the algorithm of \zcref{thm:rs-structure}.
For this choice of $J$, the algorithm outputs a neat tree decomposition
$\mathcal T=(T,\bag)$ of a supergraph $G_1$ of~$G$, together with apex sets
$A_x$ for all $x\in V(T)$. Each adhesion induces a clique in~$G_1$, and
$G_1[\bag(x)]-A_x$ is contained in a graph~$P_x$ supplied with an explicit
no-apex $k$-almost-embedding, where $k$ is the parameter from
\zcref{thm:rs-structure}.

We shall need later that the supergraph $G_1$ still belongs to a
minor-excluded class whose excluded minor depends only on~$h(G)$.
Every torso at a node~$x$ is contained in the $k$-almost embeddable graph
obtained from~$P_x$ by adding $A_x$ as apices and all their incident edges
in the torso. Hence, every torso has Hadwiger number bounded by a computable
function of~$k$, see~\cite[Theorem~3.1]{JoretWood2013}.
Moreover, $G_1$ is obtained from its torsos by clique-sums, because all
adhesions are cliques. Since the Hadwiger number of a clique-sum is at most
the maximum of the Hadwiger numbers of its summands, see
\cite[Lemma~2.3]{JoretWood2013}, it follows that
$h(G_1)\leq q_1(h(G))$ for a computable function~$q_1$.

Our next goal is to define the vertices $\alpha(x)$ and $\beta(x)$ for all nodes $x$ of the decomposition, which will represent these nodes and their connections. 
For this, we compute an acyclic orientation of $G_1$ as follows. 
Recall that a graph is \emph{$d$-degenerate} if every non-empty subgraph contains a vertex of degree at most $d$. 
Equivalently, its vertices admit a degeneracy ordering, and such an ordering yields an acyclic orientation of the graph with maximum indegree at most $d$.  
Moreover, if $d$ is fixed, such an orientation can be computed in linear time by a standard greedy algorithm.

Let us bound the degeneracy inside the margins.
For every integer $q$, there is a (computable) number $d_{\rm ae}(q)$ such that every graph that is $q$-almost embeddable in a surface of Euler genus at most $q$ is $d_{\rm ae}(q)$-degenerate.
Indeed, as mentioned above, such graphs exclude a clique minor whose order is bounded by a function of~$q$. 
A theorem of Kostochka and Thomason bounds the average degree of
$K_t$-minor-free graphs~\cite{Th2}, which hence bounds the degeneracy of these graphs.
Since every torso is a subgraph of a $k$-almost embeddable graph as above,
it is $d_1$-degenerate for $d_1:=d_{\rm ae}(k)$.

Since the graph induced by a margin is a subgraph of the corresponding torso,
every margin~$\mrg(x)$ induces a $d_1$-degenerate graph in $G_1$.

We now orient the edges of $G_1$.  For every node $x\in V(T)$, we compute a degeneracy ordering of the graph $G_1[\mrg(x)]$ and orient all edges inside $\mrg(x)$ according to this ordering, so that every vertex has at most $d_1$ incoming edges from its own margin.

It remains to orient the edges whose endpoints lie in different margins.
Let $u\in\mrg(x)$ and $v\in\mrg(y)$, where $x\neq y$.  
Since $\mathcal T$ is a tree decomposition, the nodes $x$ and $y$ are comparable in the rooted tree order of $T$ whenever $uv$ is an edge of $G_1$. 
If $x\prec_T y$, then we orient the edge from $u$ to $v$; otherwise we orient it from $v$ to $u$.  
In other words, all cross-margin edges are oriented down the decomposition tree.

The resulting orientation $\vec G_1$ is acyclic.  
Indeed, inside each margin the orientation is acyclic by construction, while every edge between two different margins goes strictly down the tree order on $T$.  
Hence, no directed cycle can use a cross-margin edge, and no directed cycle can be contained in a single margin.

Finally, we bound the indegree.  
A vertex has at most $d_1$ incoming edges from its own margin.  
Its remaining incoming edges come from ancestor margins.  
By the properties of tree decompositions, all such ancestors that are adjacent to a vertex in $\mrg(x)$ must lie in the adhesion through which the bag $x$ is attached to its parent.  
This adhesion has size at most $k$. 
Thus, every vertex has at most $d := d_1+k $ incoming edges in $\vec G_1$.
For future reference, partition the arcs of $\vec G_1$ into binary relations
$E_1,\ldots,E_d$ so that every vertex has at most one incoming arc in each
$E_i$.  Such a partition is obtained by assigning distinct indices to the
incoming arcs at every vertex.

We now define vertices $\alpha(x)$ and $\beta(x)$ for every non-root node
$x\in V(T)$. Let $x$ be a non-root node and let $S_x:=\adh(x)\cap\mrg(\parent(x))$.
By neatness, $S_x$ is non-empty.

Since $\adh(x)$ is a clique in~$G_1$ and the orientation of~$G_1$ is
acyclic, the orientation induced on $\adh(x)$ is a transitive tournament
and therefore has a unique sink. Every vertex of $\adh(x)\setminus S_x$
belongs to the margin of a strict ancestor of $\parent(x)$. Hence every
edge between $\adh(x)\setminus S_x$ and~$S_x$ is oriented towards~$S_x$.
Consequently, no vertex of $\adh(x)\setminus S_x$ can be the sink of the
tournament induced on~$\adh(x)$. We define $\alpha(x)$ to be this unique
sink, thus
$\alpha(x)\in\adh(x)\cap\mrg(\parent(x))$.

By neatness, every vertex of $\adh(x)$ has a neighbor in $\mrg(x)$.
We choose such a neighbor of $\alpha(x)$ and denote it by~$\beta(x)$.
For the root~$r$ of~$T$, we choose $\beta(r)$ arbitrarily in~$\mrg(r)$.

\medskip
As $G$ is a spanning subgraph of $G_1$, the tree $Y$ is also a spanning tree of $G_1$.

For every node $x\in V(T)$, let $Y_x$ be the rooted forest whose order is the
restriction of $\preceq_Y$ to~$\mrg(x)$. 
We call $Y_x$ the \emph{trace of
$Y$ on the margin $\mrg(x)$}.  Notice that a trace edge need not be an edge
of~$G_1$: it may shortcut a path of~$Y$ whose internal vertices
lie outside~$\mrg(x)$.  All trace forests can be computed in linear time by traversing~$Y$
depth-first and maintaining, for each margin, its most recent vertex on the
current root-to-vertex path.

\paragraph{Step 2}
Our next goal is to delete edges (while ensuring that they can later be
recovered by a transduction) so that every non-trivial block of the resulting
graph is contained in a graph with an explicit no-apex almost-embedding
witness, such that we can apply \zcref{cor:bloc}. This will yield the graph~$G_2$.

Let $G_0$ be the graph obtained from $G_1$ by first deleting all edges between different margins, except the edges $\alpha(x)\beta(x)$, and then, for every $x\in V(T)$, deleting all edges with one endpoint in $\mrg(x)$ and the other one in $A_x\cap\mrg(x)$.

Fix a node $x\in V(T)$ and contract each connected component of $G_0[\mrg(x)]$ to one vertex.
Let~$Q_x$ denote the resulting quotient of $G_1[\mrg(x)]$.
By neatness, $G_1[\mrg(x)]$ is connected, so $Q_x$ is connected.
Moreover, every apex in $A_x\cap\mrg(x)$ is an isolated vertex of $G_0[\mrg(x)]$ and therefore gives a singleton quotient vertex.

Next, choose a spanning tree of~$Q_x$.
For every edge of this tree, choose an arbitrary corresponding edge of $E(G_1[\mrg(x)])\setminus E(G_0[\mrg(x)])$.
Let~$F_x$ be the set of chosen edges.
Then $G_0[\mrg(x)]+F_x$ is connected and every edge of $F_x$ is a bridge\footnote{A \emph{bridge} is an edge whose removal increases the number of connected components of the graph.} in $G_0[\mrg(x)]+F_x$.
Indeed, after contracting the connected components of $G_0[\mrg(x)]$, the graph $G_0[\mrg(x)]+F_x$ becomes precisely the chosen spanning tree of~$Q_x$.
See~\zcref{fig:spanningtreeQx}.

\begin{figure}[ht]
    \centering
    \includegraphics[width=0.38\linewidth]{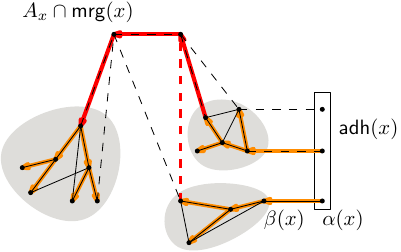}
    \caption{An illustration of chosen edges $F_x$ (in red).
    The picture illustrates the graph $G_1[\mrg(x)]$.
    The dashed edges are the ones that are deleted from $G_1$ to get $G_0$.
    The connected components of $G_0[\mrg(x)]$ that are not apices correspond to the three marked regions.
    The edges of $Y$ are highlighted and oriented away from the root of $Y$.
    The distinguished edges of $Y$ (red arcs), together with the distinguished dashed edge (also in red), form the set $F_x$ of chosen edges.}
    \label{fig:spanningtreeQx}
\end{figure}

We call the edges in $\bigcup_{x\in V(T)}F_x$ the \emph{apex connecting edges}.
Let~$G_2$ be the graph obtained from~$G_0$ by adding all apex connecting edges.

\medskip
We prove the following auxiliary lemma.

\begin{lemma}\label{lem:preserve-blocks}
Let $G$ be a connected graph and let $X\subseteq V(G)$.
Let $G'$ be the graph obtained from~$G$ by deleting all edges with at least one endpoint in~$X$.
Let $F\subseteq E(G)\setminus E(G')$ be inclusion-wise minimal such that $G'' := G'+F$ is connected.
Then the maximal $2$-connected subgraphs of $G''$ with at least three vertices are exactly the maximal $2$-connected subgraphs of $G'$ with at least three vertices.
\end{lemma}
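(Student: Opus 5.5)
The plan is to show that every edge of $F$ is a bridge of $G''$. Once that holds, the lemma follows from standard block theory. A bridge lies in no cycle, so it belongs to no $2$-connected subgraph with at least three vertices. Hence every such maximal $2$-connected subgraph of $G''$ uses only edges of $G'$.

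\textbf{Step 1: minimality makes every edge of $F$ a bridge.} Let $e\in F$ and suppose $e$ is not a bridge of $G''$. Then $G''-e=G'+(F\setminus\{e\})$ is still connected, because removing a non-bridge edge preserves connectivity. This contradicts the inclusion-wise minimality of $F$. So every edge of $F$ is a bridge of $G''$.

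\textbf{Step 2: the cycles of $G''$ are exactly the cycles of $G'$.} No cycle of $G''$ contains a bridge. Since $G''$ differs from $G'$ only by the edges of $F$, every cycle of $G''$ is a cycle of $G'$. The converse inclusion is immediate, since $G'\subseteq G''$.

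\textbf{Step 3: matching the blocks.} For graphs, a subgraph with at least three vertices is $2$-connected if and only if it is connected and every edge lies on a cycle inside it. Equivalently, it is a union of cycles pairwise linked by chains of cycles sharing two vertices. The blocks with at least three vertices are the classes of the relation on edges "lying on a common cycle", which is an equivalence relation on non-bridge edges.

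Both graphs have the same vertex set, and by Step 2 they have the same set of cycles. Therefore this equivalence relation is identical in $G'$ and $G''$: the non-bridge edges of $G''$ are precisely the non-bridge edges of $G'$, with the same classes. Each maximal $2$-connected subgraph with at least three vertices is the subgraph formed by one such class. So the two families of blocks coincide.

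Alternatively, one can argue directly. If $B$ is a maximal $2$-connected subgraph of $G''$ with at least three vertices, then $B\subseteq G'$ and $B$ is $2$-connected in $G'$. It is maximal in $G'$, since any larger $2$-connected subgraph of $G'$ is also a subgraph of $G''$. Conversely, let $B$ be maximal in $G'$ and extend it to a maximal $2$-connected subgraph $B'$ of $G''$. Then $B'\subseteq G'$ as shown above, so $B'=B$ by maximality in $G'$.

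\textbf{Main obstacle.} The only real point is Step 1, the observation that minimality forces each added edge to be a bridge. Note that the hypothesis that $X$-incident edges were deleted is not even needed. The remaining steps are routine bookkeeping about maximality, and the direct argument above handles that cleanly.
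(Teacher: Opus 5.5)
Your proof is correct and follows the paper's argument. Minimality of $F$ forces every edge of $F$ to be a bridge of $G''$, so no $2$-connected subgraph with at least three vertices contains an edge of $F$, and your direct maximality argument at the end of Step~3 is exactly how the paper concludes. One remark: the claim opening Step~3, that a graph with at least three vertices is $2$-connected if and only if it is connected and every edge lies on a cycle, is false (two triangles sharing a vertex), but nothing in your argument actually relies on it.
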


\begin{proof}
By the inclusion-wise minimality of~$F$, every edge of~$F$ is a bridge of~$G''$.
Every edge of a $2$-connected graph with at least three vertices lies on a cycle, so no such subgraph of~$G''$ contains an edge of~$F$.
Thus, $G'$ and~$G''$ have exactly the same $2$-connected subgraphs with at least three vertices, and hence the same maximal ones.
\end{proof}

We next show that every $2$-connected subgraph of $G_2$ with at least three vertices is contained in a single margin. 
Each graph $G_2[\mrg(x)]$ is connected by the construction of~$F_x$. The only edges between distinct margins are the edges $\alpha(x)\beta(x)$ for non-root nodes~$x$, each joining $\mrg(x)$ to $\mrg(\parent(x))$. 
Contracting all margins therefore gives exactly~$T$. 
It follows that every cross-margin edge is a bridge of~$G_2$, so no $2$-connected subgraph with at least three vertices can use such an edge.

Now fix $x\in V(T)$.
We apply \zcref{lem:preserve-blocks} to the connected graph $G_1[\mrg(x)]$, with $X=A_x\cap\mrg(x)$, $G'=G_0[\mrg(x)]$, and $F=F_x$.
The graph $G_0[\mrg(x)]$ is obtained from $G_1[\mrg(x)]$ by deleting precisely the edges with at least one endpoint in $A_x\cap\mrg(x)$.
Moreover, $F_x$ is inclusion-wise minimal subject to $G_0[\mrg(x)]+F_x$ being connected, since every edge of $F_x$ is a bridge of this graph.
As $G_2[\mrg(x)]=G_0[\mrg(x)]+F_x$, \zcref{lem:preserve-blocks} yields that the maximal $2$-connected subgraphs of $G_2[\mrg(x)]$ with at least three vertices are exactly the maximal $2$-connected subgraphs of $G_0[\mrg(x)]$ with at least three vertices.

Every vertex of $A_x\cap\mrg(x)$ is isolated in $G_0[\mrg(x)]$.
Therefore, every maximal $2$-connected subgraph of $G_2[\mrg(x)]$ with at least
three vertices is contained in $G_1[\mrg(x)\setminus A_x]\subseteq P_x$,
for which Step~1 supplied
an explicit no-apex $k$-almost-embedding. Since every such block of $G_2$ is
contained in a single margin, \zcref{cor:bloc} yields the following corollary.

\begin{corollary}\label{crl:G_2-excludes-apex}
The graph $G_2$ excludes some apex graph $\Gamma_{h(G)}$ as a minor.
\end{corollary}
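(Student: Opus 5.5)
The plan is to apply \zcref{cor:bloc} to the singleton class $\Cc=\{G_2\}$, or more precisely to the class of all graphs $G_2$ produced from inputs $G$ with a fixed value of $h(G)$, with parameter $k=f(h(G)+1)$ from \zcref{thm:rs-structure}. The apex graph obtained then depends only on $k$, hence only on $h(G)$, and we call it $\Gamma_{h(G)}$. So it suffices to verify the hypothesis of \zcref{cor:bloc}: every maximal $2$-connected subgraph $B$ of $G_2$ with at least three vertices must be a subgraph of a no-apex $k$-almost embeddable graph.

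The argument has two steps, both essentially established in the preceding discussion.

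First, we localize $B$ to a single margin. Every edge of $G_2$ joining distinct margins is one of the edges $\alpha(x)\beta(x)$. Contracting each (connected) set $G_2[\mrg(x)]$ yields exactly $T$, so each such edge is a bridge of $G_2$. A $2$-connected subgraph on at least three vertices has every edge on a cycle, so it contains no bridge. Since $B$ is connected, it follows that $V(B)\subseteq\mrg(x)$ for a single node $x$, and $B$ is a maximal $2$-connected subgraph of $G_2[\mrg(x)]$.

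Second, we remove the apices. By \zcref{lem:preserve-blocks}, applied to $G_1[\mrg(x)]$ with $X=A_x\cap\mrg(x)$ and $F=F_x$, the block $B$ is a block of $G_0[\mrg(x)]$. The apex vertices in $A_x\cap\mrg(x)$ are isolated in $G_0[\mrg(x)]$, so they do not lie in $B$. Hence
\[
B\subseteq G_1[\mrg(x)\setminus A_x]\subseteq G_1[\bag(x)]-A_x\subseteq P_x ,
\]
and $P_x$ comes with an explicit no-apex $k$-almost-embedding in a surface of Euler genus at most $k$. This is precisely the hypothesis of \zcref{cor:bloc}.

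The main point needing care is the applicability of \zcref{lem:preserve-blocks}: it requires $F_x$ to be inclusion-minimal with $G_0[\mrg(x)]+F_x$ connected, and it requires $G_1[\mrg(x)]$ to be connected. The latter follows from neatness. For the former, every edge of $F_x$ is a bridge, because $F_x$ projects to a spanning tree of $Q_x$. A second point is uniformity: \zcref{cor:bloc} yields an excluded apex graph depending only on $k$, not on the particular $G_2$, which is what justifies the notation $\Gamma_{h(G)}$.
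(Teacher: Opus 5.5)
Your proposal is correct and follows the paper's argument step for step. The cross-margin edges $\alpha(x)\beta(x)$ are bridges, so every block with at least three vertices lies in a single margin. \zcref{lem:preserve-blocks}, together with the isolation of the local apices in $G_0[\mrg(x)]$, then places each such block inside $G_1[\mrg(x)\setminus A_x]\subseteq P_x$. Finally, \zcref{cor:bloc} is applied to the class of all graphs $G_2$ arising for a fixed value of $h(G)$, not to a singleton class, which is exactly the uniformity point the paper also makes.
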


\begin{proof}
As shown above, every maximal $2$-connected subgraph of $G_2$ with at least
three vertices is contained in a graph with a no-apex $k$-almost-embedding.
The remaining blocks consist of single vertices or single edges.
Thus, \zcref{cor:bloc} applies to the class of all graphs $G_2$ obtained for
a fixed value of~$h(G)$, and the assertion follows.
\end{proof}

We now define the rooted spanning tree $Y_2$.  For every $x\in V(T)$, put
\[
   Y_x^0:=Y_x\cap G_0[\mrg(x)].
\]
If $C$ is a connected component of $G_0[\mrg(x)]$, then $Y_x^0[C]$ is a
forest.  Extend it to a spanning tree of~$C$, and let $F_{Y,x}$ be the set of
edges added in all these extensions.  We call them the \emph{spanning-tree
connecting edges}.  The tree $Y_2$ consists of all the resulting component
trees, all edges in $F_x$ for every $x\in V(T)$, and all edges
$\alpha(x)\beta(x)$ for non-root nodes $x\in V(T)$.  We root $Y_2$ at
$\beta(x_0)$, where $x_0$ is the root of~$T$.

\begin{lemma}
    $Y_2$ is a rooted spanning tree of $G_2$.
\end{lemma}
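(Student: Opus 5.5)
We must show three things: every edge of $Y_2$ is an edge of $G_2$, $Y_2$ is acyclic, and $Y_2$ is connected and spans $V(G_2)=V(G)$. The argument is local inside each margin and then global along the decomposition tree $T$, which mirrors how $G_2$ itself was built.

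\textbf{Edges of $Y_2$ lie in $G_2$.} The edges of $Y_x^0=Y_x\cap G_0[\mrg(x)]$ lie in $G_0\subseteq G_2$ by definition. The spanning-tree connecting edges $F_{Y,x}$ are chosen inside components $C$ of $G_0[\mrg(x)]$, so they are edges of $G_0$. The edges in $F_x$ are apex connecting edges, which were added to form $G_2$. The edges $\alpha(x)\beta(x)$ were explicitly kept in $G_0$.

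\textbf{Structure inside one margin.} Fix $x\in V(T)$. For each component $C$ of $G_0[\mrg(x)]$, the graph $Y_x^0[C]$ is a forest, being a subgraph of the trace forest $Y_x$. By construction it is extended to a spanning tree $Y_C$ of $C$. The union $\bigcup_C Y_C$ is therefore a spanning forest of $G_0[\mrg(x)]$ whose components are exactly the components of $G_0[\mrg(x)]$.

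Contracting these components yields the quotient $Q_x$. The edges of $F_x$ map bijectively onto the edges of a spanning tree of $Q_x$. Hence $\bigcup_C Y_C\cup F_x$ is a spanning tree of $G_2[\mrg(x)]$. Connectivity follows because the quotient is a connected tree. Acyclicity follows because a cycle would either lie in some $Y_C$, which is impossible, or project to a closed walk in the spanning tree of $Q_x$ that uses some $F_x$-edge exactly once, which is impossible since $F_x$-edges are bridges. Equivalently, $|\mrg(x)|-1$ edges form a connected spanning subgraph.

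One point needs care: $F_x$ edges must not coincide with edges already in $Y_C$. This holds because $F_x\subseteq E(G_1[\mrg(x)])\setminus E(G_0[\mrg(x)])$, while $Y_C\subseteq G_0$.

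\textbf{Global structure.} Contract each margin's spanning tree to a single vertex. By \zcref{lem:margins-partition} the margins partition $V(G)$. The remaining edges $\alpha(x)\beta(x)$, one for each non-root node $x$, join $\mrg(x)$ to $\mrg(\parent(x))$, so the contracted graph is exactly $T$.

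A graph obtained by joining vertex-disjoint trees along the edges of a tree pattern is itself a tree. It is connected because $T$ is connected. It is acyclic by the edge count: $\sum_x(|\mrg(x)|-1)+(|V(T)|-1)=|V(G)|-1$. The rooting at $\beta(x_0)$ is just a choice of root.

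\textbf{Main obstacle.} The only delicate step is the within-margin acyclicity, which rests on the $F_x$-edges being disjoint from the component trees and forming a spanning tree of $Q_x$. I would settle it with the same edge-counting argument: $\sum_C(|C|-1)+(\#\text{components}-1)=|\mrg(x)|-1$, combined with connectivity.
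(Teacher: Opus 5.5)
Your proof is correct and follows the same route as the paper. Inside each margin, the component trees together with the $F_x$-edges lifting a spanning tree of $Q_x$ form a spanning tree of $\mrg(x)$, and contracting these margin trees leaves only the edges $\alpha(x)\beta(x)$, which reproduce $T$. Your explicit check that $E(Y_2)\subseteq E(G_2)$, the observation $F_x\cap E(G_0)=\emptyset$, and the edge-counting argument for acyclicity are sound additions that spell out what the paper's quotient argument leaves implicit.
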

\begin{proof}
Fix $x\in V(T)$.  Inside each connected component of
$G_0[\mrg(x)]$, we chose a spanning tree.  By the construction of~$F_x$, the
graph obtained from $F_x$ after contracting these components is a tree.
Consequently, $Y_2[\mrg(x)]$ is connected and acyclic, and hence is a spanning
tree of the margin.

Finally, contract every tree $Y_2[\mrg(x)]$ to a single vertex.  The remaining
edges are precisely the edges $\alpha(x)\beta(x)$ connecting every non-root
margin to its parent margin, so the resulting quotient is~$T$.  It follows
that $Y_2$ is connected and that no cycle can use edges between distinct
margins.  Since there is no cycle inside a margin either, $Y_2$ is a rooted
spanning tree of~$G_2$.
\end{proof}

\paragraph{Step 3}

Every block of $G_2$ with at least three vertices is contained in a
graph~$P_x$ supplied in Step~1 with an explicit no-apex $k$-almost-embedding. The other
blocks have trivial such witnesses. Applying
\zcref{thm:milolidakis} to these witnesses computes a connected
subcubic graph~$H$ and a surjective minor-model map
$f\colon V(H)\to V(G_2)$ such that
$h(H)\leq\xi_c(h(G))$ for a computable function~$\xi_c$.
This computation takes time $f_1(h(G))\cdot |G|^{c_1}$ for a computable
function~$f_1$ and an absolute constant~$c_1$.

We now delete every edge $ab\in E(H)$ with $f(a)\neq f(b)$ and $f(a)f(b)\notin E(G_2)$.
The graph $H$ remains connected and subcubic, its Hadwiger number does not increase, and $f$ remains a surjective minor-model map.
Contracting the branch sets now gives exactly~$G_2$.
We now construct a spanning tree $Y_H$ of $H$ as follows: 
For every $v\in V(G_2)$, choose a spanning tree of $H[f^{-1}(v)]$. 
For every edge $uv\in E(Y_2)$, add one edge of~$H$ between $f^{-1}(u)$ and $f^{-1}(v)$. 
Their union $Y_H$ spans~$H$, because the branch sets partition $V(H)$. 
Contracting the chosen branch-set trees turns $Y_H$ into~$Y_2$. 
Hence, $Y_H$ is connected and acyclic, and is a spanning tree of~$H$.

\paragraph{Step 4}
We compute the expansion $(H,\preceq_{Y_H})$ of $H$ by the tree order $\preceq_{Y_H}$ defined by $Y_H$,
rooted at an arbitrary vertex of $f^{-1}(\beta(x_0))$, where $x_0$ is the root
of~$T$.  Contracting this branch set sends the root to~$\beta(x_0)$, which is
the root of~$Y_2$, hence the two root choices are compatible.

\paragraph{Step 5}

A linear ordering $<_H$ of $V(H)$ is computed from a DFS-traversal of $Y_H$.
This linear order can be obtained as a transduction of $(H,\preceq_{Y_H})$, as was stated without a proof in~\cite{buffiere2026characterizations}.
We give the proof for completeness. 

\begin{lemma}[\cite{buffiere2026characterizations}]
\label{lem:bitransducible-tree-linear}
    Let $\Cc$ be a class of connected graphs of maximum degree at most a
    fixed integer~$D$.  Then, for every $G\in\Cc$ and every rooted spanning
    tree~$Y$ of~$G$, every DFS preorder of~$Y$ can be transduced from
    $(G,\preceq_Y)$.
\end{lemma}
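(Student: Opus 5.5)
The plan is to reduce the DFS preorder to an ordering of the children of every vertex, and to encode that sibling ordering with boundedly many unary colours by exploiting the degree bound. A DFS preorder $<$ of the rooted tree $Y$ is fixed once, for every vertex $v$, the children of $v$ in $Y$ are linearly ordered (the order in which the traversal visits them). Given such a sibling order, the preorder is first-order definable from $\preceq_Y$. For distinct $x,y$:
\[
x<y \iff x\prec_Y y \ \vee\ \bigl(x\parallel y \wedge c_x \lhd c_y\bigr),
\]
where $a=x\curlywedge y$, $c_x$ (resp.\ $c_y$) is the unique child of $a$ with $c_x\preceq_Y x$ (resp.\ $c_y\preceq_Y y$), and $\lhd$ is the sibling order. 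The infimum $x\curlywedge y$ is first-order definable in a tree order, and "child of $a$ below $x$" is expressed with the cover relation $\pred$. So it suffices to produce $\lhd$ by a transduction.

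Since $Y$ is a spanning tree of $G$, every cover pair of $\preceq_Y$ is an edge of $G$, so every vertex has at most $D$ children. I would guess unary predicates $C_1,\dots,C_D$, placing each non-root vertex $w$ in $C_i$ exactly when $w$ is the $i$-th child of its parent in the given DFS order. The root receives an arbitrary colour. This is a legitimate colouring, since each vertex has a unique parent and so receives exactly one colour. Then for siblings $u,w$ (that is, $\pi(u)=\pi(w)$, definable via $\pred$) we set $u\lhd w$ iff $u\in C_i$ and $w\in C_j$ for some $i<j$, which is a finite disjunction over $i<j\le D$. Plugging this into the formula above gives a non-copying, domain-preserving transduction with $D$ colours.

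Finally, I would check that the defined relation really is the given preorder. In a DFS preorder, an ancestor precedes its descendants. For incomparable $x,y$, their relative order is the relative order of the two branches at their least common ancestor, and that is the order in which the traversal visits those children. Both facts are standard.

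I do not expect a genuine obstacle. The only point to watch is that bounded degree is used solely to keep the number of colours finite, and the transduction depends only on $D$. Note that $E$ is not actually used beyond guaranteeing this degree bound on the cover graph.
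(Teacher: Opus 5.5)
Your proposal is correct and follows essentially the same route as the paper: colour the $i$-th child of every vertex by $C_i$ (at most $D$ colours, by the degree bound), put ancestors before descendants, and order incomparable $x,y$ by comparing the colours of the children of $x\curlywedge y$ that lead to them. All of this is first-order definable from $\preceq_Y$ and the colours.
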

\begin{proof}
Let $G\in\Cc$, let $Y$ be a rooted spanning tree of~$G$, and fix a DFS preorder of~$Y$.
Every vertex has at most $D$ children.
Using unary predicates $C_1,\dots,C_D$, color the $i$th child of each vertex by~$C_i$, where the children are ordered according to the prescribed DFS preorder.
These colors can be computed in linear time from the corresponding ordered tree.
For distinct vertices $x,y$ that are comparable in~$\preceq_Y$, put the ancestor first.
Otherwise, let $z=x\curlywedge y$ and compare the two children of~$z$ leading to~$x$ and~$y$ by their colors.
The least common ancestor, these children, and their color comparison are first-order definable from the tree order and the unary predicates.
This defines exactly the prescribed DFS preorder.
\end{proof}

All computations in Steps~1-5 take time $f_2(h(G))\cdot |G|^{c_2}$ for a computable function~$f_2$ and an absolute constant~$c_2$.
Indeed, Steps~1 and~3 have bounds of this form, while the remaining computations are polynomial in the sizes of their inputs with an absolute exponent.
In particular, the size of~$H$ is bounded by the running time of Step~3, and its tree order and DFS preorder can be computed in quadratic time in~$|H|$.
We are now ready to conclude Phase I of the proof of~\zcref{thm-aux-structure}.

\begin{corollary}
    There is a computable function~$d$ such that the class
    \[\mathscr M_t = \{ (H,\preceq_{Y_H},<_H) \mid G\in \mathscr H_t, \text{$Y$~is a rooted spanning tree of $G$}\}\]
    has twin-width at most~$d(t)$.
\end{corollary}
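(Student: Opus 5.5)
The plan is to bound the twin-width of $(H,\preceq_{Y_H})$ computably by \zcref{lem:effective-bounded-degree-covergraph-original}, and then to add the DFS preorder $<_H$ through the fixed transduction of \zcref{lem:bitransducible-tree-linear}, controlling the increase with \zcref{lem:tww-preserved}.

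First, we check the degree hypothesis. The graph $H$ is connected and subcubic, and $Y_H$ is a spanning tree of $H$. Hence the cover graph of $\preceq_{Y_H}$, which is the undirected tree $Y_H$, has maximum degree at most $D=3$.

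Second, we need a computable $r=r(t)$ such that the $E$-reduct of every tree-ordered minor of $(H,\preceq_{Y_H})$ excludes $K_{r,r}$ as a subgraph. A tree-ordered minor is built in three stages: we take an induced substructure containing the root, then perform $\prec$-contractions, then delete edges. The key point is that an elementary $\prec$-contraction merges $v$ with its parent $u$ in $\preceq$ restricted to the kept set $W$. In the original tree, the path in $Y_H$ from $u$ to $v$ has its internal vertices outside $W$.

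To handle this, we assign each kept vertex $w$ the segment of the $Y_H$-path from $w$ up to, but excluding, its nearest kept strict ancestor. The root takes only itself. These segments are pairwise disjoint connected subsets of $H$, since they are subpaths of $Y_H$. Each contracted vertex of the tree-ordered minor then corresponds to a union of segments along a $\preceq$-connected set, and this union is connected in $Y_H\subseteq H$.

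Two caveats apply. Some vertices of $H$ may be unassigned, because they lie below kept vertices in discarded subtrees; this is harmless. The adjacency in the resulting graph is adjacency among the kept vertices, which are members of these branch sets. Therefore the $E$-reduct of any tree-ordered minor is a subgraph of a minor of $H$, and so has Hadwiger number at most $h(H)\le\xi_c(t)$.

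Since $K_{r,r}$ contains a $K_{r'}$ minor for $r'$ linear in $r$, taking $r:=\xi_c(t)+2$ gives the required exclusion, with room to spare. By \zcref{lem:effective-bounded-degree-covergraph-original}, we obtain $\operatorname{tww}(H,\preceq_{Y_H})\le\beta(3,r(t))$.

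Third, \zcref{lem:bitransducible-tree-linear} applies with $D=3$ and yields a single fixed transduction $\mathsf L$, independent of $t$, producing $(H,\preceq_{Y_H},<_H)$ from $(H,\preceq_{Y_H})$. The transduction keeps $E$ and $\preceq$ and defines $<_H$ from the colors $C_1,C_2,C_3$. Applying \zcref{lem:tww-preserved} to $\mathsf L$ gives $\operatorname{tww}(H,\preceq_{Y_H},<_H)\le g(\beta(3,r(t)))=:d(t)$, which is computable.

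The main obstacle is the second step. The delicate point is verifying that $\prec$-contractions in an induced root-containing substructure really correspond to contractions of connected subsets of $H$, given that the kept set need not be closed under taking ancestors. The segment assignment above is designed to handle exactly this, and I would check it carefully against the definition of the parent in the restricted order.
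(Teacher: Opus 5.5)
Your overall route is the paper's. The cover graph has degree at most $3$, and the $E$-reducts of tree-ordered minors of $(H,\preceq_{Y_H})$ are minors of $H$, so they exclude $K_{r,r}$ for a computable $r$. The effective bounded-degree lemma then gives $\beta(3,r)$, and the fixed DFS transduction together with preservation of twin-width gives $d(t)$.

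The paper only asserts the minor claim, and your justification of it contains a false step: the segments are not pairwise disjoint. Take $Y_H$ with root $p$, a child $z$ of $p$, and two children $w_1,w_2$ of $z$, and keep $W=\{p,w_1,w_2\}$. Then $z$ lies in the segment of $w_1$ and in the segment of $w_2$. If no contraction is done, or only $w_1$ is contracted into $p$, your branch sets for the class containing $w_1$ and for $\{w_2\}$ both contain $z$. In general this happens whenever a discarded branching vertex lies below the common nearest kept ancestor of two kept vertices that end up in different classes.

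The repair is to drop the upward segment of the top element of each class.

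\textbf{Convexity of classes.} By induction on the number of elementary $\prec$-contractions, every class $C\subseteq W$ has representative $c=\min C$ and is convex: if $c\preceq y\preceq a$ with $a\in C$ and $y\in W$, then $y\in C$. Indeed, a kept vertex strictly between $v$ and its current parent $u$ is not a representative, and convexity of the existing classes places it in the class of $u$.

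\textbf{Branch sets.} Put $B_C:=\{z: c\preceq_{Y_H}z\preceq_{Y_H}a \text{ for some } a\in C\}$. This is a union of $Y_H$-paths starting at $c$, hence connected. Suppose $z\in B_{C_1}\cap B_{C_2}$. The tops $c_1,c_2$ are both ancestors of $z$, hence comparable, say $c_1\prec c_2$. Then $c_1\prec c_2\preceq z\preceq a$ for some $a\in C_1$, so convexity forces $c_2\in C_1$, a contradiction.

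\textbf{Conclusion.} Every edge of the tree-ordered minor comes from an edge of $H[W]$ between two classes. So the $E$-reduct is a subgraph of the minor of $H$ with branch sets $B_C$, and the rest of your argument goes through.

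Two small points. First, make $\xi_c$ nondecreasing, or take a maximum, to pass from $h(H)\le\xi_c(h(G))$ to $h(H)\le\xi_c(t)$. Second, ``linear in $r$'' does not by itself justify $r=\xi_c(t)+2$. Use the concrete fact that $K_{r,r}$ has a $K_{r+1}$ minor, obtained by contracting $r-1$ edges of a perfect matching; then any $r\ge\xi_c(t)$ works.
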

\begin{proof}
Fix~$t$ and let $(H,\preceq_{Y_H},<_H)\in\mathscr M_t$.
By Step~3, $H$ is subcubic and $h(H)\leq\xi_c(t)$, where we may take~$\xi_c$ to be nondecreasing.

Observe that the graph reduct of every tree-ordered minor of $(H,\preceq_{Y_H})$ is an ordinary minor of~$H$.
Consequently, all these graph reducts exclude $K_{r,r}$ as a subgraph for $r:=\xi_c(t)+1$.
As $Y_H$ has maximum degree at most~$3$, \zcref{lem:effective-bounded-degree-covergraph-original} gives
$\operatorname{tww}(H,\preceq_{Y_H})\leq\beta(3,r)$.
By \zcref{lem:bitransducible-tree-linear}, the order~$<_H$ is obtainable by a fixed transduction.
Applying \zcref{lem:tww-preserved} therefore yields a computable bound~$d(t)$ on $\operatorname{tww}(H,\preceq_{Y_H},<_H)$.
\end{proof}

\subsection{Phase II}

The first phase of constructing the class $\Mm_t$ with bounded twin-width was algorithmic: starting from $(G,Y)$, we constructed a subcubic graph $H$ with a rooted spanning tree $Y_H$ and a DFS order. 
The purpose of the second phase is to show that we can reverse this construction by transductions. 
Starting from $(H,\preceq_{Y_H},<_H)$, we successively recover the tree-ordered and linearly ordered expansion $(G,\preceq_Y,<_G)$.
Since $\Mm_t$ has bounded twin-width and transductions preserve bounded twin-width, this will imply that the class of all produced structures $(G,\preceq_Y,<_G)$ has bounded twin-width. 
Additionally, as we obtain a linearly ordered class of bounded twin-width, we can do efficient model checking. 
We continue with the first step of the second phase.

\paragraph{Step 6}

The minor model of $G_2$ in~$H$ consists of pairwise-disjoint connected branch sets $f^{-1}(v)$, one for each $v\in V(G_2)$, which partition~$V(H)$.
Contracting these branch sets yields precisely~$G_2$, since all edges between branch sets corresponding to nonadjacent vertices of~$G_2$ have already been deleted.

For $v\in V(G_2)$, let $\iota(v)$ be the unique $\preceq_{Y_H}$-minimum of $f^{-1}(v)$.
This minimum exists because the branch set induces a subtree of~$Y_H$.
Contracting each branch set to its representative $\iota(v)$ turns $Y_H$ into~$Y_2$, by the construction of $Y_H$, and yields a tree-ordered graph whose $E$-reduct is precisely~$G_2$.

We describe the required transduction.
Mark the representatives $\iota(v)$ by a unary predicate~$Q$ and put
\[
 \chi(a,u):=Q(a)\wedge a\preceq_{Y_H}u
 \wedge\neg\exists b\,
 \bigl(Q(b)\wedge a\prec_{Y_H}b\preceq_{Y_H}u\bigr).
\]
Because the branch sets partition~$V(H)$ and each induces a subtree of~$Y_H$, the formula $\chi(\iota(v),u)$ holds exactly when $u\in f^{-1}(v)$.
Use $Q$ as the output domain and interpret adjacency by
\[
 \rho_E(a,b):=a\neq b\wedge
 \exists u\exists v\,
 \bigl(\chi(a,u)\wedge\chi(b,v)\wedge E(u,v)\bigr).
\]
The tree order is simply restricted to~$Q$.
We inherit the linear order from the representatives, that is, we define the linear order $<_{G}$ of $V(G)$ (recall that $V(G)=V(G_2)$) so that for every two distinct vertices $u,v\in V(G)$,
\[
   u <_{G} v
   \quad\Longleftrightarrow\quad
   \iota(u)<_H\iota(v).
\]
Thus, $(G_2,\preceq_{Y_2},<_{G})$ is obtained from $(H,\preceq_{Y_H},<_H)$ by a fixed transduction.

Consequently, the class of all expansions $(G_2,\preceq_{Y_2},<_{G})$ obtained in this
way from pairs $(G,Y)$ with $G\in\mathscr H_t$ and $Y$ a rooted spanning
tree of~$G$ can be transduced from~$\mathscr M_t$.

\paragraph{Step 7}

In the next step, we make the tree decomposition visible inside a transduced \emph{master structure}, which we obtain by a copying transduction.  
It contains one copy of the vertices of $G_2$ and a second kind of elements representing the bags of the decomposition. 
The bag element corresponding to a node $x$ is represented by a copy of the distinguished vertex $\beta(x)$. 
After rewiring, the vertices of the margin $\mrg(x)$ become exactly the descendants of the bag element representing $x$. 
This is similar to the augmented trees introduced in~\cite{pillipczuk2021connectivity}.

Formally, the \emph{master structure} $\mathbf M=\dec(G_2,\preceq_{Y_2},<_{G},\mathcal T)$ defined by $(G_2,\preceq_{Y_2},<_{G})$ and the tree decomposition~$\mathcal T$ has signature
$
  \sigma_{\rm mas}=\{V,B,R,P_\beta,E,\preceq,\unlhd,<,\mathsf{rep}\}$.
Here $V,B,R,P_\beta$ are unary predicates, and $V,B,R$ partition the domain.
The set $R(\mathbf M)$ contains a single element denoted~$\omega$.
The relation $\preceq$ is a tree order on the whole domain with root~$\omega$, $<$ is a linear order on $V(\mathbf M)$, $\unlhd$ is a tree order on $B(\mathbf M)$, and $\mathsf{rep}$ is a binary representative relation.
The structure is defined by bijections $\zeta_V:V(G_2)\rightarrow V(\mathbf M)$ and $\zeta_B:V(T)\rightarrow B(\mathbf M)$ satisfying the following conditions.
    \begin{itemize}
        \item $\zeta_V$ defines an isomorphism of the ${\{E,<\}}$-reduct of 
    $(G_2,\preceq_{Y_2},<_{G})$ and the ${\{E,<\}}$-reduct of  $\mathbf M[V(\mathbf M)]$;
    \item the set of children of $\omega$ (with respect to $\preceq$) is $B(\mathbf M)$;
    \item for every node $x\in V(T)$, the restriction of $\zeta_V$ to $\mrg(x)$ defines an isomorphism of 
    $(\mrg(x),\preceq_{Y_2})$ and 
    $(\{v\colon v\succ \zeta_B(x)\},\preceq)$;
    \item the mapping $\zeta_B$ induces an isomorphism of $(T,\preceq_T)$ and
    $(B(\mathbf M),\unlhd)$.
    \item $P_\beta$ marks exactly the vertices $\zeta_V(\beta(x))$, and
    $\mathsf{rep}(\zeta_B(x),\zeta_V(\beta(x)))$ holds for every
    $x\in V(T)$ and for no other pair.
    \end{itemize}

From this definition, it follows that, for all $x\in V(T)$ and
$u,v\in V(G_2)$,
\begin{align*}
v\in\mrg(x)
&\quad\Longleftrightarrow\quad
\mathbf M\models \zeta_B(x)\prec \zeta_V(v),\\
b(u)=b(v)
&\quad\Longleftrightarrow\quad
\mathbf M\models
\zeta_V(u)\curlywedge\zeta_V(v)\neq\omega.
\end{align*}

\begin{lemma}
    There is a transduction $\mathsf T$ such that 
    $\dec(G_2,\preceq_{Y_2},<_{G},\mathcal T)\in\mathsf T((G_2,\preceq_{Y_2},<_{G}))$.
\end{lemma}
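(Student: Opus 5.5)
We build $\mathbf M$ from $(G_2,\preceq_{Y_2},<_G)$ in four stages: a $3$-copy operation, guessed colors, the rewiring transduction of \zcref{lem:rewire}, and one extra re-rooting step.

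\textbf{The three copies.}
\begin{itemize}
\item Copy~1 is the $V$-part. It carries $E$ and $<$ unchanged.
\item Copy~2 is the $B$-part. Guess a unary predicate marking the vertices $\beta(x)$ for $x\in V(T)$, and keep only their copies in $B$. The relation $\mathsf{rep}$ links the copy-2 element of $\beta(x)$ to the copy-1 element of $\beta(x)$ via $\simeq$. In copy~1 we mark the same vertices with $P_\beta$.
\item Copy~3 supplies $\omega$. Keep only the copy of the root $\beta(x_0)$ of $Y_2$.
\end{itemize}

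\textbf{Defining $\unlhd$ on $B$.} Recall how $Y_2$ was built. Each margin induces a subtree $Y_2[\mrg(x)]$. The only cross-margin edges of $Y_2$ are the edges $\alpha(x)\beta(x)$, and these join $\mrg(x)$ to $\mrg(\parent(x))$. We claim that $\beta(x)$ is the $\preceq_{Y_2}$-minimum of $\mrg(x)$ for every non-root $x$, and $\beta(x_0)$ is the root. This holds because contracting the margin subtrees yields $T$, rooted at the node of $\beta(x_0)$. It follows that $\preceq_{Y_2}$ restricted to the set of $\beta$-vertices is isomorphic to $(T,\preceq_T)$. So we define $\zeta_B(x)\unlhd\zeta_B(y)$ by $\beta(x)\preceq_{Y_2}\beta(y)$, evaluated on the $\simeq$-related copy-1 elements.

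\textbf{Membership in a margin.} A vertex $v$ lies in $\mrg(x)$ exactly when $\beta(x)$ is the $\preceq_{Y_2}$-maximal $P_\beta$-vertex below $v$. This is first-order definable.

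\textbf{Defining $\preceq$ on the whole domain.} We want the following.
\begin{itemize}
\item $\omega$ is below everything.
\item The elements of $B$ are pairwise incomparable, so they form an antichain of children of $\omega$.
\item $\zeta_B(x)\prec v$ for every $v\in\mrg(x)$.
\item Two $V$-elements are comparable exactly when they lie in the same margin and are $\preceq_{Y_2}$-comparable.
\end{itemize}
Each clause is definable from $\preceq_{Y_2}$, $\simeq$, the copy predicates, and the margin formula. We can either write these clauses directly, or view the result as a rewiring in the sense of \zcref{lem:rewire}: cut every $\beta(x)$ from its parent $\alpha(x)$ and hang it under a fresh node.

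\textbf{Checking the properties.} Inside a margin, the restriction of $\preceq_{Y_2}$ to $\mrg(x)$ coincides with the order of the subtree rooted at $\beta(x)$ with the foreign subtrees cut off. This gives the required isomorphism $(\mrg(x),\preceq_{Y_2})\cong(\{v: v\succ\zeta_B(x)\},\preceq)$. Finally, check that the universe formula selects exactly $V\cup B\cup\{\omega\}$, and that the predicates $V,B,R$ partition the domain.

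\textbf{Main obstacle.} The delicate point is the claim that $\beta(x)$ is the minimum of its margin in $\preceq_{Y_2}$. This depends on the root of $Y_2$ being $\beta(x_0)$ and on margins being $Y_2$-subtrees entered only through the edge $\alpha(x)\beta(x)$. Both facts follow from the construction of $Y_2$ in Step~2. If the claim fails for some ordering subtlety, the fallback is to first apply the re-rooting of \zcref{lem:reroot-arbitrary} inside each margin, with $N=\{\beta(x)\}$, before defining the remaining relations.
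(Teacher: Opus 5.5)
Your proposal is correct and follows essentially the paper's construction. You use the same three-copy layout with $V$, $B$, $R$, $P_\beta$, $\mathsf{rep}$, and $\unlhd$ taken as the restriction of $\preceq_{Y_2}$ to the $\beta$-vertices. The paper obtains the global order by first inserting each $B$-copy directly above its $V$-copy and then applying $\rew$, with $\omega$ marked $S$ and $A$ and all $B$-elements marked $R_{\rm old}$. You instead write the rewired order directly through the ``maximal $P_\beta$-ancestor'' formula, which is exactly the set $X_u$ in the proof of \zcref{lem:rewire}. Your explicit check that $\beta(x)$ is the $\preceq_{Y_2}$-minimum of $\mrg(x)$, with parent $\alpha(x)$, is the fact the paper uses implicitly, so the re-rooting fallback is never needed.
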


\begin{proof}
    We first apply a first transduction consisting of the composition of a ${\sf Copy}_3$ operation (with equivalence relation between the clones denoted by $\simeq$) and the transduction defined as follows:
\begin{itemize}
    \item The universe of the structure will consist of three parts: a part marked $V$ corresponding to the domain of $G_2$ (this will be the first copy), a part marked $B$ corresponding to the image of $\beta$ (in the second copy and in bijection with the domain of $T$), and a global root marked $R$ (third copy of the root of the original tree-order). Hence, the domain is defined by the formula 
    \[
    \nu(x):=V(x)\vee B(x)\vee R(x).
    \]
    \item The relation $E$ is restricted to $V$:
    \[
    \rho_E(x,y):=V(x)\wedge V(y)\wedge E(x,y).
    \]
    \item The linear order $<$ is restricted to $V$:
    \[
    \rho_<(x,y):=V(x)\wedge V(y)\wedge (x<y).
    \]   
    \item The representative marker and relation are interpreted by
    \[
      \rho_{P_\beta}(x):=V(x)\wedge
         \exists y\,(B(y)\wedge x\simeq y)
    \]
    and
    \[
      \rho_{\mathsf{rep}}(x,y):=
         B(x)\wedge V(y)\wedge x\simeq y.
    \]
    \item A tree order on bags is defined as the restriction of the tree order $\preceq$ to the image of $\beta$ (which is essentially the set marked $B$):
    \[
    \rho_{\unlhd}(x,y):=B(x)\wedge B(y)\wedge (x\preceq y).
    \]
    \item Finally, a global tree order is defined as follows: the vertex marked $R$ is the root (hence smaller than all other vertices), and the copy marked $B$ of a vertex is introduced as a predecessor of its copy marked $V$ in the tree order inherited on $V$:
    \[
    \rho_\preceq(x,y)\is R(x)\vee (x=y)
      \vee \bigl(B(x)\wedge V(y)\wedge x\simeq y\bigr)
      \vee \exists x'\exists y'\,
        \bigl(x'\simeq x\wedge y'\simeq y\wedge x'\prec y'\bigr).
    \]
\end{itemize}
    The mapping $\zeta_V$ maps $v$ to its first copy, $\zeta_B$ maps $x$ to the second copy of~$\beta(x)$ and $\omega$ is the only vertex marked~$R$.

    Then, we apply the transduction $\rew$ from \zcref{lem:rewire}.
    Marking the root both $S$ and $A$ and marking all the vertices marked $B$ by $R_{\rm old}$, the rewiring detaches all the subtrees rooted at a vertex marked $B$ and moves them as  children of the root. This witnesses that our transduction produces the master structure.

    The whole process is displayed on \zcref{fig:ex}. 
    \end{proof}

\begin{figure}[H]
    \centering
    \includegraphics[width=\columnwidth]{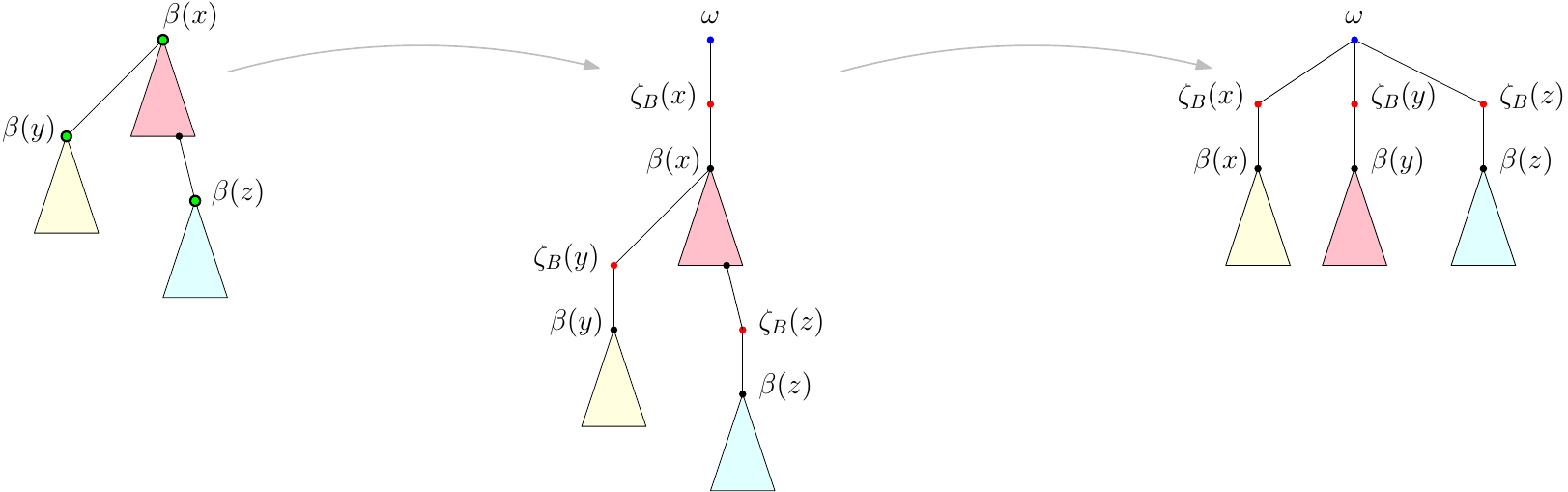}
    \caption{The master structure as a transduction.
    In the drawing, the graph $G_2$ is decomposed into three bags, $x$ (in pink), $y$ (in yellow), and $z$ (in blue), with respective minimum vertices $\beta(x),\beta(y)$, and $\beta(z)$. These minimum vertices are duplicated as $\zeta_B(x),\zeta_B(y)$, and $\zeta_B(z)$ and these new vertices are moved as direct children of a super-root $\omega$.
    }
    \label{fig:ex}
\end{figure}

\paragraph{Step 8}
At this point, the master structure still contains the simplified graph $G_2$
and the simplified tree order~$Y_2$.  We now undo the local simplifications
inside each margin.  We restore the edges of~$G_1$ incident with the
boundedly many apices and replace the auxiliary tree on each margin by the
part of the trace forest that is carried by actual edges of~$G_1$.

\begin{figure}[h!t]
\begin{center}
    \includegraphics[width=\textwidth]{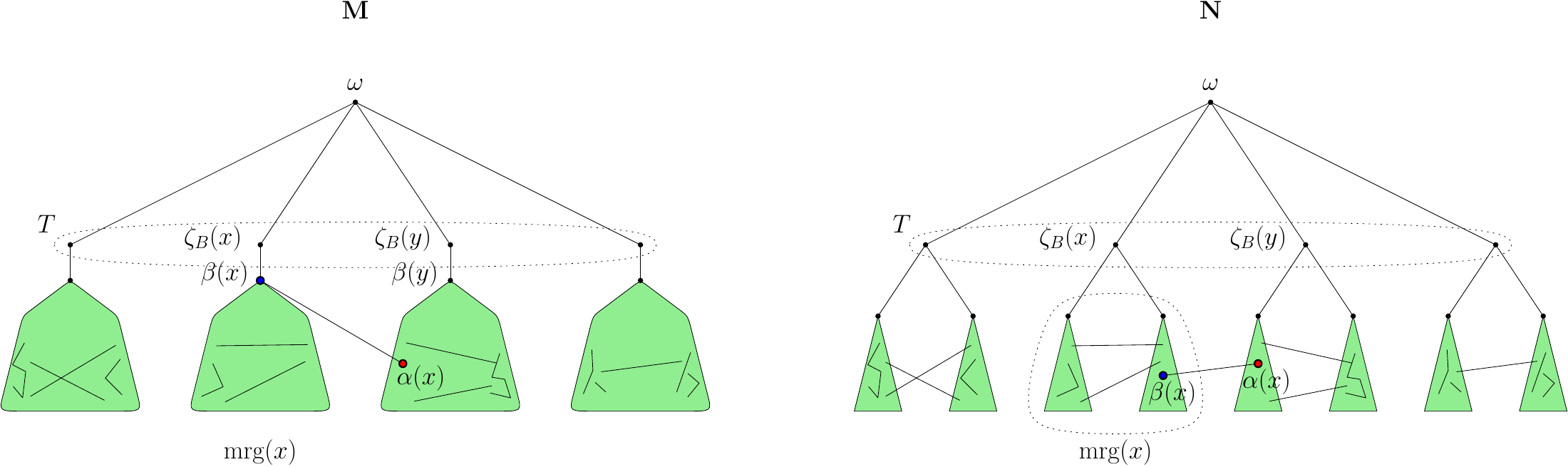}
\end{center}
\caption{In $\mathbf M$ (left), the first level under the root $\omega$ contains a copy
of the tree-decomposition order and, under each node $x$, the subtree rooted
at $\beta(x)$ contains the vertices of $\mrg(x)$.  In $\mathbf N$ (right),
the order on each margin is the local trace order induced by
$Y_x^{\mathrm{loc}}$, and the edge relation inside each margin is the one
induced by~$G_1$.
}
\label{fig:master}
\end{figure}

The structure $\mathbf N$ has the same signature and domain as~$\mathbf M$.
The predicates $V,B,R,P_\beta$, the relations $\unlhd,<,\mathsf{rep}$, and the
root~$\omega$ are unchanged. We next explain how $E$ and $\preceq$ change. For every $x\in V(T)$, put
\[
   Y_x^{\mathrm{loc}}:=Y_x\cap G_1[\mrg(x)]
\]
and let $\preceq_x^{\mathrm{loc}}$ be the forest-order induced by this
forest.  Inside $\mrg(x)$, the relation $\preceq$ of $\mathbf N$ is
$\preceq_x^{\mathrm{loc}}$, and every component root of this forest is a
child of the bag element $\zeta_B(x)$.  The order on the bag elements and
the order between distinct margin universes are retained from~$\mathbf M$.
For vertices $u,v$ marked~$V$, set
\[
\mathbf N\models E(u,v)
\quad\Longleftrightarrow\quad
\begin{cases}
   uv\in E(G_1),&\text{if }b(u)=b(v),\\
   \{u,v\}=\{\alpha(x),\beta(x)\}
      \text{ for some non-root }x\in V(T),&\text{if }b(u)\neq b(v).
\end{cases}
\]
Thus, the restriction of $\mathbf N$ to a margin carries the local trace order
and the edge relation induced by~$G_1$.  Between distinct margins, the only
graph edges retained at this stage are the distinguished attachment edges
$\alpha(x)\beta(x)$.

\begin{lemma}
There is a transduction $\mathsf T$ such that
$\mathbf N\in\mathsf T(\mathbf M)$.
\end{lemma}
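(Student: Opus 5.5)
We build $\mathsf T$ as a composition of two non-copying, domain-preserving transductions. The first restores the edge relation, the second replaces the tree order inside each margin. By \zcref{fact:expand}, every relation not being modified ($V,B,R,P_\beta,\unlhd,<,\mathsf{rep}$) is carried over unchanged.

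\emph{Edges.} In $\mathbf M$, a pair $u,v$ with $b(u)=b(v)$ is detected by $u\curlywedge v\neq\omega$. Between margins the $E$-relation of $\mathbf M$ is already the set of attachment edges $\alpha(x)\beta(x)$, so $\rho_E$ keeps $E$ on such pairs. Inside a margin, $E(G_1[\mrg(x)])$ differs from $E(G_2[\mrg(x)])$ in two ways. First, it contains the edges incident with $A_x\cap\mrg(x)$. Second, $G_2$ contains no edges outside $G_1$, so no edges need deleting; we only add edges.

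The added edges are incident with at most $k$ apices per margin. We handle them by guessing unary predicates $\mathsf{Ap}_1,\dots,\mathsf{Ap}_k$, where $\mathsf{Ap}_i$ marks the $i$th apex of each margin (chosen in a fixed order). We also guess predicates $N_1,\dots,N_k$, where $N_i$ marks every vertex $w$ that is adjacent in $G_1$ to the $i$th apex of its own margin. Then $E(u,v)$ holds in $\mathbf N$ if
\[
u\curlywedge v\neq\omega \wedge \bigvee_i\bigl(\mathsf{Ap}_i(u)\wedge N_i(v)\bigr)\vee\bigl(\mathsf{Ap}_i(v)\wedge N_i(u)\bigr),
\]
or if $E(u,v)$ already holds in $\mathbf M$. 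This is first-order, and since margins are the subtrees below distinct children of $\omega$, ``the $i$th apex of the margin of $v$'' is well defined. (Alternatively, one could invoke \zcref{lem:structure-from-gaif}, but the apex-marking is simpler and fully explicit.)

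\emph{Tree order.} Inside $\mrg(x)$ we must pass from $Y_2[\mrg(x)]$ to the forest $Y_x^{\mathrm{loc}}$, whose components become children of $\zeta_B(x)$. Both are forests on $\mrg(x)$ built from edges of $G_1[\mrg(x)]$, and $G_1$ has bounded degeneracy via the orientation $\vec G_1$ with arc relations $E_1,\dots,E_d$ from Step~1, each of in-degree at most one. The cleanest route is therefore to define $\preceq^{\mathrm{loc}}_x$ directly rather than by rewiring.

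Concretely, we guess unary predicates encoding, for each vertex $w\in\mrg(x)$, its parent in $Y_x^{\mathrm{loc}}$. Since the parent edge is an edge of $G_1$ inside the margin, it is an arc of some $E_i$ in one of the two directions. If it points into $w$, mark $w$ with ``in-$i$''; the parent is then the unique $E_i$-in-neighbour of $w$. If it points out of $w$, we mark $w$ with ``out-$i$'' and additionally mark the parent with a predicate identifying which out-arc to take. Out-degree is unbounded, so we cannot tag the child from the parent side. The fix is to attach the tag to the child: we require the $E_i$-arc from $w$ to its parent $p$ to be the unique $E_i$-in-arc of $p$, which makes $p$ the unique vertex with $E_i(w,p)$ among those $p$ whose in-$E_i$ neighbour is $w$. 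That $p$ is definable.

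The relations $E_i$ of $\vec G_1$ are not themselves in $\mathbf M$. We would first transduce them together with the apex edges using \zcref{lem:structure-from-gaif}, applied to the bounded-Hadwiger graph $G_1$ restricted to margins. This raises a subtlety: $\mathbf M$ only carries $G_2$'s edges, so the lemma would have to be applied to the graph $E$ of the first-step output, which is $G_1$ inside margins.

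Once the parent function $\pi^{\mathrm{loc}}$ on $\mrg(x)$ is defined by a formula, we still need its reflexive–transitive closure $\preceq^{\mathrm{loc}}_x$, which is not first-order in general. This is the \textbf{main obstacle}.

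My first attempt would avoid closure entirely by noting that $Y_x^{\mathrm{loc}}$ is obtained from $Y_2[\mrg(x)]$ by boundedly structured operations. $Y_x^0=Y_x\cap G_0[\mrg(x)]$ is a subforest of both. $Y_2[\mrg(x)]$ adds the spanning-tree connecting edges and $F_x$, while $Y_x^{\mathrm{loc}}$ instead adds the trace edges incident with apices, at most $k$ apex stars. So the plan is as follows.
\begin{enumerate}
\item Delete the cover edges in $F_{Y,x}\cup F_x$ by marking their lower endpoints.
\item Re-root the resulting pieces using \zcref{lem:reroot-arbitrary} so that each piece is rooted at its $Y_x^{0}$-root.
\item Reattach the pieces with \zcref{lem:rewire}, either below the relevant apex (apices are marked, and the attachment vertex is per universe) or directly below $\zeta_B(x)$.
\end{enumerate}

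Because a single rewiring has one attachment vertex per universe, reattaching below up to $k$ distinct apices requires $k+1$ successive rewirings. The $i$th round takes as stop vertices the bag elements $\zeta_B(x)$, as attachment vertex the $i$th apex of each margin, and as $R_{\mathrm{old}}$ the roots of the pieces whose trace parent is that apex.

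Throughout, the order between distinct margins and on bag elements stays untouched, since every modification happens inside the universes $U_{\zeta_B(x)}$. Composing these finitely many transductions yields $\mathsf T$.
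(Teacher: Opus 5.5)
Your overall route is the paper's: the edge restoration via apex-position and neighbour predicates restricted to pairs with $u\curlywedge v\neq\omega$ is the same. So are the first two order stages, once your ``deletion'' is read as a rewiring with $\zeta_B(x)$ as both stop and attachment vertex, followed by re-rooting each piece at its $\preceq_Y$-minimum (legitimate because the piece roots are then siblings below $\zeta_B(x)$).

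\textbf{The gap is in the reattachment stage.} Every edge of $Y_x^{\mathrm{loc}}\setminus Y_x^0$ is a trace edge $pc$, with $p$ the parent and $c$ the child in $Y_x$, such that $p$ or $c$ lies in $A_x\cap\mrg(x)$. Your rounds only realize the edges where $p$ is an apex. In every round the attachment vertex is an apex (or $\zeta_B(x)$), and $R_{\rm old}$ consists of piece roots whose trace parent is that apex.

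Now suppose $c$ is an apex and $p$ is not. Apices are isolated in $G_0[\mrg(x)]$, so $c$ is a singleton piece of $Y_x^0$, and after your first two stages it is a child of $\zeta_B(x)$. In $\preceq_x^{\mathrm{loc}}$, however, $c$ must be a child of the non-apex vertex $p$. None of your rounds uses $p$ as an attachment vertex, so $c$ stays below $\zeta_B(x)$ and the resulting order on $\mrg(x)$ is not $\preceq_x^{\mathrm{loc}}$. Describing the new edges as ``$k$ apex stars'' hides this: each star also contains the edge from its centre up to the centre's own trace parent.

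The repair is what the paper does. For each apex position $i$, add one more rewiring with:
\begin{itemize}
\item stop vertices $\zeta_B(x)$;
\item $R_{\rm old}$ the $i$th apex of each margin, whenever its trace-parent edge lies in $G_1$;
\item attachment vertex that apex's parent in $Y_x^{\mathrm{loc}}$.
\end{itemize}
This is admissible because each margin contains only one apex at position $i$, hence there is only one attachment vertex per universe. Combined with your rounds, assigning each apex--apex trace edge to one of the two kinds, this gives $2k$ rewirings after the initial cut, matching the paper. The rewiring precondition $\neg(r\preceq a_s)$ holds throughout, because the current forest on each margin is always a subforest of $Y_x^{\mathrm{loc}}$ with the inherited orientation.

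Separately, you can drop the detour through a definable parent function. Its uniqueness claim for out-arcs is false anyway, since $w$ may be the unique $E_i$-in-neighbour of many vertices.
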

\begin{proof}
For each of the at most $k$ apex positions, use a unary predicate to select the
corresponding apex in every margin and further unary predicates to select its
neighbors in $G_1$ within that margin.  Membership in a common margin is
first-order definable in~$\mathbf M$.  A fixed number of interpretation steps
therefore restores exactly the edge relation of $G_1$ inside every margin.

We next reconstruct the local trace forests.
In every margin, mark by $R_{\rm old}$ the child endpoint, with respect to the current tree order, of every edge belonging to $F_{Y,x}$ or~$F_x$.
Mark every bag element $\zeta_B(x)$ both $S$ and~$A$.
The rewiring transduction from \zcref{lem:rewire} deletes the corresponding parent--child edges and attaches every resulting component directly below~$\zeta_B(x)$.
The edges left inside the margin are exactly those of~$Y_x^0$.
The component roots now form an antichain, so \zcref{lem:reroot-arbitrary} applies simultaneously to all components.
We mark the minimum of each component in $\preceq_Y$ as its new root, obtaining the orientation inherited from~$Y_x$.

It remains to add the edges of
$Y_x^{\mathrm{loc}}\setminus Y_x^0$.
Every such edge is incident with a vertex of $A_x\cap\mrg(x)$, because
$G_0[\mrg(x)]$ was obtained from $G_1[\mrg(x)]$ precisely by deleting the
edges incident with these local apices.  Assign the local apices at most~$k$
fixed positions.  For one apex position at a time, first restore every trace
edge whose child is that apex: mark the apex as the component root to be
moved and mark its unique parent in $Y_x^{\mathrm{loc}}$ as the attachment
vertex.  In a second application, restore all trace edges whose parent is
that apex and whose child is not an apex: mark the apex as the attachment
vertex and the corresponding component roots as the roots to be moved.
Edges between two apices are handled in the first case, according to the
position of their child.

These operations use at most $2k$ applications of the fixed rewiring
transduction and are performed simultaneously in all margins.  The resulting
restriction of $\preceq$ to $\mrg(x)$ is exactly
$\preceq_x^{\mathrm{loc}}$.  Since the same colors are reused in every
margin, the whole construction is a fixed transduction depending only
on~$k$.
\end{proof}

The local trace order need not yet be the full restriction of
$\preceq_Y$. Indeed, a cover of the restriction of $\preceq_Y$ to a
margin need not be an edge of~$G_1$: the corresponding path in~$Y$
may leave the margin and later return to it. The following lemma
shows that every cover missing from the local trace forest has
an adhesion vertex on its corresponding path in~$Y$. This accounts
for all comparisons still missing within a margin, as we show
in~\eqref{eq:local-trace} below. In Step~9, we recover these
comparisons by recording the order relations between each margin
vertex and the vertices of its adhesion.
Here, the \emph{open $u$-$v$ path} is the unique $u$-$v$ path
in~$Y$ with its endpoints $u$ and $v$ removed.

\begin{lemma}\label{lem:local-trace}
Let $x\in V(T)$, and let $u\prec_Y v$ be a cover in the restriction of
$\preceq_Y$ to $\mrg(x)$. If the open $u$-$v$ path in~$Y$ avoids
$\adh(x)$, then $uv\in E(G_1)$.
\end{lemma}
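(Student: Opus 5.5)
The plan is to analyse the open $u$-$v$ path $P$ in $Y$ through the separation properties of the tree decomposition $\mathcal T$ of $G_1$. The key fact is that adhesions are cliques in $G_1$ (\zcref{thm:rs-structure}). We do not try to show that $u$ and $v$ are adjacent in $Y$. Instead, we show that either they are, or both lie in a common child adhesion, which is a clique.

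First, since $u\prec_Y v$, the $u$-$v$ path in $Y$ is the chain of $\preceq_Y$ from $u$ to $v$. Every interior vertex $w$ therefore satisfies $u\prec_Y w\prec_Y v$. Because $(u,v)$ is a cover of the restriction of $\preceq_Y$ to $\mrg(x)$, no such $w$ lies in $\mrg(x)$. By hypothesis, no interior vertex lies in $\adh(x)$ either. Hence the interior of $P$ avoids $\bag(x)=\mrg(x)\cup\adh(x)$. If $P$ is empty, then $uv\in E(Y)\subseteq E(G)\subseteq E(G_1)$ and we are done. Otherwise the interior $C$ of $P$ is a nonempty connected subgraph of $G_1-\bag(x)$ (as $Y\subseteq G\subseteq G_1$), and $u$ and $v$ each have a neighbour in $C$.

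Next, I would use the standard separation property of tree decompositions. Every vertex of $G_1-\bag(x)$ has its bag-subtree inside a single component of $T-x$. Edges of $G_1$ only join vertices whose bag-subtrees intersect, so two adjacent vertices outside $\bag(x)$ belong to the same component of $T-x$. Since $C$ is connected, all of $C$ lies on one side, and there are two cases. If that side is the component containing $\parent(x)$, then any vertex of $\bag(x)$ adjacent to a vertex of $C$ must lie in a bag of that component as well as in $\bag(x)$. By connectivity of bag-subtrees it then lies in $\bag(x)\cap\bag(\parent(x))=\adh(x)$. But $u\in\mrg(x)$ is adjacent to a vertex of $C$, which is a contradiction. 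Hence $C$ lies in the subtree of some child $y$ of $x$. By the same argument, every vertex of $\bag(x)$ with a neighbour in $C$ lies in $\bag(x)\cap\bag(y)=\adh(y)$. Thus $u,v\in\adh(y)$. By \zcref{thm:rs-structure}(1), $\adh(y)$ induces a clique in $G_1$, so $uv\in E(G_1)$.

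The only delicate step is the separation argument: a connected set avoiding $\bag(x)$ sits on one side of $x$ in $T$, and its attachments into $\bag(x)$ lie in the corresponding adhesion. I would state it once as a small claim, derived from the three tree-decomposition axioms, and apply it both to exclude the parent side and to place $u,v$ in $\adh(y)$. Neatness is not needed beyond the facts that $\mrg(x)$ and $\adh(x)$ partition $\bag(x)$ and that $Y$ is a subgraph of $G_1$.
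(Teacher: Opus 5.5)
Your proof is correct and follows essentially the same route as the paper: the interior of the $u$-$v$ path avoids $\bag(x)$, so it lies on one side of $x$ in $T$. The parent side is excluded because $u,v\in\mrg(x)$, hence $u,v$ lie in the adhesion of a child of $x$, which is a clique in $G_1$. The only cosmetic difference is that the paper argues with the whole component of $G_1-\bag(x)$ containing the open path, rather than with the path itself.
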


\begin{proof}
If $uv\in E(Y)$, there is nothing to prove.
Otherwise, the open $u$-$v$ path is non-empty and avoids $\bag(x)=\mrg(x)\mathbin{\dot\cup}\adh(x)$, where $\adh(x)=\emptyset$ if $x$ is the root.
Indeed, it avoids $\mrg(x)$ because $u\prec_Yv$ is a cover in the restricted order, and it avoids $\adh(x)$ by assumption.
Let $C$ be the component of $G_1-\bag(x)$ containing this open path.
The nodes whose bags meet~$C$ form a connected subtree of $T-x$, and hence lie in one component~$D$ of~$T-x$.
If $q$ is the neighbor of~$x$ in~$D$, the connectedness of the bags containing each vertex implies $N_{G_1}(C)\cap\bag(x)\subseteq\bag(x)\cap\bag(q)$.
Both $u$ and~$v$ belong to this set.
The node~$q$ cannot be $\parent(x)$, since $u,v\in\mrg(x)$.
Thus, $q$ is a child of~$x$, and $u,v\in\adh(q)$.
As $\adh(q)$ is a clique in~$G_1$, we obtain $uv\in E(G_1)$.
\end{proof}

Consequently, for all $u,v\in\mrg(x)$,
\begin{equation}\tag{$\ast$}\label{eq:local-trace}
u\preceq_Y v
\quad\Longleftrightarrow\quad
u\preceq_x^{\mathrm{loc}}v
\ \lor\
\exists a\in\adh(x)\;
\bigl(u\preceq_Y a\preceq_Y v\bigr).
\end{equation}

For the forward implication, the case $u=v$ is immediate, so suppose
that $u\prec_Y v$. If the $u$-$v$ segment meets $\adh(x)$, we choose a
vertex $a\in\adh(x)$ on that segment. Otherwise, let
$u=z_0\prec_Y z_1\prec_Y\cdots\prec_Y z_\ell=v$ be the vertices of
$\mrg(x)$ on the segment, listed in their $Y$-order. For every
$i<\ell$, the pair $z_i\prec_Y z_{i+1}$ is a cover in the restriction
of $\preceq_Y$ to $\mrg(x)$, and its open path avoids $\adh(x)$. By
\zcref{lem:local-trace}, we have $z_iz_{i+1}\in E(G_1)$. Hence, by the
definition of $Y_x^{\mathrm{loc}}$, every such pair belongs to
$Y_x^{\mathrm{loc}}$. Taking the transitive closure gives
$u\preceq_x^{\mathrm{loc}}v$.

For the reverse implication, recall that
$\preceq_x^{\mathrm{loc}}$ is, by construction, contained in the
restriction of $\preceq_Y$ to $\mrg(x)$. Thus,
$u\preceq_x^{\mathrm{loc}}v$ implies $u\preceq_Y v$. If instead there
exists $a\in\adh(x)$ such that
$u\preceq_Y a\preceq_Y v$, then $u\preceq_Y v$ follows directly from
the transitivity of~$\preceq_Y$.

\paragraph{Step 9}

In the final step, we restore the edges and the tree order between different
margins.  We first need more than a definition of the set $\adh(x)$: a fixed
formula must be able to identify each of its elements separately.  This is
possible because every adhesion has size at most~$k$.  The first coordinate
in the index set below locates the margin containing an adhesion vertex, the
second identifies that vertex using the bounded-indegree orientation from
Step~1.

\begin{lemma}\label{lem:adhesion}
Let $I=[2k]\times\{0,\ldots,d\}$.  There is a monadic expansion
$\mathbf N^+$ of a fixed transduction of~$\mathbf N$, injections $\iota_x:\adh(x)\longrightarrow I$ for $x\in V(T)$,
and first-order formulas $\varphi^{\adh}_{c,i}(v,z)$, for $(c,i)\in I$, such
that
\[
\mathbf N^+\models
   \varphi^{\adh}_{c,i}(v,\zeta_B(x))
\quad\Longleftrightarrow\quad
v\in\adh(x)\text{ and }\iota_x(v)=(c,i).
\]
For the root of~$T$, whose adhesion is empty, all these formulas are false.
\end{lemma}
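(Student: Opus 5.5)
The plan is to encode each adhesion vertex by two pieces of data: the margin where it lives (an ancestor of $x$ in $T$) and its position among the in-neighbours of a suitable vertex of that margin. We first use a fixed transduction to add to $\mathbf N$ the arc relations $E_1,\dots,E_d$ of the orientation $\vec G_1$ from Step~1, restricted to pairs $(u,v)$ with $u\in\adh(b(v))$; equivalently, these are the cross-margin arcs of $\vec G_1$, each oriented down the decomposition tree. The underlying graph of these arcs is a subgraph of $G_1$, and $G_1$ has Hadwiger number at most $q_1(h(G))$. The edges of $\mathbf N$ only cover the within-margin edges of $G_1$ and the attachment edges, so we cannot apply \zcref{lem:structure-from-gaif} to the $V$-part of $\mathbf N$ directly.

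Instead we encode the arcs with colours. Each vertex $v$ has at most one incoming arc in each $E_i$, so it suffices to guess the tail. The tail $a$ lies in $\mrg(y)$ for some strict ancestor $y$ of $b(v)$ with $a\in\adh(b(v))$, and there are at most $k$ such margins along the root path. We give $a$ a colour $c(a)\in[2k]$ so that the margins meeting any single adhesion have pairwise distinct colours. Concretely, we colour each node $y$ of $T$ by its depth modulo $2k$, or better by a greedy colouring of the graph on $V(T)$ whose edges join pairs of nodes with margins in a common adhesion. That graph is an ancestor-closed "chordal-type" graph of clique number at most $k$, so $2k$ colours suffice. Every vertex of $\mrg(y)$ then receives the colour of $y$.

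Given $v$ and $(c,i)$, the tail of the $E_i$-arc into $v$ is determined if we know that it has colour $c$ and lies in the nearest ancestor margin of colour $c$. The nearest such ancestor is definable in $\mathbf N$ through $\unlhd$ on bag elements together with $\mathsf{rep}$ and the $\preceq$-structure. However, we still need to single out the specific vertex of that margin, so we will instead encode arcs directly as unary data. For each $i$ and each vertex $v$, we mark $v$ with the label "the $E_i$-tail of $v$ is the vertex $w$ of colour $c$ in margin $y$ whose local index is $j$". To make "local index" available, inside each margin we use the bounded-indegree orientation again: the tail $a$ is the sink $\alpha$-type vertex, or it is reachable from $\beta(y)$. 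The simpler route is to index $a$ by the pair $(c,i)$, where $i$ ranges over its index as an in-neighbour of $\alpha(x)$. Since $\adh(x)$ is a clique whose sink is $\alpha(x)$, every other $a\in\adh(x)$ has an arc into $\alpha(x)$, which lies in some unique $E_i$. We set $\iota_x(\alpha(x))=(c,0)$ and $\iota_x(a)=(c(a),i)$ where $a\to\alpha(x)\in E_i$. This map is injective because each $E_i$ has indegree at most $1$ at $\alpha(x)$.

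It remains to define, from $\zeta_B(x)$, both $\alpha(x)$ and the $E_i$-in-neighbour of $\alpha(x)$. The vertex $\alpha(x)$ is the unique $E$-neighbour of $\beta(x)$ outside its margin in $\mathbf N$, and $\beta(x)$ is obtained via $\mathsf{rep}$, so $\alpha(x)$ is definable. For the $E_i$-in-neighbours we add, by a colouring, a unary predicate $L_{i,c'}$ on each vertex $w$ indicating that its $E_i$-in-neighbour lies in the nearest ancestor margin of colour $c'$. We also add a predicate $M_{i}$ on the vertex $a$ itself, marking it as the $E_i$-tail of some vertex in a child margin. The main obstacle is that one $a$ may be the $E_i$-tail of in-neighbour sinks in many different descendant margins. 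We resolve this by noting that $\alpha(x)\in\mrg(\parent(x))$ and its in-neighbours in $\adh(x)$ lie in ancestor margins of $\parent(x)$. So the tail of the $E_i$-arc into $\alpha(x)$ depends only on $\alpha(x)$, and can be recorded on $\alpha(x)$ as a pair (colour of margin, $E$-position) relative to $\alpha(x)$'s own margin. Iterating along the path of attachment edges $\alpha\beta$ up the tree, a vertex of colour $c'$ in an ancestor margin can then be pinned down by a bounded chain. This last definability step, turning "nearest ancestor margin of colour $c'$ plus one bounded-indegree pointer" into a first-order formula, is the delicate part. If a direct first-order pointer is unavailable, we fall back on \zcref{lem:structure-from-gaif} applied to $G_1$ through the copy that the transduction is allowed to take.
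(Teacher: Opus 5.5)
Your proposal has a genuine gap at exactly the step you flag as delicate, and the fallback you offer does not close it. Your injection $\iota_x(a)=(c(a),i)$, with $a\to\alpha(x)\in E_i$, is well defined but not definable from $\mathbf N$. Whenever $b(a)$ is a strict ancestor of $\parent(x)$, the arc $a\alpha(x)$ is a cross-margin edge of $G_1$. Apart from the attachment edges $\alpha\beta$, $\mathbf N$ keeps none of these. The predicates $L_{i,c'}$ at best name the margin containing the tail, not the vertex. Walking up along attachment edges is not a bounded chain, since $b(a)$ can lie arbitrarily far above $x$ in $T$. Applying \zcref{lem:structure-from-gaif} ``to $G_1$'' is unavailable: $G_1$ is not a reduct of $\mathbf N$, copying adds no edges, and restoring the cross-margin edges of $G_1$ is precisely what this lemma is used for right afterwards.

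The missing idea is to anchor $a$ in its \emph{own} margin. Let $y$ be the child of $b(a)$ on the path to $x$. Since the bags containing $a$ form a subtree through $b(a)$ and $x$, we get $a\in\adh(y)$. Because $\alpha(y)$ is the sink of the transitive tournament on the clique $\adh(y)$, either $a=\alpha(y)$ or $a\to\alpha(y)$ is an arc of $\vec G_1$ lying inside $\mrg(\parent(y))$. Such an arc is present in $\mathbf N$, so the relations $E_1,\dots,E_d$ obtained from \zcref{lem:structure-from-gaif} applied to the $E$-reduct of $\mathbf N$ suffice. Moreover, $\alpha(y)$ is definable from $\zeta_B(y)$ via $\mathsf{rep}$ and a recorded index $j$ with $E_j(\alpha(y),\beta(y))$. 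Incidentally, ``the unique $E$-neighbour of $\beta(x)$ outside its margin'' fails whenever $\beta(x)=\alpha(x')$ for a child $x'$ of $x$.

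The second gap is locating $y$ from $\zeta_B(x)$, and node colourings do not do it. Depth modulo $2k$ carries no information, since the margins meeting $\adh(x)$ may sit at arbitrary depths. A proper colouring of your conflict graph also fails: it does not prevent a node strictly between $b(a)$ and $x$, whose margin meets no vertex of $\adh(x)$, from carrying colour $c$. Then ``the nearest ancestor of colour $c$'' is the wrong node.

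What works is to colour, with $2k$ colours, the connected components of $T[X_v\setminus\{b(v)\}]$, where $X_v$ is the set of nodes whose bag contains $v$. The colouring must make equally coloured components disjoint and non-adjacent in $T$, which a greedy colouring achieves by processing components in order of the depth of their minima. Record on every $\zeta_B(x)$ the set of colours of the components through $x$. Then $y$ is the first node of the maximal colour-$c$ run on the $\unlhd$-ancestor chain of $x$, which is first-order definable; non-adjacency guarantees that this run is exactly one component. Setting $\iota_x(a)=(c,i)$, where $c$ is the colour of $a$'s component through $x$ and $i$ is the index of the arc $a\to\alpha(y)$ (or $0$ if $a=\alpha(y)$), both stored on $\zeta_B(x)$, gives the required formulas. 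Injectivity holds because distinct vertices of $\adh(x)$ give distinct components through $x$, and these carry distinct colours.
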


\begin{proof}
The $E$-reduct of~$\mathbf N$ is a subgraph of~$G_1$ together with the isolated bag elements and~$\omega$, and therefore excludes a fixed minor.
Apply \zcref{lem:structure-from-gaif} to this reduct and carry the other relations through by \zcref{fact:expand}.
Using the bounded-indegree orientation constructed in Step~1, we obtain directed relations $E_1,\ldots,E_d$ which agree with the corresponding oriented edges of~$G_1$ currently present in~$\mathbf N$.
Every vertex has at most one incoming $E_i$-edge for each~$i$.

For $v\in V(G_1)$, let
$X_v:=\{x\in V(T):v\in\bag(x)\}$.
This is a non-empty connected subtree of~$T$, and its minimum is the placement
node $b(v)$.  Consequently,
\[
 v\in\mrg(x)\Longleftrightarrow x=b(v),
 \qquad
 v\in\adh(x)\Longleftrightarrow x\in X_v\setminus\{b(v)\}.
\]

Let $\mathscr X$ be the indexed family of all connected components of
$T[X_v\setminus\{b(v)\}]$, over all vertices~$v$.  Every node $x$ belongs to
exactly one member of $\mathscr X$ for each vertex of~$\adh(x)$, and therefore
to at most~$k$ members of~$\mathscr X$.

We color the subtrees in $\mathscr X$ with colors in $[2k]$ so that subtrees
of the same color neither intersect nor contain adjacent vertices of~$T$.
This is done as follows. We process them in non-decreasing order of the depth of their minimum.  When a
subtree~$X$ with minimum~$x$ is processed, every previously processed subtree
intersecting or containing a vertex adjacent to~$X$ contains $x$ or
$\parent(x)$.  At most $k-1$ previously processed subtrees contain
$x$, and at most $k$ contain $\parent(x)$.  Hence, one of the $2k$ colors is
available.

For $x\in V(T)$, let $\Lambda(x)$ be the set of colors of the members of
$\mathscr X$ containing~$x$, encoded by unary predicates on
$\zeta_B(x)$.  If $c\in\Lambda(x)$, there is a unique color-$c$ subtree
containing~$x$. Let $\eta(x,c)$ denote its minimum.  The relation
$y=\eta(x,c)$ is first-order definable from $\unlhd$ and the unary color
predicates: $y$ is the first vertex of the maximal color-$c$ interval on the
ancestor chain ending at~$x$.  Moreover,
\[
 \{\eta(x,c):c\in\Lambda(x)\}
 =\{y\preceq_T x:\ y\text{ is non-root and }
       \adh(x)\cap\mrg(\parent(y))\neq\emptyset\}.
\]

Fix $c\in\Lambda(x)$ and put $y:=\eta(x,c)$.  The color~$c$ identifies a
unique vertex $v\in\adh(x)\cap\mrg(\parent(y))$.  The relation
$\mathsf{rep}$ explicitly identifies $\beta(y)$ from~$\zeta_B(y)$, even
though $\beta(y)$ need not be the unique child of~$\zeta_B(y)$ after Step~8.
Using a unary color on $\zeta_B(y)$, record the index $j\in[d]$ for which
$\alpha(y)$ is the unique $E_j$-in-neighbor of~$\beta(y)$.  Thus,
$\alpha(y)$ is first-order definable from~$\zeta_B(y)$.

Since $v,\alpha(y)\in\adh(y)$ and $\adh(y)$ is a clique of~$G_1$, either
$v=\alpha(y)$ or the orientation contains the arc $v\alpha(y)$. The latter
holds because $\alpha(y)$ is the sink of the transitive tournament on
$\adh(y)$.  Set $i=0$ in the first case, and otherwise let $i\in[d]$ be the
unique index for which $E_i(v,\alpha(y))$ holds.  A unary predicate on
$\zeta_B(x)$ records this index for each~$c$.  Define
$\iota_x(v):=(c,i)$.  Distinct vertices of $\adh(x)$ correspond to distinct
members of the indexed family $\mathscr X$ through~$x$, and these receive
distinct colors, hence $\iota_x$ is injective.  The construction is expressed
by the formulas $\varphi^{\adh}_{c,i}$.
\end{proof}

We can now restore all edges of $G_1$ between different margins.  Let
$v\in\mrg(x)$.  Every neighbor of~$v$ in an ancestor margin belongs to
$\adh(x)$.  For every $(c,i)\in I$, use a unary predicate on~$v$ to record
whether $v$ is adjacent in~$G_1$ to the unique vertex of $\adh(x)$ with index
$(c,i)$.  The placement node~$x$ is definable from~$v$ through its bag
representative $\zeta_B(x)$, so \zcref{lem:adhesion} interprets exactly all
missing cross-margin edges.  Together with the edges already present inside
the margins, this recovers the full edge relation of~$G_1$.

It remains to restore the global tree order.
Order $I=[2k]\times\{0,\ldots,d\}$ lexicographically.
For each $x\in V(T)$, let $\overline{\adh}(x)$ be the $I$-indexed tuple whose coordinate~$\lambda$ is the unique vertex of $\adh(x)$ with index~$\lambda$, or a formal absence marker if no such vertex exists.
For $x\preceq_Ty$, let $\theta(x,y)$ record which coordinates of these two tuples are present, all equalities between present coordinates, and all comparisons in~$\preceq_Y$ between present coordinates.
We call this finite record the \emph{order type} of $\overline{\adh}(y)$ over $\overline{\adh}(x)$. 

\begin{claim}
If $x\preceq_T y\preceq_T z$, then $\theta(x,z)$ is determined by
$\theta(x,y)$ and $\theta(y,z)$.
\end{claim}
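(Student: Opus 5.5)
The plan is to prove that every piece of information in $\theta(x,z)$ can be read off from $\theta(x,y)$ and $\theta(y,z)$. The pieces are: which coordinates are present, equalities between coordinates of $\overline{\adh}(x)$ and $\overline{\adh}(z)$, and $\preceq_Y$-comparisons between such coordinates. The degenerate cases $y=x$ and $y=z$ are trivial, so assume $x\prec_T y\prec_T z$. Presence of coordinates of $\overline{\adh}(x)$ is recorded in $\theta(x,y)$, and presence of coordinates of $\overline{\adh}(z)$ in $\theta(y,z)$. Likewise, equalities and comparisons \emph{within} $\adh(x)$, respectively within $\adh(z)$, are recorded there. So the only work concerns pairs $a\in\adh(x)$ and $c\in\adh(z)$.

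First I would establish two placement facts from the connectivity of the subtrees $X_v=\{w: v\in\bag(w)\}$.

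(i) Every $a\in\adh(x)$ lies outside $\comp(y)$. Indeed, $a\in\bag(\parent(x))$, so $b(a)\prec_T x\preceq_T y$.

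(ii) Every $c\in\adh(z)$ satisfies either $c\in\adh(y)$ or $c\in\comp(y)\setminus\adh(y)$. If $b(c)\not\succeq_T y$, then $X_c$ contains both $b(c)$ and $z\succ_T y$. By connectivity it contains $y$ and $\parent(y)$, so $c\in\adh(y)$.

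In particular, if $a=c$, then $c\notin\comp(y)$ by (i), and hence $c\in\adh(y)$ by (ii). Therefore $a=c$ holds if and only if some present coordinate $w$ of $\overline{\adh}(y)$ satisfies $a=w$ (recorded in $\theta(x,y)$) and $w=c$ (recorded in $\theta(y,z)$).

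For the comparisons I would use the standard separation property of tree decompositions. Every edge of $G_1$ joining $\comp(y)$ to its complement has its outer endpoint in $\adh(y)$. Since $Y\subseteq G\subseteq G_1$, every $Y$-path from outside $\comp(y)$ into $\comp(y)\setminus\adh(y)$ meets $\adh(y)$. The central equivalence to prove is
\[
a\preceq_Y c \iff \exists w\in\adh(y)\,(a\preceq_Y w\preceq_Y c),
\qquad
c\preceq_Y a \iff \exists w\in\adh(y)\,(c\preceq_Y w\preceq_Y a).
\]
The backward implications are transitivity. For the forward ones there are two cases.
\begin{itemize}
\item If $c\in\adh(y)$, take $w=c$ (respectively note that $a$ versus $c$ is then directly recorded in $\theta(x,y)$).
\item Otherwise $c\in\comp(y)\setminus\adh(y)$ by (ii), and $a\notin\comp(y)$ by (i). If $a\preceq_Y c$, the $Y$-segment from $a$ down to $c$ starts outside $\comp(y)$ and ends inside it. By separation it contains a vertex $w\in\adh(y)$, which then satisfies $a\preceq_Y w\preceq_Y c$. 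The case $c\preceq_Y a$ is symmetric.
\end{itemize}
The right-hand sides only involve comparisons between $a$ and coordinates of $\overline{\adh}(y)$, which are recorded in $\theta(x,y)$, and comparisons between such coordinates and $c$, which are recorded in $\theta(y,z)$. Incomparability is the remaining case. This determines $\theta(x,z)$.

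The step I expect to be most delicate is (ii) together with the separation argument. One has to check that the order-type data really only refers to vertices of $\adh(y)$, and not to arbitrary vertices of $\bag(y)$. One also has to check that $Y$, being a subgraph of $G_1$, cannot bypass $\adh(y)$ even though trace edges inside margins were modified. This is fine because the claim concerns the original tree $Y$, not the modified orders. I would verify carefully that the edge-separation statement is applied to $G_1$ with the decomposition $\mathcal T$, which is a tree decomposition of $G_1$.
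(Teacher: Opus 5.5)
Your proof is correct and follows the paper's own argument: you show $a\notin\comp(y)$ and $c\in\comp(y)\cup\adh(y)$, and then use that $\adh(y)$ separates $\comp(y)$ in $G_1\supseteq Y$, so the $a$-$c$ segment of $Y$ passes through $\adh(y)$ and every comparison factors through the coordinates of $\overline{\adh}(y)$. The only difference is cosmetic: you determine equality directly through a common vertex of $\adh(y)$, whereas the paper gets it from antisymmetry of the two comparisons.
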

\begin{claimproof}
Let $a\in\adh(x)$ and $c\in\adh(z)$.  Put $W_y:=\{w\in V(G_1):y\preceq_Tb(w)\}$.
The adhesion $\adh(y)$ separates $W_y$ from all vertices outside $W_y\cup\adh(y)$.
We have $a\notin W_y$.
Moreover, either $c\in W_y$ or $c\in\adh(y)$, since the bags containing~$c$ form a connected subtree and contain~$z$.
It follows that the $a$-$c$ path in~$Y$ meets $\adh(y)$, possibly at an endpoint.
Consequently,
\[
 a\preceq_Yc\quad\Longleftrightarrow\quad
 \exists b\in\adh(y)\;a\preceq_Yb\preceq_Yc.
\]
The same argument with the order reversed determines whether $c\preceq_Ya$.
Both comparisons are determined by $\theta(x,y)$ and $\theta(y,z)$, and equality holds exactly when both do.
The internal comparisons in the two end tuples are already recorded in the corresponding intermediate types.
\end{claimproof}

After padding the adhesion tuples by dummy coordinates, the set of order types is
finite, and the claim gives a finite transition table for propagating order types
along ancestor paths. 
The types~$\theta(x,x)$ and those on parent-child
pairs can be encoded by finitely many unary predicates on~$T$.  The resulting
propagation is $\MSO$-definable by the standard equivalence between finite
automata and monadic second-order logic on finite
paths~\cite{ThatcherWright1968,Doner1970,Thomas1997}.  By
Colcombet's theorem~\cite[Theorem~2]{Colcombet2007}, all relations~$\theta(x,y)$ are therefore first-order definable after a further monadic
expansion of~$(T,\preceq_T)$, and hence of~$\mathbf N$ through the bag
representatives.  These translations are effective, so the required formulas
and unary predicates depend computably only on~$k$.

We also record the order type of every margin vertex over its indexed
adhesion.  For $\lambda\in I$, write~$a_\lambda(x)$ for the vertex of
$\adh(x)$ with index~$\lambda$, when it exists.  Add unary predicates
$P^\uparrow_\lambda$ and $P^\downarrow_\lambda$ with the meanings
\[
\begin{aligned}
P^\uparrow_\lambda(w)
&\quad\Longleftrightarrow\quad
w\in\mrg(x),\ a_\lambda(x)\text{ exists, and }
w\preceq_Ya_\lambda(x),\\
P^\downarrow_\lambda(w)
&\quad\Longleftrightarrow\quad
w\in\mrg(x),\ a_\lambda(x)\text{ exists, and }
a_\lambda(x)\preceq_Yw.
\end{aligned}
\]
Only $2|I|$ unary predicates are required, so this is a fixed monadic
expansion.

We now define the desired order in three stages.
First, let
\[
\begin{split}
 R_1(u,v):\Longleftrightarrow{}& b(u)=b(v)\ \wedge\\
 &\left(
 u\preceq v
 \ \lor\
 \bigvee_{\lambda\in I}
 \bigl(P^\uparrow_\lambda(u)\wedge
       P^\downarrow_\lambda(v)\bigr)
 \right),
\end{split}
\]
where $\preceq$ in the first disjunct is the local trace order carried by $\mathbf N$.
By \eqref{eq:local-trace},
\[
 R_1(u,v)
 \quad\Longleftrightarrow\quad
 b(u)=b(v)\text{ and }u\preceq_Yv.
\]

Next suppose that $b(v)\prec_T b(u)$, and let $z$ be the unique child of $b(v)$ with $z\preceq_T b(u)$.
To recover the order type of~$v$ over $\adh(z)$, split this adhesion as
\[
 \adh(z)=
 \bigl(\adh(z)\cap\adh(b(v))\bigr)
 \mathbin{\dot\cup}
 \bigl(\adh(z)\setminus\adh(b(v))\bigr).
\]
Comparisons with the first part are given by the unary code for the type of $v$ over $\adh(b(v))$.
Since $b(v)=\parent(z)$, the second part is contained in $\mrg(b(v))$, and its comparisons with~$v$ are supplied by~$R_1$.
Put $y=b(u)$.
The adhesion separators give
\[
 u\preceq_Yv
 \quad\Longleftrightarrow\quad
 \exists a\in\adh(y)\;\exists b\in\adh(z)\;
 u\preceq_Ya\preceq_Yb\preceq_Yv.
\]
For the forward implication, the $u$-$v$ path in~$Y$ meets $\adh(y)$ and then $\adh(z)$, possibly at the same vertex.
The reverse implication follows from transitivity.
The comparisons of~$u$ with~$a$ are given by its unary code, those of~$a$ with~$b$ by the composed type $\theta(z,y)$, and those of~$b$ with~$v$ by the type just recovered.
Thus, the displayed condition is first-order definable.
The case $b(u)\prec_T b(v)$ is analogous, exchanging $u$ and~$v$ and reversing the comparisons.
Together with $R_1$ for equal placement nodes, this defines a relation $R_2$ such that
\[
 R_2(u,v)\Longleftrightarrow
 \bigl(b(u)\preceq_T b(v)\vee b(v)\preceq_T b(u)\bigr)
 \wedge u\preceq_Yv.
\]

The last case uses the following separator observation.
\begin{lemma}\label{lem:path-common-placement}
Let $u,v\in V(G)$ have incomparable placement nodes, and let $z=b(u)\curlywedge_T b(v)$.
The $u$-$v$ path in~$Y$ meets $\bag(z)$.
Every $w\in\bag(z)$ satisfies
\[
   b(w)\preceq_T z\preceq_T b(u),b(v).
\]
Consequently, if $u\preceq_Yv$, then there is a vertex $w$ with $b(w)\preceq_T b(u),b(v)$ and $u\preceq_Y w\preceq_Y v$.
\end{lemma}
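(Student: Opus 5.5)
The plan is to use the separation properties of tree decompositions around the node $z=b(u)\curlywedge_T b(v)$, with $G_1$ as the ambient graph.

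Since $b(u)$ and $b(v)$ are incomparable, $z$ is a strict ancestor of both. Let $z_u$ and $z_v$ be the distinct children of $z$ with $z_u\preceq_T b(u)$ and $z_v\preceq_T b(v)$.

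\textbf{Second assertion.} Let $w\in\bag(z)$. The set $X_w=\{x: w\in\bag(x)\}$ induces a connected subtree of $T$ containing $z$. Its minimum is $b(w)$, which is therefore an ancestor of $z$. This gives $b(w)\preceq_T z\preceq_T b(u),b(v)$.

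\textbf{First assertion.} Define $W_{z_u}=\{w: z_u\preceq_T b(w)\}$, as in the claim of Step~9. Then $u\in W_{z_u}$. Since $b(v)$ is not a descendant of $z_u$, we have $v\notin W_{z_u}$. As in that claim, $\adh(z_u)$ separates $W_{z_u}$ from all vertices outside $W_{z_u}\cup\adh(z_u)$ in $G_1$. To justify this, note that an edge $ww'$ with $w\in W_{z_u}$ lies in some bag $\bag(x)$ with $x$ a descendant of $b(w)$ or of $b(w')$. If $w'\notin W_{z_u}\cup\adh(z_u)$, then $X_{w'}$ avoids the subtree below $z_u$. But $X_w$ lies entirely below $z_u$, so the two subtrees cannot share a node, a contradiction. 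Since $Y\subseteq G\subseteq G_1$, the $u$-$v$ path in $Y$ must meet $\adh(z_u)\subseteq\bag(z)$, possibly at $v$. This cannot happen at $u$, because $u\in\mrg(b(u))$ and $b(u)\neq z$.

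\textbf{Consequence.} Suppose $u\preceq_Y v$. Then the $u$-$v$ path in $Y$ is exactly the $\preceq_Y$-chain from $u$ down to $v$. By the first assertion it contains some $w\in\bag(z)$, and therefore $u\preceq_Y w\preceq_Y v$. By the second assertion, $b(w)\preceq_T z\preceq_T b(u),b(v)$.

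\textbf{Main obstacle.} The only delicate point is making the separator statement precise, namely that $\adh(z_u)$ separates $W_{z_u}$ from the rest of $G_1$. This is the standard subtree argument given above, and it does not need neatness beyond the fact that margins partition $V(G)$.
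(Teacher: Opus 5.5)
Your proof is correct and follows the paper's argument. Like the paper, you use that $\bag(z)$ separates the vertices placed in the two child subtrees of $z$ (you sharpen the separator to $\adh(z_u)\subseteq\bag(z)$ and verify the separation explicitly), then use the connectivity of $X_w$ for the second assertion and the fact that the $u$-$v$ path of $Y$ is a $\preceq_Y$-chain for the consequence. One cosmetic point: the hedge ``possibly at $v$'' is vacuous, since your second assertion already gives $b(w)\preceq_T z$ for every $w\in\bag(z)$, so neither endpoint can lie in $\bag(z)$.
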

\begin{proof}
The two placement nodes lie in distinct child subtrees of~$z$.
The tree-decomposition axioms imply that $\bag(z)$ separates the vertices placed in these two parts of~$T$.
Since $Y\subseteq G\subseteq G_1$, the $u$-$v$ path in~$Y$ must therefore meet~$\bag(z)$.
If $w\in\bag(z)$, the connected set of bags containing~$w$ includes~$z$, so its minimum $b(w)$ is an ancestor of~$z$.
Finally, when $u\preceq_Yv$, every vertex $w$ on the $u$-$v$ segment of~$Y$ satisfies $u\preceq_Y w\preceq_Y v$.
\end{proof}

It follows that the full tree order is defined by
\[
 R_3(u,v):\Longleftrightarrow
 R_2(u,v)\vee
 \exists w\,
 \bigl(b(w)\preceq_T b(u)\wedge b(w)\preceq_T b(v)
       \wedge R_2(u,w)\wedge R_2(w,v)\bigr).
\]
Indeed, the forward implication in the incomparable-placement case follows from \zcref{lem:path-common-placement}, while the reverse implication follows from transitivity of~$\preceq_Y$.
Thus, $R_3(u,v)$ holds exactly when $u\preceq_Yv$.

We have now recovered $(G_1,\preceq_Y,<_G)$. Fix~$t$ and consider inputs~$G$
with $h(G)\leq t$. By the bound proved in Phase~I,
$h(G_1)\leq q_1(t)$
for the computable function~$q_1$ from Phase~I. Hence, the effective transduction from \zcref{lem:structure-from-gaif}, extended as in \zcref{fact:expand}, recovers
the edge relation of~$G$ from~$G_1$ while retaining $\preceq_Y$ and~$<_G$.

All transductions used in Steps~6--9 are given by explicit formulas, and the
numbers of copies, colors, and formulas used in them are bounded computably in~$t$. Their composition with the preceding expansion
transduction therefore forms an effectively computable family
$(\mathsf R_t)_{t\in\mathbb N}$. Since
$\operatorname{tww}(\mathscr M_t)\leq d(t)$ for a computable bound $d(t)$, we overall get a computable bound $\xi(t)$ from \zcref{lem:tww-preserved}. 
This completes the proof.
\hfill\qed

\bigskip

We also obtain \zcref{thm:minor-transduction}. For every fixed bound~$t$,
given a connected graph~$G$ with $h(G)\leq t$ and a rooted spanning
tree~$Y$ of~$G$, the construction computes a connected subcubic graph~$H$
with a rooted spanning tree~$Y_H$ and a DFS preorder~$<_H$.
The Hadwiger number of~$H$ is bounded by a computable function of~$t$.
Step~5 supplies the fixed transduction~$\mathsf L$, while
Steps~6--9 supply the transduction~$\mathsf R_t$, such that
\[
    (H,\preceq_{Y_H},{<_H})\in\mathsf L(H,\preceq_{Y_H})
    \qquad\text{and}\qquad
    (G,\preceq_Y,{<_G})\in\mathsf R_t(H,\preceq_{Y_H},{<_H})
\]
for the linear order~$<_G$ produced by the construction.
The family $(\mathsf R_t)_{t\in\mathbb N}$ is effectively computable,
and composing $\mathsf R_t$ with~$\mathsf L$ and forgetting both
orders recovers~$G$.

For an arbitrary graph~$G$ with $h(G)\leq t$ and a rooted spanning forest~$F$ of~$G$, apply the connected construction with bound~$t+1$ to the connected extension $(\widehat G,Y)$ from \zcref{rem:forest-tree-reduction}.
Deleting the marked universal vertex from the interpreted output recovers $(G,\preceq_F,<_G)$, where $<_G$ is the restriction of the computed linear order.
The auxiliary graph~$H$ is unchanged and remains connected and subcubic.

\bigskip
We now consider the special case of monotone classes of graphs.
\begin{theorem}[store=thm-main-monotone]
Let $\Cc$ be a monotone class of graphs. Then the following are equivalent:
\begin{enumerate}[nosep]
    \item $\Cc$ excludes some graph as a minor;
    \item $\Cc^{\Upsilon}$ has bounded twin-width;
    \item $\Cc^F$ has bounded twin-width;
    \item for every $G\in\Cc$, there is a choice of a full rooted spanning forest $F_G$ of $G$
     such that the class
    $\{(G,\preceq_{F_G})\mid G\in\Cc\}$ is monadically dependent.
\end{enumerate}
\end{theorem}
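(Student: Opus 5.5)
The plan is to prove the cycle (1)$\Rightarrow$(3)$\Rightarrow$(2)$\Rightarrow$(4)$\Rightarrow$(1).

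\emph{The easy implications.} A monotone class excluding a minor is weakly sparse, since $K_{t,t}$ contains $K_{t+1}$ as a minor. Hence (1)$\Rightarrow$(3) is exactly \zcref{thm-main-structure}. The implication (3)$\Rightarrow$(2) holds because $\Cc^\Upsilon\subseteq\Cc^F$. For (2)$\Rightarrow$(4), fix for each $G$ any full rooted spanning forest $F_G$. The resulting class is a subclass of $\Cc^\Upsilon$, so it has bounded twin-width. By \zcref{lem:tww-preserved}, it is therefore monadically dependent.

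\emph{The remaining implication (4)$\Rightarrow$(1).} We argue by contraposition: suppose $\Cc$ is monotone and excludes no minor, and fix an arbitrary choice of full forests $F_G$. The plan is to transduce all graphs from $\{(G,\preceq_{F_G})\}$, contradicting monadic dependence by \cite{baldwin1985second}.

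For each $n$, monotonicity gives a subgraph-minimal $G\in\Cc$ containing $K_n$ (or a universal encoding graph) as a minor. Fix a surjective minor model with connected branch sets $B_1,\dots,B_n$. We may assume $G$ consists exactly of spanning trees of the branch sets together with one edge between each pair of branch sets. So $G$ is sparse, but its forest $F_G$ is chosen adversarially.

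The key step is to define connectivity inside branch sets using the order $\preceq_{F_G}$ and unary colors. Inside each branch set $B_i$, every edge of $G[B_i]$ is an edge of $G$. The tree order lets us talk about the paths of $F_G$, but $F_G$ may wander between branch sets. The idea is to make the branch sets first-order identifiable despite this.

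\emph{Step 1: choose a cleaner host graph.} Rather than use $G$ directly, pass by monotonicity to a subgraph $G'$ that is itself a tree plus a matching of ``cross'' edges. Take a spanning tree $S$ of $G$ containing spanning trees of every $B_i$ plus tree edges linking branch sets. Then $G'=S\cup\{$one cross edge per pair$\}$. The forest $F_{G'}$ is a spanning tree of $G'$ but could differ from $S$.

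\emph{Step 2: extract a structured subgraph.} Use a Ramsey-type argument to pick a large sub-clique-minor on which $F_{G'}$ behaves uniformly. Since $F_{G'}$ misses only cyclomatic-many edges of $G'$, we can choose $G'$ with a more controlled structure. The simplest route is to take $G'$ to be a subdivided-clique-like minor-minimal graph, for instance a grid-like model of $K_n$ of maximum degree $3$.

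\emph{Step 3: transduce the graph.} In a subcubic, $\Cc$-minimal host, any spanning tree order has cover graph of bounded degree. Here I would apply \zcref{thm:bounded-degree-covergraph} to the subcubic class with its chosen forests. That theorem says monadic dependence forces the $E$-reducts of tree-ordered minors to be weakly sparse. The tree-ordered minors of $(G',\preceq_{F_{G'}})$ contain, up to contracting along $F_{G'}$, arbitrarily large bicliques. This is plausible because contracting a spanning tree's segments along cover edges recovers the branch sets of the clique model when the model is chosen compatibly.

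\emph{Main obstacle.} The hard part is Step 3. The branch sets are arbitrary connected sets, while $\prec$-contractions only merge along cover edges of $F_{G'}$. We need branch sets that are $F_{G'}$-subtrees, at least for a large subfamily.

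The first thing I would try is to use subcubic minimality. Every non-tree edge of $F_{G'}$ must be a cross edge. Shrinking each branch set to a path, the branch-set trees lie within $F_{G'}$ up to $O(1)$ edges each, which can be repaired by Ramsey on a bipartite sub-minor $K_{m,m}$. Since $\Cc$ is monotone, subcubic minor-minimal hosts exist in $\Cc$ because every graph is a minor of a subcubic graph.
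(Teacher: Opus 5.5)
Your three easy implications are correct and coincide with the paper's. The gap is in $(4)\Rightarrow(1)$, at exactly the step you flag as the main obstacle, and none of your suggested repairs closes it. The forest $F_{G'}$ is supplied by hypothesis (4), so it is an arbitrary spanning tree of $G'$, and nothing forces its non-tree edges to be cross edges. In a minimal model of $K_n$, the tree omits $|E(G')|-|V(G')|+1=\binom n2-n+1$ edges. In a subcubic model this is of the same order as the number of edges inside the branch sets. So nothing prevents $F_{G'}$ from cutting many branch sets into many pieces, and your claim that branch-set trees lie in $F_{G'}$ ``up to $O(1)$ edges each'' is unjustified. No Ramsey argument is actually given either. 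Because $\prec$-contractions can only merge along cover edges of $F_{G'}$, you have no handle on whether a large biclique survives as the graph reduct of a tree-ordered minor. Condition (3) of \zcref{thm:bounded-degree-covergraph} is therefore never contradicted.

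The missing idea has two parts.

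\emph{First, use subdivisions rather than arbitrary minor models.} If $H$ is cubic, an edge-minimal host of $H$ in the monotone class $\Cc$ is a subdivision $S$ of $H$. Subdivision vertices have degree two, so any spanning tree $F$ of $S$ omits at most one edge of each subdivided path; omitting two would isolate the vertices in between. Contract the $F$-cover edges on each path, leaving one uncontracted if none is omitted. This collapses every path to a single edge, and each contraction class contains exactly one branch vertex. The contracted edges can be guessed by unary colors, so $(H,\preceq_T)$, with $T$ a spanning tree of $H$, is a fixed transduction of $(S,\preceq_F)$. This control is unavailable for general branch sets, which contain branching vertices that an adversarial tree can separate arbitrarily.

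\emph{Second, aim at twin-width instead of bicliques.} Take connected cubic graphs $H_n$ of unbounded twin-width. The class $\{(H_n,\preceq_{T_n})\}$ would then be monadically dependent and weakly sparse, with cover graphs of degree at most $3$. By the implication $(1)\Rightarrow(2)$ of \zcref{thm:bounded-degree-covergraph}, it would have bounded twin-width, contradicting the unbounded twin-width of its graph reducts $H_n$. This route uses the equivalence in that theorem so that you never have to exhibit a biclique tree-ordered minor for an adversarial tree, which is precisely the step your proposal could not carry out.
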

\begin{proof}
The first condition implies the third by \zcref{thm-main-structure}, since a minor-excluded class is weakly sparse.
The third implies the second because $\Cc^{\Upsilon}\subseteq\Cc^F$.
The second implies the fourth by choosing, for every graph, any full rooted spanning forest and using that bounded twin-width implies monadic dependence.

It remains to prove that the fourth condition implies the first.
Suppose for a contradiction that~$\Cc$ is monotone and excludes no fixed minor.
Choose a sequence $(H_n)_{n\in\mathbb N}$ of connected cubic graphs of unbounded twin-width.
Such a sequence exists because the class of cubic graphs has unbounded twin-width~\cite{bonnet2022twwII}.

For every $n$, some graph in~$\Cc$ contains $H_n$ as a minor.  Taking an
edge-minimal minor model and using monotonicity, we conclude that~$\Cc$
contains a subdivision $S_n$ of~$H_n$.  Let $F_n$ be the full rooted spanning
forest supplied by the fourth condition.  Since $S_n$ is connected, $F_n$ is
a rooted spanning tree.

Every subdivision vertex has degree two and is incident with an edge of the
spanning tree~$F_n$.  On a path of $S_n$ that subdivides one edge of $H_n$,
the tree $F_n$ omits at most one edge: omitting two would disconnect the
vertices between them.  If it omits one, contract every cover edge of $F_n$
on that path; if it omits none, leave one cover edge uncontracted and contract
the others.  These choices can be made simultaneously so that every
contraction class contains exactly one original vertex of~$H_n$.  Contracting
the selected cover edges therefore collapses the internal vertices of every
subdivided path while retaining one edge between its two original endpoints.
It produces a tree-ordered copy~$(H_n,\preceq_{T_n})$, where $T_n$ is a
spanning tree of~$H_n$ and hence has maximum degree at most three.  The
selected cover edges can be guessed by unary colors and the contractions are
realized by the fixed tree-order contraction transduction from~\cite{buffiere2026characterizations}.  Thus, the resulting
$E$-reduct is exactly~$H_n$ (with parallel edges, if any, suppressed by the
convention that graphs are simple).

Thus, the class $\{(H_n,\preceq_{T_n}):n\in\mathbb N\}$ is a transduction of
$\{(S_n,\preceq_{F_n}):n\in\mathbb N\}$ and would be monadically dependent.
It is a weakly sparse class of tree-ordered graphs whose cover graphs have
maximum degree at most three.  By
\zcref{thm:bounded-degree-covergraph}, it would consequently have bounded
twin-width.  This is impossible because its $E$-reducts are the graphs
$H_n$, which have unbounded twin-width.  Hence, $\Cc$ must exclude a fixed
minor.
\end{proof}

Note that \zcref{thm-main-monotone} does not extend to hereditary
weakly sparse classes. This can be seen by considering the weakly sparse example
used in~\cite{bonnet2022twin} in order to prove that \zcref{thm:minor-tww} does not
extend to hereditary classes.

\section{Sub-connectivity logic}\label{sec:subconn-logic}

We now introduce \emph{sub-connectivity logic}~$\FOscon$ on binary relational
structures. Let $\sigma$ be a finite binary relational signature containing a
distinguished binary relation symbol~$Z$. The logic is evaluated on
$\sigma$-structures $\mathbf M$ in which $Z(\mathbf M)$ is symmetric and
irreflexive. We identify $\MZ$ with the simple undirected
graph with vertex set~$M$ and edge relation~$Z(\mathbf M)$. In particular,
$\MZ$ is a subgraph of $\Gaif(\mathbf M)$.

The logic $\FOscon[\sigma]$ is obtained from $\FO[\sigma]$ by allowing, for
every tuple $\bar z$, the additional atomic formula
\[
   \scon(x,y;\bar z).
\]

On a $\sigma$-structure $\mathbf M$, the atom
$\scon(x,y;z_1,\ldots,z_k)$ holds if $x$ and $y$ are vertices of the same
connected component of $\mathbf M\!\upharpoonright_Z-\bar z$. In particular,
the atom is false if $x$ or $y$ is among the deleted vertices. Paths are
allowed to have length zero, so $\scon(x,x;\bar z)$ holds precisely when $x$
is not deleted. If $\bar z$ is empty, we simply write $\scon(x,y)$.

\subsection{Equivalence to FO with forest orders}\label{sec:FOFO}

To characterize the expressive power of $\FOscon$, we note the following key lemma about \FOconn. 

\begin{lemma}
\label{lem:con=FO}
    For every nonnegative integer $k$,
    there exists a formula $\psi_k(x,y,\bar z)$ in $\FO[\{E,\preceq\}]$ with $|\bar z|=k$ and the following property:
    for every rooted depth-first search forest $F$ of a graph $G$ and every $u,v\in V(G)$ and $\bar w\in V(G)^k$, we have
    \[
    G\models \conn(u,v,\bar w)\quad\iff\quad (G,\preceq_F)\models \psi_k(u,v,\bar w).
    \]

     On the other hand,
     there exists a formula $\varrho(x,y)$ of $\FOconn$ (for graphs with a unary predicate~$R$),  such that if $Y$ is a forest, $R$ marks one root in each connected component of $Y$, and $\preceq$ is the associated forest order, then
     
\[
(Y,\preceq)\models u\preceq v\quad\iff (Y,R)\models \varrho(u,v).
\]
\end{lemma}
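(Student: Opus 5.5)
The second statement is the easy one, so I will do it first. In a forest $Y$ with one root per component marked by $R$, we have $u\preceq v$ exactly when $u$ and $v$ lie in the same component and $u$ lies on the unique path from the root $r$ of that component to $v$. Equivalently, $u=v$, or else $u$ and $v$ are connected and $v$ is separated from $r$ by deleting $u$. I will therefore take
\[
\varrho(x,y)\is x=y\ \vee\ \exists r\bigl(R(r)\wedge \conn(x,r)\wedge \conn(y,r)\wedge\neg\conn(y,r;x)\bigr).
\]
The case $u=r$ is covered, because $\conn(y,r;r)$ is false by definition. This part is routine.

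For the first statement, the key tool is the classical property of DFS forests: every edge of $G$ joins two $\preceq_F$-comparable vertices, and $F$ is full, so components of $G$ are components of $F$. I will proceed by induction on $k$. For $k=0$, $\conn(x,y)$ is equivalent to $x$ and $y$ having a common $\preceq_F$-ancestor, i.e.\ a common root, which is first-order. For the induction step I plan to use the structure of $G-\bar w$ relative to $F$. Deleting the vertices $\bar w$ splits $F$ into at most $(k+1)\cdot\Delta$ pieces, but the degree is unbounded, so I will describe the pieces differently. Every vertex $v\notin\bar w$ belongs to the $F$-subtree hanging below its lowest ancestor in $\bar w$ (or below a root). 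Call the child of that ancestor on the path to $v$ the \emph{head} $h(v)$; it is first-order definable as the minimal $a\preceq v$ such that no $w_i$ lies strictly between the relevant ancestor and $a$ \dots. More precisely, $h(v)$ is the $\preceq_F$-minimal $a\preceq_F v$ with no $w_i$ satisfying $a\preceq w_i\preceq v$. The set of vertices with a given head is connected in $F-\bar w$ and hence in $G-\bar w$. So connectivity in $G-\bar w$ reduces to connectivity of a graph on the heads.

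The main step is to show that two head-classes $C_h$ and $C_{h'}$ can only be adjacent in a controlled way. By the DFS property, an edge between them joins comparable vertices. The heads of classes that lie not directly below some $w_i$ are boundedly many: the roots, and heads whose $F$-parent is in $\bar w$. Heads that are children of a $w_i$ can be unboundedly many, and each such class sits below $w_i$ as a full subtree minus deeper deleted vertices. Such a class can only connect to the rest through an edge going up to a strict ancestor of $w_i$, since it has no edges to sibling subtrees. Moreover, such a class is only relevant if it reaches some ancestor class, and all of those are among the boundedly many classes of ancestors of the $w_i$. I will therefore define the ``skeleton'' classes: those whose head is a root or a $w_i$-child lying on a root-to-$w_j$ path. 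There are at most $k$ plus the number of roots involved, hence boundedly many relevant ones for $x$, $y$ and $\bar w$. Every other class is a \emph{pendant} class, which attaches to the skeleton only through edges to ancestors. Whether a class $C_h$ has an edge to class $C_{h'}$ can be expressed in first order, as $\exists a,b\,(h(a)=h\wedge h(b)=h'\wedge E(a,b))$.

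Then $\conn(x,y;\bar w)$ holds iff there is a sequence of classes from $C_{h(x)}$ to $C_{h(y)}$. Any such walk passes only through skeleton classes, except possibly its endpoints: a pendant class in the interior would need two attachments, but it can be left again only upward into the skeleton, so it can be shortcut. This lets me bound the walk length by roughly $2k+2$ and write it as a finite disjunction of existential formulas, yielding $\psi_k$. I expect this bounding of the walk, namely rigorously showing that only boundedly many distinct classes (the skeleton classes together with $x$'s and $y$'s classes) need to be visited, to be the main obstacle. It requires carefully using that edges leaving a class headed by a child of $w_i$ either go into ancestors of $w_i$ or into descendants within the same $w_i$-subtree, and the latter are themselves skeleton classes or pendant classes below some other $w_j$.
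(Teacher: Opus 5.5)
Your formula $\varrho$ for the second part is correct; the extra disjunct $x=y$ and conjunct $\conn(x,r)$ are redundant but harmless. Your framework for the first part is the same as the paper's $\mathsf{Comp}_k$/$\mathsf{Step}_k$ construction: classes of $F-\bar w$ given by heads, first-order adjacency between classes, and a bounded-length walk of classes.

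The gap is in the step you flag as the main obstacle. You claim that a class walk needs to pass only through skeleton classes apart from its endpoints, because an interior pendant class ``can be shortcut''. This is false. Take the DFS tree given by the path $r,w_1,s,w_2,c$ rooted at $r$, add the back edges $cr$ and $cs$, and let $\bar w=(w_1,w_2)$. The classes are $\{r\}$ and $\{s\}$, which are skeleton, and $\{c\}$, which is pendant. The vertices $r$ and $s$ are connected in $G-\bar w$, but only through $c$. The two skeleton classes are not adjacent, so the pendant class cannot be removed.

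The gadget can be chained. Along a tree path $s_0,w_1,s_1,\dots,w_k,s_k$, for $2\le j\le k$ attach a leaf $c_j$ below $w_j$ with back edges to $s_{j-2}$ and $s_{j-1}$. Then connecting $s_0$ to $s_{k-1}$ requires all of $c_2,\dots,c_k$ as interior pendant classes. So restricting interiors to skeleton classes gives a wrong formula, and so does any length bound of the form $k+O(1)$.

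The fact you did establish, that pendant classes attach only upward, is the right one, but it has to be used differently. Every edge between two distinct classes joins comparable vertices $u\prec v$. The parent of the $\preceq$-minimal $w_i$ with $u\prec w_i\prec v$ lies in the class of $u$. Hence every inter-class edge has an endpoint in a class containing the parent of some $w_i$, and there are at most $k$ such classes; these are the paper's internal components.

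In a class walk without repeated classes, each of these at most $k$ classes is incident with at most two edges of the walk. So the walk has at most $2k$ inter-class steps; equivalently, pendant classes never occur consecutively. Your $\psi_k$ must therefore existentially quantify a walk of at most $2k+1$ arbitrary classes, with no skeleton restriction on interior classes. With this correction the argument goes through and coincides with the paper's; the announced induction on $k$ is not needed.
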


\begin{proof} 
Let $G$ be a graph, let $k\geq 0$, and let $F$ be a rooted depth-first spanning forest of~$G$.
We write $\preceq$ for the forest
order $\preceq_F$.
We use the standard DFS property for the graph $G$:  every edge of $G$ has comparable endpoints (with respect to  $\preceq$). 

We first define, in first-order logic over $\{E,\preceq\}$, when two vertices
lie in the same component of the forest $F-\{z_1,\dots,z_k\}$. Define
\[
\mathsf{Comp}_k(x,y;\bar z):=
\exists a\Bigl(
   \mathsf{lca}(a;x,y)
   \wedge
   \bigwedge_{i=1}^k
      \neg(a\preceq z_i\wedge z_i\preceq x)
   \wedge
   \bigwedge_{i=1}^k
      \neg(a\preceq z_i\wedge z_i\preceq y)
\Bigr).
\]
This formula holds exactly when $x$ and $y$ lie in the same tree of~$F$ and
the unique tree path from $x$ to $y$ avoids all vertices occurring in
$\bar z$. Hence,
\[
   (G,\preceq_F)\models \mathsf{Comp}_k(u,v;\bar w)
\]
if and only if $u$ and $v$ lie in the same connected component of
$F-\{w_1,\dots,w_k\}$.

Next, define a one-step relation between components of $F-\{z_1,\dots,z_k\}$:
\[
\begin{aligned}
\mathsf{Step}_{k}(x,y;\bar z) :=\ 
& \mathsf{Comp}_{k}(x,y;\bar z) \\
& {}\vee
\exists u\,\exists w\bigl(
   \mathsf{Comp}_{k}(x,u;\bar z)
   \wedge
   E(u,w)
   \wedge
   \mathsf{Comp}_{k}(w,y;\bar z)
\bigr).
\end{aligned}
\]

Thus, $\mathsf{Step}_{k}(x,y;\bar z)$ expresses that $x$ and $y$ are either in the same component of $F-\{z_1,\dots,z_k\}$, or that one can move from the component of $x$ to the component of $y$ using one edge of $G-\{z_1,\dots,z_k\}$.

If $k=0$, then $x$ and $y$ are connected in $G$ exactly when they lie in the same tree of $F$, which is expressed by $\mathsf{Comp}_0(x,y)$.
For $k\geq 1$, we claim that $x$ and $y$ are connected in $G-\{z_1,\ldots,z_k\}$ if and only if they can be connected by at most $2k$ applications of $\mathsf{Step}_k$.
The backward implication is immediate from the definition of $\mathsf{Step}_k$: each step is realized by a path in $G-\{z_1,\ldots,z_k\}$.

For the forward implication, work in the connected component of $G$ containing $x$ and $y$, restrict~$F$ to this component, and ignore every $z_i$ outside it.
The restricted forest is a rooted depth-first spanning tree of this component.
We say that a connected component of $F-\{z_1,\dots,z_k\}$ is \emph{internal} if it contains the parent (in $\preceq$) of some $z_i$. Hence, there are at most $k$ internal connected components. Let $P$ be an $x$-$y$ path in
$G-\{z_1,\dots,z_k\}$ minimizing the number of edges whose endpoints lie in
different connected components of $F-\{z_1,\dots,z_k\}$. Note that every
connected component appears at most once. Moreover, if $uv$ is an
inter-component edge of $P$, then, since $uv\in E$ and $F$ is a DFS forest
of $G$, the vertices $u$ and $v$ are comparable in $\preceq$. As $u$ and $v$
lie in different components of $F-\{z_1,\dots,z_k\}$, the
$\preceq$-chain between them contains at least one $z_i$. It follows that at
least one of the components containing $u$ and $v$ in
$F-\{z_1,\dots,z_k\}$ is internal.
As the $k$ internal components are each visited at most once in $P$, and each touches at most $2$ inter-component edges, this yields at most $2k$ inter-component edges in $P$, and at most $2k+1$
connected components of $F\setminus\{z_1,\dots,z_k\}$.

Consequently, $x$ and $y$ are connected in $G-\{z_1,\dots,z_k\}$ if and only if the following first-order formula holds:
For $k=0$, put
\[
   \mathsf{Conn}_0(x,y):=\mathsf{Comp}_0(x,y).
\]
For $k\geq 1$, put
\[
\mathsf{Conn}_{k}(x,y;\bar z) :=
\exists v_1\cdots \exists v_{2k+1}\,\Bigl((v_1=x)\wedge (v_{2k+1}=y)\wedge
\bigwedge_{j=1}^{2k}
   \mathsf{Step}_{k}(v_j,v_{j+1};\bar z)\Bigr).
\]
Since $\mathsf{Step}_k$ is reflexive on the undeleted vertices, a sequence with fewer than $2k$ steps can be padded by repetitions.

Thus, for every $u,v\in V(G)$ and $\bar w\in V(G)^k$, we have
    \[
    G\models \conn(u,v,\bar w)\quad\iff\quad (G,\preceq_F)\models \mathsf{Conn}_k(u,v,\bar w).
    \]

This completes the forward direction.
For the converse direction, assume that $(Y,\preceq)$ is a forest equipped with a rooted forest order defined by some rooting of $Y$, and let $R$ mark the roots.  

The relation $u\preceq v$ holds if and only if $v$ is connected to the marked
root of its tree and deleting~$u$ destroys this connection. Thus, define
\[
\varrho(x,y):=
   \exists r\bigl(
      R(r)
      \wedge \conn(r,y)
      \wedge \neg\conn(r,y;x)
   \bigr).
\]
Then
\[
(Y,\preceq)\models u\preceq v
\quad\Longleftrightarrow\quad
(Y,R)\models\varrho(u,v).\qedhere
\]
\end{proof}

The next theorem yields the promised equivalence for arbitrary binary
relational structures.  All relations from the original signature are
carried through the translations unchanged.

\begin{theorem}[store={thm:scon=FOFO},
    restate-keys={note=see \zcpageref[nocap]{thm:scon=FOFO}}]
\mbox{}
\smallskip

\FOscon~is equivalent to \FO~with DFS spanning forest orders in the following sense.
\medskip

\noindent
\textup{\underline{\FO~with DFS spanning forest orders can simulate \FOscon:}}
\smallskip

\noindent
Let $\sigma$ be a finite binary relational signature containing a
distinguished binary relation symbol~$Z$ and not containing~$\preceq$.

For every formula $\varphi(\bar x)$ of $\FOscon[\sigma]$, there is an
effectively computable formula
$\psi(\bar x)\in\FO[\sigma\cup\{\preceq\}]$ such that the following holds.
Let $\mathbf M$ be a $\sigma$-structure in which $Z(\mathbf M)$ is symmetric
and irreflexive, and let $F$ be any rooted depth-first search forest of
$\MZ$. Then, for every tuple
$\bar v\in M^{|\bar x|}$,
\[
   \mathbf M\models\varphi(\bar v)
   \quad\Longleftrightarrow\quad
   (\mathbf M,\preceq_F)\models\psi(\bar v).
\]
\medskip

\noindent

\textup{\underline{\FOscon~can simulate \FO~with spanning forest orders:}}

\smallskip

\noindent
Let $\tau$ be a finite binary
relational signature containing neither $Z$, $R$, nor~$\preceq$.

For every formula
$\chi(\bar x)\in\FO[\tau\cup\{\preceq\}]$, there is an
effectively computable formula $\theta(\bar x)$ of
$\FOscon[\tau\cup\{Z,R\}]$, where $Z$ and $R$ are fresh binary and unary
relation symbols, respectively, with the
following property. Let $\mathbf M$ be a $\tau$-structure, let $\preceq$ be a
rooted forest order on~$M$, let $Z_\preceq$ be the set of undirected cover edges
of $\preceq$, 
and let $R_\preceq$ mark exactly the roots of~$\preceq$.
Let $\mathbf M_\preceq$ be the expansion of $\mathbf M$ in which
$ Z(\mathbf M_\preceq)=Z_\preceq$ and $R(\mathbf M_\preceq)=R_\preceq$.
Then, for every $\bar v\in M^{|\bar x|}$,
\[
   (\mathbf M,\preceq)\models\chi(\bar v)
   \quad\Longleftrightarrow\quad
   \mathbf M_\preceq\models\theta(\bar v).
\]
\end{theorem}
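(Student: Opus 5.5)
Both directions are proved by structural induction on formulas. Only the new atoms need a translation; all other connectives, quantifiers and atoms are copied verbatim. In each direction the translation is syntactic and effective, and it does not depend on the structure.

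\emph{Forward direction.} Given $\varphi(\bar x)\in\FOscon[\sigma]$, we replace every atom $\scon(x,y;\bar z)$ with $|\bar z|=k$ by a first-order formula over $\sigma\cup\{\preceq\}$. Apply \zcref{lem:con=FO} to the graph $\MZ$. The forest $F$ is by hypothesis a rooted DFS forest of $\MZ$, and we substitute the edge symbol $E$ by $Z$ in the formula $\psi_k$ (the formula $\mathsf{Conn}_k$ built there). One detail needs a check: $\conn(x,y;\bar z)$ in separator logic has the same semantics as $\scon$. In particular, both must be false when $x$ or $y$ lies in $\bar z$, and $\scon(x,x;\bar z)$ must hold for undeleted $x$. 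If the formula $\mathsf{Comp}_k$ of \zcref{lem:con=FO} does not already enforce the first condition, we conjoin $\bigwedge_i(x\neq z_i\wedge y\neq z_i)$. The tree-path argument shows that $\mathsf{Comp}_k(x,x;\bar z)$ holds when $x\notin\bar z$, using $\mathsf{lca}(x;x,x)$, so reflexivity is fine. With these points settled, a routine induction gives $\mathbf M\models\varphi(\bar v)$ if and only if $(\mathbf M,\preceq_F)\models\psi(\bar v)$. The only property of $F$ that is used is that every $Z$-edge has $\preceq_F$-comparable endpoints, together with fullness of $F$ in $\MZ$. Both are DFS properties already exploited in \zcref{lem:con=FO}.

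\emph{Backward direction.} Given $\chi\in\FO[\tau\cup\{\preceq\}]$, we replace each atom $x\preceq y$ by the $\FOscon$-formula $\varrho(x,y)$ from the second part of \zcref{lem:con=FO}, where $\conn$ is read as $\scon$ with respect to $Z$. In $\mathbf M_\preceq$, the $Z$-reduct is exactly the undirected cover forest of $\preceq$ and $R$ marks its roots, which is the situation covered by that lemma. We must also handle the reflexive case. The formula $\varrho(u,u)$ requires $\scon(r,u)\wedge\neg\scon(r,u;u)$, and this holds because deleting $u$ makes $\scon(r,u;u)$ false. If $u$ is itself the root $r$, then $\scon(r,r)$ holds and $\scon(r,r;r)$ is false, so the case $u=r$ is also correct. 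The remaining $\tau$-atoms are unchanged. Induction then gives the claimed equivalence.

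The main obstacle is purely semantic: making sure the boundary conventions of $\scon$ (deleted endpoints, length-zero paths, components not containing $\bar z$) agree exactly with those assumed in \zcref{lem:con=FO}. The substantive combinatorial content, namely that connectivity after deleting $k$ vertices needs only at most $2k+1$ component hops along the DFS forest, has already been established there. So I would carefully verify these edge cases and patch them with the explicit inequality conjuncts described above.
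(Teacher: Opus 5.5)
Your proposal is correct and is essentially the paper's proof. Both directions are atom-by-atom substitutions: for the forward translation you use $\mathsf{Conn}_k$ from \zcref{lem:con=FO} with $E$ replaced by $Z$, and for the backward translation you replace each $\preceq$-atom by $\varrho$ with $\conn$ read as $\scon$ relative to the cover-forest edges and root marks. Your boundary-case checks are sound. In fact $\mathsf{Comp}_k$ already forces $x,y\notin\bar z$: if $x=z_i$, then $a\preceq x$ gives $a\preceq z_i\wedge z_i\preceq x$, which falsifies the corresponding conjunct. So the optional inequality conjuncts are unnecessary, though harmless.
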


\begin{proof}
For each $k$, let $\psi_k^Z$ be obtained from $\psi_k$ in
\zcref{lem:con=FO} by replacing every atom $E(u,v)$ by $Z(u,v)$.
For the first translation, replace every atom
$\scon(x,y;z_1,\ldots,z_k)$ by $\psi_k^Z(x,y;\bar z)$. Ordinary
atomic formulas, Boolean connectives, and quantifiers are left unchanged.

For the converse, leave the symbols of $\tau$ unchanged and replace each
atom $x\preceq y$ by
\[
   \exists r\bigl(
      R(r)
      \wedge \scon(y,r)
      \wedge \neg\scon(y,r;x)
   \bigr).
\]
Because $Z(\mathbf M_\preceq)$ is exactly the symmetric edge relation of the
cover forest of $\preceq$, correctness follows from the second part of
\zcref{lem:con=FO}.
\end{proof}

\subsection{Model checking}\label{sec-MC}

We now derive the model-checking result for classes excluding a fixed minor, which follows immediately from \zcref{thm:scon=FOFO,cor-model-checking}.

\begin{theorem}[store=thm:main-mc,
    restate-keys={note=see \zcpageref[nocap]{thm:main-mc}}]
There exist a computable function $f$ and an absolute constant $c$ such that
the following holds. Given a finite binary relational structure $\mathbf M$
over a finite signature containing the distinguished binary relation
symbol~$Z$, where $Z(\mathbf M)$ is symmetric and irreflexive, and an
$\FOscon$-sentence~$\varphi$ over its signature, one can decide whether
$\mathbf M\models\varphi$ in time
\[
   f\bigl(|\varphi|,h(\Gaif(\mathbf M))\bigr)\cdot\|\mathbf M\|^c,
\]
where $\|\mathbf M\|$ denotes the size of the input encoding. In particular,
$\FOscon$ model checking is fixed-parameter tractable, parameterized by formula length, on every class of binary structures whose Gaifman graphs exclude a
fixed minor.
\end{theorem}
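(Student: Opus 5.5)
The plan is to combine the forward translation of \zcref{thm:scon=FOFO} with the model-checking result \zcref{cor-model-checking}, applied to a DFS spanning forest of the admissible-edge graph. Given $\mathbf M$ and $\varphi$, we first compute the $\FO[\sigma\cup\{\preceq\}]$-formula $\psi$ supplied by \zcref{thm:scon=FOFO}. This translation is effective, and $|\psi|$ is bounded by a computable function of $|\varphi|$: each atom $\scon(x,y;\bar z)$ with $|\bar z|=k\leq|\varphi|$ is replaced by $\psi^Z_k$, whose size is polynomial in $k$. Next we compute, in linear time, a rooted depth-first search forest $F$ of $\MZ$ by running DFS from an arbitrary root in each component. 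By \zcref{thm:scon=FOFO}, $\mathbf M\models\varphi$ if and only if $(\mathbf M,\preceq_F)\models\psi$.

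The main point to check is that \zcref{cor-model-checking} requires $F$ to be a rooted spanning forest of $\Gaif(\mathbf M)$, not merely of $\MZ$. Since $\MZ\subseteq\Gaif(\mathbf M)$ and both graphs have vertex set $M$, the forest $F$ is an acyclic spanning subgraph of $\Gaif(\mathbf M)$. The definition of a rooted spanning forest in the preliminaries does not require fullness. We fix one root per component of $F$, and $\preceq_F$ is the associated forest order. Hence $F$ is a legitimate input, even though its components may be finer than those of $\Gaif(\mathbf M)$. If one insists on a spanning forest with one root per $\Gaif$-component, we would instead invoke \zcref{thm-main-algo-structures} directly. That statement also allows an arbitrary rooted spanning forest, and \zcref{rem:forest-tree-reduction} shows that non-fullness is harmless.

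We then run the algorithm of \zcref{cor-model-checking} on $(\mathbf M,F,\psi)$. Its running time is $f'(|\psi|,h(\Gaif(\mathbf M)))\cdot\|\mathbf M\|^{c}$. Substituting the computable bound on $|\psi|$ and adding the linear cost of the DFS and the translation gives the claimed bound $f(|\varphi|,h(\Gaif(\mathbf M)))\cdot\|\mathbf M\|^c$ with $f$ computable and $c$ absolute. The finite signature is absorbed as in the remark after \zcref{thm:mc-tww}: we restrict to the symbols occurring in $\psi$, together with $Z$ and $\preceq$. The final sentence of the theorem follows directly: if every Gaifman graph in the class excludes $K_t$, then $h(\Gaif(\mathbf M))<t$, so the running time is $f_t(|\varphi|)\cdot\|\mathbf M\|^c$.
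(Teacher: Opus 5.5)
Your proposal is correct and is essentially the paper's proof: compute a DFS forest of the $Z$-graph, note that it is a rooted spanning forest of the Gaifman graph, translate $\varphi$ into a first-order sentence with the forward direction of the equivalence theorem, and run the model-checking corollary for forest-ordered structures. The only difference is cosmetic: the paper passes to the reduct to the symbols of $\varphi$ together with $Z$ before computing the forest, while you restrict the signature at the end; since $Z$ is retained, both give the same forest and the same Hadwiger bound.
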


\begin{proof}
Let $\sigma_\varphi$ be the set of relation symbols other than~$Z$ that occur
in~$\varphi$, and let $\mathbf M_\varphi$ be the
$(\sigma_\varphi\cup\{Z\})$-reduct of~$\mathbf M$. This does not change
whether $\varphi$ holds, while
$|\sigma_\varphi|\leq|\varphi|$,
$\|\mathbf M_\varphi\|\leq\|\mathbf M\|$, and
\[
 h(\Gaif(\mathbf M_\varphi))
 \leq h(\Gaif(\mathbf M)).
\]

Compute a rooted depth-first search forest $F_Z$ of
$\mathbf M_\varphi\!\upharpoonright_Z$, including every isolated vertex as a
one-vertex tree. Since $\mathbf M_\varphi\!\upharpoonright_Z$ is a subgraph of
$\Gaif(\mathbf M_\varphi)$, the forest $F_Z$ is a rooted spanning forest of
$\Gaif(\mathbf M_\varphi)$. By \zcref{thm:scon=FOFO}, one can effectively
compute a first-order sentence~$\psi$ such that
\[
   \mathbf M\models\varphi
   \quad\Longleftrightarrow\quad
   (\mathbf M_\varphi,\preceq_{F_Z})\models\psi.
\]
Moreover, $|\psi|$ is bounded by a computable function of~$|\varphi|$.
Applying \zcref{cor-model-checking} to the right-hand side proves the claimed
running time. The computation of the reduct, the Gaifman graph, and the DFS
forest takes polynomial time with an absolute exponent and is absorbed
into~$f$ and~$c$.
\end{proof}

\subsection{Monadic dependence}\label{subsec:dependence}

We now show that on weakly sparse graph classes $\FOscon$ characterizes classes with excluded minors. Recall that a class is \emph{monadically dependent for $\FOscon$} if every expansion by finitely many unary predicates is $\FOscon$-dependent, that is, no $\FOscon$-formula
has the independence property on the class of all such expansions.

For a class $\mathscr K$ of relational structures, write
$\Gaif(\mathscr K):=\{\Gaif(\mathbf M)\mid \mathbf M\in\mathscr K\}$.  For a graph class
$\Cc$, let
\[
   \Cc_Z
   :=
   \{(G,Z)\mid G\in\Cc,\ Z\subseteq E(G)\}.
\]

\begin{theorem}[store=thm:dependence,
    restate-keys={note=see \zcpageref[nocap]{thm:dependence}}]
Let $\mathscr K$ be a class of finite binary relational structures over a fixed
signature containing the distinguished binary relation symbol~$Z$, such that
$Z(\mathbf M)$ is symmetric and irreflexive for every
$\mathbf M\in\mathscr K$. If
$\Gaif(\mathscr K)$ excludes a fixed minor, then $\mathscr K$ is monadically
dependent for $\FOscon$.

Moreover, if $\Cc$ is a weakly sparse class of graphs, then
\[
   \Cc_Z\text{ is monadically dependent for }\FOscon
   \quad\Longleftrightarrow\quad
   \Cc\text{ excludes a fixed minor}.
\]
\end{theorem}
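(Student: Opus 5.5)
For the first assertion, I will reduce $\FOscon$-dependence to $\FO$-dependence of a forest-ordered class, using the forward translation of \zcref{thm:scon=FOFO}. Assume for contradiction that some $\FOscon$-formula $\varphi(\bar x;\bar y)$ is independent on the class of monadic expansions of $\mathscr K$. For each witnessing expansion $\mathbf M^+$, fix a rooted DFS forest $F$ of $\mathbf M\!\upharpoonright_Z$ and extend it to a rooted spanning forest of $\Gaif(\mathbf M)$ without changing the order. This is possible since $\mathbf M\!\upharpoonright_Z\subseteq\Gaif(\mathbf M)$, and every spanning forest of a subgraph on the same vertex set is a spanning forest of the supergraph. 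The translated $\FO$-formula $\psi$ then satisfies $\mathbf M^+\models\varphi(\bar a;\bar b)$ iff $(\mathbf M^+,\preceq_F)\models\psi(\bar a;\bar b)$. Hence $\psi$ is independent on monadic expansions of $\{(\mathbf M,\preceq_F)\}$. The unary predicates are carried along untouched because the translation only replaces $\scon$-atoms.

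I will contradict this using \zcref{thm-main-algo-structures}. Let $t$ bound the Hadwiger numbers of $\Gaif(\mathscr K)$. Every $(\mathbf M,\preceq_F)$ admits a linear expansion of twin-width at most $\xi_\sigma(t)$, and removing relations does not increase twin-width. So $\{(\mathbf M,\preceq_F)\}$ has bounded twin-width and is monadically dependent by \zcref{lem:tww-preserved}. Monadic dependence for $\FO$ is exactly the absence of independent formulas on monadic expansions, so this is a contradiction.

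For the equivalence on weakly sparse $\Cc$, the forward direction ($\Leftarrow$) is the first part applied to $\mathscr K=\Cc_Z$, since $\Gaif(G,Z)=G$. For the converse, suppose $\Cc$ is weakly sparse and excludes no minor. By \zcref{thm:minor-tows}, $\Cc^\Upsilon$ is not monadically dependent, so there is an $\FO$-formula $\chi$ over $\{E,\preceq\}$ that is independent on unary expansions of $\Cc^\Upsilon$. Given $(G,\preceq_F)\in\Cc^\Upsilon$ with $F$ full, I take $Z:=E(F)\subseteq E(G)$, so that $(G,Z)\in\Cc_Z$, and add one unary predicate $R$ marking the roots. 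The cover edges of $\preceq_F$ are exactly $E(F)$. The backward translation of \zcref{thm:scon=FOFO} therefore turns $\chi$ into an $\FOscon$-formula $\theta$ over $\{E,Z,R\}$ plus the colors, such that $\theta$ is independent on monadic expansions of $\Cc_Z$. This contradicts monadic $\FOscon$-dependence.

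The main obstacle I expect is bookkeeping rather than mathematics. I need to check that the first direction genuinely needs only the spanning-forest version of \zcref{thm-main-algo-structures}: a DFS forest of the $Z$-graph is not full in $\Gaif(\mathbf M)$, and the plan uses only that it is spanning, with its roots kept as roots. I also need to check that the translations respect the partition into parameter variables $\bar x;\bar y$.
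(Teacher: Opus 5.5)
Your proposal is correct and follows the paper's proof essentially step for step. For the first assertion, both use the forward $\FOscon$-to-$\FO$ translation over a DFS forest order of the $Z$-graph together with bounded twin-width, hence monadic dependence, of the resulting forest-ordered structures; for the converse, both apply the backward translation (with $Z:=E(F)$ and a root predicate $R$) to the independent formula supplied by \zcref{thm:minor-tows}. The only cosmetic difference is that you invoke \zcref{thm-main-algo-structures} directly, whereas the paper applies \zcref{thm-main-structure} to $\Gaif(\mathscr K)$ and transfers to $\Dd$ via \zcref{lem:structure-from-gaif} and \zcref{fact:expand}, which is exactly how \zcref{thm-main-algo-structures} is proved.
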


\begin{proof}
Assume first that $\Gaif(\mathscr K)$ excludes a fixed minor. For every
$\mathbf M\in\mathscr K$, choose a rooted depth-first search forest
$F_{\mathbf M}$ of $\MZ$, and let
\[
   \Dd:=
   \{(\mathbf M,\preceq_{F_{\mathbf M}}):\mathbf M\in\mathscr K\}.
\]
Each $F_{\mathbf M}$ is a rooted spanning forest of
$\Gaif(\mathbf M)$. By \zcref{thm-main-structure}, the class of all
forest-ordered expansions of the graphs in $\Gaif(\mathscr K)$ has bounded
twin-width and is therefore monadically dependent. By
\zcref{lem:structure-from-gaif,fact:expand}, the class~$\Dd$ is contained in
a fixed transduction of that forest-ordered graph class. Consequently,
$\Dd$ is monadically dependent as well.

Suppose, towards a contradiction, that $\mathscr K$ is monadically
independent for~$\FOscon$.  Then some $\FOscon$-formula
$\varphi(\bar x;\bar y)$ has the independence property on a class of monadic
expansions of~$\mathscr K$.  Equip every underlying structure with the fixed
DFS forest order chosen above.  The added unary predicates make the resulting
structures a monadic expansion of~$\Dd$.  By the first translation in
\zcref{thm:scon=FOFO}, the formula $\varphi$ is replaced uniformly by a
first-order formula $\varphi^\dagger$ with exactly the same truth values on
the witnessing tuples.  Hence, $\varphi^\dagger$ has the independence property
on a monadic expansion of~$\Dd$, contradicting the monadic dependence of
$\Dd$.  This proves the first assertion.

Now let $\Cc$ be a weakly sparse graph class.
If $\Cc$ excludes a fixed minor, then $\Gaif(\mathbf M)=G$ for every $\mathbf M\in\Cc_Z$ whose $E$-reduct is~$G$, and the first assertion applies.
Conversely, suppose that $\Cc$ does not exclude a minor.
By \zcref{thm:minor-tows}, some first-order formula has the independence property on a monadic expansion of $\Cc^{\Upsilon}$.
For every $(G,\preceq_F)\in\Cc^{\Upsilon}$, let $\mathbf M_{G,F}\in\Cc_Z$ be the expansion of $G$ in which $Z(\mathbf M_{G,F})$ is the symmetric edge relation of $F$.
Keep the unary predicates of the witnessing expansion and add a fresh unary predicate $R$ marking the roots of $F$.
The second translation in \zcref{thm:scon=FOFO} replaces the witnessing formula by an $\FOscon$-formula with exactly the same truth values on the witnessing tuples.
Hence, this formula has the independence property on a monadic expansion of $\Cc_Z$, so $\Cc_Z$ is not monadically dependent for $\FOscon$.
\end{proof}

We now show hardness of $\FOscon$ model checking on monotone classes that do not exclude a minor, under the efficient encoding hypothesis below.
The encoding algorithm need not output a minor model.
We use an alternative construction that avoids having to find one by using a subcubic coding graph whose minor-minimal host is necessarily a subdivision.

\begin{theorem}[store=thm:main-hardness-new,
    restate-keys={note=see \zcpageref[nocap]{thm:main-hardness-new}}]
Assume that $\Cc$ is monotone, does not exclude any fixed minor, and admits
efficient encoding, that is, there is a polynomial-time algorithm that, given
a graph~$H$, computes a graph $F\in\Cc$ such that $H\leq_m F$ and
$H\not\leq_m F'$ for every proper subgraph $F'$ of~$F$.
Then deciding, given a graph $F\in\Cc$, a subset $Z_F\subseteq E(F)$, and an
$\FOscon$-sentence~$\varphi$, whether $(F,Z_F)\models\varphi$ is
$\mathsf{AW}[\star]$-hard, parameterized by~$|\varphi|$.
\end{theorem}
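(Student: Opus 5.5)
We reduce from first-order model checking on general graphs, which is $\mathsf{AW}[\star]$-hard parameterized by formula size. Given a graph $H$ and an $\FO$-sentence $\varphi$, we build in polynomial time a pair $(F,Z_F)$ with $F\in\Cc$ and an $\FOscon$-sentence $\varphi'$, with $|\varphi'|$ depending only on $|\varphi|$, such that $H\models\varphi$ iff $(F,Z_F)\models\varphi'$. The unary colors we would like to use are not available in the input signature. So we simulate them inside the graph, or rather avoid them: the encoding is designed so that the needed vertex classes are definable in $\FOscon$ from $F$ and $Z_F$ alone.

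The first step replaces $H$ by a subcubic coding graph $H^\ast$, computable in polynomial time, from which $H$ is $\FO$-interpretable in a rigid way. For example, replace each vertex $v$ of degree $d$ by a cycle of length $\max(d,3)$. The original edges become matching edges between cycles. We attach distinguishing gadgets, for instance pendant paths of prescribed lengths or small triangles, to mark ``vertex-cycles'' and ``edge-connections''. We then apply the efficient-encoding algorithm to $H^\ast$ and obtain $F\in\Cc$ with $H^\ast\leq_m F$, where $F$ is subgraph-minimal with this property. Since $H^\ast$ is subcubic, a minimal minor model of $H^\ast$ can be taken with branch sets that are trees with at most three leaves attached to the outside. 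The key observation is that, for subcubic $H^\ast$, a subgraph-minimal host is a subdivision of $H^\ast$. Branch sets could a priori be subdivided stars with three branches, but then the center could be chosen as the image vertex, making every branch set part of a subdivision; minimality then forces $F$ to be exactly a subdivision. We will also need that isolated or extra vertices do not occur, which again follows from minimality.

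Since $F$ is a subdivision of $H^\ast$, its branch vertices are exactly the degree-$\geq3$ vertices, which is first-order definable. The subdivided paths are the maximal degree-$2$ paths, and we set $Z_F:=E(F)$ or a suitable subset. Two branch vertices $a,b$ are joined by a subdivided path iff there is a neighbor $c$ of $a$ of degree $2$ such that $\scon(c,b;\bar z)$ holds after deleting the other neighbors of $a$ and the neighbors of $b$ other than the one on the path. Because $H^\ast$ is subcubic, this uses a bounded number of parameters, so adjacency in $H^\ast$ is $\FOscon$-definable. The gadgets distinguishing the vertex-cycles from the connecting edges become recognizable by bounded-size $\FOscon$-patterns on branch vertices. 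From these we interpret $H$: its vertices are represented by a chosen branch vertex of each vertex-cycle, adjacency is given by ``connected via a connecting edge between the two cycles'', and equality within a cycle is expressed by $\scon$ in $Z$ restricted to cycle-paths. To achieve the latter, we choose $Z_F$ as the set of edges on subdivided paths of cycle edges, so that $\scon$ with $Z$ captures membership in the same cycle. Relativizing $\varphi$ through this interpretation yields $\varphi'$.

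The main obstacle is the gadget design. It must keep $H^\ast$ subcubic, survive arbitrary subdivision of its edges (so only topological features such as degree-$3$ vertices and connectivity patterns can be used, not lengths), and make vertex-cycles and connecting edges distinguishable in $\FOscon$. My first attempt is to use $Z_F$ as the distinguishing device: put the subdivided cycle edges into $Z_F$ and leave the connecting paths out. Cycle membership is then $\scon$, and a connecting path is a maximal degree-$2$ path that is not $Z$-connected to its ends. Deciding the latter requires a bounded-parameter test, which I expect to do with $\scon$ on the full edge set being unavailable. If it fails, I will fall back to pendant triangle gadgets on connecting edges.
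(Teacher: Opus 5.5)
\textbf{The gap: one relation $Z_F$ is asked to do two incompatible jobs.} $\FOscon$ provides connectivity only along the single relation $Z_F$, and first-order quantification is local. You only know that $F$ is \emph{some} subdivision of $H^\ast$, so its subdivided paths may be arbitrarily long. Every connection your interpretation uses must therefore either run inside $Z_F$ or be witnessed by a bounded configuration.

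Your plan violates this.
\begin{enumerate}
\item To recover adjacency of $H^\ast$ you take $Z_F:=E(F)$.
\item To express ``same vertex-cycle'' you take $Z_F$ to be the subdivided cycle edges only.
\end{enumerate}
With the second choice, each connecting path consists of $Z$-isolated degree-$2$ vertices of unbounded length. Nothing in $\FOscon$ can traverse it: $\scon$ moves only along $Z_F$-edges, and FO cannot follow long paths. So adjacency of $H$ is lost. Pendant triangles or other local gadgets cannot repair this, because the obstruction is length, not recognizability.

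Your adjacency test is also wrong as stated, even for $Z_F=E(F)$, because it gives false positives. After your deletions, the connection from $c$ may pass the far end of its subdivided path and reach $b$ through the rest of $F$. A correct test must require that the component of $c$ in $F-\{a,b\}$ consists only of degree-$2$ vertices.

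\textbf{The missing idea, which is the paper's device.} Put \emph{all} edges of $F$ into $Z_F$ except exactly one edge on each subdivided path that should separate two gadgets. Long connections are then carried by $Z_F$, while each edge of $E(F)\setminus Z_F$ is a single, locally visible edge. In your setting this gives:
\begin{enumerate}
\item the $Z_F$-components are the vertex gadgets together with the halves of their connecting paths;
\item equality of $H$-vertices becomes $\scon(x,y)$;
\item adjacency becomes $\exists p\exists q\,\bigl(E(p,q)\wedge\neg Z(p,q)\wedge\scon(x,p)\wedge\scon(q,y)\bigr)$.
\end{enumerate}
With this fix, a reduction from general graphs should go through, and more directly than the paper's route through ordered partial matchings. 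That route must additionally make the admissible graph a spanning tree, detect consecutive branch vertices by single-vertex deletions, and read off the order as $\scon(r,x;y)$ along the spine.

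\textbf{A second point you do not address: computing $Z_F$.} The encoding algorithm returns no minor model, so the reduction must find the paths to cut by itself. This needs two things.
\begin{enumerate}
\item $H^\ast$ must have all degrees in $\{1,3\}$. Your length-$3$ cycles for $d<3$ contain degree-$2$ vertices, which become indistinguishable from subdivision vertices. Pendant paths of prescribed length do not survive subdivision, as you note yourself.
\item The paths to cut must be recognizable in polynomial time in the graph obtained from $F$ by suppressing degree-$2$ vertices, without computing an isomorphism to $H^\ast$.
\end{enumerate}
The paper achieves both with a degree-$\{1,3\}$ coding graph and a leaf-based colouring computed on the suppressed graph.
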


\begin{proof}
We reduce from first-order model checking on finite linearly-ordered partial
matchings. Thus, the input structure has the form
\[
    \mathbf M=(M,E(\mathbf M),{<}(\mathbf M)),
\]
where $E(\mathbf M)$ is the edge relation of a graph of maximum degree at
most~$1$ and ${<}(\mathbf M)$ is a linear order on~$M$. First-order model
checking on this class is $\mathsf{AW}[\star]$-hard
by~\cite[Theorem~1]{eickmeyer2020modelchecking}.

Enumerate $M$ as
$M=\{v_1,\ldots,v_n\}$ so that
\[
    (v_i,v_j)\in {<}(\mathbf M)
    \quad\Longleftrightarrow\quad
    i<j.
\]
We first construct from~$\mathbf M$ a connected graph $G(\mathbf M)$ all of
whose vertices have degree either~$1$ or~$3$.

Start with the path whose vertex sequence is
\[
    r,v_1,\ldots,v_n,t.
\]
For each $i\in[n-1]$ such that at least one of $v_i$ and~$v_{i+1}$ is
incident with an edge of $E(\mathbf M)$, subdivide the edge $v_iv_{i+1}$
once, and attach a new leaf to the subdivision vertex. If both $v_i$
and~$v_{i+1}$ are isolated in~$\mathbf M$, leave the edge
$v_iv_{i+1}$ unchanged. Add the edge $v_iv_j$ for every
$v_iv_j\in E(\mathbf M)$, and attach one new leaf to every vertex~$v_i$
that is isolated in~$\mathbf M$. Finally, attach two new leaves to~$t$,
and attach to~$r$ two new vertices, each of which receives two new leaf
neighbors. Every vertex of $G(\mathbf M)$ has degree either~$1$ or~$3$.

\begin{figure}[h!t]
    \centering
    \includegraphics[width=0.75\linewidth]{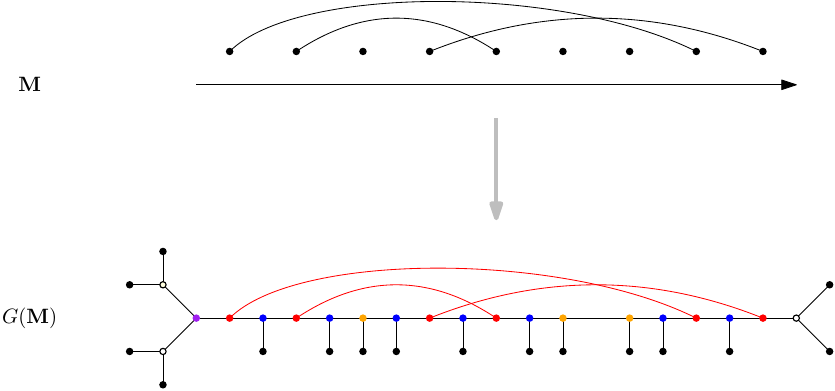}
    \caption{The coding of an ordered partial matching~$\mathbf M$ by
    $G(\mathbf M)$. Degree-$1$ vertices are black; degree-$3$ vertices with
    one black neighbor are preliminarily blue, and those with two black
    neighbors are white. The unique vertex with two white neighbors is
    purple, and all remaining degree-$3$ vertices are red. Finally, a
    preliminarily blue vertex with no red neighbor is recolored orange.
    The red and orange vertices form the interpreted domain, the red-red
    edges are precisely the matching edges, and the purple vertex anchors
    the order encoded by the decorated spine. In the subdivision~$F(\mathbf M)$, the admissible edge set is obtained by deleting one
    edge from each path corresponding to a red-red edge.}
    \label{fig:coding}
\end{figure}

The elements of~$M$ can be recognized intrinsically in~$G(\mathbf M)$.
First color the degree-$1$ vertices black. Among the degree-$3$ vertices,
color those with exactly one black neighbor preliminarily blue and those
with exactly two black neighbors white. 
Color the unique vertex with exactly
two white neighbors purple, and color all remaining degree-$3$ vertices red.
Finally, recolor orange every preliminarily blue vertex having no red
neighbor.

The purple vertex is~$r$. The red vertices are precisely the vertices~$v_i$
that are incident with an edge of $E(\mathbf M)$, whereas the orange
vertices are precisely the isolated vertices of~$\mathbf M$. Indeed, an
auxiliary subdivision vertex is introduced between $v_i$ and~$v_{i+1}$
only if at least one of them is matched, and hence it has a red neighbor and
remains blue. On the other hand, an unmatched vertex~$v_i$ has no red
neighbor: whenever one of its neighbors in the order is matched, the
corresponding spine edge is subdivided. Consequently, the red and orange
vertices are exactly $v_1,\ldots,v_n$. Moreover, the edges with two red
endpoints are exactly the edges of $E(\mathbf M)$. In particular,
$G(\mathbf M)$ with its red-red edges deleted is a tree.

Apply the assumed encoding algorithm to~$G(\mathbf M)$, and denote the
resulting graph by~$F(\mathbf M)$. Thus,
\[
    F(\mathbf M)\in\Cc,\qquad
    G(\mathbf M)\leq_m F(\mathbf M),
\]
and no proper subgraph of~$F(\mathbf M)$ contains $G(\mathbf M)$ as a
minor. We use the following elementary observation, which is well known. 

\medskip
\noindent\emph{Claim.}
Let $K$ be a connected graph all of whose vertices have degree either~$1$
or~$3$. If $K\leq_m F$, but $K\not\leq_m F'$ for every proper subgraph
$F'$ of~$F$, then $F$ is a subdivision of~$K$.

\medskip
By the claim, $F(\mathbf M)$ is a subdivision of~$G(\mathbf M)$. Notice
that the minor model was used only to prove this structural fact, the
reduction does not compute it. Since $G(\mathbf M)$ has no degree-$2$
vertices, its vertices correspond exactly to the vertices of
$F(\mathbf M)$ whose degree is~$1$ or~$3$. We can therefore recover the
suppressed graph and all its subdivided edge paths in linear time: starting
at every vertex of degree~$1$ or~$3$, traverse each incident maximal path
whose internal vertices have degree~$2$. The resulting suppressed graph is
isomorphic to~$G(\mathbf M)$, and the coloring described above can be
computed on it without finding an isomorphism with the labeled copy of
$G(\mathbf M)$.

For every red-red edge of the suppressed graph, choose one edge on its
corresponding path in~$F(\mathbf M)$. Let $D_{\mathbf M}$ be the set of
chosen edges and put
\[
    Z_{\mathbf M}:=E(F(\mathbf M))\setminus D_{\mathbf M}.
\]
Let $\mathbf A_{\mathbf M}$ be the $\{E,Z\}$-structure with
\[
    E(\mathbf A_{\mathbf M})=E(F(\mathbf M))
    \qquad\text{and}\qquad
    Z(\mathbf A_{\mathbf M})=Z_{\mathbf M}.
\]

Since deleting the red-red edges from $G(\mathbf M)$ leaves a tree, $\mathbf A_{\mathbf M}\!\upharpoonright_Z$ is a spanning tree of
$F(\mathbf M)$. Indeed, every path replacing a matching edge is cut once,
and its two remaining pieces hang from the subdivision of that tree.

We now explain how to recover~$\mathbf M$ by a fixed
$\FOscon$-interpretation. The branch vertices are precisely the vertices
of degree~$1$ or~$3$ in~$F(\mathbf M)$. Since the admissible graph is a
tree, two distinct branch vertices are consecutive on an admissible path
if deleting any other branch vertex leaves them connected. This is
expressible in~$\FOscon$ and recovers the suppressed graph with its
red-red edges removed. We can then apply the coloring described above,
using degrees in~$F(\mathbf M)$ for the degree tests and the recovered
adjacency for the tests on neighbors. The omitted red-red edges do not
affect these tests on neighbors. Thus, the red and orange vertices give
the domain of the interpreted structure.

To recover the matching, consider an edge outside~$Z_{\mathbf M}$.
Its endpoints lie in the two remaining halves of a subdivided matching
edge. Each endpoint is joined to a unique branch vertex by an admissible
path containing no other branch vertex. This also includes the case
where the endpoint is itself a branch vertex. The same deletion test
expresses this condition in~$\FOscon$. Two red vertices are therefore
matched precisely when they correspond in this way to the endpoints
of an edge outside~$Z_{\mathbf M}$.

Finally, the purple vertex~$r$ marks the beginning of the spine.
For distinct red or orange vertices~$x,y$, we have
\[
    x<y \quad\Longleftrightarrow\quad \scon(r,x;y).
\]
Indeed, deleting~$y$ leaves~$r$ connected to~$x$ exactly when~$x$
occurs before~$y$ on the spine.

Given a first-order sentence~$\varphi$, we replace its matching and
order atoms by these definitions and relativize its quantifiers to
the red and orange vertices. The resulting $\FOscon$-sentence
$\varphi^\dagger$ satisfies
\[
    \mathbf M\models\varphi
    \quad\Longleftrightarrow\quad
    \mathbf A_{\mathbf M}\models\varphi^\dagger.
\]
All formulas defining the interpretation are fixed, so
$|\varphi^\dagger|=O(|\varphi|)$.

The graph $G(\mathbf M)$ has size linear in~$|\mathbf M|$, and the encoding
algorithm computes $F(\mathbf M)$ in polynomial time. Suppressing the
degree-$2$ paths, computing the intrinsic coloring, and constructing
$Z_{\mathbf M}$ all take linear time in~$|F(\mathbf M)|$. We have therefore
obtained an fpt-reduction from first-order model checking on ordered
partial matchings, which proves the theorem.
\end{proof}

\section{Combinatorial reconfiguration}\label{sec:reconfiguration}
\newcommand{\legal}{\mathrm{legal}}

We now present the algorithmic application to combinatorial reconfiguration, discussed in the introduction. 

Fix a graph $G$ and sets $S,T\subseteq V(G)$. We think of $S$ as a set of
initial tokens placed on the graph. A \emph{token sliding sequence of length
$\ell$} from~$S$ to~$T$ is a sequence $S=S_0,\dots,S_\ell=T$ satisfying, for
every $i\in[\ell]$,
\[
    S_i=(S_{i-1}\setminus\{u_i\})\cup\{v_i\}
\]
for some $u_i\in S_{i-1}$, $v_i\in V(G)\setminus S_{i-1}$, and
$u_iv_i\in E(G)$. This means that each step slides one token to an unoccupied
neighbor.

Let $\mathcal L$ be a logic over the signature of colored graphs.
The \textsc{$\mathcal L$-Modulator-Reconfiguration} problem is defined as follows.
Given a (possibly colored) graph $G$, sets $S,T\subseteq V(G)$, a movement budget $b\in\N$, and an $\mathcal L$-sentence $\psi$, decide whether there exists a token sliding sequence $S=S_0,\dots,S_\ell=T$ from $S$ to $T$ of length $\ell\le b$ such that
\[
G - S_i \models \psi
\qquad\text{for every } 0\le i\le\ell .
\]
When the formula $\psi$ is fixed, we also speak of the \textsc{$\psi$-Modulator-Reconfiguration} problem.

\medskip 

Let us show how this captures some standard reconfiguration problems, and why
the use of \FOscon\ is natural here. Let $P_S$ be a fresh unary relation
symbol, and let $\mathbf G_S$ be the expansion of $G$ in which
\[
    P_S(\mathbf G_S)=S
    \qquad\text{and}\qquad
    Z(\mathbf G_S)
    :=
    \{(u,v)\in E(\mathbf G_S) : u\notin S \text{ and } v\notin S\}.
\]
Thus, $\mathbf G_S\!\upharpoonright_Z$ is obtained from $G-S$ by adding the vertices of
$S$ as isolated vertices. In the following formulas, $Z$ and $P_S$ denote
the corresponding relation symbols, interpreted in $\mathbf G_S$ as above.
Over the course of the reconfiguration sequence, the newly occupied vertices
will be deleted in the $\scon$ atoms, while connections through vertices that
have become free will be handled explicitly. The examples state the desired
properties for the initial token configuration.

As a first example, \textsc{Feedback Vertex Set Reconfiguration} is obtained by requiring that
$G-S$ is acyclic.  In $\FOscon$ this can be expressed as
\[
    \psi_{\mathrm{fvs}}
    :=
    \forall x\forall y\forall z\,
    \Bigl(
        Z(x,y)\wedge Z(x,z)\wedge y\neq z
        \rightarrow
        \neg \scon(y,z;x)
    \Bigr).
\]
This forbids cycles in $\mathbf G_S\!\upharpoonright_Z$. Since the vertices of~$S$ are
isolated in this graph, this is equivalent to saying that $G-S$ is acyclic,
and hence says exactly that~$S$ is a feedback vertex set.

As a second example, suppose that two distinct terminals are marked by unary
relation symbols~$P_s$ and $P_t$. Then \textsc{$s$-$t$ Cut Reconfiguration}
is captured by the $\FOscon$ sentence
\[
\begin{aligned}
    \psi_{s,t}
    :=
    \exists s\exists t\Bigl(
        &P_s(s)\wedge P_t(t)\wedge s\neq t
        \wedge \neg P_S(s)\wedge \neg P_S(t) \\
        &\wedge \forall x(P_s(x)\rightarrow x=s)
        \wedge \forall x(P_t(x)\rightarrow x=t)
        \wedge \neg \scon(s,t)
    \Bigr).
\end{aligned}
\]
The subformula $\neg\scon(s,t)$ says that there is no $s$-$t$ path in
$G-S$, while the conjuncts $\neg P_S(s)$ and $\neg P_S(t)$ ensure that the
terminals themselves are not deleted.
\medskip

We show the following meta theorem. 

\begin{theorem}[store=thm:modulatorreconfiguration,restate-keys={note=see \zcpageref[nocap]{thm:modulatorreconfiguration}}]
There exist a computable function $f$ and an absolute constant~$c$ such that,
given a possibly unary-colored graph~$G$, sets $S,T\subseteq V(G)$, a budget
$b\in\N$, and a sentence $\psi\in\FOconn$, one can decide the
\textsc{$\psi$-Modulator-Reconfiguration} problem in time
\[
f(|\psi|,h(G),b)\cdot |G|^c.
\]
\end{theorem}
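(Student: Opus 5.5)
The plan is to encode the whole reconfiguration question as a single $\FOscon$ sentence $\Phi$ whose length depends only on $|\psi|$ and $b$, evaluated on a structure whose Gaifman graph is $G$ itself. We then invoke \zcref{thm:main-mc}. We work with $\mathbf G_S$, the expansion of $G$ by the unary predicates $P_S$ and $P_T$ (marking $S$ and $T$), by the given colors, and by $Z$ interpreted as the edges of $G-S$. Since $Z\subseteq E(G)$ and all other added relations are unary, we have $\Gaif(\mathbf G_S)=G$. So \zcref{thm:main-mc} decides $\mathbf G_S\models\Phi$ in time $f(|\Phi|,h(G))\cdot|G|^c$, which is the claimed bound once $|\Phi|\le g(|\psi|,b)$ for computable~$g$.

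\textbf{Quantifying the sliding sequence.} First we existentially guess a length $\ell\le b$ (a disjunction over $\ell$) and vertices $u_1,v_1,\dots,u_\ell,v_\ell$ with $E(u_i,v_i)$. The current configuration $S_i$ is first-order definable from $S$ and the $2i$ parameters: membership $x\in S_i$ is a Boolean combination of $P_S(x)$ and the equalities $x=u_j$, $x=v_j$ for $j\le i$, processed in order. The conditions $u_i\in S_{i-1}$ and $v_i\notin S_{i-1}$ are then definable. The condition $S_\ell=T$ becomes $\forall x\,(x\in S_\ell\leftrightarrow P_T(x))$.

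\textbf{Translating $G-S_i\models\psi$.} We relativize the quantifiers of $\psi$ to $\neg(x\in S_i)$. Each $\FOconn$ atom $\conn(x,y;\bar z)$ in $G-S_i$ must be rewritten using $\scon$ with respect to $Z$, which is the edge set of $G-S$. Two sets differ between $G-S$ and $G-S_i$. The set $D_i:=S_i\setminus S$ has size at most $i\le b$ and is a subset of the named vertices $v_j$. These are deleted simply by adding them to the tuple $\bar z$ of the $\scon$ atom, passing the named $v_j$ together with equality tests. The set $A_i:=S\setminus S_i$ has size at most $b$ and is a subset of the named $u_j$. These vertices are present in $G-S_i$ but isolated in $\MZ$, so paths through them are handled explicitly. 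A path in $G-S_i-\bar z$ from $x$ to $y$ decomposes into at most $|A_i|+1$ segments inside $G-S-\bar z-D_i$, joined through distinct freed vertices $a\in A_i$ via $E$-edges. So $\conn(x,y;\bar z)$ holds in $G-S_i$ if and only if there is a sequence $x=w_0,a_1,\dots,a_m,w_{m+1}=y$ with $m\le b$, the $a_j$ ranging over the named freed vertices not in $\bar z$, such that each consecutive connection is witnessed. A connection between a non-freed $p$ and a non-freed $q$ is witnessed by $\scon(p,q;\bar z,\bar v)$. A connection from $a_j$ to a non-freed $q$ is witnessed by $\exists w\,(E(a_j,w)\wedge w\notin S_i\wedge\scon(w,q;\bar z,\bar v))$. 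Consecutive freed vertices are joined directly by $E(a_j,a_{j+1})$. The endpoints $x$ or $y$ may themselves be freed, which is a finite case split. This is a finite disjunction over $m$ and over assignments of the $a_j$ to the $u$-parameters, of size bounded by a function of $b$ and the arity of the atom.

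\textbf{Main obstacle.} The main difficulty is getting this atom translation exactly right. Two points need care. First, the set of freed vertices varies with $i$, so the translation must be applied separately for each of the at most $b+1$ configurations. Second, $\scon$ deletes $\bar z$ but requires its endpoints to be undeleted, so the freed and deleted vertices among the named parameters must be treated consistently; we handle this by the case split above. After that, $|\Phi|$ is bounded by a computable function of $|\psi|$ and $b$, and the theorem follows from \zcref{thm:main-mc}.
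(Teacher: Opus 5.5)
Your proposal is correct and follows essentially the paper's proof. You use the same expansion with $P_S,P_T$ and $Z$ the edges of $G-S$, so the Gaifman graph is $G$, and you describe configurations by at most $b$ named vertices. You rewrite each $\conn$ atom as a chain of $E$-edges at freed vertices and $\scon$-segments avoiding the newly occupied vertices, then apply the model-checking theorem. Parameterizing by the moves $(u_j,v_j)$ instead of the paper's difference sets $(F_i,B_i)$ is only cosmetic.

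One slip in your write-up of the atom translation: two consecutive freed vertices can also be joined through a nonempty $Z$-segment, not only by a direct edge $E(a_j,a_{j+1})$. You need the additional disjunct $\exists w\,\exists w'\,\bigl(E(a_j,w)\wedge\scon(w,w';\bar z,\bar v)\wedge E(w',a_{j+1})\bigr)$.
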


\begin{proof}
Let $P_S$ and $P_T$ be fresh unary relation symbols and $Z$ a fresh binary
relation symbol. Given an instance $(G,S,T,b,\psi)$, let
$\mathbf G_{S,T}$ be the expansion of~$G$ in which
$P_S(\mathbf G_{S,T})=S$, $P_T(\mathbf G_{S,T})=T$, 
and $Z(\mathbf G_{S,T})
    :=
    \{uv\in E(G):\{u,v\}\cap S=\emptyset\}$.
The Gaifman graph of $\mathbf G_{S,T}$ is exactly~$G$.

Let $S'$ be a configuration occurring after at most~$b$ token slides, and
put
$F:=S\setminus S'$ and $B:=S'\setminus S$.
Since every slide vacates one occupied vertex and occupies one adjacent
unoccupied vertex,
\[
    S'=(S\setminus F)\cup B,
    \qquad
    F\subseteq S,\quad B\subseteq V(G)\setminus S,
    \quad |F|=|B|\leq b.
\]

We represent $F$ and~$B$ by tuples of at most~$b$ vertices; membership is
then expressed by disjunctions of equalities.
For every such pair $(F,B)$, the assertion $G-\bigl((S\setminus F)\cup B\bigr)\models\psi$
is expressible by an $\FOscon$-formula whose length depends only on
$|\psi|$ and~$b$. For a connectivity atom of the $\FOconn$-sentence $\psi$, observe that every
simple path in the remaining graph can be split into at most $2b$ individual
edges incident with~$F$ and at most $2b+1$ subpaths using only edges from~$Z$
and avoiding~$B$. It can therefore be expressed by a chain of at most
$4b+1$ edge or sub-connectivity atoms.

It is now entirely standard to express a token sliding sequence of length at
most~$b$. For every $\ell\leq b$, quantify representations $(F_i,B_i)$ as
above of configurations $S_i=(S\setminus F_i)\cup B_i$, require
$S_0=S$ and $S_\ell=T$, and require $G-S_i\models\psi$ for every
$0\leq i\leq\ell$ using the translation above.
For every $i\in[\ell]$, require adjacent vertices
$u_i\in S_{i-1}$ and $v_i\notin S_{i-1}$ such that
$S_i=(S_{i-1}\setminus\{u_i\})\cup\{v_i\}$.

Taking the disjunction over all $\ell\leq b$ gives an effectively computable $\FOscon$-sentence $\varphi_{\psi,b}$ such that
\[
    \mathbf G_{S,T}\models\varphi_{\psi,b}
    \quad\Longleftrightarrow\quad
    (G,S,T,b)
    \text{ is a yes-instance of }
    \textsc{$\psi$-Modulator-Reconfiguration}.
\]
Note that the length of $\varphi_{\psi,b}$ depends only on $|\psi|$ and~$b$.
Since the Gaifman graph of $\mathbf G_{S,T}$ is~$G$, the claimed running time
follows from~\zcref{thm:main-mc}.
\end{proof}

The examples given above give the following corollary. 
\begin{corollary}
\textsc{Vertex-Cut-Reconfiguration} and \textsc{Feedback-Vertex-Set-Reconfiguration} are fixed-parameter tractable when parameterized by $b$ on graph classes excluding a fixed minor.
\end{corollary}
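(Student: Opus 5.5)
The statement is the corollary at the end: \textsc{Vertex-Cut-Reconfiguration} and \textsc{Feedback-Vertex-Set-Reconfiguration} are FPT in $b$ on classes excluding a fixed minor. The plan is to reduce it directly to \zcref{thm:modulatorreconfiguration}. For a class $\Cc$ excluding a fixed minor $J$, every $G\in\Cc$ satisfies $h(G)<|V(J)|$. Hence the running time $f(|\psi|,h(G),b)\cdot|G|^c$ becomes $f'(b)\cdot|G|^c$, provided $\psi$ is a fixed sentence, or at least one whose size is bounded independently of the input. So the main task is to exhibit, for each of the two problems, a fixed separator-logic sentence $\psi$ such that a set $S$ is feasible if and only if $G-S\models\psi$.

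For \textsc{Feedback-Vertex-Set-Reconfiguration} I would take the $\FOconn$ analogue of $\psi_{\mathrm{fvs}}$:
\[
\forall x\forall y\forall z\,\bigl(E(x,y)\wedge E(x,z)\wedge y\neq z\rightarrow\neg\conn(y,z;x)\bigr).
\]
Evaluated on $G-S$, this says there is no cycle. Indeed, a cycle through $x$ with neighbours $y,z$ on it gives a $y$--$z$ path avoiding $x$. Conversely, such a path together with $x$ closes a cycle. For \textsc{Vertex-Cut-Reconfiguration} ($s$-$t$ cuts), I would mark the terminals by unary colours $P_s,P_t$, which \zcref{thm:modulatorreconfiguration} allows since $G$ may be unary-colored. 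I then use the $\FOconn$ version of $\psi_{s,t}$. The deleted-terminal conjuncts $\neg P_S$ are unnecessary, because evaluation happens on $G-S$: the requirement becomes that both terminal vertices exist in $G-S$ and that $\neg\conn(s,t)$ holds there. If the intended problem forbids tokens on terminals, that is exactly what this formula enforces.

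Both sentences have constant length, so applying \zcref{thm:modulatorreconfiguration} gives running time $f(O(1),h(G),b)\cdot|G|^c\le g_J(b)\cdot|G|^c$. This is fixed-parameter tractable in $b$.

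The only delicate point, and the one I expect to need checking, is that the semantics of the reconfiguration problem match the sentences exactly. Feasibility has to be required at every configuration $S_i$, including $S_0$ and $S_\ell$, and the terminal colours must be preserved in $G-S_i$ as an induced colored subgraph. Both hold by the definition of \textsc{$\psi$-Modulator-Reconfiguration}, so nothing beyond the theorem is needed.
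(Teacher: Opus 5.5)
Your proposal is correct and follows the paper's own argument: instantiate \textsc{$\psi$-Modulator-Reconfiguration} with the feedback-vertex-set and $s$-$t$-cut sentences, then apply the meta-theorem, using that $h(G)$ is bounded on a class excluding a fixed minor. Writing the two sentences directly in $\FOconn$ and evaluating them on $G-S$ is exactly the input form the theorem requires, whereas the paper states its examples in $\FOscon$ over $\mathbf G_S$. Your version is therefore, if anything, slightly more careful than the paper's.
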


\subsection{Reconfiguration variants and solution discovery}
Our algorithm for the \textsc{$\psi$-Modulator-Reconfiguration} problem~(\zcref{thm:modulatorreconfiguration}) reduces the problem to model checking a $\FOscon$ formula.
This approach is highly flexible and shows the power of algorithmic meta-theorems: with simple modifications to the constructed formula, we can easily adapt the algorithm to other variations of the reconfiguration framework.
A common variation is to change the rules for how one token set $S_i$ can be transformed to the next set $S_{i+1}$.
Popular models include
\begin{itemize}
    \item the \emph{token sliding} model, where in each step a token is moved along a single edge~\cite{nishimura18reconfiguration},
    \item the \emph{token jumping} model, where in each step a token moves to an arbitrary vertex~\cite{nishimura18reconfiguration},
    \item the \emph{token addition and removal} model, where
    in each step either a token is added to an arbitrary vertex, or an existing token is removed~\cite{nishimura18reconfiguration}, and
    \item the \emph{token sliding along a path}\footnote{To the best of our knowledge, there is no established name for this model.} model, where in each step a token moves along an arbitrarily long path of unoccupied vertices~\cite{cualinescu2008reconfigurations,cooper2024parameterized}.
\end{itemize}
We have given the proof of \zcref{thm:modulatorreconfiguration} for the token sliding model, but it can be easily adapted to the other models.  
Especially the last model fits $\FOscon$ perfectly. 

Another emerging variant of the reconfiguration framework is the solution
discovery framework, which was recently introduced
in~\cite{fellows2025onsd} and further studied
in~\cite{bousquet2025algorithmic,GroblerMMMRSS24,GroblerMMNRS24,saito2026solution,von2026separating}.
Here $S$ is viewed as a tentative solution that may fail to satisfy the target
property and should be transformed into an a priori unknown solution~$T$.
More formally, in the \textsc{$\psi$-Modulator-Discovery} problem the input
consists of $G,S,b$, and~$\psi$, and the question is whether, for some
$\ell\leq b$, there exists a token sliding sequence
\[
    S=S_0,\ldots,S_\ell=T
\]
such that $G-T\models\psi$. No feasibility requirement is imposed on the
intermediate sets. The proof of \zcref{thm:modulatorreconfiguration} can be easily adapted to prove the following theorem. The result entails fixed-parameter tractability of
\textsc{Feedback-Vertex-Set-Discovery} and
\textsc{Vertex-Cut-Discovery} on classes excluding a fixed minor,
parameterized by the budget~$b$.

\begin{theorem}[store=thm:modulatordiscovery,restate-keys={note=see \zcpageref[nocap]{thm:modulatordiscovery}}]
There exist a computable function $f$ and an absolute constant $c$ such
that, given a possibly unary-colored graph~$G$, a set $S\subseteq V(G)$, a
budget $b\in\N$, and a sentence $\psi\in\FOconn$, one can decide the
\textsc{$\psi$-Modulator-Discovery} problem in time
\[
    f(|\psi|,h(G),b)\cdot |G|^c.
\]
\end{theorem}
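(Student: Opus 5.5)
The plan is to follow the proof of \zcref{thm:modulatorreconfiguration} almost verbatim, dropping the requirement on intermediate configurations. Given $(G,S,b,\psi)$, we form the expansion $\mathbf G_{S}$ with $P_S(\mathbf G_S)=S$ and $Z(\mathbf G_S):=\{uv\in E(G):\{u,v\}\cap S=\emptyset\}$. Its Gaifman graph is exactly $G$. We then build an $\FOscon$-sentence $\varphi'_{\psi,b}$, of length bounded by a computable function of $|\psi|$ and $b$, such that $\mathbf G_S\models\varphi'_{\psi,b}$ if and only if the instance is a yes-instance. Once this is done, \zcref{thm:main-mc} yields the running time $f(|\psi|,h(G),b)\cdot|G|^c$.

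First, as before, every configuration $S'$ reachable from $S$ in at most $b$ slides has the form $S'=(S\setminus F)\cup B$ with $F\subseteq S$, $B\cap S=\emptyset$ and $|F|=|B|\le b$. We encode each $S_i$ by tuples of at most $b$ variables for $F_i$ and $B_i$, and membership in $S_i$ becomes a Boolean combination of $P_S$ and equalities.

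For every $\ell\le b$, the sentence existentially quantifies the encodings of $S_0,\dots,S_\ell$ together with vertices $u_i,v_i$. It requires $S_0=S$, meaning $F_0=B_0=\emptyset$, and for each $i\in[\ell]$ that $u_i\in S_{i-1}$, $v_i\notin S_{i-1}$, $E(u_i,v_i)$, and $S_i=(S_{i-1}\setminus\{u_i\})\cup\{v_i\}$. Finally, it requires $G-S_\ell\models\psi$. There is no constraint on $S_1,\dots,S_{\ell-1}$, and no target $T$ is given, since $T:=S_\ell$ is existentially witnessed. Taking the disjunction over $\ell\le b$ gives $\varphi'_{\psi,b}$.

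The only nontrivial ingredient is translating $G-S_\ell\models\psi$ for $\psi\in\FOconn$ into $\FOscon$ over $\mathbf G_S$ with parameters $F_\ell,B_\ell$. We reuse the translation from the proof of \zcref{thm:modulatorreconfiguration}. Quantifiers are relativized to $V(G)\setminus S_\ell$. An atom $\conn(x,y;\bar z)$ is replaced by a chain of at most $4b+1$ links. Each link is either an $E$-edge incident with a vertex of $F_\ell$ (and avoiding $\bar z$ and $S_\ell$), or an atom $\scon(\cdot,\cdot;\bar z,\bar B_\ell)$, which connects through vertices outside $S\cup B_\ell\cup\bar z$. This works because a simple path in $G-S_\ell-\bar z$ passes through at most $b$ freed vertices of $F_\ell$, each contributing at most two incident edges, and the remaining segments use only $Z$-edges.

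I expect the main obstacle to be verifying this path decomposition carefully: the freed vertices $F_\ell$ are isolated in $Z$, so every traversal through them must be spelled out explicitly with $E$-atoms. The bound $4b+1$ from the reconfiguration proof transfers unchanged. Everything else is routine bookkeeping, and the dependence of $|\varphi'_{\psi,b}|$ only on $|\psi|$ and $b$ follows as before.
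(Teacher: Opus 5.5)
Your proposal is correct and follows the paper's route: the paper proves this theorem only by remarking that the proof of the reconfiguration meta-theorem adapts. You carry out that adaptation in the intended way, reusing the expansion $\mathbf G_S$, the $(F,B)$-encoding of configurations reachable within $b$ slides, and the chain of at most $4b+1$ edge and $\mathsf{scon}$ links, while dropping the feasibility checks on $S_0,\dots,S_{\ell-1}$ and letting the final configuration $T=S_\ell$ be existentially witnessed.
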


\section{Further applications and discussion}

In this section, we discuss two extensions of sub-connectivity logic and its
connection with regular path queries. Throughout, $Z$ denotes a fresh binary
relation symbol. If $Z$ is interpreted by a symmetric and irreflexive
relation $F\subseteq M^2$, we write $\mathbf M[Z:=F]$ for the resulting
expansion. The applicability and running-time bound of the positive
model-checking theorem depend on this expansion through
$h(\Gaif(\mathbf M[Z:=F]))$; unless explicitly required below,
$Z(\mathbf M[Z:=F])=F$ need not be contained in any relation of the original
signature.

\subsection{\texorpdfstring{\GFOscon}{Guarded-FO+subconn}}\label{subsec-guarded-fragment}

We now describe a new logic, which we call {\GFOscon}, which generalizes
$\FOscon$ while still admitting efficient model checking. The idea is to fix the Gaifman graph of a structure and allowing to update the $Z$-relation in a controlled way. 

We consider a binary relational signature $\sigma$ containing a special
binary relation symbol~$E$, and let $Z\notin\sigma$ be a fresh binary relation
symbol. We restrict ourselves to $\sigma$-structures $\mathbf M$ in which
$E(\mathbf M)$ is symmetric and irreflexive and, for every binary relation
symbol $R\in\sigma$,
\[
 R(\mathbf M)\setminus\{(v,v):v\in M\}\subseteq E(\mathbf M).
\]
Thus, the $E$-reduct is precisely the Gaifman graph of~$\mathbf M$, which we consider as fixed. 

We define the syntax inductively. Put $\mathcal L_0:=\FO[\sigma]$. Inductively, let
$\mathcal L_{r+1}$ be the Boolean closure of $\mathcal L_r$ together with all
formulas
\[
 [\phi]\,\psi(\bar x),
\]
where $\phi(p,q)\in\mathcal L_r$ has precisely the free variables $p$ and $q$
and $\psi(\bar x)\in\FOscon[\sigma\cup\{Z\}]$, with~$Z$ as the distinguished
sub-connectivity relation symbol of this full signature. Set
\[
 \GFOscon[\sigma]:=\bigcup_{r\geq0}\mathcal L_r.
\]
The semantics is inductive on~$r$. For $\phi(p,q)\in\mathcal L_r$, define
\[
 Z_{\phi,\mathbf M}
 :=
 \{(a,b)\in M^2:\mathbf M\models
 E(a,b)\wedge(\phi(a,b)\vee\phi(b,a))\}.
\]
Then
\[
 \mathbf M\models[\phi]\,\psi(\bar v)
 \quad\Longleftrightarrow\quad
 \mathbf M[Z:=Z_{\phi,\mathbf M}]\models\psi(\bar v).
\]

\begin{theorem}\label{thm:MC-guarder-logic}
There exist a computable function $g$ and an absolute constant $C$ such that,
given a finite binary signature~$\sigma$, a $\sigma$-structure $\mathbf M$ of
the kind just described, and a sentence $\chi\in\GFOscon[\sigma]$, one can
decide whether $\mathbf M\models\chi$ in time
\[
 g\bigl(|\chi|,|\sigma|,h(\Gaif(\mathbf M))\bigr)\cdot\|\mathbf M\|^C.
\]
\end{theorem}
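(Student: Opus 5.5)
The plan is to evaluate $\chi$ bottom-up along the nesting depth~$r$. For each guarded subformula we materialize the relations it needs as new binary relations carried by the structure, and then reuse \zcref{thm:main-mc}. The guard is what makes this possible: every $Z$-relation we create is contained in $E(\mathbf M)$, so the Gaifman graph never changes and $h(\Gaif(\cdot))$ stays equal to $h(\Gaif(\mathbf M))$ throughout.

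First I normalize $\chi$. Every maximal subformula of the form $[\phi]\,\psi(\bar x)$ may have free variables. To keep everything unary or binary, I plan to handle the innermost guards $\phi(p,q)\in\mathcal L_r$ first; these have exactly two free variables. By induction on~$r$ I maintain the following invariant: for every subformula $\phi(p,q)$ occurring as a guard, the set $Z_{\phi,\mathbf M}$ has been computed and stored as a fresh binary relation $Z_\phi\subseteq E(\mathbf M)$. Each $Z_\phi$ is symmetric and irreflexive, because it is contained in the irreflexive $E(\mathbf M)$ and is symmetrized by definition. Hence the expanded structure $\mathbf M'$ still satisfies $\Gaif(\mathbf M')=\Gaif(\mathbf M)$.

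To compute $Z_{\phi,\mathbf M}$, I only need to decide $\phi(a,b)$ for the pairs $(a,b)$ with $ab\in E(\mathbf M)$. There are $O(\|\mathbf M\|)$ such pairs. For each pair, I add the two elements as constants, encoded by fresh unary predicates, and decide the resulting sentence. By induction, $\phi$ is a Boolean combination of $\FO$ formulas and formulas $[\phi']\,\psi'(p,q)$ with $\phi'$ of lower level, and the $Z_{\phi'}$ are already stored. I rewrite each $[\phi']\,\psi'$ as the $\FOscon$ formula $\psi'$ in which the distinguished symbol is $Z_{\phi'}$. Then $\phi(a,b)$ becomes a question about a structure whose Gaifman graph is unchanged.

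The main obstacle is that a single $\FOscon$ formula may need several different distinguished relations $Z_{\phi_1},\dots,Z_{\phi_m}$, while \zcref{thm:main-mc} only handles one. I would resolve this in the underlying proof, not by a black-box appeal to \zcref{thm:main-mc}. Compute a rooted DFS forest $F_i$ of each $(M,Z_{\phi_i})$, and translate every $\scon$ atom over $Z_{\phi_i}$ into $\FO$ over $\preceq_{F_i}$ via \zcref{thm:scon=FOFO}. This yields an $\FO$ sentence over $\mathbf M$ expanded by $m\le|\chi|$ forest orders. To obtain bounded twin-width with several orders, I plan to encode all of them on a copied structure. Take $m$ disjoint copies of $\Gaif(\mathbf M)$, joined into one graph by a universal apex or by attaching the copies along a path of identification edges. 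The Hadwiger number of this graph is bounded by $h(\Gaif(\mathbf M))+O(1)$, since a clique minor in a clique-sum or $1$-sum lives essentially in one summand. Put $F_i$ on the $i$-th copy and link the copies into a single rooted spanning forest. Then \zcref{thm-main-algo-structures} gives a linear order with twin-width bounded in terms of $h$, $m$ and $|\sigma|$. Each original element together with all its forest orders is first-order recoverable by a transduction using the copy-indicating predicates and the copy-equivalence edges. The copy-equivalence edges must be added so that the Hadwiger number stays controlled; one way is to route them through the tree so that a clique-sum structure persists.

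If this copying is problematic, my fallback is to use \zcref{lem:structure-from-gaif}. All $Z_{\phi_i}$ lie inside $\Gaif(\mathbf M)$, which excludes a minor, so they are transducible from a single forest-ordered expansion. What remains is to show that the several forest orders can still be expressed over one linear order of bounded twin-width. Here I would take the forest of the first relation as the base structure and argue that each further $\scon$ over $Z_{\phi_i}$ is expressible via \zcref{lem:con=FO} applied in the copied structure.

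Finally, once all guards are materialized, the top-level Boolean combination is evaluated by \zcref{thm:mc-tww}. The total running time is $O(|\chi|\cdot\|\mathbf M\|)$ calls, each costing $g'(|\chi|,|\sigma|,h)\cdot\|\mathbf M\|^{c}$, which gives the claimed bound with $C=c+1$.
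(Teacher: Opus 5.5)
Your outer architecture matches the paper's proof: materialize the guard relations bottom-up, encode assigned elements by singleton unary predicates, note that $Z_{\phi,\mathbf M}\subseteq E(\mathbf M)$ leaves the Gaifman graph unchanged, and call \zcref{thm:main-mc}. Restricting guard evaluation to edge pairs is valid and even improves the exponent to $c+1$. The gap is the step you call the main obstacle. That obstacle does not exist, and your resolution of it would fail.

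It does not exist because of the syntax. $\mathcal L_{r+1}$ is only the Boolean closure of $\mathcal L_r$ and the formulas $[\phi']\,\psi'$, and the body $\psi'$ is a plain $\FOscon[\sigma\cup\{Z\}]$ formula. So no quantifier ever ranges over a formula mentioning two admissible relations. A guard $\phi(p,q)$ is a Boolean combination of $\FO[\sigma]$ formulas and guarded formulas whose free variables lie among $p,q$, and a sentence is a Boolean combination of such sentences. The paper evaluates each component separately at $(a,b)$ and combines the truth values. A guarded component needs only one call of \zcref{thm:main-mc} on $\mathbf M[Z:=Z_{\phi',\mathbf M}]$ with two singleton predicates. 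You should never form one sentence over $Z_{\phi_1},\dots,Z_{\phi_m}$ simultaneously.

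Your workaround for that nonexistent obstacle cannot be made to work, even with rerouting. First, the copy-equivalence edges place two copies of $\Gaif(\mathbf M)$, with every vertex joined to its twin, into the Gaifman graph. For the $n\times n$ grid this is the $n\times n\times 2$ grid, which has a $K_n$ minor: take as branch set $i$ row $i$ of the first layer together with column $i$ of the second. So \zcref{thm-main-algo-structures} gives nothing.

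More fundamentally, no class of bounded twin-width can transduce the expansions of all planar $\mathbf M$ by DFS forest orders for two arbitrary $Z_1,Z_2\subseteq E(\mathbf M)$. If one could, translating $\scon_1$ and $\scon_2$ atom by atom as in \zcref{thm:scon=FOFO} would make $\FOscon^2$ dependent on planar graphs. That contradicts the theorem in \zcref{subsec-hard-several-color}. Binary relations $R_1,R_2\subseteq E$ of $\sigma$ can serve directly as guards, so this is exactly the case your step would have to handle. Your fallback via \zcref{lem:structure-from-gaif} asserts the same false statement.

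A minor point: the top-level Boolean combination should also be evaluated componentwise with \zcref{thm:main-mc}, not with \zcref{thm:mc-tww}, which requires an ordered input of bounded twin-width.
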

\begin{proof}
Evaluate the subformulas in increasing rank and memoize the binary relation
defined by every guard. To compute the relation defined by a guard
$\phi(p,q)$, evaluate it on all ordered pairs $(a,b)\in M^2$. Assigned
vertices are encoded by fresh unary singleton predicates. Rank-zero formulas
are evaluated by \zcref{thm:main-mc} after expanding $\mathbf M$ by the empty
interpretation of~$Z$. Every scheme occurrence
$[\eta]\,\psi(\bar x)$ is evaluated, once $Z_{\eta,\mathbf M}$ has been
computed, by applying \zcref{thm:main-mc} to
$\mathbf M[Z:=Z_{\eta,\mathbf M}]$ with the appropriate singleton
predicates. Because the construction uses only Boolean closure, a scheme
occurrence inside a guard with free variables $p,q$ has no other free
variables; its truth table therefore has at most $|M|^2$ entries. A
top-level scheme occurrence in the input sentence has no free variables and requires only one evaluation. Memoization ensures that each occurrence is
processed only once.

Every relation $Z_{\eta,\mathbf M}$ is contained in $E(\mathbf M)$, and the
additional singleton predicates are unary. Hence, every structure supplied to
\zcref{thm:main-mc} has the same Gaifman graph as~$\mathbf M$. Let $c$ be the
absolute exponent from that theorem. The factor $|M|^2$ needed to tabulate a
binary guard raises the polynomial exponent to at most $c+2$, independently
of $\chi$, $\sigma$, and the excluded minor. All remaining dependence is
absorbed into a computable function of
$|\chi|$, $|\sigma|$, and $h(\Gaif(\mathbf M))$, which proves the stated bound
with the absolute constant $C:=c+2$.
\end{proof}

For example, let $\eta_1(x,y)$ and $\eta_2(x,y)$ be first-order
formulas defining two equivalence relations on the vertices of a
graph, and let $\psi(\bar x)$ be an $\FOscon$-formula. If we think
about equivalence classes as vertex colors, then, for each
$i\in\{1,2\}$, $[\eta_i]\,\psi(\bar x)$ interprets the
$\scon(x,y;\bar z)$ atom as asserting the existence of a
monochromatic path linking $x$ to $y$ while avoiding~$\bar z$,
with respect to the colors defined by~$\eta_i$.
For example, the sentence
\begin{equation*}
 \bigl[\eta_1\bigr]\,
 \forall x\forall y\,
 \bigl(\eta_1(x,y)\rightarrow\scon(x,y)\bigr)
 \;\wedge\;
 \bigl[\eta_2\bigr]\,
 \forall x\forall y\,
 \bigl(\eta_2(x,y)\rightarrow\scon(x,y)\bigr)
\end{equation*}
states that for each of the two equivalence relations,
all vertices in each equivalence class are pairwise connected
by monochromatic paths. Thus, every class induces a connected
subgraph. The two conjuncts are evaluated independently, each
with its own sub-connectivity relation, which remains fixed
throughout the evaluation of its quantified formula.

\subsection{More than one set of admissible edges}\label{subsec-hard-several-color}

Let $k$ be a positive integer and let
$\tau_k:=\{E,Z_1,\ldots,Z_k\}$, 
where $Z_1,\ldots,Z_k$ are distinct binary relation symbols. We consider
$\tau_k$-structures $\mathbf M$ in which $E(\mathbf M)$ and every
$Z_i(\mathbf M)$ are symmetric and irreflexive. For $i\in[k]$, let
$G_{Z_i}(\mathbf M)$ be the graph with vertex set~$M$ and edge set
\[
   E(G_{Z_i}(\mathbf M))
   :=
   \bigl\{\{u,v\}\in\tbinom{M}{2}:(u,v)\in Z_i(\mathbf M)\bigr\}.
\]
The logic $\FOscon^k$ is obtained from $\FO[\tau_k]$ by adding, for every
$i\in[k]$, atoms $\scon_i(x,y;\bar z)$ asserting that $x$ and $y$ belong to
the same connected component of $G_{Z_i}(\mathbf M)-V(\bar z)$.

For a graph class $\Cc$, let $\Cc_{\bar Z}$ be the class of all
$\tau_k$-structures $\mathbf M$ whose $E$-reduct is a graph $G\in\Cc$ and
for which
\[
   E(G_{Z_i}(\mathbf M))\subseteq E(G)
   \qquad\text{for every }i\in[k].
\]
The interpretations of the symbols $Z_i$ need not be disjoint.

\begin{theorem}
Let $k\geq2$. For every monotone class of graphs $\Cc$, the class
$\Cc_{\bar Z}$ is $\FOscon^k$-dependent if and only if $\Cc$ has bounded
tree-width.
\end{theorem}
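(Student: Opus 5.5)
Both directions go through monadic second-order logic and the known frontier that a monotone class is $\MSO$-dependent if and only if it has bounded tree-width~\cite{grohe2004learnability}.

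\emph{Bounded tree-width implies dependence.} Assume $\Cc$ has tree-width at most $w$. Every $\scon_i(x,y;\bar z)$ atom is $\MSO$-expressible over $\tau_k$: there is no set $X$ containing $x$, avoiding $\bar z$ and $y$, and closed under $Z_i$-neighbours outside $\bar z$. Hence every $\FOscon^k$-formula is equivalent to an $\MSO[\tau_k]$-formula. The Gaifman graph of every $\mathbf M\in\Cc_{\bar Z}$ is its $E$-reduct, so these structures have tree-width at most $w$. By the model-theoretic form of Courcelle's theorem, classes of binary structures of bounded tree-width are $\MSO$-transductions of colored trees, and colored trees are (monadically) $\MSO$-dependent. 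It follows that $\Cc_{\bar Z}$ is $\MSO$-dependent, and a fortiori $\FOscon^k$-dependent.

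\emph{Unbounded tree-width implies independence.} Assume $\Cc$ is monotone with unbounded tree-width. By the grid minor theorem and monotonicity, $\Cc$ contains a subgraph-minimal model of every $n\times n$ grid; since grids have maximum degree $4$, we may take subdivisions of walls, which are subcubic. Choose a wall $W_n$ and a subdivision $S_n\in\Cc$. The plan is a single $\FOscon^2$-formula $\varphi(x;y)$, with parameters fixed by choosing $Z_1,Z_2$ only, and no unary colors since the statement concerns plain dependence. We would let $Z_1$ be a spanning tree of $S_n$ routed along a ``comb'': a vertical spine plus all horizontal rows. We would let $Z_2$ consist of all subdivided vertical wall edges together with the rows, so that $Z_2$ is connected in a different pattern. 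The elements $a_i$ should be the $n$ branch vertices on the first column, and a single vertex $b_J$ selects $J\subseteq[n]$. Pure connectivity with one deleted vertex cannot encode subsets, so the choice of $J$ must be carried by the structure. Since $\Cc_{\bar Z}$ allows arbitrary $Z_i\subseteq E(G)$ per structure, for each $J$ we would instead pick a different structure in which $Z_2$ omits exactly the vertical edges of the rows indexed by $J$. But independence demands all $b_J$ in one structure. So the better plan is the following: inside one large wall, dedicate column $j$ to a subset $J_j$, with $2^n$ columns. In column $j$, cut the $Z_2$-edge leaving row $i$ towards column $j+1$ precisely when $i\in J_j$, while $Z_1$ keeps each row connected to a spine for navigation. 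Then $\varphi(a_i;b_j)$ asserts that there exists $u$, the $Z_1$-row-$i$ vertex in column $j$ identified via $\scon_1$ with deletions of $b_j$ and its neighbours, such that $\neg\scon_2(u,u')$ for its right neighbour $u'$. Identifying ``row $i$, column $j$'' needs the two connectivity relations in combination: $Z_1$ contains rows but only one vertical spine, and $Z_2$ contains columns. The intersection point is definable as the $w$ with $\scon_1(a_i,w;s)$ for a spine cut $s$ and $\scon_2(b_j,w;\ldots)$. With rows and columns in different $Z_i$, the grid coordinates become first-order accessible, and one edge per cell encodes the bit.

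The main obstacle is this last step: making ``$w$ lies on row $i$ and column $j$'' definable with boundedly many deleted vertices. Rows in $Z_1$ must be separated from each other without merging through the spine, and the subdivision vertices must be handled. I would first try making $Z_1$ exactly the disjoint union of subdivided row paths and $Z_2$ exactly the disjoint union of subdivided column paths, minus the bit edges placed on short auxiliary column segments. In that design no deletions are needed: row membership is $\scon_1$, column membership is $\scon_2$, and the crossing branch vertex is unique. The bit would then be encoded by whether the vertex just below the crossing on column $j$ remains $Z_2$-connected to the crossing, with the pendant cut pattern realised inside the wall's subdivision.
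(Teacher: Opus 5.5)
Your bounded tree-width direction is fine and is essentially the paper's argument: the $\scon_i$ atoms are $\MSO$-definable, the Gaifman graphs are the $E$-reducts, and $\MSO$ on structures of bounded tree-width is dependent via tree encodings.

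\textbf{The gap is in the independence direction.} You reach the right architecture: rows in $Z_1$, $2^n$ subset-indexed columns in $Z_2$, overlapping relations, and the cell $(i,J)$ located by $\scon_1(a_i,\cdot)\wedge\scon_2(b_J,\cdot)$. But you never make the step that reads the bit $i\in J$ work, and the design you settle on fails as described.
\begin{enumerate}
\item If the bit is a missing $Z_2$-edge on the column path, column $J$ is no longer $Z_2$-connected. Then $\scon_2(b_J,\cdot)$ stops reaching the cells beyond the first cut, so those cells are no longer definable.
\item In a subcubic wall, a row and a column can never cross in a single vertex, since that would need degree $4$; they share a horizontal segment. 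So ``the crossing branch vertex is unique'' and ``the vertex just below the crossing'' are not available as stated.
\item Moving the cut onto ``short auxiliary segments'' presupposes that such segments exist in the graph and are first-order recognizable. A subdivided wall has minimum degree $2$ and provides neither, and ``realised inside the wall's subdivision'' is not a construction.
\item Your intermediate plan is also inconsistent: a $Z_2$-edge toward column $j+1$ contradicts $Z_2$ consisting of columns.
\end{enumerate}

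\textbf{The missing idea} is to put the bit into the graph itself, as a locally first-order detectable feature, rather than into the $Z$-relations. The host should then come from a minor model of a planar graph designed for this purpose, not from a bare wall. This is what the paper does.
\begin{enumerate}
\item Since $\Cc$ has unbounded tree-width, every planar graph is a minor of some member of $\Cc$. Take the $n\times 2^n$ grid $(v_{i,J})$ with a pendant leaf attached to $v_{i,J}$ exactly when $i\in J$.
\item By monotonicity, $\Cc$ contains an inclusion-minimal subgraph carrying a model of this graph. Minimality makes every branch set a tree, keeps one edge per edge of the grid-with-pendants, and makes the pendant branch sets the only degree-one vertices.
\item Put all branch-set edges into both $Z_1$ and $Z_2$, horizontal model edges into $Z_1$, and vertical model edges into $Z_2$; the pendant edges go into neither.
\item Choose $a_i$ in the branch set of $v_{i,\emptyset}$ and $b_J$ in that of $v_{1,J}$. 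Then the $Z_1$-component of $a_i$ is row $i$, the $Z_2$-component of $b_J$ is column $J$, and their intersection is the branch set of $v_{i,J}$.
\item Let $\theta(z)$ say that $z$ has an $E$-neighbour of degree $1$. Then $\varphi(x,y)=\exists z\,(\scon_1(x,z)\wedge\theta(z)\wedge\scon_2(z,y))$ satisfies $\varphi(a_i,b_J)\iff i\in J$.
\end{enumerate}
This needs no deletions, no unique crossing vertex, no subdivision bookkeeping, and no unary colors.
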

\begin{proof}
Assume first that $\Cc$ has unbounded tree-width. For $n\geq2$, fix an
ordering of $2^{[n]}$ whose first element is~$\emptyset$, and let $H_n$ be
the $n\times2^n$ grid with vertices $v_{i,J}$ indexed by
$[n]\times2^{[n]}$ in this order. Attach a pendant vertex $w_{i,J}$ to
$v_{i,J}$ whenever $i\in J$, and put
$W:=\{w_{i,J}:i\in J\subseteq[n]\}$.

For every fixed planar graph $H$, the class of $H$-minor-free graphs has
bounded tree-width~\cite{RobertsonSGM5}.
Hence, some graph in~$\Cc$ contains $H_n$ as a minor.
By monotonicity, $\Cc$ contains an inclusion-minimal subgraph~$H'_n$
supporting such a model. In this minimal model, every branch set is a tree
and exactly one edge is retained for each edge of~$H_n$. 
Let $f\colon V(H'_n)\to V(H_n)$ be the map specifying the resulting surjective minor model.
Minimality
implies that each pendant branch set $f^{-1}(w)$, $w\in W$, is a single
degree-one vertex, and that $H'_n$ has no other degree-one vertices.

Define edge sets $\mathcal Z_1$ and $\mathcal Z_2$ as follows. For each
$xy\in E(H'_n)$:
\begin{itemize}
    \item if $x,y\in f^{-1}(v_{i,J})$ for some grid vertex $v_{i,J}$,
    then $xy$ belongs to both $\mathcal Z_1$ and $\mathcal Z_2$;
    \item if $x\in f^{-1}(v_{i,J})$ and $y\in f^{-1}(v_{i,J'})$, where
    $v_{i,J}$ and $v_{i,J'}$ are horizontally adjacent in $H_n$, then
    $xy\in\mathcal Z_1$;
    \item if $x\in f^{-1}(v_{i,J})$ and $y\in f^{-1}(v_{i',J})$, where
    $v_{i,J}$ and $v_{i',J}$ are vertically adjacent in $H_n$, then
    $xy\in\mathcal Z_2$.
\end{itemize}
Let $\mathbf M_n\in\Cc_{\bar Z}$ be the expansion of $H'_n$ in which
$Z_1(\mathbf M_n)$ and $Z_2(\mathbf M_n)$ are the symmetric relations
induced by $\mathcal Z_1$ and $\mathcal Z_2$, respectively; if $k>2$,
interpret the remaining symbols $Z_i$ by the empty relation. Let
\[
\phi(x,y):=(\exists z)\,
   \bigl(\scon_1(x,z)\wedge\theta(z)\wedge\scon_2(z,y)\bigr),
\]
where
\[
   \theta(z):=
   \exists w\bigl(E(z,w)\wedge
   \forall u\,(E(w,u)\rightarrow u=z)\bigr)
\]
expresses that $z$ has a neighbor of degree~$1$ in the $E$-reduct.

For each $i\in[n]$, select $a_i\in f^{-1}(v_{i,\emptyset})$, and for each
$J\subseteq[n]$, select $b_J\in f^{-1}(v_{1,J})$. Horizontal
$Z_1(\mathbf M_n)$-connectivity fixes the row, vertical
$Z_2(\mathbf M_n)$-connectivity fixes the column, and $v_{i,J}$ has a
pendant neighbor precisely when $i\in J$. Consequently,
\[
   \mathbf M_n\models\phi(a_i,b_J)
   \quad\Longleftrightarrow\quad i\in J.
\]
Thus, $\phi$ has the independence property on $\Cc_{\bar Z}$.

Conversely, assume that $\Cc$ has bounded tree-width. Since every
$Z_i(\mathbf M)$ is contained in $E(\mathbf M)$, the Gaifman graphs of the
structures in $\Cc_{\bar Z}$ have bounded tree-width as well. Every atom $\scon_i(x,y;\bar z)$ is expressible in monadic second-order logic, and hence every $\FOscon^k$-formula translates into an $\MSO$-formula.
The standard tree encoding of structures of bounded tree-width, together with the type bounds for $\MSO$ on trees~\cite{paszke2020vc}, implies dependence also after adding finitely many arbitrary unary predicates.
Consequently, $\Cc_{\bar Z}$ is monadically $\FOscon^k$-dependent.
\end{proof}

We remark that if the interpretations of $Z_1,\ldots,Z_k$ are required to be pairwise
disjoint, the preceding proof no longer applies because every internal
branch-set edge was placed in both~$\mathcal Z_1$ and~$\mathcal Z_2$. For a
minor-closed class the argument can be repaired: unbounded tree-width implies
that the class contains every planar graph $H_n$ itself, whose horizontal and
vertical grid edges can be assigned to $Z_1$ and $Z_2$, respectively, and
are disjoint.

\subsection{Connections to RPQs}\label{sec-rpq}

We now compare our logic with the undirected endpoint semantics of \emph{regular path
queries} (RPQs).
Let~$\Sigma$ be a finite label alphabet and let
$\tau_\Sigma:=\{E\}\cup\{E_a:a\in\Sigma\}$, where all displayed objects are
binary relation symbols. Consider a $\tau_\Sigma$-structure $\mathbf G$ in
which $E(\mathbf G)$ and all $E_a(\mathbf G)$ are symmetric and irreflexive,
and $E_a(\mathbf G)\subseteq E(\mathbf G)$ for every $a\in\Sigma$; an edge
may carry more than one label. Given a regular expression~$Q$ over~$\Sigma$,
a pair $(x,y)$ is an answer to~$Q$ if there are a walk
$x=v_0,v_1,\ldots,v_m=y$ in the $E$-reduct of~$\mathbf G$ and labels
$a_1,\ldots,a_m\in\Sigma$ such that
$(v_{i-1},v_i)\in E_{a_i}(\mathbf G)$ for every $i\in[m]$ and
$a_1\cdots a_m\in L(Q)$. Regular path
queries play a central role in graph databases and are used in query
languages such as SPARQL.

RPQs and $\FOscon$ serve different expressive purposes. RPQs can constrain
the complete label word of a witnessing walk, for example by
$B^\ast R^\ast B^\ast$; such a constraint is not available as a single
sub-connectivity atom. Conversely, $\FOscon$ combines its connectivity atoms
with the full first-order logical overlay, which is not part of an RPQ.

With at least two labels whose edge relations may overlap, allowing arbitrary
RPQ atoms subsumes reachability with respect to two possibly overlapping
admissible-edge relations. Hence, \zcref{subsec-hard-several-color} shows that,
on monotone graph classes, unbounded tree-width implies monadic independence.
Conversely, every fixed RPQ endpoint relation is $\MSO$-definable via its finite
automaton and a standard product-reachability construction. Thus, expansions
of bounded-tree-width graph classes remain monadically dependent, and bounded
tree-width is the resulting dependence frontier. An
$\mathsf{AW}[\star]$-hardness statement on classes of unbounded tree-width
additionally requires an efficient construction of the relevant grid-minor
encodings, including their minor models.

Building on the query-log analysis of Bonifati, Martens, and
Timm~\cite{bonifati2020analytical}, Martens and
Trautner~\cite[Definition~2]{martens2019bridging} call the following
expressions \emph{simple transitive expressions} (STEs). A nonempty set
$A\subseteq\Sigma$ is an \emph{atomic
expression}; an expression of the form $A_1\cdots A_m$ or
$A_1?\cdots A_m?$ is \emph{bounded}, where $m\geq0$ and every $A_i$ is
atomic. An STE has the form
\[
   B_{\mathrm{pre}}\,T^\ast\,B_{\mathrm{suff}},
\]
where $B_{\mathrm{pre}}$ and $B_{\mathrm{suff}}$ are bounded expressions and
$T$ is either $\varepsilon$ or an atomic expression. For a fixed STE, the
endpoint relations defined by $B_{\mathrm{pre}}$ and $B_{\mathrm{suff}}$ are
first-order definable; denote the corresponding formulas by
$\beta_{\mathrm{pre}}$ and $\beta_{\mathrm{suff}}$. If $T$ is a nonempty
set of labels, put
\[
   \theta_T(p,q):=\bigvee_{a\in T}E_a(p,q).
\]
Then the undirected endpoint semantics of the STE is expressed in
$\GFOscon$ by
\[
   \Psi(x,y)
   :=
   [\theta_T]\,
   \exists u\exists w\,
   \bigl(
      \beta_{\mathrm{pre}}(x,u)
      \wedge \scon(u,w)
      \wedge \beta_{\mathrm{suff}}(w,y)
   \bigr).
\]
If $T=\varepsilon$, the middle condition is simply $u=w$, and the query is
first-order definable.

The subsequent analysis in~\cite{martens2019bridging} reports that $99.992\%$
of the $247{,}404$ property paths in the analyzed query logs are STEs or are
trivially equivalent to STEs. Moreover, every remaining expression type in the reported table
except $(ab)^\ast$ is a union of STEs. Thus, the guarded logic covers nearly
all of those logged property-path shapes under the undirected endpoint
semantics.

\paragraph{Dealing with RPQs not expressible in the guarded fragment of the logic}

There is no known \GFOscon\xspace formula expressing that a set is bipartite, or a path to be of even length, and these properties are likely not expressible, similarly to \FOconn\xspace \cite[Theorem 3.12]{schirrmacher2023first}.
However, we can always use the following technique. First compute in linear time a bounded amount of unary (or 0-ary) predicates, second use \cref{thm:MC-guarder-logic} on the graph equipped with these new predicates. We now showcase this reasoning on 
the parity-restricted endpoint relation without vertex deletions, and show that it is still
tractable. Indeed, given the admissible-edge graph, mark by a fresh unary
predicate $N$ all vertices lying in non-bipartite connected components, and
choose one bipartition class in every bipartite connected component and mark
one side by a fresh unary predicate $B$. Then the existence of an even-length
admissible-edge walk is expressed by
\[
  \mathsf{Even}_Z(x,y):=
  \scon(x,y)\wedge
  \bigl(N(x)\vee(B(x)\leftrightarrow B(y))\bigr).
\]
The predicates $N$ and $B$ can be computed in linear time and, being unary,
do not change the Gaifman graph, and the formula can be model-checked efficiently with \cref{thm:MC-guarder-logic} on minor-excluded classes. 

Going further in this direction, we may also consider atoms
\[
  \mathsf{even\text{-}scon}(x,y;\bar z)
\]
asserting the existence of an even-length walk in the admissible-edge graph
after deleting the vertices occurring in $\bar z$. Whether model checking for
this deletion-aware extension remains tractable on minor-excluded classes is
a natural question, which remains open.

\newpage 
\bibliography{ref}

@article{bonifati2020analytical,
  title = {An analytical study of large {SPARQL} query logs},
  author = {Bonifati, Angela and Martens, Wim and Timm, Thomas},
  journal = {The VLDB Journal},
  volume = {29},
  number = {2--3},
  pages = {655--679},
  year = {2020},
  publisher = {Springer},
  xdoi = {10.1007/s00778-019-00558-9}
}

@article{martens2019bridging,
  title = {Bridging theory and practice with query log analysis},
  author = {Martens, Wim and Trautner, Tina},
  journal = {ACM SIGMOD Record},
  volume = {48},
  number = {1},
  pages = {6--13},
  year = {2019},
  publisher = {ACM New York, NY, USA},
  xdoi = {10.1145/3371316.3371319}
}

@book{Flum2006ParameterizedCT,
  title = {Parameterized Complexity Theory},
  author = {J{\"o}rg Flum and Martin Grohe},
  publisher = {Springer},
  year = {2006},
  series = {Texts in Theoretical Computer Science. An EATCS Series},
  xdoi = {10.1007/3-540-29953-X}
}

@article{ThatcherWright1968,
  author = {James W. Thatcher and Jesse B. Wright},
  title = {Generalized finite automata theory with an application to a decision problem of second-order logic},
  journal = {Mathematical Systems Theory},
  volume = {2},
  number = {1},
  pages = {57--81},
  year = {1968},
  xdoi = {10.1007/BF01691346}
}

@article{Doner1970,
  author = {John Doner},
  title = {Tree acceptors and some of their applications},
  journal = {Journal of Computer and System Sciences},
  volume = {4},
  number = {5},
  pages = {406--451},
  year = {1970},
  xdoi = {10.1016/S0022-0000(70)80041-1}
}

@incollection{books/cu/p/Heuvel13,
  author = {Jan van den Heuvel},
  title = {The Complexity of Change},
  booktitle = {Surveys in Combinatorics 2013},
  series = {London Mathematical Society Lecture Note Series},
  volume = {409},
  pages = {127--160},
  publisher = {Cambridge University Press},
  year = {2013},
  xdoi = {10.1017/CBO9781139506748.005}
}

@article{journals/cosrev/BousquetMNS24,
  author = {Nicolas Bousquet and Amer E. Mouawad and Naomi Nishimura and Sebastian Siebertz},
  title = {A Survey on the Parameterized Complexity of Reconfiguration Problems},
  journal = {Computer Science Review},
  volume = {53},
  pages = {100663},
  year = {2024},
  xdoi = {10.1016/j.cosrev.2024.100663}
}

@article{journals/algorithmica/MouawadN0SS17,
  author = {Amer E. Mouawad and Naomi Nishimura and Venkatesh Raman and Narges Simjour and Akira Suzuki},
  title = {On the Parameterized Complexity of Reconfiguration Problems},
  journal = {Algorithmica},
  volume = {78},
  number = {1},
  pages = {274--297},
  year = {2017},
  xdoi = {10.1007/s00453-016-0159-2}
}

@article{journals/tcs/BousquetHKMS23,
  author = {Nicolas Bousquet and Felix Hommelsheim and Yusuke Kobayashi and Moritz M{\"u}hlenthaler and Akira Suzuki},
  title = {Feedback Vertex Set Reconfiguration in Planar Graphs},
  journal = {Theoretical Computer Science},
  volume = {979},
  pages = {114188},
  year = {2023},
  xdoi = {10.1016/j.tcs.2023.114188}
}

@article{journals/rairo-ro/GomesNS25,
  author = {Guilherme de Castro Mendes Gomes and S{\'e}rgio H. Nogueira and Vin{\'i}cius Fernandes dos Santos},
  title = {Some Results on Vertex Separator Reconfiguration},
  journal = {RAIRO Operations Research},
  volume = {59},
  number = {1},
  pages = {725--742},
  year = {2025},
  xdoi = {10.1051/ro/2025006}
}

@article{journals/jcss/GomesLMMOSZ24,
  author = {Guilherme C. M. Gomes and Cl{\'e}ment Legrand-Duchesne and Reem Mahmoud and Amer E. Mouawad and Yoshio Okamoto and Vin{\'i}cius Fernandes dos Santos and Tom C. van der Zanden},
  title = {Minimum Separator Reconfiguration},
  journal = {Journal of Computer and System Sciences},
  volume = {146},
  pages = {103574},
  year = {2024},
  xdoi = {10.1016/j.jcss.2024.103574}
}

@inproceedings{conf/iwpec/MouawadNRW14,
  author = {Amer E. Mouawad and Naomi Nishimura and Venkatesh Raman and Marcin Wrochna},
  title = {Reconfiguration over Tree Decompositions},
  booktitle = {Parameterized and Exact Computation -- IPEC 2014},
  series = {Lecture Notes in Computer Science},
  volume = {8894},
  pages = {246--257},
  publisher = {Springer},
  year = {2014},
  xdoi = {10.1007/978-3-319-13524-3_21}
}

@article{conf/esa/GimaIKO22,
  author = {Tatsuya Gima and Takehiro Ito and Yasuaki Kobayashi and Yota Otachi},
  title = {Algorithmic Meta-Theorems for Combinatorial Reconfiguration Revisited},
  journal = {Algorithmica},
  volume = {86},
  number = {11},
  pages = {3395--3424},
  year = {2024},
  xdoi = {10.1007/s00453-024-01261-0}
}

@article{journals/jcss/LokshtanovMPRS18,
  author = {Daniel Lokshtanov and Amer E. Mouawad and Fahad Panolan and M. S. Ramanujan and Saket Saurabh},
  title = {Reconfiguration on Sparse Graphs},
  journal = {Journal of Computer and System Sciences},
  volume = {95},
  pages = {122--131},
  year = {2018},
  xdoi = {10.1016/j.jcss.2018.02.004}
}

@inproceedings{kreutzer2010lower,
  title = {Lower bounds for the complexity of monadic second-order logic},
  author = {Kreutzer, Stephan and Tazari, Siamak},
  booktitle = {2010 25th Annual IEEE Symposium on Logic in Computer Science},
  pages = {189--198},
  year = {2010},
  organization = {IEEE},
  xdoi = {10.1109/LICS.2010.39}
}

@article{eickmeyer2020modelchecking,
  author = {Eickmeyer, Kord and van den Heuvel, Jan and Kawarabayashi, Ken-ichi and Kreutzer, Stephan and {Ossona de Mendez}, Patrice and Pilipczuk, Micha{\l} and Quiroz, Daniel A. and Rabinovich, Roman and Siebertz, Sebastian},
  title = {Model-Checking on Ordered Structures},
  journal = {ACM Transactions on Computational Logic},
  volume = {21},
  number = {2},
  pages = {11:1--11:28},
  year = {2020},
  xdoi = {10.1145/3360011}
}

@article{JoretWood2013,
  author = {Gwena{\"e}l Joret and David R. Wood},
  title = {Complete Graph Minors and the Graph Minor Structure Theorem},
  journal = {Journal of Combinatorial Theory, Series B},
  volume = {103},
  number = {1},
  pages = {61--74},
  year = {2013},
  xdoi = {10.1016/j.jctb.2012.09.001}
}

@article{ganian2010there,
  title = {Are there any good digraph width measures?},
  author = {Ganian, Robert and Hlin{\v{e}}n{\'y}, Petr and Kneis, Joachim and Meister, Daniel and Obdr{\v{z}}{\'a}lek, Jan and Rossmanith, Peter and Sikdar, Somnath},
  journal = {Journal of Combinatorial Theory, Series B},
  volume = {116},
  pages = {250--286},
  year = {2016},
  publisher = {Elsevier},
  xdoi = {10.1016/j.jctb.2015.09.001}
}

@article{journals/combinatorics/Siebertz18,
  author = {Sebastian Siebertz},
  title = {Reconfiguration on Nowhere Dense Graph Classes},
  journal = {The Electronic Journal of Combinatorics},
  volume = {25},
  number = {3},
  pages = {P3.24},
  year = {2018},
  xdoi = {10.37236/7458}
}

@incollection{Thomas1997,
  author = {Wolfgang Thomas},
  title = {Languages, Automata, and Logic},
  booktitle = {Handbook of Formal Languages},
  volume = {3},
  publisher = {Springer},
  address = {Berlin},
  pages = {389--455},
  year = {1997},
  xdoi = {10.1007/978-3-642-59126-6_7}
}

@inproceedings{von2026separating,
  title = {Separating Feasibility and Movement in Solution Discovery: The Case of Path Discovery},
  author = {von Bergen, Hanno and Fastenau, Larissa and Gerhard, Enna and Lorenz, Nicola and Maaz, Stephanie and Mouawad, Amer E. and Rabinovich, Roman and Schirrmacher, Nicole and Schmand, Daniel and Siebertz, Sebastian and Trinh, Mai},
  year = {2026},
  booktitle = {51st International Symposium on Mathematical Foundations of Computer Science (MFCS 2026)},
  series = {Leibniz International Proceedings in Informatics},
  volume = {386},
  pages = {82:1--82:14},
  publisher = {Schloss Dagstuhl -- Leibniz-Zentrum f{\"u}r Informatik},
  xdoi = {10.4230/LIPIcs.MFCS.2026.82}
}

@inproceedings{SauST25para,
  author = {Ignasi Sau and
                  Giannos Stamoulis and
                  Dimitrios M. Thilikos},
  title = {Parameterizing the quantification of {CMSO:} model checking on minor-closed
                  graph classes},
  booktitle = {Proceedings of the 2025 Annual {ACM-SIAM} Symposium on Discrete Algorithms (SODA)},
  pages = {3728--3742},
  publisher = {{SIAM}},
  year = {2025},
  xdoi = {10.1137/1.9781611978322.124}
}

@inproceedings{SauSSSTV26,
  author = {Sau, Ignasi and Schirrmacher, Nicole and Siebertz, Sebastian and Stamoulis, Giannos and Thilikos, Dimitrios M. and Vigny, Alexandre},
  title = {Model Checking for Low Monodimensionality Fragments of {CMSO} on Topological-Minor-Free Graph Classes},
  booktitle = {41st Annual Symposium on Logic in Computer Science (LICS 2026)},
  series = {Leibniz International Proceedings in Informatics (LIPIcs)},
  volume = {380},
  pages = {80:1--80:27},
  year = {2026},
  publisher = {Schloss Dagstuhl -- Leibniz-Zentrum f{\"u}r Informatik},
  xdoi = {10.4230/LIPIcs.LICS.2026.80}
}

@article{gajarsky2020first,
  title = {First-order interpretations of bounded expansion classes},
  author = {Gajarsk{\'y}, Jakub and Kreutzer, Stephan and Ne{\v{s}}et{\v{r}}il, Jaroslav and {Ossona de Mendez}, Patrice and Pilipczuk, Micha{\l} and Siebertz, Sebastian and Toru{\'n}czyk, Szymon},
  journal = {ACM Transactions on Computational Logic},
  volume = {21},
  number = {4},
  pages = {29:1--29:41},
  year = {2020},
  publisher = {ACM New York, NY, USA},
  xdoi = {10.1145/3382093}
}

@article{feferman1959first,
  author = {Feferman, Solomon and Vaught, Robert L.},
  title = {The first order properties of products of algebraic systems},
  journal = {Fundamenta Mathematicae},
  volume = {47},
  number = {1},
  pages = {57--103},
  year = {1959},
  xdoi = {10.4064/fm-47-1-57-103}
}

@article{robertson2003graph,
  title = {{Graph minors. XVI. Excluding a non-planar graph}},
  author = {Robertson, Neil and Seymour, Paul D.},
  journal = {Journal of Combinatorial Theory, Series B},
  volume = {89},
  number = {1},
  pages = {43--76},
  year = {2003},
  publisher = {Elsevier},
  xdoi = {10.1016/S0095-8956(03)00042-X}
}

@misc{buffiere2026characterizations,
  title = {Characterizations of monadically dependent tree-ordered weakly sparse structures},
  author = {Buffi{\`e}re, Hector and Lin, Yuquan and Ne{\v{s}}et{\v{r}}il, Jaroslav and {Ossona de Mendez}, Patrice and Siebertz, Sebastian},
  year = {2026},
  eprint = {2601.16039},
  archiveprefix = {arXiv},
  xdoi = {10.48550/arXiv.2601.16039}
}

@article{schirrmacher2023first,
  title = {First-order logic with connectivity operators},
  author = {Schirrmacher, Nicole and Siebertz, Sebastian and Vigny, Alexandre},
  journal = {ACM Transactions on Computational Logic},
  volume = {24},
  number = {4},
  pages = {30:1--30:23},
  year = {2023},
  publisher = {ACM New York, NY, USA},
  xdoi = {10.1145/3595922}
}

@inproceedings{pillipczuk2021connectivity,
  author = {Micha{\l} Pilipczuk and Nicole Schirrmacher and Sebastian Siebertz and Szymon Toru{\'n}czyk and Alexandre Vigny},
  title = {Algorithms and Data Structures for First-Order Logic with Connectivity
                  Under Vertex Failures},
  booktitle = {49th International Colloquium on Automata, Languages, and Programming,
                  {ICALP} 2022},
  series = {Leibniz International Proceedings in Informatics (LIPIcs)},
  volume = {229},
  pages = {102:1--102:18},
  publisher = {Schloss Dagstuhl - Leibniz-Zentrum f{\"{u}}r Informatik},
  year = {2022},
  xdoi = {10.4230/LIPIcs.ICALP.2022.102}
}

@inproceedings{GroheKawarabayashiReed2013,
  author = {Martin Grohe and
               Ken{-}ichi Kawarabayashi and
               Bruce Reed},
  title = {A Simple Algorithm for the Graph Minor Decomposition -- Logic Meets Structural Graph Theory},
  booktitle = {Proceedings of the Twenty-Fourth Annual ACM-SIAM Symposium on Discrete Algorithms (SODA)},
  pages = {414--431},
  publisher = {{SIAM}},
  year = {2013},
  xdoi = {10.1137/1.9781611973105.30}
}

@article{DujmovicMorinWood,
  author = {Vida Dujmovi{\'c} and Pat Morin and David R. Wood},
  title = {Layered Separators in Minor-Closed Graph Classes with Applications},
  journal = {Journal of Combinatorial Theory, Series B},
  volume = {127},
  pages = {111--147},
  year = {2017},
  xdoi = {10.1016/j.jctb.2017.05.006}
}

@mastersthesis{MilolidakisThesis,
  author = {Orestis C. Milolidakis},
  title = {Splittability within Minor-Closed Classes to Graphs of Low Maximum Degree},
  year = {2025},
  school = {National and Kapodistrian University of Athens}
}

@inproceedings{paszke2020vc,
  author = {Adam Paszke and Micha{\l} Pilipczuk},
  title = {{VC Density of Set Systems Definable in Tree-Like Graphs}},
  booktitle = {45th International Symposium on Mathematical Foundations
               of Computer Science (MFCS 2020)},
  series = {Leibniz International Proceedings in Informatics (LIPIcs)},
  volume = {170},
  pages = {78:1--78:13},
  year = {2020},
  publisher = {Schloss Dagstuhl -- Leibniz-Zentrum f{\"u}r Informatik},
  xdoi = {10.4230/LIPIcs.MFCS.2020.78}
}

@inproceedings{bojanczyk2016definability,
  title = {Definability equals recognizability for graphs of bounded treewidth},
  author = {Boja{\'n}czyk, Miko{\l}aj and Pilipczuk, Micha{\l}},
  booktitle = {Proceedings of the 31st Annual ACM/IEEE Symposium on Logic in Computer Science (LICS 2016)},
  pages = {407--416},
  year = {2016},
  xdoi = {10.1145/2933575.2934508}
}

@inproceedings{Colcombet2007,
  author = {Thomas Colcombet},
  title = {A Combinatorial Theorem for Trees: Applications to Monadic Logic and Infinite Structures},
  booktitle = {Automata, Languages and Programming -- 34th International Colloquium (ICALP 2007)},
  series = {Lecture Notes in Computer Science},
  pages = {901--912},
  publisher = {Springer},
  year = {2007},
  xdoi = {10.1007/978-3-540-73420-8_77},
  volume = {4596}
}

@article{courcelle1992monadic,
  title = {The monadic second-order logic of graphs {VII}: {G}raphs as relational structures},
  author = {Courcelle, Bruno},
  journal = {Theoretical Computer Science},
  volume = {101},
  number = {1},
  pages = {3--33},
  year = {1992},
  publisher = {Elsevier},
  xdoi = {10.1016/0304-3975(92)90148-9}
}

@book{libkin2004elements,
  title = {Elements of finite model theory},
  author = {Libkin, Leonid},
  year = {2004},
  publisher = {Springer},
  series = {Texts in Theoretical Computer Science. An EATCS Series},
  xdoi = {10.1007/978-3-662-07003-1}
}

@article{bonnet2024twin,
  title = {Twin-width {IV}: ordered graphs and matrices},
  author = {Bonnet, {\'E}douard and Giocanti, Ugo and {Ossona de Mendez}, Patrice and Simon, Pierre and Thomass{\'e}, St{\'e}phan and Toru{\'n}czyk, Szymon},
  journal = {Journal of the ACM},
  volume = {71},
  number = {3},
  pages = {21:1--21:45},
  year = {2024},
  publisher = {ACM New York, NY},
  xdoi = {10.1145/3651151}
}

@inproceedings{GroblerMMMRSS24,
  author = {Mario Grobler and
                  Stephanie Maaz and
                  Nicole Megow and
                  Amer E. Mouawad and
                  Vijayaragunathan Ramamoorthi and
                  Daniel Schmand and
                  Sebastian Siebertz},
  title = {Solution Discovery via Reconfiguration for Problems in {P}},
  booktitle = {51st International Colloquium on Automata, Languages, and Programming (ICALP 2024)},
  series = {Leibniz International Proceedings in Informatics (LIPIcs)},
  pages = {76:1--76:20},
  publisher = {Schloss Dagstuhl -- Leibniz-Zentrum f{\"{u}}r Informatik},
  year = {2024},
  volume = {297},
  xdoi = {10.4230/LIPIcs.ICALP.2024.76}
}

@inproceedings{GroblerMMNRS24,
  author = {Mario Grobler and
                  Stephanie Maaz and
                  Amer E. Mouawad and
                  Naomi Nishimura and
                  Vijayaragunathan Ramamoorthi and
                  Sebastian Siebertz},
  title = {Kernelization Complexity of Solution Discovery Problems},
  booktitle = {35th International Symposium on Algorithms and Computation (ISAAC 2024)},
  series = {Leibniz International Proceedings in Informatics (LIPIcs)},
  pages = {36:1--36:17},
  year = {2024},
  volume = {322},
  publisher = {Schloss Dagstuhl -- Leibniz-Zentrum f{\"u}r Informatik},
  xdoi = {10.4230/LIPIcs.ISAAC.2024.36}
}

@misc{bousquet2025algorithmic,
  title = {On Algorithmic Meta-Theorems for Solution Discovery: Tractability and Barriers},
  author = {Bousquet, Nicolas and Mouawad, Amer E. and Maaz, Stephanie and Nishimura, Naomi and Siebertz, Sebastian},
  year = {2025},
  eprint = {2510.17344},
  archiveprefix = {arXiv},
  xdoi = {10.48550/arXiv.2510.17344}
}

@inproceedings{saito2026solution,
  title = {Solution Discovery for Vertex Cover, Independent Set, Dominating Set, and Feedback Vertex Set},
  author = {Saito, Rin and Sommer, Anouk and Suga, Tatsuhiro and Suzuki, Takahiro and Tamura, Yuma},
  booktitle = {SOFSEM 2026: Theory and Practice of Computer Science -- 51st International Conference on Current Trends in Theory and Practice of Computer Science},
  pages = {432--446},
  year = {2026},
  series = {Lecture Notes in Computer Science},
  volume = {16448},
  publisher = {Springer},
  xdoi = {10.1007/978-3-032-17801-5_32}
}

@article{courcelle1990,
  author = {Bruno Courcelle},
  title = {The monadic second-order logic of graphs. {I}. {Recognizable sets of finite graphs}},
  journal = {Information and Computation},
  volume = {85},
  number = {1},
  pages = {12--75},
  year = {1990},
  xdoi = {10.1016/0890-5401(90)90043-H}
}

@article{courcelle1993,
  author = {Bruno Courcelle},
  title = {The monadic second-order logic of graphs. {III}. {Tree-decompositions, minors and complexity issues}},
  journal = {RAIRO -- Theoretical Informatics and Applications},
  volume = {26},
  number = {3},
  pages = {257--286},
  year = {1992},
  xdoi = {10.1051/ita/1992260302571}
}

@article{arnborg1991,
  author = {Stefan Arnborg and Jens Lagergren and Detlef Seese},
  title = {Easy problems for tree-decomposable graphs},
  journal = {Journal of Algorithms},
  volume = {12},
  number = {2},
  pages = {308--340},
  year = {1991},
  xdoi = {10.1016/0196-6774(91)90006-K}
}

@article{seese1996,
  author = {Detlef Seese},
  title = {Linear time computable problems and first-order descriptions},
  journal = {Mathematical Structures in Computer Science},
  volume = {6},
  number = {6},
  pages = {505--526},
  year = {1996},
  xdoi = {10.1017/S0960129500070079}
}

@article{frick2001,
  author = {Markus Frick and Martin Grohe},
  title = {Deciding first-order properties of locally tree-decomposable structures},
  journal = {Journal of the ACM},
  volume = {48},
  number = {6},
  pages = {1184--1206},
  year = {2001},
  xdoi = {10.1145/504794.504798}
}

@article{dvorak2013,
  author = {Zden{\v{e}}k Dvo{\v{r}}{\'a}k and Daniel Kr{\'a}l' and Robin Thomas},
  title = {Testing first-order properties for subclasses of sparse graphs},
  journal = {Journal of the ACM},
  volume = {60},
  number = {5},
  pages = {36:1--36:24},
  year = {2013},
  xdoi = {10.1145/2499483}
}

@article{grohe2017,
  author = {Martin Grohe and Stephan Kreutzer and Sebastian Siebertz},
  title = {Deciding first-order properties of nowhere dense graphs},
  journal = {Journal of the ACM},
  volume = {64},
  number = {3},
  pages = {17:1--17:32},
  year = {2017},
  xdoi = {10.1145/3051095}
}

@article{bonnet2022twwI,
  author = {Bonnet, {\'E}douard and Kim, Eun Jung and Thomass{\'e}, St{\'e}phan and Watrigant, R{\'e}mi},
  title = {Twin-width {I}: tractable {FO} model checking},
  journal = {Journal of the ACM},
  volume = {69},
  number = {1},
  pages = {3:1--3:46},
  year = {2022},
  xdoi = {10.1145/3486655},
  publisher = {ACM New York, NY}
}

@inproceedings{dreier2023structurally,
  author = {Jan Dreier and Nikolas M{\"a}hlmann and Sebastian Siebertz},
  title = {First-Order Model Checking on Structurally Sparse Graph Classes},
  booktitle = {Proc. of the 55th Annual ACM Symposium on Theory of Computing (STOC)},
  pages = {567--580},
  year = {2023},
  xdoi = {10.1145/3564246.3585186}
}

@inproceedings{dreier2024stable,
  author = {Jan Dreier and Ioannis Eleftheriadis and Nikolas M{\"a}hlmann and Rose McCarty and Micha{\l} Pilipczuk and Szymon Toru{\'n}czyk},
  title = {First-Order Model Checking on Monadically Stable Graph Classes},
  booktitle = {Proc. of the 65th IEEE Annual Symposium on Foundations of Computer Science (FOCS)},
  pages = {21--30},
  year = {2024},
  xdoi = {10.1109/FOCS61266.2024.00012}
}

@inproceedings{dreier2024flip,
  author = {Jan Dreier and Nikolas M{\"a}hlmann and Szymon Toru{\'n}czyk},
  title = {Flip-Breakability: A Combinatorial Dichotomy for Monadically Dependent Graph Classes},
  booktitle = {Proc. of the 56th Annual ACM Symposium on Theory of Computing (STOC)},
  pages = {1550--1560},
  year = {2024},
  xdoi = {10.1145/3618260.3649739}
}

@misc{pilipczuk2025survey,
  author = {Micha{\l} Pilipczuk},
  title = {Graph classes through the lens of logic},
  year = {2025},
  archiveprefix = {arXiv},
  eprint = {2501.04166}
}

@misc{bojanczyk2021separator,
  author = {Miko{\l}aj Boja{\'n}czyk},
  title = {Separator logic and star-free expressions for graphs},
  year = {2021},
  archiveprefix = {arXiv},
  eprint = {2107.13953}
}

@inproceedings{golovach2022dpminor,
  author = {Petr A. Golovach and Giannos Stamoulis and Dimitrios M. Thilikos},
  title = {Model-Checking for First-Order Logic with Disjoint Paths Predicates in Proper Minor-Closed Graph Classes},
  booktitle = {Proc. of the 2023 Annual ACM-SIAM Symposium on Discrete Algorithms (SODA)},
  year = {2023},
  pages = {3684--3699},
  xdoi = {10.1137/1.9781611977554.ch141},
  publisher = {SIAM}
}

@inproceedings{schirrmacher2024topodp,
  author = {Nicole Schirrmacher and Sebastian Siebertz and Giannos Stamoulis and Dimitrios M. Thilikos and Alexandre Vigny},
  title = {Model Checking Disjoint-Paths Logic on Topological-Minor-Free Graph Classes},
  booktitle = {Proc. of the 39th Annual ACM/IEEE Symposium on Logic in Computer Science (LICS)},
  year = {2024},
  pages = {68:1--68:12},
  articleno = {68},
  numpages = {12},
  publisher = {ACM},
  xdoi = {10.1145/3661814.3662089}
}

@inproceedings{bojanczyk2025lowrank,
  author = {Miko{\l}aj Boja{\'n}czyk and Micha{\l} Pilipczuk and Wojciech Przybyszewski and Marek Soko{\l}owski and Giannos Stamoulis},
  title = {Low rank {MSO}},
  year = {2026},
  booktitle = {41st Annual Symposium on Logic in Computer Science (LICS 2026)},
  series = {Leibniz International Proceedings in Informatics},
  volume = {380},
  pages = {22:1--22:21},
  publisher = {Schloss Dagstuhl -- Leibniz-Zentrum f{\"u}r Informatik},
  xdoi = {10.4230/LIPIcs.LICS.2026.22}
}

@article{fellows2025onsd,
  title = {On solution discovery via reconfiguration},
  journal = {Journal of Computer and System Sciences},
  volume = {157},
  pages = {103747},
  year = {2026},
  author = {Fellows, Michael R. and Grobler, Mario and Megow, Nicole and Mouawad, Amer E. and Ramamoorthi, Vijayaragunathan and Rosamond, Frances A. and Schmand, Daniel and Siebertz, Sebastian},
  xdoi = {10.1016/j.jcss.2025.103747},
  publisher = {Elsevier}
}

@inproceedings{cruz1987,
  author = {Isabel F. Cruz and Alberto O. Mendelzon and Peter T. Wood},
  title = {A Graphical Query Language Supporting Recursion},
  booktitle = {Proc. of the ACM SIGMOD International Conference on Management of Data (SIGMOD)},
  pages = {323--330},
  year = {1987},
  publisher = {ACM},
  xdoi = {10.1145/38713.38749}
}

@article{mendelzon1995,
  author = {Alberto O. Mendelzon and Peter T. Wood},
  title = {Finding Regular Simple Paths in Graph Databases},
  journal = {SIAM Journal on Computing},
  volume = {24},
  number = {6},
  pages = {1235--1258},
  year = {1995},
  xdoi = {10.1137/S009753979122370X}
}

@article{barcelo2012,
  author = {Pablo Barcel{\'o} and Leonid Libkin and Anthony W. Lin and Peter T. Wood},
  title = {Expressive Languages for Path Queries over Graph-Structured Data},
  journal = {ACM Transactions on Database Systems},
  volume = {37},
  number = {4},
  pages = {31:1--31:46},
  year = {2012},
  xdoi = {10.1145/2389241.2389250}
}

@article{libkin2012,
  author = {Leonid Libkin and Wim Martens and Domagoj Vrgo{\v{c}}},
  title = {Querying Graphs with Data},
  journal = {Journal of the ACM},
  volume = {63},
  number = {2},
  pages = {14:1--14:53},
  year = {2016},
  xdoi = {10.1145/2850413}
}

@article{freydenschweik2013,
  author = {Dominik D. Freydenberger and Nicole Schweikardt},
  title = {Expressiveness and static analysis of extended conjunctive regular path queries},
  journal = {Journal of Computer and System Sciences},
  volume = {79},
  number = {6},
  pages = {892--909},
  year = {2013},
  xdoi = {10.1016/j.jcss.2013.01.008}
}

@article{wood2012,
  author = {Peter T. Wood},
  title = {Query languages for graph databases},
  journal = {ACM SIGMOD Record},
  volume = {41},
  number = {1},
  pages = {50--60},
  year = {2012},
  xdoi = {10.1145/2206869.2206879}
}

@inproceedings{salvati2024,
  author = {Salvati, Sylvain and Tison, Sophie},
  title = {{Containment of Regular Path Queries Under Path Constraints}},
  booktitle = {27th International Conference on Database Theory (ICDT 2024)},
  pages = {17:1--17:19},
  series = {Leibniz International Proceedings in Informatics},
  year = {2024},
  volume = {290},
  xdoi = {10.4230/LIPIcs.ICDT.2024.17},
  publisher = {Schloss Dagstuhl -- Leibniz-Zentrum f{\"u}r Informatik}
}

@article{baldwin1985second,
  title = {Second-order quantifiers and the complexity of theories},
  author = {Baldwin, John T. and Shelah, Saharon},
  journal = {Notre Dame Journal of Formal Logic},
  volume = {26},
  number = {3},
  pages = {229--303},
  year = {1985},
  publisher = {Duke University Press},
  xdoi = {10.1305/ndjfl/1093870870}
}

@article{ganian2019shrub,
  title = {Shrub-depth: Capturing height of dense graphs},
  author = {Ganian, Robert and Hlin{\v{e}}n{\'y}, Petr and Ne{\v{s}}et{\v{r}}il, Jaroslav and Obdr{\v{z}}{\'a}lek, Jan and {Ossona de Mendez}, Patrice},
  journal = {Logical Methods in Computer Science},
  volume = {15},
  number = {1},
  pages = {7:1--7:25},
  year = {2019},
  xdoi = {10.23638/LMCS-15(1:7)2019}
}

@article{ganian2012trees,
  title = {Shrub-depth: Capturing height of dense graphs},
  author = {Ganian, Robert and Hlin{\v{e}}n{\'y}, Petr and Ne{\v{s}}et{\v{r}}il, Jaroslav and Obdr{\v{z}}{\'a}lek, Jan and {Ossona de Mendez}, Patrice},
  journal = {Logical Methods in Computer Science},
  volume = {15},
  number = {1},
  pages = {7:1--7:25},
  year = {2019},
  xdoi = {10.23638/LMCS-15(1:7)2019}
}

@article{covers,
  author = {Braunfeld, Samuel and Ne{\v s}et{\v r}il, Jaroslav and {Ossona de Mendez}, Patrice and Siebertz, Sebastian},
  journal = {European Journal of Combinatorics},
  title = {Decomposition horizons and a characterization of stable hereditary classes of graphs},
  year = {2025},
  pages = {104130},
  xdoi = {10.1016/j.ejc.2025.104130},
  volume = {129}
}

@article{lmcs_perm,
  author = {Bonnet, {\'E}douard and Ne\v{s}et\v{r}il, Jaroslav and {Ossona de Mendez}, Patrice and Siebertz, Sebastian and Thomass{\'e}, St{\'e}phan},
  journal = {Logical Methods in Computer Science},
  title = {Twin-width and permutations},
  year = {2024},
  number = {3},
  volume = {20},
  pages = {4:1--4:25},
  xdoi = {10.46298/lmcs-20(3:4)2024}
}

@book{gross2018graph,
  title = {Graph theory and its applications},
  author = {Gross, Jonathan L. and Yellen, Jay and Anderson, Mark},
  year = {2018},
  publisher = {Chapman and Hall/CRC},
  edition = {3rd},
  xdoi = {10.1201/9780429425134}
}

@inproceedings{korhonen2024minor,
  title = {Minor containment and disjoint paths in almost-linear time},
  author = {Korhonen, Tuukka and Pilipczuk, Micha{\l} and Stamoulis, Giannos},
  booktitle = {Proc. of the 2024 IEEE 65th Annual Symposium on Foundations of Computer Science (FOCS)},
  pages = {53--61},
  year = {2024},
  publisher = {IEEE Computer Society},
  xdoi = {10.1109/FOCS61266.2024.00014}
}

@misc{dvorak2014list,
  title = {List-coloring Apex-minor-free Graphs},
  author = {Zden{\v e}k Dvo{\v r}{\'a}k and Robin Thomas},
  year = {2014},
  eprint = {1401.1399},
  archiveprefix = {arXiv},
  xdoi = {10.48550/arXiv.1401.1399},
  primaryclass = {math.CO}
}

@article{Th2,
  author = {Thomason, Andrew},
  journal = {Journal of Combinatorial Theory, Series B},
  title = {The extremal function for complete minors},
  year = {2001},
  number = {2},
  pages = {318--338},
  volume = {81},
  xdoi = {10.1006/jctb.2000.2013}
}

@article{braunfeld2022existential,
  author = {Braunfeld, Samuel and Laskowski, Michael C.},
  journal = {Communications of the American Mathematical Society},
  title = {Existential characterizations of monadic {NIP}},
  year = {2026},
  volume = {6},
  pages = {244--267},
  xdoi = {10.1090/cams/62}
}

@article{geniet2023first,
  author = {Geniet, Colin and Thomass{\'e}, St{\'e}phan},
  title = {First order logic and twin-width in tournaments and dense oriented graphs},
  journal = {European Journal of Combinatorics},
  volume = {132},
  pages = {104247},
  year = {2026},
  xdoi = {10.1016/j.ejc.2025.104247}
}

@misc{mergemodels,
  title = {On merge-models},
  author = {Buffi{\`e}re, Hector and Lin, Yuquan and Ne{\v{s}}et{\v{r}}il, Jaroslav and {Ossona de Mendez}, Patrice and Siebertz, Sebastian},
  year = {2026},
  eprint = {2603.26570},
  archiveprefix = {arXiv},
  xdoi = {10.48550/arXiv.2603.26570}
}

@inproceedings{dreier2025merge,
  title = {Merge-width and first-order model checking},
  author = {Dreier, Jan and Toru{\'n}czyk, Szymon},
  booktitle = {Proceedings of the 57th Annual ACM Symposium on Theory of Computing},
  pages = {1944--1955},
  year = {2025},
  publisher = {ACM},
  xdoi = {10.1145/3717823.3718259}
}

@misc{simon2021note,
  title = {A note on stability and {NIP} in one variable},
  author = {Simon, Pierre},
  year = {2021},
  eprint = {2103.15799},
  archiveprefix = {arXiv},
  xdoi = {10.48550/arXiv.2103.15799}
}

@article{nishimura18reconfiguration,
  author = {Nishimura, Naomi},
  title = {Introduction to Reconfiguration},
  journal = {Algorithms},
  volume = {11},
  year = {2018},
  number = {4},
  issn = {1999-4893},
  xdoi = {10.3390/a11040052},
  pages = {52}
}

@article{cualinescu2008reconfigurations,
  title = {Reconfigurations in graphs and grids},
  author = {C{\u{a}}linescu, Gruia and Dumitrescu, Adrian and Pach, J{\'a}nos},
  journal = {SIAM Journal on Discrete Mathematics},
  volume = {22},
  number = {1},
  pages = {124--138},
  year = {2008},
  publisher = {SIAM},
  xdoi = {10.1137/060652063}
}

@article{cooper2024parameterized,
  title = {Parameterized complexity of reconfiguration of atoms},
  author = {Cooper, Alexandre and Maaz, Stephanie and Mouawad, Amer E. and Nishimura, Naomi},
  journal = {Algorithmica},
  volume = {86},
  number = {10},
  pages = {3284--3308},
  year = {2024},
  publisher = {Springer},
  xdoi = {10.1007/s00453-024-01262-z}
}

@inproceedings{bonnet2022twin,
  title = {Twin-width {VI}: the lens of contraction sequences},
  author = {Bonnet, {\'E}douard and Kim, Eun Jung and Reinald, Amadeus and Thomass{\'e}, St{\'e}phan},
  booktitle = {Proceedings of the 2022 Annual ACM-SIAM Symposium on Discrete Algorithms (SODA)},
  pages = {1036--1056},
  year = {2022},
  xdoi = {10.1137/1.9781611977073.45},
  publisher = {SIAM}
}

@article{BonnetHickingbotham2025,
  author = {Bonnet, {\'E}douard and Hickingbotham, Robert},
  title = {Induced Minors and Region Intersection Graphs},
  journal = {Innovations in Graph Theory},
  volume = {2},
  pages = {313--327},
  year = {2025},
  xdoi = {10.5802/igt.15}
}

@article{bonnet2022twwII,
  author = {Bonnet, {\'E}douard and Geniet, Colin and Kim, Eun Jung and Thomass{\'e}, St{\'e}phan and Watrigant, R{\'e}mi},
  title = {Twin-width {II}: small classes},
  journal = {Combinatorial Theory},
  volume = {2},
  number = {2},
  pages = {10:1--10:42},
  year = {2022},
  xdoi = {10.5070/C62257876}
}

@article{RobertsonSGM5,
  title = {Graph minors. {V}. Excluding a planar graph},
  journal = {Journal of Combinatorial Theory, Series B},
  volume = {41},
  number = {1},
  pages = {92--114},
  year = {1986},
  issn = {0095-8956},
  xdoi = {10.1016/0095-8956(86)90030-4},
  author = {Neil Robertson and P. D. Seymour}
}

@article{ADLER2014322,
  title = {Interpreting nowhere dense graph classes as a classical notion of model theory},
  journal = {European Journal of Combinatorics},
  volume = {36},
  pages = {322--330},
  year = {2014},
  issn = {0195-6698},
  xdoi = {10.1016/j.ejc.2013.06.048},
  author = {Hans Adler and Isolde Adler}
}

@article{grohe2004learnability,
  title = {Learnability and definability in trees and similar structures},
  author = {Grohe, Martin and Tur{\'a}n, Gy.},
  journal = {Theory of Computing Systems},
  volume = {37},
  number = {1},
  pages = {193--220},
  year = {2004},
  publisher = {Springer},
  xdoi = {10.1007/s00224-003-1112-8}
}

@inproceedings{flip-flatness,
  author = {Dreier, Jan and M\"{a}hlmann, Nikolas and Siebertz, Sebastian and Toru\'{n}czyk, Szymon},
  title = {{Indiscernibles and Flatness in Monadically Stable and Monadically NIP Classes}},
  booktitle = {50th International Colloquium on Automata, Languages, and Programming (ICALP 2023)},
  isbn = {978-3-95977-278-5},
  issn = {1868-8969},
  year = {2023},
  volume = {261},
  urn = {urn:nbn:de:0030-drops-181779},
  xdoi = {10.4230/LIPIcs.ICALP.2023.125},
  pages = {125:1--125:18},
  series = {Leibniz International Proceedings in Informatics (LIPIcs)},
  publisher = {Schloss Dagstuhl -- Leibniz-Zentrum f{\"u}r Informatik}
}

\clearpage
\appendix

\section{Dependence of separator logic on monotone graph classes excluding a topological minor}
\label{app:separator-characterization}

\getkeytheorem{thm:separator-characterization}

For $(1)\Rightarrow(2)$, graphs in $\Cc$ admit tree decompositions that can be encoded by rooted trees equipped with their ancestor order and auxiliary structures on sibling sets. For each fixed $\FOconn$-formula, Lemma~8.3 and Remark~7.1 of the full version of~\cite{pillipczuk2021connectivity} provide such an encoding, with finitely many unary colors, on which the formula is equivalent to a plain first-order formula, preserving all assignments to its free variables. The auxiliary structures belong to a fixed bounded-expansion class and are therefore monadically dependent.

More generally, trees augmented in this way with structures from any fixed monadically dependent class are monadically dependent. 
The counting argument follows the same principle as the tree-automata argument of Paszke and Pilipczuk~\cite[Section~3]{paszke2020vc}, who bound the VC density of definable set systems on trees and, through tree encodings, on graphs of bounded clique-width.

We briefly sketch the argument using Theorem~7.1 of the full version of~\cite{pillipczuk2021connectivity}. 
Every first-order formula on these augmented trees can be translated into an equivalent automaton, independently of any structural assumptions on the auxiliary structures. 
The automaton depends only on the formula, its transitions are specified by first-order formulas evaluated on the auxiliary structures colored by the states already computed at their vertices.

By the standard one-object-variable criterion for NIP~\cite{simon2021note}, it suffices to consider formulas $\psi(x;\bar y)$ with $x$ a single variable and $\bar y$ an arbitrary tuple. 
Note that this criterion also applies to classes of finite structures. 
Fix such a formula, its automaton, and a nonempty finite parameter set~$A$. 
The $\psi$-type of a vertex over~$A$ records precisely which tuples from $A^{|\bar y|}$ make the automaton accept when that vertex is marked as~$x$ and the tuple is marked as~$\bar y$. 
Thus, counting types amounts to counting the different acceptance patterns obtained by moving the $x$-mark.

Consider the union of the paths from the root to~$A$. Retaining the root and the vertices of~$A$ and suppressing all other vertices with only one child leaves $\Oof(|A|)$ vertices and connecting paths. 
Color the auxiliary structures by the states of the automaton run with no free-variable markers. 
Along each connecting path, the effect of the $x$-mark is described by one of finitely many transformations of automaton states. 
At a retained vertex, a parameter tuple changes the states of at most $|\bar y|$ children. 
The interaction with the child containing the $x$-mark is therefore described by finitely many first-order formulas on the colored auxiliary structure. 
Monadic dependence bounds the number of resulting patterns over the children leading to~$A$ by $\Oof(|A|^d)$, where $d$ depends only on the formula and the class of auxiliary structures. 
These local patterns determine the state passed towards the root and hence acceptance for every parameter tuple. 
Including the $\Oof(|A|)$ possibilities where $x$ is itself a retained vertex and summing over all vertices and paths gives $\Oof(|A|^{d+1})$ possible $\psi$-types over $A^{|\bar y|}$.

The argument allows arbitrary unary colors and therefore establishes monadic dependence of the augmented trees. 
Applying this to the encodings above proves $(1)\Rightarrow(2)$. 
The implication $(2)\Rightarrow(3)$ is immediate.

For $\neg(1)\Rightarrow\neg(3)$, we recall the subdivision argument underlying Lemma~9.1 of the same paper. 
Let $H$ be any graph and attach three private leaves to each vertex, obtaining~$H^+$. By monotonicity, $\Cc$ contains a subdivision $S$ of~$H^+$. The original vertices of~$H$ are precisely the vertices of degree at least three in~$S$, subdivision vertices have degree two, and the added leaves have degree one. 
Two distinct original vertices $x$ and $y$ are adjacent in~$H$ if and only if they are adjacent in~$S$, or a component of $S-\{x,y\}$ consists entirely of vertices of degree two in~$S$ and contains a neighbor of each of $x$ and~$y$. 
This is expressible by a fixed $\FOconn$ formula. 
Applying it to subdivisions of incidence graphs realizing arbitrary finite incidence patterns proves independence.

\end{document}